\definecolor[named]{urlblue}{cmyk}{1,0.58,0,0.21}
\title{Approximate Monotone Local Search for Weighted Problems}
\author[1]{Bar\i\c{s} Can Esmer\thanks{The author is part of Saarbrücken Graduate School of Computer Science, Germany.}}
\author[1]{Ariel Kulik\thanks{This project has received funding from the European Union’s Horizon 2020 research and innovation programme under grant agreement No 852780-ERC (SUBMODULAR).}}
\author[1]{D{\'{a}}niel Marx\thanks{Research supported by the European Research Council (ERC) consolidator grant No.~725978 SYSTEMATICGRAPH.}}
\author[2]{Daniel Neuen}
\author[3]{Roohani Sharma}
\affil[1]{CISPA Helmholtz Center for Information Security, Saarbr\"ucken, Germany. \{\texttt{baris-can.esmer|ariel.kulik|marx}\}\texttt{@cispa.de}}
\affil[2]{University of Bremen, Bremen, Germany. \texttt{dneuen@uni-bremen.de}}
\affil[3]{Max Planck Institute for Informatics, Saarland Informatics Campus, Saarbr\"ucken, Germany. \texttt{rsharma@mpi-inf.mpg.de}}
\begin{document}

\maketitle

\begin{abstract}
 In a recent work, Esmer et al.\ describe a simple method -- Approximate Monotone Local Search -- to obtain exponential approximation algorithms from existing parameterized exact algorithms, polynomial-time approximation algorithms and, more generally, parameterized approximation algorithms.
 In this work, we generalize those results to the weighted setting.

 More formally, we consider monotone subset minimization problems over a weighted universe of size $n$ (e.g., \textsc{Vertex Cover}, \textsc{$d$-Hitting Set} and \textsc{Feedback Vertex Set}).
 We consider a model where the algorithm is only given access to a subroutine that finds a solution of weight at most $\alpha \cdot W$ (and of arbitrary cardinality) in time $c^k \cdot n^{\OO(1)}$  where~$W$ is the minimum  weight of a solution of cardinality  at most $k$.
 In the unweighted setting, Esmer et al.\ determine the smallest value $d$
 for which a $\beta$-approximation algorithm running in time $d^n \cdot n^{\OO(1)}$ can be obtained in this model. 
 We show that the same dependencies also hold in a weighted setting in this model: 
 for every fixed $\eps>0$ 
 we obtain a $\beta$-approximation algorithm running in time $\OO\left((d+\eps)^{n}\right)$, for the same $d$ as in the unweighted setting. 
 
 Similarly, we also extend a $\beta$-approximate brute-force search
 (in a model which only provides access to a membership oracle) to the weighted setting.
 Using existing approximation algorithms and exact parameterized algorithms for weighted problems, we obtain the first exponential-time  $\beta$-approximation algorithms that are better than brute force for a variety of problems including \textsc{Weighted Vertex Cover}, \textsc{Weighted $d$-Hitting Set}, \textsc{Weighted Feedback Vertex Set} and \textsc{Weighted Multicut}.
\end{abstract}

\section{Introduction}
In this work, we are interested in \emph{subset problems}, where the goal is to find a subset of a given $n$-sized universe $U$ that satisfies some property $\Pi$ (e.g., \textsc{Vertex Cover}, \textsc{Hitting Set}, \textsc{Feedback Vertex Set}, \textsc{Mulitcut}).
Such problems can trivially be solved in time $\OO^*(2^n)$\footnote{The $\OO^*$ notation hides polynomial factors in the expression.}, and in the past decades there has been great interest in designing algorithms that beat this exhaustive search and run in time $\OO^*\left(d^n\right)$ for as small $1 < d < 2$ as possible (see, e.g., \cite{FominK10}).
On the other hand, many of the considered problems admit polynomial-time $\alpha$-approximation algorithms for some constant $\alpha > 1$ (e.g., \textsc{Vertex Cover} admits a polynomial-time $2$-approximation \cite{Bar-YehudaE81}).
To bridge the gap between exact exponential-time algorithm and polynomial-time $\alpha$-approximation algorithm for some possibly large constant $\alpha$, there has been recent interest in \emph{exponential-time approximation algorithms} \cite{EsmerKMNS23,EsmerKMNS22,AroraBS15,BansalCLNN19,BourgeoisEP11,CyganKW09,EscoffierPT16,ManurangsiT18} to obtain approximation ratios that are better than what is considered possible in polynomial time.

In a recent work, Esmer et al.\ \cite{EsmerKMNS23} describe a simple method -- Approximate Monotone Local Search -- to obtain exponential-time approximation algorithms for certain subset problems from existing parameterized exact algorithms, polynomial-time approximation algorithms and, more generally, parameterized approximation algorithms.
More precisely, the focus in \cite{EsmerKMNS23} lies on \emph{subset minimization problems} where, given a universe $U$ with $n$ elements, we are aiming to find a set $S \subseteq U$ of minimum cardinality satisfying some property $\Pi$.
To allow for approximation algorithms, we restrict to \emph{monotone} problems, i.e., the family $\CF$ of solution sets is closed under taking supersets.
In this setting, a \emph{$\beta$-approximation algorithm} is asked to return a solution set $S \in \CF$ such that $|S| \leq \beta \cdot |\OPT|$ where $\OPT$ denotes a solution of minimum size.
Given access to a parameterized $\alpha$-approximation algorithm running in time $\OO^*(c^k)$ (where the parameter $k$ denotes the size of the desired optimal solution), Esmer et al.\ \cite{EsmerKMNS23} determine the best possible value $d = \amlsbound(\alpha,c,\beta)$ 
such that a $\beta$-approximation algorithm running in time $\OO^*(d^n)$ can be obtained.
Using existing parameterized approximation algorithms, which in particular includes polynomial-time approximation and exact parameterized algorithms, this leads to the fastest exponential-time approximation algorithms for a variety of problems including \textsc{Vertex Cover}, \textsc{$d$-Hitting Set}, \textsc{Feedback Vertex Set} and \textsc{Odd Cycle Transversal}.

In this work, we are interested in \emph{weighted} monotone subset minimization problems.
Here, the universe $U$ is additionally equipped with a weight function $\w\colon U \to \NN$ and we are asking for a solution set $S \in \CF$ of minimum weight $\w(S) \coloneqq \sum_{u \in S} \w(u)$.
Accordingly, in a $\beta$-approximation algorithm, we are seeking a solution set $S \in \CF$ such that $\w(S) \leq \beta \cdot \w(\OPT)$ where $\OPT$ denotes a solution of minimum weight.
Looking at \cite{EsmerKMNS23}, it can be observed that obtained algorithms only extend to the weighted setting for the special case $\alpha = \beta = 1$.
Indeed, in this particular case Fomin, Gaspers, Lokshtanov and Saurabh \cite{FominGLS19} already show in an earlier work that an exact parameterized algorithm running in time $\OO^*(c^k)$ can be turned into an exact exponential-time algorithm running time $\OO^*((2-\frac{1}{c})^n)$.
As already pointed out in \cite{FominGLS19}, the result also holds in weighted setting and implies exact exponential-time algorithms for, e.g., \textsc{$d$-Hitting Set} and \textsc{Feedback Vertex Set}.
On the other hand, for $\beta > 1$, the algorithm presented in~\cite{EsmerKMNS23} does not work in a weighted setting.
In a nutshell, the main idea in \cite{EsmerKMNS23} is to randomly sample a set of vertices $X$ and to bound the probability that it contain a sufficiently large portion of an optimum solution $\OPT$.
However, in a weighted setting, such an approach is destined to fail since even adding a single element of large weight to an optimum solution may lead to an unbounded approximation factor.

The main contribution of this work is to adapt the tools from \cite{EsmerKMNS23} to the weighted setting.
Given a parameterized $\alpha$-approximation algorithm running in time $\OO^*(c^k)$ for weighted subset minimization (where the parameter $k$ still denotes the \emph{size} of the desired optimal solution,
we give additional details in the next paragraph), for every fixed  $\eps>0$ we obtain a $\beta$-approximation algorithm running in time $\OO\left((\amlsbound(\alpha,c,\beta) +\eps)^{n}\right)$.
Note that this matches the corresponding bound in the unweighted setting up to the additive $\eps$ in the base of the exponent in the 	running time.

To state our main result more precisely, let us formalize the requirements for the parameterized $\alpha$-approximation algorithm.
Similar to \cite{EsmerKMNS23}, we  require an \emph{$\alpha$-approximate extension algorithm}.
Such an algorithm receives as input a set $X \subseteq U$ and a number $k \geq 0$.
If there is an extension $S \subseteq U$ to a solution set (i.e., $X \cup S \in \CF$) of size at most $k$, then the algorithm outputs a set $T \subseteq U$ such that $X \cup T \in \CF$ and $\w(T) \leq \alpha \cdot \w(S^*)$ where $S^*$ is a minimum-weight extension of $X$ of size at most $k$.
Note that the size of $T$ does not need to be bounded in $k$; the parameter $k$ only restricts the size of an ``optimum solution'' which we compare against.
For example, the polynomial-time $2$-approximation algorithm for \textsc{Weighted Vertex Cover} \cite{Bar-YehudaE81} immediately results in a $2$-approximate extension algorithm:
given a graph $G$, $X \subseteq V(G)$ and $k \geq 0$, we apply the $2$-approximation algorithm to $G - X$ and return the output $T \subseteq V(G)$ (this algorithm behaves independently of $k$).
With this, our main result can informally be stated as follows.

\medskip
\begin{mdframed}[backgroundcolor=gray!20]
 Suppose a weighted monotone subset minimization problem admits an $\alpha$-approximate extension algorithm running in time $\OO^*(c^k)$.
 Then there is a $\beta$-approximation algorithm running in time $\OO^*((\amlsbound(\alpha,c,\beta) + \eps)^{n})$ for every $\eps > 0$. 
\end{mdframed}
\medskip

The basic idea to achieve this result is to partition the universe $U$ into subsets $U_i$ of elements of roughly the same weight.
We apply the results from \cite{EsmerKMNS23} to each of the sets $U_i$ separately which results in a ``query set'' for $U_i$, i.e., a set of  queries made by the algorithm from \cite{EsmerKMNS23} to the $\alpha$-approximate extension algorithm.
The crucial observation is that these queries are made in a non-adaptive way, i.e., the ``query set'' only depends on the set $U_i$.
We then combine the ``query set'' for the blocks $U_i$ into a query set for the whole weighted set $U$.
In particular, the results from \cite{EsmerKMNS23} are only used in a black-box manner.

For many problems such as \textsc{Vertex Cover} \cite{Bar-YehudaE81}, \textsc{$d$-Hitting Set} \cite{Bar-YehudaE81} and \textsc{Feedback Vertex Set} \cite{BafnaBF99}, polynomial-time approximation algorithms directly extend to the weighted setting, and provide $\alpha$-approximate extension algorithms as discussed above.
On the other hand, while many parameterized exact algorithms do not directly extend to the weighted setting, several problems have been studied in the weighted setting.
For example, for \textsc{Weighted Vertex Cover} \cite{ShachnaiZ17} provides a $\OO^*(1.363^k)$ algorithm that, given a vertex-weighted graph and a number $k \geq 0$, returns a vertex cover of weight at most $W$ where $W$ is the minimum weight of a vertex cover of size at most $k$ (if there is no vertex cover of size at most $k$, the algorithm reports failure).
As before, to obtain a $1$-approximate extension subroutine, we apply the algorithm to the graph $G - X$ (where $X$ is the input set for the extension subroutine).
Similar results are also available for \textsc{Weighted $d$-Hitting Set} \cite{ShachnaiZ17,FominGKLS10} and \textsc{Weighted Feedback Vertex Set} \cite{AgrawalKLS16}.
Also, some simple branching algorithms immediately extend to the weighted setting (e.g., deletion to $\mathcal{H}$-free graphs where $\mathcal{H}$ is a finite set of forbidden induced subgraphs).
Finally, we can also rely on parameterized approximation algorithms. 
For example, \cite{LokshtanovMRSZ21} provides such algorithms for several problems including \textsc{Weighted Directed FVS} and \textsc{Weighted Multicut}.

We remark that there are also FPT algorithms for other parameterizations in the weighted setting.
For example, Niedermeier and Rossmanith \cite{NiedermeierR03} show that \textsc{Weighted Vertex Cover} is fixed-parameter tractable parameterized by the weight $W$ of a minimum-weight vertex cover.
However, such subroutines are not useful in our setting since we aim to bound the running time of our approximation algorithms with respect to the number of vertices.

Similar to \cite{EsmerKMNS23}, we compare our algorithms to a brute-force search. %
Here, we consider a setting where the weighted monotone subset minimization problem can only be accessed via a membership oracle, i.e., given a set $X \subseteq U$, we can test (in polynomial time) whether $X$ is a solution set.
For every $\beta \geq 1$ we define $\brute(\beta) \coloneqq 1 + \exp\left(-\beta \cdot  \HH\left(\frac{1}{\beta}\right) \right)$ where $\HH(\beta) \coloneqq -\beta \ln \beta - (1-\beta) \ln (1-\beta)$ denotes the entropy function.
In \cite{EsmerKMNS22} it has been shown that, in the unweighted setting, there is a $\beta$-approximation algorithm running in time $\OO^*((\brute(\beta))^n)$ that only exploits a  membership test.
We also extend this result to the weighted setting, i.e., we show that any weighted monotone subset minimization problem can be solved in time $(\brute(\beta))^{n + o(n)}$ given only a membership oracle.

Esmer et al.\ \cite{EsmerKMNS23} show that $\amls(\alpha,c,\beta) < \brute(\beta)$ for all $\alpha,c \geq 1$ and $\beta > 1$.
Since the same bounds are achieved in the weighted setting, all algorithms obtained above are strictly faster than the brute-force $\beta$-approximation algorithm.
In particular, we obtain exponential-time approximation algorithms that are faster than the approximate brute-force search for the weighted versions of
\textsc{Vertex Cover}, \textsc{$d$-Hitting Set}, \textsc{Feedback Vertex Set}, \textsc{Tournament FVS}, \textsc{Subset FVS}, \textsc{Cluster Graph Vertex Deletion}, \textsc{Cograph Vertex Deletion}, \textsc{Split Vertex Deletion}, \textsc{Partial Vertex Cover},
{\sc Directed Feedback Vertex Set}, {\sc Directed Subset FVS}, {\sc Directed Odd Cycle Transversal} and {\sc Multicut} (all problems are defined in Appendix \ref{sec:problem_def}).

\subsection{Organization}

The paper is organized as follows.
In \Cref{sec:our_results} we state the problems we want to address, provide the necessary definitions and notation, and formally state our main results.
In \Cref{sec:applications}, we demonstrate how our methods can be applied to specific problems to obtain exponential-time approximation algorithms.
\Cref{sec:brute_force} contains the proof of \Cref{thm:brute_force}, our result on exponential-time approximation algorithms for the $\WSM$ problem in the membership model.
Similarly, \Cref{sec:weighted_monotone_local_search} contains the proof of \Cref{thm:weighted_amls}, our result on exponential-time approximation algorithms for the $\WSM$ problem in the extension model.
Finally, in \Cref{sec:discussion} we conclude the paper by summarizing our main contributions and key findings.

\section{Our Results}
\label{sec:our_results}
To formally state our results we use an abstract notion of a problem and oracle-based computational models.
Let $U$ be a finite set of elements.
We use $n$ to denote the cardinality of $U$.
A set system $\CF$ of $U$ is a family $\CF\subseteq 2^U$ of subsets of $U$.
We say the set system $\CF$ is \emph{monotone} if (i) $U\in \CF$ and (ii) for all $T\subseteq S\subseteq U$, if $T\in \CF$ then $S\in \CF$ as well.

An instance of the \emph{Weighted Monotone Subset Minimization problem} ($\WSM$) is a triplet $(U,\w,\CF)$ where  $U$ is a finite set, $\w\colon U\to \NN$ is a weight function over the elements of $U$, and~$\CF$ is a monotone set system of $U$.
The set of solutions is $\CF$ and the objective is to find $S\in \CF$ with minimum total weight $\w(S) \coloneqq \sum_{e\in S} \w(e)$.
We use $\opt(U,\w,\CF) \coloneqq \min\{\w(S) \mid S\in \CF\}$ to denote the optimum value of a solution to the $\WSM$ instance $(U,\w,\CF)$.
We refer to the special case in which  $\w(u)=1$ for all $u\in U$ as the  \emph{Unweighted Monotone Subset Minimization problem} (\USM).

The weighted monotone subset minimization problem is a meta-problem which captures multiple well studied problems as special cases, e.g., \textsc{Weighted Vertex Cover}, \textsc{Weighted Feedback Vertex Set} and \textsc{Weighted Multicut}.
We study the problem using two computational models.
In both models the set $U$ and the weight function $\w$ are given as part of the input.
The set $\CF$ can only be accessed using an oracle, and the models differ in the type of supported oracle queries.

\subsection{Membership Oracles and Weighted Approximate Brute Force}

In the \emph{membership model} the input to the algorithm is a universe $U$ and a weight function~$\w\colon U \to \NN$.
Additionally, the algorithm has access to a membership oracle for a monotone set system $\CF$ of $U$, that is, the algorithm can check if a subset $S\subseteq U$ satisfies $S\in \CF$ in a single step.
For every $\alpha\geq 1$, we say an algorithm is an $\alpha$-approximation for $\WSM$ ($\USM$) in the membership model if for every $\WSM$ ($\USM$) instance $(U,\w,\CF)$ the algorithm returns a set $S\in \CF$ such that $\w(S)\leq \alpha \cdot\opt(U,\w,\CF)$.

One can easily attain a $1$-approximation algorithm for $\WSM$ in the membership model by iterating over all subsets of $U$ and querying the oracle for each.
This leads to an algorithm with running time $\OO(2^{n})$.
Moreover, it can be easily shown there is no $1$-approximation algorithm for $\WSM$ (or for $\USM$) in the membership model which runs in time $\OO\left( (2-\eps)^{n}\right)$. We refer to this algorithm as the (exact) brute force.

Intuitively, for every $\alpha>1$, it should be possible to design an $\alpha$-approximation algorithm for $\WSM$ and $\USM$ in the membership   model which runs in time $\OO(c^n)$ for some $c<2$. However, the value of the optimal $c$ in this setting is not obvious.
In \cite{EsmerKMNS22} the authors studied $\USM$ in the membership model and pinpointed the right  value of $c$.
For every $\alpha \geq 1$ we define
\begin{equation}\label{eq:brute_defn}
	\brute(\alpha) = 1+\exp\left(-\alpha \cdot \entropy \left(\frac{1}{\alpha } \right)\right),
\end{equation}
where $\entropy(x) = -x\ln(x) -(1-x) \ln(1-x)$ is the entropy function.

\begin{lemma}[{\cite[Theorem 5.1]{EsmerKMNS22}}]
	\label{lem:EKMNS22}
	For every $\alpha \geq 1$ the following holds.
	\begin{enumerate}
		\item There is a deterministic $\alpha$-approximation algorithm for $\USM$ in the membership model which runs in time $\left( \brute(\alpha)\right)^n\cdot n^{\OO(1)}$.
		\item Let $\eps > 0$. There is no $\alpha$-approximation algorithm for $\USM$ in the membership model which runs in time $\left( \brute(\alpha)-\eps\right)^n\cdot n^{\OO(1)}$.
	\end{enumerate}
\end{lemma}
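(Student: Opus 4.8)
\medskip
\noindent\emph{Proof plan.}
The plan is to reduce both parts to facts about \emph{covering designs}. For integers $t\ge s\ge 0$, call a family $\mathcal{Q}$ of $t$-element subsets of $U$ a \emph{$(t,s)$-covering} if every $s$-element subset of $U$ is contained in some member of $\mathcal{Q}$. The observation linking this to the membership model is: if $\CF$ is monotone and contains a set $\OPT$ with $|\OPT|=s$, then for \emph{any} $(t,s)$-covering $\mathcal{Q}$, querying the membership oracle on all members of $\mathcal{Q}$ produces at least one ``yes'' answer (on a $Q\supseteq\OPT$), and by monotonicity that $Q$ is a solution of size exactly $t$.

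For part~(1), the algorithm iterates $k=0,1,\dots,n$; for each $k$ it fixes a $(\min\{\lfloor\alpha k\rfloor,n\},k)$-covering $\mathcal{Q}_k$ of near-minimum size, queries the oracle on all its members, and stops at the first $k$ for which some $Q\in\mathcal{Q}_k$ is reported to lie in $\CF$, returning that $Q$. If all searches for $k'<k$ fail, then $\CF$ contains no set of size $\le k-1$ (the covering property would otherwise have produced a ``yes''), so $\opt\ge k$ and the output satisfies $|Q|\le\lfloor\alpha k\rfloor\le\alpha\cdot\opt$; the iteration $k=n$ always succeeds since $U\in\CF$. The running time is $\sum_k|\mathcal{Q}_k|\cdot n^{\OO(1)}$, and since the probabilistic method (equivalently, LP rounding) gives $(\lfloor\alpha k\rfloor,k)$-coverings of size $\OO\!\big(\binom{n}{k}\big/\binom{\lfloor\alpha k\rfloor}{k}\cdot n\big)$, up to polynomial factors this equals $\max_k\binom{n}{k}\big/\binom{\lfloor\alpha k\rfloor}{k}$. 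Writing $k=\gamma n$ and applying Stirling,
\[
  \max_{k}\ \binom{n}{k}\Big/\binom{\lfloor\alpha k\rfloor}{k}
  \ =\ \exp\!\Big(n\cdot\max_{0\le\gamma\le 1/\alpha}\big[\entropy(\gamma)-\alpha\gamma\,\entropy(\tfrac{1}{\alpha})\big]+\OO(\log n)\Big),
\]
and a one-variable computation shows the maximum is attained at $\gamma^\star=\tfrac{a}{1+a}$ with $a=\exp(-\alpha\,\entropy(\tfrac{1}{\alpha}))$ and equals $\ln(1+a)=\ln\brute(\alpha)$; hence the running time is $(\brute(\alpha))^n\cdot n^{\OO(1)}$.

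For part~(2), fix $s=\lfloor\gamma^\star n\rfloor$ with $\gamma^\star$ as above and $t\coloneqq\lceil\alpha s\rceil$ (one checks $\alpha\gamma^\star<1$, so $t\le(1-\delta)n$ for a constant $\delta=\delta(\alpha)>0$ and $n$ large). For each $s$-subset $Z\subseteq U$ let $\CF_Z\coloneqq\{S\subseteq U:|S|>t\}\cup\{S:Z\subseteq S\}$; each $\CF_Z$ is monotone, contains $U$, and has $\opt(\CF_Z)=s$, so on input $\CF_Z$ a correct $\alpha$-approximation must output some $S^\star$ with $Z\subseteq S^\star$ and $|S^\star|\le\alpha s\le t$. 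Run an adversary against a deterministic algorithm: it answers ``yes'' to every query of size $>t$ (forced, as such a set lies in every $\CF_Z$) and ``no'' to a query $S$ of size $\le t$ whenever some $s$-subset not contained in $S$ is still consistent with the prior answers. A ``no'' to a size-$\le t$ query rules out only the at most $\binom{t}{s}$ candidate secrets contained in $S$, and the ``yes'' answers rule out none, so after $q$ queries at least $\binom{n}{s}-q\binom{t}{s}$ candidate secrets remain consistent. When the algorithm halts with output $S^\star$, correctness forces \emph{every} still-consistent secret to be a subset of $S^\star$ (otherwise the adversary realizes a secret $Z$ with $S^\star\notin\CF_Z$), and there are at most $\binom{t}{s}$ such subsets; the same bound applies the moment the adversary is forced to answer ``yes''. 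Either way $\binom{n}{s}-q\binom{t}{s}\le\binom{t}{s}$, i.e.\ $q\ge\binom{n}{s}\big/\binom{t}{s}-1$, which by the same Stirling estimate is $(\brute(\alpha))^{\,n-o(n)}$ and hence larger than $(\brute(\alpha)-\eps)^n$ for all sufficiently large $n$; since every query is a computation step, this lower-bounds the running time. For randomized algorithms one draws $Z$ uniformly at random and runs the same accounting in expectation, invoking Yao's principle.

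The step I expect to be the main obstacle is the \emph{efficient and deterministic} construction of the coverings $\mathcal{Q}_k$ in part~(1). Their mere existence is easy, and greedy set cover is deterministic, but a naive implementation of greedy enumerates all $\binom{n}{\lfloor\alpha k\rfloor}$ candidate $t$-sets, a quantity that already exceeds $(\brute(\alpha))^n$ at the critical value $k=\gamma^\star n$. Circumventing this appears to require either explicit near-optimal covering designs or a recursive block decomposition of $U$: split $U$ into two halves $U_1,U_2$, recursively build coverings on each half for all split sizes $k_1+k_2=k$, and take the product family consisting of all $Q_1\cup Q_2$ with $Q_i$ ranging over the recursively built covering of $U_i$ for parameter $k_i$, descending to blocks of size $\OO(\log n)$ (or a constant) where brute force is affordable; a Vandermonde-type estimate then shows the product family remains of near-optimal size. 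By contrast, the entropy optimization in part~(1) and the adversary accounting in part~(2) are routine once set up.
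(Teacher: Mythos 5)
This lemma is imported verbatim from \cite[Theorem 5.1]{EsmerKMNS22}; the present paper does not prove it but instead abstracts out the algorithmic half as an existence-plus-construction statement for ``$\alpha$-covering families'' (\Cref{lem:covering_implicit}), which is exactly the object you call a union over $s$ of $(\lfloor\alpha s\rfloor,s)$-coverings. Your reconstruction matches that framing: the algorithm in part~(1) is the same ``query all members of a covering family, return the lightest positive answer'' loop that the paper itself spells out at the start of \Cref{sec:brute_force}, and your entropy optimization correctly identifies $\gamma^\star=a/(1+a)$ with $a=\exp(-\alpha\entropy(1/\alpha))$ and the optimal exponent $\ln(1+a)=\ln\brute(\alpha)$.

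Your part~(2) adversary argument is also sound: the monotone instances $\CF_Z=\{S:|S|>t\}\cup\{S:Z\subseteq S\}$ have $\opt=s$, a ``no'' on a small query eliminates at most $\binom{t}{s}$ secrets, a ``yes'' is only forced once at most $\binom{t}{s}$ secrets remain, and the halting condition forces the same bound, so $q\ge\binom{n}{s}/\binom{t}{s}-1$, which is $\brute(\alpha)^{n-o(n)}$ by the same Stirling estimate. The one detail worth double-checking, which you flag in passing, is that $\alpha\gamma^\star<1$ so that the hard instance is nondegenerate; this does hold, since $a(\alpha-1)=(1-1/\alpha)^\alpha<1/e$ so $\alpha\gamma^\star=\alpha a/(1+a)$ is bounded away from $1$.

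The one genuine gap, which you correctly identify, is the deterministic, near-linear-per-output-set construction of the covering family: the probabilistic method only proves existence, naive greedy is too slow at the critical size, and some recursive/block or splitter-based derandomization is needed. That is precisely the content that the paper delegates to \cite{EsmerKMNS22} via \Cref{lem:covering_implicit}, so your ``I expect this to be the main obstacle'' assessment is accurate: the rest of your proof is complete, and this construction is the part that genuinely requires the cited machinery.
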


As the algorithmic result in \Cref{lem:EKMNS22} can be viewed as an approximate analogue of the brute-force algorithm, it is commonly referred as {\em $\alpha$-approximate brute force}.
The lower bound given in \cite{EsmerKMNS22} also holds if the algorithm is allowed to use randomization. 
As $\WSM$ is a generalization of $\USM$, the following corollary is an immediate consequence of \Cref{lem:EKMNS22}.

\begin{corollary}
	\label{cor:brute_lowerbound}
	For every $\alpha \geq 1$ and $\eps>0$
	there is no deterministic $\alpha$-approximation algorithm for $\WSM$ in the membership model which runs in time $\left( \brute(\alpha)-\eps\right)^n\cdot n^{\OO(1)}$.
\end{corollary}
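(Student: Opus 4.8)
The plan is to observe that $\WSM$ literally contains $\USM$ as the subclass of instances with unit weights, and then to transport the lower bound of \Cref{lem:EKMNS22}(2) along this inclusion. First I would assume, for contradiction, that for some $\alpha \geq 1$ and $\eps>0$ there is a deterministic $\alpha$-approximation algorithm $\mathcal{A}$ for $\WSM$ in the membership model running in time $(\brute(\alpha)-\eps)^n\cdot n^{\OO(1)}$. Given a $\USM$ instance, i.e.\ a universe $U$ of size $n$ together with a membership oracle for a monotone set system $\CF$, I would feed $\mathcal{A}$ the $\WSM$ instance $(U,\mathbf{1},\CF)$ where $\mathbf{1}$ is the all-ones weight function, passing the very same membership oracle (the oracle depends only on $\CF$, not on the weights). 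Writing down $\mathbf{1}$ costs $\OO(n)$ additional space and time, so this is absorbed in the $n^{\OO(1)}$ factor.

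Next I would check that the output is a valid $\alpha$-approximate $\USM$ solution: $\mathcal{A}$ returns some $S\in\CF$ with $\mathbf{1}(S)=|S|\le \alpha\cdot\opt(U,\mathbf{1},\CF)=\alpha\cdot\min\{|T|\mid T\in\CF\}$, which is exactly the guarantee required of an $\alpha$-approximation for $\USM$ in the membership model. Since $n$ is unchanged, the running time is still $(\brute(\alpha)-\eps)^n\cdot n^{\OO(1)}$. This contradicts \Cref{lem:EKMNS22}(2), which rules out any such algorithm for $\USM$, establishing the corollary.

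There is essentially no obstacle here: the only points needing a word of care are that the membership oracle is identical for $(U,\CF)$ and $(U,\mathbf{1},\CF)$, that the parameter $n$ (and hence the base of the exponential) is preserved under the reduction, and that the polynomial overhead of supplying the trivial weight function does not spoil the $n^{\OO(1)}$ bookkeeping. Because \Cref{lem:EKMNS22}(2) is already stated for deterministic algorithms (indeed it holds even with randomization), the deterministic hypothesis in the corollary is exactly what is contradicted, and no strengthening of the cited bound is needed.
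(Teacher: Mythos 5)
Your proposal is correct and matches the paper's approach exactly: the paper dismisses this corollary in one sentence as an immediate consequence of \Cref{lem:EKMNS22} because $\WSM$ generalizes $\USM$, and your argument is simply a careful unpacking of that reduction (restrict to unit weights, pass the same oracle, observe that $n$ and the approximation guarantee are preserved). No gaps.
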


As the bound in \Cref{lem:EKMNS22} also holds if randomization is allowed, the same holds true for the bound in \Cref{cor:brute_lowerbound}.
Our first result is a generalization of the approximate brute-force algorithm of \cite{EsmerKMNS22} for the weighted setting.
That is, we provide an algorithm which matches the lower bound in \Cref{cor:brute_lowerbound} up to a sub-exponential factor.

\begin{theorem}[Weighted Approximate Brute Force]
	\label{thm:brute_force}
	For every $\alpha> 1$ there is an $\alpha$-approximation algorithm for $\WSM$ in the membership model which runs in time $\left(\brute(\alpha)\right)^{n+o(n)}$.
\end{theorem}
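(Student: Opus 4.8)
The plan is to split the universe into ``weight classes'' of nearly-equal weight, to run a non-adaptive version of the unweighted membership algorithm of \Cref{lem:EKMNS22} inside each class, and then to glue the resulting queries together. Fix a constant $\delta>0$ small enough that $\alpha>(1+\delta)^2$ and put $\alpha'\coloneqq \frac{\alpha}{1+\delta}-\delta>1$. The first ingredient I would record is that the algorithm of \Cref{lem:EKMNS22} may be taken \emph{non-adaptive} (as is noted in the introduction for the analogous result of~\cite{EsmerKMNS23}): on a ground set $V$ it issues a fixed family $\mathcal{Q}(V)$ of membership queries with $|\mathcal{Q}(V)|\le (\brute(\alpha'))^{|V|}\cdot|V|^{\OO(1)}$, and on any monotone $\CF'\subseteq 2^V$ it returns a queried set in $\mathcal{Q}(V)\cap\CF'$ of size at most $\alpha'\cdot\opt(V,\mathbf 1,\CF')$, where $\mathbf 1$ denotes the unit weights. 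Instantiating $\CF'$ as the principal filter $\{S\subseteq V:A\subseteq S\}$ of an arbitrary $A\subseteq V$ gives the clean statement I actually use: for every $A\subseteq V$ there is $Q\in\mathcal{Q}(V)$ with $A\subseteq Q$ and $|Q|\le\alpha'|A|$. (Equivalently, one may invoke the covering family underlying \Cref{lem:EKMNS22}, which is non-adaptive by construction.)

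The second ingredient is to keep the number of weight classes small. Given an instance $(U,\w,\CF)$, I guess, over $n^{\OO(1)}$ candidates of the form $\w(u)\cdot(1+\delta)^i$, a value $W$ with $\w(\OPT)\le W\le(1+\delta)\w(\OPT)$; such a $W$ exists since $\w(\OPT)$ lies between $\max_{u\in\OPT}\w(u)$ and $n\cdot\max_{u\in\OPT}\w(u)$ (the case $\OPT=\emptyset$ is trivial). Let $\theta\coloneqq\delta W/n$. No element of weight exceeding $W$ belongs to $\OPT$, so I discard all such elements. Setting $U_{\mathrm{small}}\coloneqq\{u\in U:\w(u)\le\theta\}$, we have $\w(U_{\mathrm{small}})\le n\cdot\theta=\delta W$. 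I then partition the remaining elements $\{u:\theta<\w(u)\le W\}$ into blocks $U_1,\dots,U_r$, where $U_j$ collects the elements whose weight lies in $[\theta(1+\delta)^{j-1},\theta(1+\delta)^j)$. Since $W/\theta=n/\delta$, only $r=\OO(\log n)$ blocks are nonempty.

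It remains to glue. For each $j$ compute $\mathcal{Q}_j\coloneqq\mathcal{Q}(U_j)$, and for every tuple $(Q_1,\dots,Q_r)\in\mathcal{Q}_1\times\cdots\times\mathcal{Q}_r$ query whether $Q\coloneqq U_{\mathrm{small}}\cup Q_1\cup\cdots\cup Q_r$ belongs to $\CF$; ranging over all guesses $W$, output a minimum-weight queried set lying in $\CF$. For correctness, take the good guess $W$ and, for each $j$, pick $Q_j\in\mathcal{Q}_j$ with $\OPT\cap U_j\subseteq Q_j$ and $|Q_j|\le\alpha'\,|\OPT\cap U_j|$. Then $Q\supseteq U_{\mathrm{small}}\cup\bigcup_j(\OPT\cap U_j)\supseteq\OPT$, so $Q\in\CF$ by monotonicity; and since all elements of $U_j$ (hence of $Q_j$) have weight below $\theta(1+\delta)^j$ while all elements of $\OPT\cap U_j$ have weight at least $\theta(1+\delta)^{j-1}$, we get $\w(Q_j)\le|Q_j|\cdot\theta(1+\delta)^j\le\alpha'(1+\delta)\,\w(\OPT\cap U_j)$, and therefore $\w(Q)\le\delta W+\alpha'(1+\delta)\,\w(\OPT)\le(1+\delta)(\delta+\alpha')\,\w(\OPT)=\alpha\cdot\opt(U,\w,\CF)$. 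For the running time, $\prod_{j=1}^r|\mathcal{Q}_j|\le(\brute(\alpha'))^{\sum_j|U_j|}\cdot\prod_j|U_j|^{\OO(1)}\le(\brute(\alpha'))^{n}\cdot n^{\OO(r)}=(\brute(\alpha'))^{n}\cdot 2^{\OO(\log^2 n)}=(\brute(\alpha'))^{n+o(n)}$; the $n^{\OO(1)}$ guesses and the polynomial per-query overhead preserve this, and $(\brute(\alpha'))^{n+o(n)}\le(\brute(\alpha))^{n+o(n)}$ since $\brute$ is increasing and $\alpha'<\alpha$.

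I expect the main obstacle to be the second ingredient. A naive partition of $U$ into $\Theta(\log_{1+\delta} w_{\max})$ weight classes (where $w_{\max}$ is the largest weight) may use super-polynomially many classes, in which case the product $\prod_j|U_j|^{\OO(1)}$ of lower-order factors need no longer be $2^{o(n)}$; truncating the elements that are tiny or huge relative to the guessed $W$ is exactly what brings the number of classes down to $\OO(\log n)$, so that $n^{\OO(r)}=2^{\OO(\log^2 n)}$ is absorbed into the $o(n)$ term. A minor point is the arithmetic that keeps the combined ratio at exactly $\alpha$ rather than $\alpha+\OO(\delta)$, which is what forces the choices $\alpha'=\frac{\alpha}{1+\delta}-\delta$ and $\theta=\delta W/n$.
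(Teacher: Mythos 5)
Your decomposition — guess a weight scale, absorb the light elements wholesale, discard the heavy ones, and glue non-adaptive unweighted covering families across $\OO(\log n)$ geometric weight classes — is essentially the paper's construction (\Cref{lem:covering_family_epsilon}); the only cosmetic difference is that you guess $W$ directly whereas the paper indexes by the largest occupied weight class $k$, and the accounting inequalities are the same.

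There is, however, a genuine gap at the very last step. You write that $(\brute(\alpha'))^{n+o(n)}\le(\brute(\alpha))^{n+o(n)}$ ``since $\brute$ is increasing and $\alpha'<\alpha$'', but $\brute$ is in fact \emph{strictly decreasing} on $(1,\infty)$: $\brute(1)=2$ and $\brute(\alpha)\to 1$ as $\alpha\to\infty$, and one can check that $\alpha\cdot\entropy(1/\alpha)$ is increasing. Thus with your fixed constant $\delta>0$, the value $\alpha'=\tfrac{\alpha}{1+\delta}-\delta$ is a fixed constant strictly below $\alpha$, so $\brute(\alpha')>\brute(\alpha)$ by a fixed constant factor, and $(\brute(\alpha'))^{n}$ is exponentially larger than $(\brute(\alpha))^{n}$; the $2^{o(n)}$ slack cannot absorb it. The fix — which is precisely how the paper promotes \Cref{lem:covering_family_epsilon} to \Cref{lem:covering_family} — is to let $\delta=\delta(n)\to 0$ (e.g.\ $\delta(n)=1/\log n$), so that $\alpha'(n)\to\alpha$. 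One must then (i) re-check the running-time bookkeeping, which still works since the number of nonempty blocks becomes $r=\OO(\log^2 n)$ and $n^{\OO(r)}=2^{\OO(\log^3 n)}=2^{o(n)}$, and (ii) invoke continuity of $\brute$ to conclude $(\brute(\alpha'(n)))^{n}=(\brute(\alpha))^{n+o(n)}$, which is exactly the content of the paper's \Cref{lem:eps_subexponential}. With that substitution your argument goes through; without it the claimed running time is not established.
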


The proof of \Cref{thm:brute_force} is based on a rounding of the weight function $\w$ and utilizes a construction from \cite{EsmerKMNS22} which is applied to each of the rounded weight classes.

\subsection{Extension Oracles and Weighted Approximate Monotone Local Search}

Our second computational model deals with \emph{extension oracles}.
The input for these oracles is a set $S\subseteq U$  and a number $\ell\in \NN_{\geq 0}$ and the output is an \emph{extension} of $S$, that is, a set  $X\subseteq U$ such that $S\cup X \in \CF$.
Furthermore, the returned set $X$ is guaranteed to have a small weight in comparison to the minimum weight extension of $S$ which contains at most $\ell$ elements.
For multiple problems, such as \textsc{Vertex Cover} and \textsc{Feedback Vertex Set}, these oracles can be implemented using existing parameterized algorithms which have a running time of the form $c^{\ell} \cdot n^{\OO(1)}$.
We therefore associate a running time of $c^{\ell }$  with the query $(S,\ell)$. The formal definition of extension oracles is as follows.

\begin{definition}[Extension Oracle]
	\label{def:extension}
	Let $(U,\w,\CF)$ be a $\WSM$ instance and let $\alpha\geq 1$. An $\alpha$-extension oracle for $(U,\w,\CF)$ is a function $\oracle:2^U\times \NN_{\geq 0} \rightarrow 2^U$ such that for every $S\subseteq U$ and $\ell \in \NN_{\geq 0}$ the following holds:
	\begin{enumerate}
		\item $\oracle(S,\ell) \cup S\in \CF$.
		\item $\w\left(\oracle (S,\ell)\right) \leq \alpha \cdot \min\{ \w(X) \mid X\subseteq U,~\abs{X}\leq \ell,~X\cup S\in \CF \}$ (we set $\min\emptyset =\infty$).
	\end{enumerate}
\end{definition}

In the {\em $(\alpha,c)$-extension model} the input for the algorithm is a finite set $U$ and a weight function $\w\colon U\rightarrow \NN$.
Furthermore, the algorithm is given oracle access to an  $\alpha$-extension oracle $\oracle$ of $(U,\w,\CF)$ for some monotone set system $\CF$ of $U$.
For every $\beta \geq 1$, we say an algorithm is a $\beta$-approximation algorithm  for $\WSM$ ($\USM$) in the  $(\alpha,c)$-extension model if for every $\WSM$ ($\USM$) instance $(U,\w,\CF)$  and $\alpha$-extension oracle $\oracle$ of the instance, the algorithm returns $T\in \CF$ such that $\w(T)\leq \beta \cdot \opt(U,\w,\CF)$.
The running time of an algorithm in this model is the number of computations steps plus $c^\ell$ for every query $(S,\ell)$ issued to the oracle during the execution.
Following the standard worst case analysis convention, we say an algorithm runs in time $f(n)$ if for every $\WSM$ instance $(U,\w,\CF)$ and $\alpha$-extension oracle $\oracle$ of the instance the algorithm runs in time at most $f\left(\abs{U}\right)$.

The $(\alpha,c)$-extension  model has been studied in \cite{EsmerKMNS23} for the  special case of $\USM$.
The authors of \cite{EsmerKMNS23} provided a deterministic $\beta$-approximation algorithm in the $(\alpha,c)$-extension model, known as \emph{deterministic approximate monotone local search}, which runs in time $\left(\amlsbound(\alpha,c,\beta)\right)^{n+o(n)}$, where $\amlsbound(\alpha,c,\beta)$ is defined as the optimal value of a continuous optimization problem.
Throughout the paper we use $\amlsbound$ to denote this function.
We provide the formal definition of $\amlsbound$ in \Cref{sec:amls_def} for completeness.
We note this formal definition is not required for the understanding of the results in this paper.
It was also shown in \cite{EsmerKMNS23} that the value of $\amlsbound(\alpha,c,\beta)$ can be computed up to precision of $\eps$ in time polynomial in the encoding length of $\alpha$, $c$, $\beta$ and $\eps$.

This algorithmic result was complemented in \cite{EsmerKMNS23} with a matching lower bound.

\begin{lemma}[\cite{EsmerKMNS23}]
	\label{lem:alpha_neq_beta_lowerbound}
	 For every $\alpha,\beta,c\geq 1$ and $\eps>0$ there is no deterministic $\beta$-approximation for $\USM$ in the $(\alpha,c)$-extension model which runs in time $n^{\OO(1)}\cdot \left( \amlsbound(\alpha,c,\beta)-\eps\right)^{n}$.
\end{lemma}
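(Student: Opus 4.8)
The plan is to prove this by an adversary argument against deterministic algorithms, carried out on a one-parameter family of hard $\USM$ instances whose free parameter — the density of the hidden optimum — is set to the extremal point of the continuous program defining $\amlsbound(\alpha,c,\beta)$ (see \Cref{sec:amls_def}). Fix a density $\kappa^{*}\in(0,1/\beta)$ realizing that optimum, put $k=\lceil\kappa^{*}n\rceil$, and for every $A\subseteq U$ with $\abs{A}=k$ let $\mathcal I_{A}$ be the $\USM$ instance with universe $U$, unit weights, and set system $\CF_{A}=\{S\subseteq U:A\subseteq S\}\cup\{S\subseteq U:\abs{S}>\beta k\}$. This set system is monotone, $\opt(\mathcal I_{A})=k$, and a set $T$ is a $\beta$-approximation for $\mathcal I_{A}$ if and only if $A\subseteq T$ and $\abs{T}\le\beta k$; in particular any output of size more than $\beta k$ fails on \emph{every} $\mathcal I_{A}$. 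For each $A$ we fix the canonical $\alpha$-extension oracle $\oracle_{A}$ that returns a smallest feasible extension of $S$ of size at most $\ell$ when one exists and returns $U\setminus S$ otherwise, with the caveat that on the inputs actually queried during an execution $\oracle_{A}$ may instead return an adversary-chosen answer, provided that answer is legal for $\mathcal I_{A}$.

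Given a deterministic $\beta$-approximation algorithm $\mathcal M$, we run it while an adversary maintains the family $\mathcal A\subseteq\binom{U}{k}$ of instances still consistent with the answers given so far, starting from $\mathcal A=\binom{U}{k}$. On a query $(S,\ell)$ the adversary returns an answer $X$ and shrinks $\mathcal A$ to the subfamily of those $A\in\mathcal A$ with $S\cup X\in\CF_{A}$ and $\abs{X}\le\alpha\cdot m_{A}(S,\ell)$, where $m_{A}(S,\ell)$ is the minimum size of a feasible extension of $S$ of size at most $\ell$; the answer $X$ is chosen to keep this subfamily as large as possible. The extreme answer types are a \emph{non-revealing} one — possible only with $\abs{S\cup X}>\beta k$ — which survives for every $A$ with $\abs{A\setminus S}\ge(\beta k-\abs{S}+1)/\alpha$ and whose validity does not depend on $\ell$; and a \emph{revealing} one, $X\supseteq A\setminus S$, which discloses $A\subseteq S\cup X$, which the oracle bound forces to have size at most roughly $\alpha\ell$, and which therefore survives only for the relatively few $A$ with $A\setminus S\subseteq X$ and $\abs{A\setminus S}$ in an interval of the form $[\abs{X}/\alpha,\ell]$. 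As $\mathcal M$ is deterministic and every answer is legal for all surviving instances at once, $\mathcal M$ executes identically on each instance of the final family $\mathcal A$ and returns one common set $T$; if $\abs{T}>\beta k$ we are done, so we may assume $\abs{T}\le\beta k$.

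It then remains to show that whenever the running time $\sum_{i}c^{\ell_{i}}$ is at most $n^{\OO(1)}\cdot(\amlsbound(\alpha,c,\beta)-\eps)^{n}$, the adversary can play so that the final family has more than $\binom{\beta k}{k}$ members; since $T$ is a valid $\beta$-approximation for at most $\binom{\abs{T}}{k}\le\binom{\beta k}{k}$ of the instances $\mathcal I_{A}$ (namely those with $A\subseteq T$), some surviving $A$ then has $A\not\subseteq T$, so on $\mathcal I_{A}$ the algorithm returns the infeasible set $T$, contradicting its correctness — and the polynomial factor cannot bridge the gap between $\amlsbound-\eps$ and $\amlsbound$, so no such $\mathcal M$ exists. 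For the surviving-family bound one shows that a query $(S,\ell)$ with $\abs{S}=sn$ and $\ell=\lambda n$ lets the adversary keep at least a $\rho(s,\lambda)$-fraction of the current family, where $\rho(s,\lambda)$ depends only on $s,\lambda$: its non-revealing branch keeps a large fraction unless $s$ is large, and its revealing branch keeps about $\exp(g(s,\lambda)n)$ sets for a function $g$ controlled by hypergeometric sums $\sum_{j}\binom{\alpha\ell}{j}\binom{\abs{S}}{k-j}$, so a cheap query (small $\lambda$, hence $\abs{X}\le\alpha\ell$ small) is also weak; one then checks that $\prod_{i}\rho(s_{i},\lambda_{i})>\binom{\beta k}{k}/\binom{n}{k}$ for every query schedule with $\sum_{i}c^{\ell_{i}}<(\amlsbound-\eps)^{n}$, which is precisely the dual of the program defining $\amlsbound$ at the density $\kappa^{*}$, and uses that $c^{\kappa^{*}}\ge\amlsbound$ so that even the single expensive query $(\emptyset,k)$ is out of reach for a faster-than-$\amlsbound$ algorithm.

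The main obstacle is this last step: one must pin down the adversary's optimal response for every pair $(s,\lambda)$ — balancing the non-revealing and revealing options and the several cases inside $m_{A}(S,\ell)$ — control the binomial and hypergeometric quantities by their dominant entropy terms, and verify that the aggregate over a cost-bounded schedule is governed by $\amlsbound(\alpha,c,\beta)$ at the worst density $\kappa^{*}$ rather than by a weaker quantity. Making the adversary match the dual of the $\amlsbound$-program is where the real work lies, and it is cleaner to track a suitably weighted measure of $\mathcal A$ than its cardinality, so that the per-query shrinkage bound $\rho(s,\lambda)$ holds regardless of how $\mathcal A$ is distributed. Secondary points are that adaptivity of $\mathcal M$ buys nothing, since only the single executed root-to-leaf path of queries contributes and each is charged $c^{\ell}$, and that the edge cases ($\alpha=1$, queries with $\abs{S}<k$ or $\abs{S}>\beta k$, and the boundary density $\kappa^{*}\to1/\beta$) must be treated separately but contribute only polynomially or subexponentially to the shrinkage.
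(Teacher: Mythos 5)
Note first that the paper does not prove this lemma: it is stated as a citation of \cite{EsmerKMNS23} and is used as a black box to derive \Cref{cor:weighted_amls_lowerbound}, so there is no in-paper proof against which to compare your attempt.

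On the merits of your outline: the framework you propose --- hard instances of the form $\CF_A=\{S:A\subseteq S\}\cup\{S:|S|>\beta k\}$ with the hidden set size $k\approx\kappa^* n$ chosen at the extremal density of the $\amlsbound$ program, an adversary that prunes the surviving family after each query, and a counting contradiction because a single output $T$ with $|T|\le\beta k$ is a valid $\beta$-approximation for at most $\binom{\beta k}{k}$ of the instances --- is the right shape for a query lower bound in this model, and your distinction between ``non-revealing'' answers (those of size exceeding $\beta k-|S|$) and ``revealing'' answers ($X\supseteq A\setminus S$) correctly reflects the two ways an oracle response can satisfy \Cref{def:extension} on these instances. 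But the proposal does not close, and you say so: the quantitative core --- establishing a shrinkage bound $\rho(s,\lambda)$ per query and showing $\prod_i\rho(s_i,\lambda_i)>\binom{\beta k}{k}/\binom{n}{k}$ whenever $\sum_i c^{\ell_i}<(\amlsbound-\eps)^n$ --- is exactly where the nested $\max_\kappa\min_\tau$ with its entropy terms from \Cref{sec:amls_def} would have to emerge, and this is essentially all of the content of the lemma. You flag it as the ``main obstacle'' and leave it open. In addition, the uniform per-query bound as stated cannot depend only on the pair $(s,\lambda)$ once $\mathcal A$ has been shaped by earlier answers: after a few queries the surviving family is no longer a uniform slice of $\binom{U}{k}$, and a subsequent query can be far more effective against a concentrated $\mathcal A$ than a generic fraction-bound predicts. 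You acknowledge this with the remark that one should ``track a suitably weighted measure of $\mathcal A$,'' but no such measure is specified, and the usual alternative --- a Yao-style average-case argument over a random hidden $A$, bounding information gain in expectation per unit of query cost --- also needs to be carried out and is where the hypergeometric/entropy bookkeeping that produces $g_{\alpha,\beta,c}$ actually happens. Finally, the boundary claim that $c^{\kappa^*}\ge\amlsbound(\alpha,c,\beta)$ (needed so that the single query $(\emptyset,k)$ does not already beat the bound) is asserted but not verified against the definition in \Cref{sec:amls_def}. As written, this is a plausible research plan that identifies the correct objects, but it is not yet a proof.
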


Furthermore, it was shown that the running time of deterministic approximate monotone local search is better than brute force for $\beta>1$. 

\begin{lemma}[\cite{EsmerKMNS23}]
	\label{lem:better_than_brute}
	For all $\alpha,c\geq 1$ and $\beta>1$ it holds that $\amlsbound(\alpha,c,\beta) <\brute(\beta)$. 
\end{lemma}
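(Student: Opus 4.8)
The plan is to work directly with the description of $\amlsbound(\alpha,c,\beta)$ as the value of the continuous optimization problem governing deterministic approximate monotone local search (see \Cref{sec:amls_def} for the precise definition). Informally, this value equals $\exp\!\big(\max_{\kappa}\min_{\gamma}F(\kappa,\gamma)\big)$, where $\kappa\in[0,1]$ is the fraction of the universe occupied by an optimal solution, $\gamma\ge 0$ (subject to feasibility constraints) is the fraction of that solution that is delegated to the $\alpha$-extension oracle --- so each oracle call costs a factor $\exp(\gamma\ln c)$ --- and $F(\kappa,\gamma)$ is the resulting exponential growth rate: $\gamma\ln c$ plus the rate of $1/\Pr\big[\,X\text{ contains at least a }(\kappa-\gamma)\text{-fraction of an optimal solution's elements}\,\big]$ for a uniformly random subset $X$ whose size is a $(\beta\kappa-\alpha\gamma)$-fraction of the universe. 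The first observation is that the ``$\gamma=0$'' slice of this problem (using the extension oracle only as a membership test) is exactly the approximate brute force of \Cref{lem:EKMNS22}: after cancellation one gets $F(\kappa,0)=\entropy(\kappa)+\kappa\big((\beta-1)\ln(\beta-1)-\beta\ln\beta\big)$, and $\max_{\kappa}F(\kappa,0)=\ln\brute(\beta)$, attained at $\kappa_0=\big(1+\exp(\beta\,\entropy(1/\beta))\big)^{-1}$. Since $\gamma=0$ is always feasible, $\min_\gamma F(\kappa,\gamma)\le F(\kappa,0)$ for every $\kappa$, so $\amlsbound(\alpha,c,\beta)\le\brute(\beta)$ is immediate; the content of the lemma is the strictness of this inequality.

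I would establish strictness in two steps. First, $F(\cdot,0)$ is the sum of the strictly concave entropy function and a linear function, hence strictly concave, so $\kappa_0$ is its \emph{unique} maximizer; consequently, for every $\kappa\neq\kappa_0$, $\min_\gamma F(\kappa,\gamma)\le F(\kappa,0)<\ln\brute(\beta)$. Second, at $\kappa=\kappa_0$ it suffices to exhibit a single $\gamma>0$ with $F(\kappa_0,\gamma)<F(\kappa_0,0)$, i.e.\ to show that delegating a small positive fraction of the worst-case solution to the oracle strictly helps. This is precisely where the finiteness of $c$ and $\alpha$ is used: delegating a $\gamma$-fraction relaxes ``$X$ must contain all $\kappa_0 n$ solution elements'' to ``$X$ must contain some $(\kappa_0-\gamma)n$ of them'', which multiplies the number of relevant subsets by $\binom{\kappa_0 n}{\gamma n}=\exp\!\big(\kappa_0 n\,\entropy(\gamma/\kappa_0)(1+o(1))\big)$ at a cost of only $\exp(\gamma n\ln c)$; since $\entropy(\gamma/\kappa_0)\sim(\gamma/\kappa_0)\ln(\kappa_0/\gamma)$ beats $\gamma\ln c$ as $\gamma\to 0^+$, one gets $\lim_{\gamma\to 0^+}\frac{\partial F}{\partial\gamma}(\kappa_0,\gamma)=-\infty$, so $F(\kappa_0,\gamma)<F(\kappa_0,0)$ for all sufficiently small $\gamma>0$. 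Two side checks keep this honest: $\beta\kappa_0<1$ (which reduces to $(\beta-1)^\beta<\beta^\beta$), placing the ``contains a $(\kappa_0-\gamma)$-fraction'' event in the large-deviation regime where the stated rate is the correct one; and $\gamma\le\frac{\beta-1}{\alpha-1}\kappa_0$, which holds for small $\gamma$ and guarantees the delegated subinstance is feasible for every finite $\alpha$. Combining the two steps gives $\max_\kappa\min_\gamma F(\kappa,\gamma)<\ln\brute(\beta)$, that is, $\amlsbound(\alpha,c,\beta)<\brute(\beta)$.

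The main obstacle I anticipate is the two-level $\max_\kappa\min_\gamma$ structure, as opposed to a plain optimization: perturbing only at the worst case $\kappa_0$ is not enough, one must also rule out that some other $\kappa$ whose $\gamma$-minimum sits at $\gamma=0$ nevertheless matches the brute-force value, and that is exactly what the (not-at-all-obvious, post-cancellation) strict concavity of $F(\cdot,0)$ delivers. A secondary point is bookkeeping: checking that the feasibility constraints in the actual definition of $\amlsbound$ do not exclude the small positive $\gamma$ used above, and that the growth rates appearing in the argument are genuinely the ones realized by the derandomized algorithm (through efficient covering-family constructions), so that the perturbation lives inside the real optimization problem. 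Note that neither $\alpha$ nor $c$ enters in any way beyond being finite --- one even has $\amlsbound(\alpha,c,\beta)\to\brute(\beta)$ as $c\to\infty$ or as $\alpha\to\infty$ --- which is consistent with the bound being strict exactly for finite parameters.
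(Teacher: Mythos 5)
This lemma is cited from~\cite{EsmerKMNS23} and the present paper gives no proof of it, so there is no local argument to compare your proposal against; I will evaluate it directly against the definition of $\amlsbound$ in \Cref{sec:amls_def}. Your overall strategy is sound and is the natural one: the boundary $\tau=\beta\kappa$ of the inner minimization in $g_{\alpha,\beta,c}$ gives $\delta_{\alpha,\beta}=0$ and $\gamma_{\alpha,\beta}=1/\beta$, so $g_{\alpha,\beta,c}(\kappa,\beta\kappa)=\entropy(\kappa)-\beta\kappa\,\entropy(1/\beta)$, whose maximum over $\kappa\in[0,1/\beta]$ is exactly $\ln\brute(\beta)$ at $\kappa_0=(1+\exp(\beta\,\entropy(1/\beta)))^{-1}$; this matches your ``$\gamma=0$'' slice. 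You also correctly identify the mechanism for strictness: at $\tau=\beta\kappa$ the term $-(1-\tau)\entropy(\delta_{\alpha,\beta})$ has an infinite one-sided derivative (since $\entropy'(0^+)=+\infty$ and $\partial\delta_{\alpha,\beta}/\partial\tau<0$ when $\beta\kappa<1$), while the oracle-cost term $\frac{\beta\kappa-\tau}{\alpha}\ln c$ contributes only a finite slope, so moving $\tau$ slightly below $\beta\kappa$ strictly decreases $g$. Your feasibility checks ($\beta\kappa_0<1$ and $M_{\alpha,\beta}(\kappa_0)<\beta\kappa_0$) are also correct and needed.

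There is, however, a genuine gap in how you conclude. You establish the \emph{pointwise} bound $\min_\tau g_{\alpha,\beta,c}(\kappa,\tau)<\ln\brute(\beta)$ for every $\kappa$: strictly concave slice for $\kappa\ne\kappa_0$, and the perturbation argument at $\kappa_0$. But for $\kappa\ne\kappa_0$ your only upper bound is $F(\kappa,0)$, which tends to $\ln\brute(\beta)$ as $\kappa\to\kappa_0$, so pointwise strictness alone does not yet give $\max_\kappa\min_\tau g<\ln\brute(\beta)$. You need one more step: observe that $\kappa\mapsto\min_{M_{\alpha,\beta}(\kappa)\le\tau\le\beta\kappa} g_{\alpha,\beta,c}(\kappa,\tau)$ is upper semi-continuous on the compact interval $[0,1/\beta]$ (e.g.\ via the arguments used for \Cref{lem:amls_continuity}, or directly since $g$ is continuous and the feasible $\tau$-interval is a continuous, nonempty, compact-valued correspondence of $\kappa$), so the outer maximum is attained at some $\kappa^\ast$, and then pointwise strictness at $\kappa^\ast$ finishes the proof. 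Without this compactness/semicontinuity argument the claimed conclusion does not follow from the two steps you give. With it, the proof is complete.
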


The results of \cite{EsmerKMNS23} also include a randomized algorithm which omits the subexponential factor in the running time and the lower bound also holds for randomized algorithms.
In \cite{EsmerKMNS23} the authors used the deterministic approximate monotone local search algorithm to obtain exponential-time approximation algorithms for multiple {\em unweighted} problems such as \textsc{Vertex Cover} and \textsc{Feedback Vertex Set}.
We generalize the algorithmic results of \cite{EsmerKMNS23} to the weighted setting and similarly use it to obtain exponential-time approximation algorithms  for {\em weighted} variants of the mentioned problems (see \Cref{sec:applications}).

Since $\USM$ is a special case of $\WSM$, \Cref{lem:alpha_neq_beta_lowerbound} immediately implies the following
\begin{corollary}
	\label{cor:weighted_amls_lowerbound}
	For every $\alpha,c,\beta\geq 1$  and $\eps>0$ there is no deterministic $\beta$-approximation for $\WSM$ in the $(\alpha,c)$-extension model which runs in time $n^{\OO(1)}\cdot \left( \amlsbound(\alpha,c,\beta)-\eps\right)^{n}$.
\end{corollary}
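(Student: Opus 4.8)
The plan is to observe that the claimed lower bound is inherited directly from the unweighted case via a trivial embedding, so the whole argument is a one-step reduction rather than a fresh impossibility proof. Concretely, I would argue by contradiction: assume there is a deterministic $\beta$-approximation algorithm $\mathcal{A}$ for $\WSM$ in the $(\alpha,c)$-extension model whose running time on instances over a universe of size $n$ is at most $n^{\OO(1)}\cdot(\amlsbound(\alpha,c,\beta)-\eps)^n$ for some fixed $\eps>0$. The goal is to extract from $\mathcal{A}$ a deterministic $\beta$-approximation algorithm for $\USM$ in the same model with the same running-time bound, contradicting \Cref{lem:alpha_neq_beta_lowerbound}.

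First I would set up the embedding of $\USM$ into $\WSM$. Given a $\USM$ instance $(U,\CF)$ together with an $\alpha$-extension oracle $\oracle$ for it, form the $\WSM$ instance $(U,\w,\CF)$ with the constant weight function $\w\equiv 1$. The key routine check is that $\oracle$ is also a legal $\alpha$-extension oracle for $(U,\w,\CF)$ in the sense of \Cref{def:extension}: property~(1) is unchanged since it does not mention weights, and for property~(2) note that under $\w\equiv 1$ we have $\w(X)=\abs{X}$ for every $X\subseteq U$, so the weighted guarantee $\w(\oracle(S,\ell))\le\alpha\cdot\min\{\w(X)\mid \abs{X}\le\ell,\ X\cup S\in\CF\}$ is literally the cardinality guarantee that the unweighted extension oracle already satisfies.

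Then I would run $\mathcal{A}$ on $(U,\w)$ with the oracle $\oracle$. It returns some $T\in\CF$ with $\w(T)\le\beta\cdot\opt(U,\w,\CF)$; since $\w\equiv 1$ this says $\abs{T}\le\beta\cdot\opt(U,\CF)$, i.e.\ $\mathcal{A}$ composed with the trivial embedding is a deterministic $\beta$-approximation for $\USM$ in the $(\alpha,c)$-extension model. The running time is preserved verbatim: the computation steps are identical, and each oracle query $(S,\ell)$ is still charged $c^\ell$, so the total remains at most $n^{\OO(1)}\cdot(\amlsbound(\alpha,c,\beta)-\eps)^n$. This contradicts \Cref{lem:alpha_neq_beta_lowerbound}, completing the proof.

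I do not expect any genuine obstacle: the only point needing a moment's thought is confirming that the models are nested in the way the phrase ``$\USM$ is a special case of $\WSM$'' suggests — in particular that an unweighted extension oracle is exactly a unit-weight extension oracle, and that both the approximation guarantee and the running-time accounting transfer without change. Everything else is bookkeeping, and no result beyond \Cref{lem:alpha_neq_beta_lowerbound} is invoked. If one additionally wants the randomized version of the statement, the same reduction applies verbatim, since the lower bound of \Cref{lem:alpha_neq_beta_lowerbound} also holds against randomized algorithms.
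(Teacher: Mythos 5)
Your proposal is correct and matches the paper's reasoning exactly: the paper derives \Cref{cor:weighted_amls_lowerbound} from \Cref{lem:alpha_neq_beta_lowerbound} precisely by observing that $\USM$ is the unit-weight special case of $\WSM$, so any $\WSM$ algorithm in the $(\alpha,c)$-extension model is in particular a $\USM$ algorithm with the same guarantee and running time. You have simply spelled out the routine checks (the oracle definition and the time accounting coincide under $\w\equiv 1$) that the paper leaves implicit.
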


Our second result is an algorithm which matches the running time in \Cref{cor:weighted_amls_lowerbound} to a factor of $\eps$  in the running time.
\begin{theorem}[Weighted Approximate Montone Local Search]
	\label{thm:weighted_amls}
	For every $\alpha,c\geq 1$, $\beta >1$ and $\eps>0$ there is a deterministic $\beta$-approximation for $\WSM$ in the $(\alpha,c)$-extension model which runs in time $\OO\left(\left( \amlsbound(\alpha,c,\beta)+\eps\right)^{n}\right)$.
\end{theorem}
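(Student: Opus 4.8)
The plan is to follow the weight‑class strategy sketched in the introduction, but prefaced by one extra guessing step that keeps the number of ``relevant'' weight classes down to $\OO(\log n)$ — this is what makes the running‑time accounting survive. Fix a small parameter $\eta=\eta(\alpha,c,\beta,\eps)>0$ and a number $\beta''\in(1,\beta)$, to be chosen at the end so that $(\beta''+\eta)(1+\eta)^{2}\le\beta$ and, by monotonicity/continuity of $\amlsbound$ in its last argument, $\amlsbound(\alpha,c,\beta'')\le\amlsbound(\alpha,c,\beta)+\eps/2$. Since elements of weight $0$ may be forced into every query (monotonicity makes this harmless and weight‑neutral), assume all weights are positive, and assume $\OPT\neq\emptyset$ (else the query $(\emptyset,0)$ settles the instance). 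Replace $\w$ by $\w'(u)\coloneqq(1+\eta)^{\lceil\log_{1+\eta}\w(u)\rceil}$, so $\w(u)\le\w'(u)<(1+\eta)\w(u)$; a $(\beta/(1+\eta))$‑approximation for $\w'$ is a $\beta$‑approximation for $\w$, so it suffices to work with $\w'$, whose values are powers of $(1+\eta)$. Now guess the \emph{heaviest element $u^\star$ of $\OPT$}: there are only $n$ choices, and we run the procedure below for each, keeping the best set returned. Given $u^\star$, discard every $u$ with $\w'(u)>\w'(u^\star)$ (none lie in $\OPT$); call $u$ \emph{super‑light} if $\w'(u)<\eta\cdot\w'(u^\star)/n$, let $L$ be the set of super‑light elements (so $\w'(L)<\eta\cdot\w'(u^\star)\le\eta\cdot\w'(\OPT)$, as $u^\star\in\OPT$), and force $L$ into every query. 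The surviving \emph{relevant} elements have $\w'$‑weight in $[\eta\,\w'(u^\star)/n,\ \w'(u^\star)]$, a range of ratio $n/\eta$, hence fall into only $r\le\log_{1+\eta}(n/\eta)+1=\OO(\log n)$ classes $U_1,\dots,U_r$, where all elements of $U_i$ have a common $\w'$‑value $(1+\eta)^{a_i}$; within each $U_i$ the problem is thus unweighted.

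Next, invoke the deterministic approximate monotone local search of \cite{EsmerKMNS23} (with parameters $\alpha,c,\beta''$) on each set $U_i$, using the fact that its oracle queries are made non‑adaptively: let $\mathcal{S}_i\subseteq 2^{U_i}\times\NN_{\ge0}$ be the resulting family of (set, budget) pairs, which depends only on $U_i$. Because that algorithm returns a $\beta''$‑approximation against an \emph{arbitrary} $\alpha$‑extension oracle of an \emph{arbitrary} monotone set system on $U_i$ — in particular the up‑closure of any fixed $O\subseteq U_i$ — the family has the coverage property: for every $O\subseteq U_i$ there is $(S,\ell)\in\mathcal{S}_i$ with $S\subseteq O$, $\ell\ge|O|-|S|$ and $|S|+\alpha(|O|-|S|)\le\beta''|O|$. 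Its running‑time guarantee gives the cost property $|\mathcal{S}_i|\le(\amlsbound(\alpha,c,\beta''))^{|U_i|+o(|U_i|)}$ and $\sum_{(S,\ell)\in\mathcal{S}_i}c^{\ell}\le(\amlsbound(\alpha,c,\beta''))^{|U_i|+o(|U_i|)}$, with $\mathcal{S}_i$ computable within this time. Form the combined query set
\[
  \mathcal{Q}\coloneqq\Bigl\{\bigl(L\cup\textstyle\bigcup_{i=1}^{r}S_i,\ \sum_{i=1}^{r}\ell_i\bigr)\ :\ (S_i,\ell_i)\in\mathcal{S}_i\ \text{for all }i\Bigr\},
\]
issue every $(S,\ell)\in\mathcal{Q}$ to the $\alpha$‑extension oracle, record each feasible set $S\cup\oracle(S,\ell)$, and output the recorded set of minimum $\w'$‑weight over all queries and all guesses $u^\star$.

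For correctness, take $u^\star$ to be the true heaviest element of $\OPT$ and put $\OPT_i\coloneqq\OPT\cap U_i$, so $\OPT=(\OPT\cap L)\cup\bigcup_i\OPT_i$. By the coverage property choose $(S_i,\ell_i)\in\mathcal{S}_i$ with $S_i\subseteq\OPT_i$, $\ell_i\ge|\OPT_i|-|S_i|$ and $|S_i|+\alpha(|\OPT_i|-|S_i|)\le\beta''|\OPT_i|$; let $S\coloneqq L\cup\bigcup_iS_i$ and $\ell\coloneqq\sum_i\ell_i$, so $(S,\ell)\in\mathcal{Q}$. Writing $\OPT^{+}\coloneqq\OPT\cup L\in\CF$, one checks $S\subseteq\OPT^{+}$ and $\OPT^{+}\setminus S=\bigcup_i(\OPT_i\setminus S_i)$ (a disjoint union over classes), hence $|\OPT^{+}\setminus S|=\sum_i|\OPT_i\setminus S_i|\le\sum_i\ell_i=\ell$. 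Thus $\OPT^{+}\setminus S$ witnesses $\min\{\w'(X):|X|\le\ell,\ X\cup S\in\CF\}\le\w'(\OPT^{+}\setminus S)$, so $\oracle(S,\ell)$ has $\w'$‑weight at most $\alpha\,\w'(\OPT^{+}\setminus S)$ and
\[
  \w'\bigl(S\cup\oracle(S,\ell)\bigr)\ \le\ \w'(L)+\sum_{i=1}^{r}(1+\eta)^{a_i}\bigl(|S_i|+\alpha(|\OPT_i|-|S_i|)\bigr)\ \le\ \w'(L)+\beta''\sum_{i=1}^{r}\w'(\OPT_i).
\]
Since $\w'(L)\le\eta\,\w'(\OPT)$ and $\sum_i\w'(\OPT_i)\le\w'(\OPT)$, this is $\le(\beta''+\eta)\w'(\OPT)$, and by the choice of $\beta''$ and $\w'(\OPT)<(1+\eta)\opt(U,\w,\CF)$ the returned set is a $\beta$‑approximation for $\w$.

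The running time is the crux, and it is exactly why $u^\star$ was guessed. We have $|\mathcal{Q}|=\prod_i|\mathcal{S}_i|$ and $\sum_{(S,\ell)\in\mathcal{Q}}c^{\ell}=\prod_i\bigl(\sum_{(S_i,\ell_i)\in\mathcal{S}_i}c^{\ell_i}\bigr)$, each combined query costing $n^{\OO(1)}$ plus its oracle cost $c^{\ell}$; with $m=\sum_i|U_i|\le n$ this gives total time $n^{\OO(1)}\cdot(\amlsbound(\alpha,c,\beta''))^{m+\sum_i o(|U_i|)}$. The key point is that there are only $r=\OO(\log n)$ classes: splitting the $o(|U_i|)$ terms at a constant threshold $M=M(\eta)$, the classes with $|U_i|\ge M$ contribute at most $\eta\sum_i|U_i|\le\eta n$ to the exponent, while the $\OO(\log n)$ classes with $|U_i|<M$ contribute only $\OO(M\log n)=o(n)$. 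Hence one guess costs $(\amlsbound(\alpha,c,\beta''))^{(1+\eta)n+o(n)}$; multiplying by the $n$ guesses of $u^\star$ and using $\amlsbound(\alpha,c,\beta'')\le\amlsbound(\alpha,c,\beta)+\eps/2$ with $\eta$ small enough yields total time $\OO\bigl((\amlsbound(\alpha,c,\beta)+\eps)^{n}\bigr)$. The main obstacle is precisely this step: had we partitioned into exact weight classes without the $u^\star$ reduction, an instance with $\Theta(n)$ distinct weights would force $\Theta(n)$ singleton classes, each contributing a constant factor $>1$ to $\prod_i|\mathcal{S}_i|$ and hence a bound like $2^{\Theta(n)}$, worse than the target — so the bulk of the work is in showing the relevant weight range can be compressed to $\OO(\log n)$ classes while preserving the approximation ratio.
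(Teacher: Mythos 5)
Your proposal is correct and follows essentially the same route as the paper: round weights geometrically, restrict attention to a window of $\OO(\log n)$ weight classes whose lower tail is weight-negligible and can be forced into every query, build an unweighted extension family per class via \cite{EsmerKMNS23}, take products across the window, and pay an $o(n)$ loss in the exponent via a threshold argument (the paper packages this as \Cref{lem:o_notation}). The only substantive bookkeeping difference is that you guess the heaviest element $u^\star$ of an optimal solution (an outer loop of $n$ iterations), whereas the paper instead indexes directly by the largest occupied weight class $k$ and unions the corresponding query families $\mathcal{Q}_k$; both give the same $\OO(n)$ blowup. One small inaccuracy: the coverage property you extract from the unweighted construction need not give $S\subseteq O$ (an $(\alpha,\beta)$-extension family only guarantees $\abs{O\setminus S}\le\ell$ and $\abs{S}+\alpha\abs{O\setminus S}\le\beta''\abs{O}$). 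Your subsequent computations should therefore use $\abs{\OPT_i\setminus S_i}$ in place of $\abs{\OPT_i}-\abs{S_i}$; after this substitution every step goes through unchanged, so the extra condition $S\subseteq O$ is both unproved and unnecessary.
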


The proof of \Cref{thm:weighted_amls} is similar to the one of \Cref{thm:brute_force}.
It applies rounding to the weights on the elements, and then utilizes a construction from \cite{EsmerKMNS23} for each of the weight classes.

The result of \Cref{thm:weighted_amls} only applies for $\beta>1$.
If $\alpha=\beta=1$ the result of \cite{FominGLS19} can be used to obtain an exact algorithm of running time $n^{\OO(1)}\cdot \left(2-\frac{1}{c}\right)^n$.
For $\alpha>\beta=1$ \Cref{cor:weighted_amls_lowerbound} implies that the best possible running time is $\OO^*(2^{n})$ which can be attained by brute force.

\section{Applications}
\label{sec:applications}
Our results provide exponential-time approximation algorithms for a variety of weighted vertex-deletions problems.
For illustration purposes, let us first focus on the  \textsc{Weighted Vertex Cover} problem where we are given a graph $G$ with vertex weights $\w \colon V(G) \to \NN$, and we ask for a vertex cover $S \subseteq V(G)$ of minimum weight $\w(S) = \sum_{v \in S} \w(v)$.
It is well-known that \textsc{Weighted Vertex Cover} admits a polynomial-time $2$-approximation algorithm \cite{Bar-YehudaE81}.
Also, the problem can be solved exactly in time $\OO^*(1.238^n)$ \cite{Wahlstrom08}.

For the unweighted version \textsc{Unweighted Vertex Cover}, Bourgeois, Escoffier and Paschos~\cite{BourgeoisEP11} designed several exponential-time approximation algorithms for approximation ratios in the range $(1,2)$.
For example, they obtain a $1.1$-approximation algorithm running in time $\CO^*(1.127^n)$ where $n$ denotes the number of vertices of the input graph.
These running times are further improved in \cite{EsmerKMNS22,EsmerKMNS23} using the framework of Approximate Monotone Local Search.
Indeed, the fastest known $1.1$-approximation algorithm for \textsc{Unweighted Vertex Cover} runs in time $\CO^*(1.113^n)$ \cite{EsmerKMNS23}.

For the weighted version, no such results have been obtained so far.
We use Theorem \ref{thm:weighted_amls} to design the first exponential $\beta$-approximation algorithms for \textsc{Weighted Vertex Cover} for all $\beta \in (1,2)$.
For the extension oracle, we can rely on the well-known polynomial-time $2$-approximation algorithm.
Given a set $S \subseteq V(G)$, we delete all vertices in $S$ and apply the $2$-approximation algorithm to the graph $G - S$ which outputs a vertex cover $X$ such that $\w(X) \leq 2 \cdot \w(\OPT)$ where $\OPT$ denotes a minimum vertex cover of $G - S$.
As a result, we can implement a $2$-extension oracle in polynomial time which corresponds (up to polynomial factors) to cost $c = 1$.
So Theorem \ref{thm:weighted_amls} results in a $\beta$-approximation algorithm for \textsc{Weighted Vertex Cover} which runs in time $\OO^*(\left(\amlsbound(\alpha,c,\beta) + \eps\right)^{n})$ for every $\beta > 1$, where $\alpha = 2$ and $c = 1$.
A visualization is given in Figure \ref{fig:runtimes_vc}.

\begin{figure}
	\centering
	\begin{subfigure}{.5\textwidth}
		\centering
		\begin{tikzpicture}[scale = 0.9]
	\begin{axis}[xmin = 1, xmax = 2, ymin = 0.9, ymax = 2.1, xlabel = {approximation ratio}]

	\addplot[teal, thick] coordinates {
		(1.0, 2.0)
		(1.01, 1.9454431345)
		(1.02, 1.9062508454)
		(1.03, 1.8731549013)
		(1.04, 1.8440491304)
		(1.05, 1.8178991111)
		(1.06, 1.794081097)
		(1.07, 1.7721758604)
		(1.08, 1.7518817491)
		(1.09, 1.7329712548)
		(1.1, 1.7152667656)
		(1.11, 1.698625911)
		(1.12, 1.6829321593)
		(1.13, 1.6680884977)
		(1.14, 1.6540130203)
		(1.15, 1.6406357531)
		(1.16, 1.6278963084)
		(1.17, 1.615742114)
		(1.18, 1.6041270506)
		(1.19, 1.5930103866)
		(1.2, 1.5823559323)
		(1.21, 1.5721313608)
		(1.22, 1.5623076557)
		(1.23, 1.552858658)
		(1.24, 1.5437606905)
		(1.25, 1.534992244)
		(1.26, 1.5265337131)
		(1.27, 1.5183671728)
		(1.28, 1.510476187)
		(1.29, 1.5028456449)
		(1.3, 1.4954616201)
		(1.31, 1.4883112476)
		(1.32, 1.4813826173)
		(1.33, 1.4746646809)
		(1.34, 1.4681471696)
		(1.35, 1.4618205222)
		(1.36, 1.4556758212)
		(1.37, 1.4497047356)
		(1.38, 1.443899471)
		(1.39, 1.4382527242)
		(1.4, 1.4327576428)
		(1.41, 1.427407789)
		(1.42, 1.4221971069)
		(1.43, 1.4171198929)
		(1.44, 1.4121707695)
		(1.45, 1.4073446605)
		(1.46, 1.4026367697)
		(1.47, 1.3980425604)
		(1.48, 1.3935577374)
		(1.49, 1.3891782307)
		(1.5, 1.3849001795)
		(1.51, 1.380719919)
		(1.52, 1.3766339674)
		(1.53, 1.3726390137)
		(1.54, 1.3687319076)
		(1.55, 1.3649096489)
		(1.56, 1.3611693784)
		(1.57, 1.3575083696)
		(1.58, 1.3539240206)
		(1.59, 1.3504138468)
		(1.6, 1.3469754742)
		(1.61, 1.343606633)
		(1.62, 1.3403051519)
		(1.63, 1.3370689522)
		(1.64, 1.3338960432)
		(1.65, 1.3307845174)
		(1.66, 1.3277325456)
		(1.67, 1.3247383734)
		(1.68, 1.3218003168)
		(1.69, 1.3189167589)
		(1.7, 1.3160861463)
		(1.71, 1.3133069861)
		(1.72, 1.3105778425)
		(1.73, 1.3078973347)
		(1.74, 1.3052641335)
		(1.75, 1.3026769593)
		(1.76, 1.3001345795)
		(1.77, 1.2976358065)
		(1.78, 1.2951794954)
		(1.79, 1.2927645423)
		(1.8, 1.2903898821)
		(1.81, 1.2880544871)
		(1.82, 1.2857573652)
		(1.83, 1.2834975581)
		(1.84, 1.2812741403)
		(1.85, 1.2790862173)
		(1.86, 1.2769329244)
		(1.87, 1.2748134255)
		(1.88, 1.2727269118)
		(1.89, 1.2706726008)
		(1.9, 1.2686497351)
		(1.91, 1.2666575813)
		(1.92, 1.2646954293)
		(1.93, 1.2627625911)
		(1.94, 1.2608584001)
		(1.95, 1.2589822102)
		(1.96, 1.2571333949)
		(1.97, 1.2553113469)
		(1.98, 1.2535154767)
		(1.99, 1.2517452127)
		(2.0, 1.25)
	};

	\addplot[ red , thick] coordinates {
		(1.0, 1.2663242847)
		(1.01, 1.2367469348)
		(1.02, 1.2200190205)
		(1.03, 1.2073852094)
		(1.04, 1.1970937089)
		(1.05, 1.1883673416)
		(1.06, 1.1807772732)
		(1.07, 1.1740569098)
		(1.08, 1.1680272319)
		(1.09, 1.1625612928)
		(1.1, 1.1575653247)
		(1.11, 1.1529678327)
		(1.12, 1.1487128983)
		(1.13, 1.1447558616)
		(1.14, 1.1410604257)
		(1.15, 1.1375966491)
		(1.16, 1.1343395157)
		(1.17, 1.131267891)
		(1.18, 1.128363745)
		(1.19, 1.1256115629)
		(1.2, 1.1229978911)
		(1.21, 1.1205109827)
		(1.22, 1.1181405165)
		(1.23, 1.1158773732)
		(1.24, 1.1137134534)
		(1.25, 1.1116415299)
		(1.26, 1.1096551254)
		(1.27, 1.1077484119)
		(1.28, 1.1059161259)
		(1.29, 1.1041534977)
		(1.3, 1.1024561909)
		(1.31, 1.1008202517)
		(1.32, 1.0992420649)
		(1.33, 1.0977183163)
		(1.34, 1.0962459602)
		(1.35, 1.0948221912)
		(1.36, 1.0934444197)
		(1.37, 1.0921102501)
		(1.38, 1.0908174625)
		(1.39, 1.0895639955)
		(1.4, 1.0883479319)
		(1.41, 1.0871674852)
		(1.42, 1.0860209888)
		(1.43, 1.0849068849)
		(1.44, 1.0838237156)
		(1.45, 1.0827701148)
		(1.46, 1.0817448004)
		(1.47, 1.0807465679)
		(1.48, 1.079774284)
		(1.49, 1.0788268817)
		(1.5, 1.0779033547)
		(1.51, 1.0770027536)
		(1.52, 1.076124181)
		(1.53, 1.0752667885)
		(1.54, 1.074429773)
		(1.55, 1.0736123734)
		(1.56, 1.0728138682)
		(1.57, 1.0720335723)
		(1.58, 1.0712708352)
		(1.59, 1.0705250383)
		(1.6, 1.0697955934)
		(1.61, 1.06908194)
		(1.62, 1.0683835446)
		(1.63, 1.0676998982)
		(1.64, 1.0670305153)
		(1.65, 1.0663749323)
		(1.66, 1.0657327066)
		(1.67, 1.0651034148)
		(1.68, 1.064486652)
		(1.69, 1.063882031)
		(1.7, 1.0632891806)
		(1.71, 1.0627077456)
		(1.72, 1.0621373851)
		(1.73, 1.0615777724)
		(1.74, 1.061028594)
		(1.75, 1.0604895489)
		(1.76, 1.0599603478)
		(1.77, 1.059440713)
		(1.78, 1.0589303774)
		(1.79, 1.0584290839)
		(1.8, 1.0579365855)
		(1.81, 1.057452644)
		(1.82, 1.0569770304)
		(1.83, 1.0565095238)
		(1.84, 1.0560499114)
		(1.85, 1.0555979879)
		(1.86, 1.0551535556)
		(1.87, 1.0547164233)
		(1.88, 1.0542864067)
		(1.89, 1.0538633278)
		(1.9, 1.0534470148)
		(1.91, 1.0530373014)
		(1.92, 1.052634027)
		(1.93, 1.0522370364)
		(1.94, 1.0518461795)
		(1.95, 1.0514613108)
		(1.96, 1.0510822898)
		(1.97, 1.0507089804)
		(1.98, 1.0503412507)
		(1.99, 1.049978973)
		(2.0, 1.0496220236)
	};

	\addplot[blue , thick] coordinates {
		(1.0, 2.0)
		(1.01, 1.9387047513)
		(1.02, 1.8930929239)
		(1.03, 1.853849694)
		(1.04, 1.8188398378)
		(1.05, 1.7870069014)
		(1.06, 1.7577092525)
		(1.07, 1.7305126)
		(1.08, 1.705102307)
		(1.09, 1.6812394878)
		(1.1, 1.6587364406)
		(1.11, 1.6374417619)
		(1.12, 1.6172307723)
		(1.13, 1.5979990628)
		(1.14, 1.5796579792)
		(1.15, 1.5621313614)
		(1.16, 1.545353129)
		(1.17, 1.5292654531)
		(1.18, 1.5138173453)
		(1.19, 1.498963551)
		(1.2, 1.484663669)
		(1.21, 1.4708814415)
		(1.22, 1.457584176)
		(1.23, 1.4447422687)
		(1.24, 1.4323288084)
		(1.25, 1.4203192453)
		(1.26, 1.4086911107)
		(1.27, 1.3974237788)
		(1.28, 1.3864982635)
		(1.29, 1.3758970425)
		(1.3, 1.3656039065)
		(1.31, 1.3556038269)
		(1.32, 1.3458828408)
		(1.33, 1.3364279501)
		(1.34, 1.3272270326)
		(1.35, 1.3182687633)
		(1.36, 1.3095425448)
		(1.37, 1.3010384449)
		(1.38, 1.2927471418)
		(1.39, 1.2846598738)
		(1.4, 1.2767683951)
		(1.41, 1.2690649358)
		(1.42, 1.2615421652)
		(1.43, 1.2541931594)
		(1.44, 1.2470113718)
		(1.45, 1.2399906054)
		(1.46, 1.233124989)
		(1.47, 1.2264089541)
		(1.48, 1.2198372148)
		(1.49, 1.2134047491)
		(1.5, 1.2071067812)
		(1.51, 1.2009387662)
		(1.52, 1.1948963754)
		(1.53, 1.1889754825)
		(1.54, 1.1831721518)
		(1.55, 1.1774826264)
		(1.56, 1.1719033178)
		(1.57, 1.1664307958)
		(1.58, 1.1610617798)
		(1.59, 1.1557931299)
		(1.6, 1.1506218396)
		(1.61, 1.1455450278)
		(1.62, 1.1405599327)
		(1.63, 1.1356639047)
		(1.64, 1.1308544012)
		(1.65, 1.1261289803)
		(1.66, 1.1214852963)
		(1.67, 1.1169210943)
		(1.68, 1.112434206)
		(1.69, 1.1080225451)
		(1.7, 1.1036841036)
		(1.71, 1.0994169478)
		(1.72, 1.0952192149)
		(1.73, 1.0910891093)
		(1.74, 1.0870249001)
		(1.75, 1.0830249175)
		(1.76, 1.0790875504)
		(1.77, 1.0752112435)
		(1.78, 1.0713944949)
		(1.79, 1.067635854)
		(1.8, 1.0639339188)
		(1.81, 1.060287334)
		(1.82, 1.0566947891)
		(1.83, 1.0531550165)
		(1.84, 1.0496667894)
		(1.85, 1.0462289206)
		(1.86, 1.0428402603)
		(1.87, 1.0394996953)
		(1.88, 1.0362061466)
		(1.89, 1.0329585688)
		(1.9, 1.0297559486)
		(1.91, 1.0265973033)
		(1.92, 1.0234816797)
		(1.93, 1.0204081532)
		(1.94, 1.0173758263)
		(1.95, 1.0143838281)
		(1.96, 1.0114313128)
		(1.97, 1.008517459)
		(1.98, 1.0056414688)
		(1.99, 1.002802567)
		(2.0, 1.0)
	};

	\addlegendentry[no markers, teal]{brute}
	\addlegendentry[no markers, red]{$\alpha = 1.0$, $c = 1.363$}
	\addlegendentry[no markers, blue]{$\alpha = 2.0$, $c = 1.0$}

	\end{axis}
\end{tikzpicture}
		\label{fig:vc_results}
	\end{subfigure}%
	\begin{subfigure}{.5\textwidth}
		\centering
		\begin{tikzpicture}[scale = 0.9]
	\begin{axis}[xmin = 1, xmax = 2, ymax = 1.29, xlabel = {approximation ratio}]

	\addplot[red , thick] coordinates {
		(1.0, 1.2663242847)
		(1.01, 1.2367469348)
		(1.02, 1.2200190205)
		(1.03, 1.2073852094)
		(1.04, 1.1970937089)
		(1.05, 1.1883673416)
		(1.06, 1.1807772732)
		(1.07, 1.1740569098)
		(1.08, 1.1680272319)
		(1.09, 1.1625612928)
		(1.1, 1.1575653247)
		(1.11, 1.1529678327)
		(1.12, 1.1487128983)
		(1.13, 1.1447558616)
		(1.14, 1.1410604257)
		(1.15, 1.1375966491)
		(1.16, 1.1343395157)
		(1.17, 1.131267891)
		(1.18, 1.128363745)
		(1.19, 1.1256115629)
		(1.2, 1.1229978911)
		(1.21, 1.1205109827)
		(1.22, 1.1181405165)
		(1.23, 1.1158773732)
		(1.24, 1.1137134534)
		(1.25, 1.1116415299)
		(1.26, 1.1096551254)
		(1.27, 1.1077484119)
		(1.28, 1.1059161259)
		(1.29, 1.1041534977)
		(1.3, 1.1024561909)
		(1.31, 1.1008202517)
		(1.32, 1.0992420649)
		(1.33, 1.0977183163)
		(1.34, 1.0962459602)
		(1.35, 1.0948221912)
		(1.36, 1.0934444197)
		(1.37, 1.0921102501)
		(1.38, 1.0908174625)
		(1.39, 1.0895639955)
		(1.4, 1.0883479319)
		(1.41, 1.0871674852)
		(1.42, 1.0860209888)
		(1.43, 1.0849068849)
		(1.44, 1.0838237156)
		(1.45, 1.0827701148)
		(1.46, 1.0817448004)
		(1.47, 1.0807465679)
		(1.48, 1.079774284)
		(1.49, 1.0788268817)
		(1.5, 1.0779033547)
		(1.51, 1.0770027536)
		(1.52, 1.076124181)
		(1.53, 1.0752667885)
		(1.54, 1.074429773)
		(1.55, 1.0736123734)
		(1.56, 1.0728138682)
		(1.57, 1.0720335723)
		(1.58, 1.0712708352)
		(1.59, 1.0705250383)
		(1.6, 1.0697955934)
		(1.61, 1.06908194)
		(1.62, 1.0683835446)
		(1.63, 1.0676998982)
		(1.64, 1.0670305153)
		(1.65, 1.0663749323)
		(1.66, 1.0657327066)
		(1.67, 1.0651034148)
		(1.68, 1.064486652)
		(1.69, 1.063882031)
		(1.7, 1.0632891806)
		(1.71, 1.0627077456)
		(1.72, 1.0621373851)
		(1.73, 1.0615777724)
		(1.74, 1.061028594)
		(1.75, 1.0604895489)
		(1.76, 1.0599603478)
		(1.77, 1.059440713)
		(1.78, 1.0589303774)
		(1.79, 1.0584290839)
		(1.8, 1.0579365855)
		(1.81, 1.057452644)
		(1.82, 1.0569770304)
		(1.83, 1.0565095238)
		(1.84, 1.0560499114)
		(1.85, 1.0555979879)
		(1.86, 1.0551535556)
		(1.87, 1.0547164233)
		(1.88, 1.0542864067)
		(1.89, 1.0538633278)
		(1.9, 1.0534470148)
		(1.91, 1.0530373014)
		(1.92, 1.052634027)
		(1.93, 1.0522370364)
		(1.94, 1.0518461795)
		(1.95, 1.0514613108)
		(1.96, 1.0510822898)
		(1.97, 1.0507089804)
		(1.98, 1.0503412507)
		(1.99, 1.049978973)
		(2.0, 1.0496220236)
	};

	\addplot[blue , thick] coordinates {
		(1.0, 2.0)
		(1.01, 1.9387047513)
		(1.02, 1.8930929239)
		(1.03, 1.853849694)
		(1.04, 1.8188398378)
		(1.05, 1.7870069014)
		(1.06, 1.7577092525)
		(1.07, 1.7305126)
		(1.08, 1.705102307)
		(1.09, 1.6812394878)
		(1.1, 1.6587364406)
		(1.11, 1.6374417619)
		(1.12, 1.6172307723)
		(1.13, 1.5979990628)
		(1.14, 1.5796579792)
		(1.15, 1.5621313614)
		(1.16, 1.545353129)
		(1.17, 1.5292654531)
		(1.18, 1.5138173453)
		(1.19, 1.498963551)
		(1.2, 1.484663669)
		(1.21, 1.4708814415)
		(1.22, 1.457584176)
		(1.23, 1.4447422687)
		(1.24, 1.4323288084)
		(1.25, 1.4203192453)
		(1.26, 1.4086911107)
		(1.27, 1.3974237788)
		(1.28, 1.3864982635)
		(1.29, 1.3758970425)
		(1.3, 1.3656039065)
		(1.31, 1.3556038269)
		(1.32, 1.3458828408)
		(1.33, 1.3364279501)
		(1.34, 1.3272270326)
		(1.35, 1.3182687633)
		(1.36, 1.3095425448)
		(1.37, 1.3010384449)
		(1.38, 1.2927471418)
		(1.39, 1.2846598738)
		(1.4, 1.2767683951)
		(1.41, 1.2690649358)
		(1.42, 1.2615421652)
		(1.43, 1.2541931594)
		(1.44, 1.2470113718)
		(1.45, 1.2399906054)
		(1.46, 1.233124989)
		(1.47, 1.2264089541)
		(1.48, 1.2198372148)
		(1.49, 1.2134047491)
		(1.5, 1.2071067812)
		(1.51, 1.2009387662)
		(1.52, 1.1948963754)
		(1.53, 1.1889754825)
		(1.54, 1.1831721518)
		(1.55, 1.1774826264)
		(1.56, 1.1719033178)
		(1.57, 1.1664307958)
		(1.58, 1.1610617798)
		(1.59, 1.1557931299)
		(1.6, 1.1506218396)
		(1.61, 1.1455450278)
		(1.62, 1.1405599327)
		(1.63, 1.1356639047)
		(1.64, 1.1308544012)
		(1.65, 1.1261289803)
		(1.66, 1.1214852963)
		(1.67, 1.1169210943)
		(1.68, 1.112434206)
		(1.69, 1.1080225451)
		(1.7, 1.1036841036)
		(1.71, 1.0994169478)
		(1.72, 1.0952192149)
		(1.73, 1.0910891093)
		(1.74, 1.0870249001)
		(1.75, 1.0830249175)
		(1.76, 1.0790875504)
		(1.77, 1.0752112435)
		(1.78, 1.0713944949)
		(1.79, 1.067635854)
		(1.8, 1.0639339188)
		(1.81, 1.060287334)
		(1.82, 1.0566947891)
		(1.83, 1.0531550165)
		(1.84, 1.0496667894)
		(1.85, 1.0462289206)
		(1.86, 1.0428402603)
		(1.87, 1.0394996953)
		(1.88, 1.0362061466)
		(1.89, 1.0329585688)
		(1.9, 1.0297559486)
		(1.91, 1.0265973033)
		(1.92, 1.0234816797)
		(1.93, 1.0204081532)
		(1.94, 1.0173758263)
		(1.95, 1.0143838281)
		(1.96, 1.0114313128)
		(1.97, 1.008517459)
		(1.98, 1.0056414688)
		(1.99, 1.002802567)
		(2.0, 1.0)
	};

	\addplot[orange, thick] coordinates {
		(1.0, 1.2149474014758987)
		(1.01, 1.18906187678848620217)
		(1.02, 1.17517639819838795677)
		(1.03, 1.16486283813320110251)
		(1.04, 1.1565359199017096358)
		(1.05, 1.14951427038467875476)
		(1.06, 1.14324496444728092378)
		(1.07, 1.13578717169771894528)
		(1.08, 1.12748551869511049934)
		(1.09, 1.11966731422367016877)
		(1.1, 1.11244222475317542976)
		(1.11, 1.10573867471306007846)
		(1.12, 1.09949699287917953601)
		(1.13, 1.09366733579764810475)
		(1.14, 1.08820779437497395815)
		(1.15, 1.08308283885290275186)
		(1.16, 1.07826292660034489221)
		(1.17, 1.0737229674451551924)
		(1.18, 1.06944051659504461074)
		(1.19, 1.06545889271475507657)
		(1.2, 1.06181887032363014463)
		(1.21, 1.05856413868863626872)
		(1.22, 1.05548631178338285524)
		(1.23, 1.05257228057004263772)
		(1.24, 1.0498103538923178446)
		(1.25, 1.04719005717206098577)
		(1.26, 1.04470159876920504277)
		(1.27, 1.04233584006755384937)
		(1.28, 1.04008501345072896235)
		(1.29, 1.03794208237903507086)
		(1.3, 1.03590064702952063888)
		(1.31, 1.0339548701230821537)
		(1.32, 1.03209941322778037936)
		(1.33, 1.03032938181569576814)
		(1.34, 1.0286402776709222974)
		(1.35, 1.02702795750076865868)
		(1.36, 1.02548859680556225758)
		(1.37, 1.02401865822581310039)
		(1.38, 1.02261538996744028955)
		(1.39, 1.02127636340112225955)
		(1.4, 1.0199986773370962229)
		(1.41, 1.0187796202793092292)
		(1.42, 1.01761665613497695446)
		(1.43, 1.01650740954673617692)
		(1.44, 1.01544965271449579417)
		(1.45, 1.01444129352596993583)
		(1.46, 1.0134803648400620398)
		(1.47, 1.01255959115197166637)
		(1.48, 1.01160585623202152032)
		(1.49, 1.01066283998030439129)
		(1.5, 1.0097938316435373252)
		(1.51, 1.00903332843582603709)
		(1.52, 1.00839681540847415694)
		(1.53, 1.00779056110135301423)
		(1.54, 1.00721353930610744252)
		(1.55, 1.0066555108246231531)
		(1.56, 1.00609173249576134161)
		(1.57, 1.00554949750961886022)
		(1.58, 1.0050711537857322363)
		(1.59, 1.00466701630189243981)
		(1.6, 1.00428364654988682732)
		(1.61, 1.00391190739587411862)
		(1.62, 1.0035432478302732036)
		(1.63, 1.00320813046511462222)
		(1.64, 1.00291597122675515817)
		(1.65, 1.0026484362259953341)
		(1.66, 1.00238597220440289203)
		(1.67, 1.00214013288323928125)
		(1.68, 1.00192467940422192377)
		(1.69, 1.00172653534298176978)
		(1.7, 1.00153542438835011632)
		(1.71, 1.00136717234851301925)
		(1.72, 1.00121402166926752089)
		(1.73, 1.00107025744788320591)
		(1.74, 1.00094322547674637004)
		(1.75, 1.00082410030373020685)
		(1.76, 1.00071959248398631429)
		(1.77, 1.00062307338813879387)
		(1.78, 1.00053795571309941829)
		(1.79, 1.0004610572608726894)
		(1.8, 1.0003918475388674973)
		(1.81, 1.00033138058719884933)
		(1.82, 1.00027792575608432877)
		(1.83, 1.00023080171600007264)
		(1.84, 1.00018960278121126699)
		(1.85, 1.0001538732201917791)
		(1.86, 1.00012346738960317224)
		(1.87, 1.0000974004279254247)
		(1.88, 1.00007552056638967602)
		(1.89, 1.00005738980392362766)
		(1.9, 1.00004250380303013558)
		(1.91, 1.00003056963139673563)
		(1.92, 1.00002117666933980112)
		(1.93, 1.00001398874524700761)
		(1.94, 1.00000868828727620586)
		(1.95, 1.00000495969237464556)
		(1.96, 1.00000250499070630283)
		(1.97, 1.00000104251505773278)
		(1.98, 1.00000030471899043561)
		(1.99, 1.00000003757135259284)
		(2.0, 1.0)
	};

	\addlegendentry[no markers, red]{$\alpha = 1.0$, $c = 1.363$}
	\addlegendentry[no markers, blue]{$\alpha = 2.0$, $c = 1.0$}
	\addlegendentry[no markers, orange]{\textsc{Unweighted VC}}

	\end{axis}
\end{tikzpicture}
		\label{fig:vc_comp_results}
	\end{subfigure}%
	\caption{The left figure shows running times for \textsc{Weighted Vertex Cover} and right side provides a comparison to \textsc{Unweighted Vertex Cover}.
		A dot at $(\beta,d)$ means that the respective algorithm outputs an $\beta$-approximation in time $\CO^*(d^n)$.}
	\label{fig:runtimes_vc}
\end{figure}
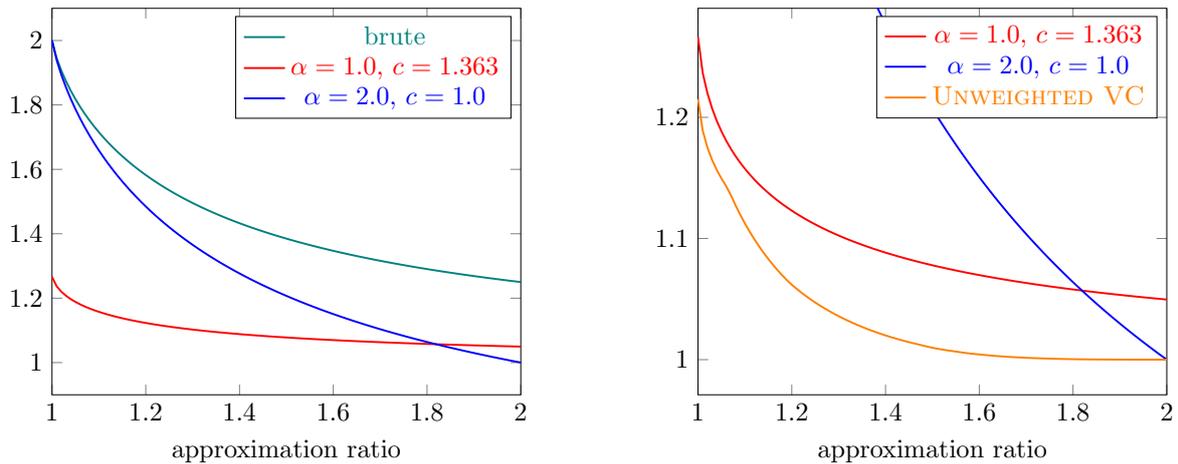

Instead of using a polynomial-time $2$-approximation algorithm, we can also rely on existing FPT algorithms for \textsc{Weighted Vertex Cover} to simulate the extension oracle.
Let us point out that different parameterizations have been considered for \textsc{Weighted Vertex Cover} in the literature.
For example, Niedermeier and Rossmanith \cite{NiedermeierR03} give FPT algorithms for \textsc{Weighted Vertex Cover} parameterized by the weight of the optimal solution.
However, in light of the computational model introduced above, we require FPT algorithms parameterized by the \emph{size} of the solution.
Given a graph $G$ with vertex weights $\w \colon V(G) \to \NN$ and an integer $k \geq 0$, we ask for a vertex cover of weight at most $W$ where $W$ is the minimum weight of a vertex cover of size at most $k$.
The best known FPT algorithm (parameterized by $k$) for this problem runs in time $1.363^k \cdot n^{\CO(1)}$ \cite{ShachnaiZ17}.
This algorithm provides an extension algorithm with parameters $\alpha = 1$ and $c = 1.363$.
Using Theorem \ref{thm:weighted_amls}, we obtain a $\beta$-approximation algorithm for \textsc{Weighted Vertex Cover} running in time $\OO^*(\left(\amlsbound(\alpha,c,\beta) + \eps\right)^{n})$.
For $\beta = 1.1$, we obtain a running time of $\OO^*(1.158^{n})$.

\begin{table}
	\centering
	\begin{tabular}{c|c|c|c|c|c|c|c|c|c|}
	 & $1.1$ & $1.2$ & $1.3$ & $1.4$ & $1.5$ & $1.6$ & $1.7$ & $1.8$ & $1.9$\\
	\hline
	$\brute$ & $1.716$ & $1.583$ & $1.496$ & $1.433$ & $1.385$ & $1.347$ & $1.317$ & $1.291$ & $1.269$\\
	\hline
	$(\alpha = 1, c = 1.363)$ & $1.158$ & $1.123$ & $1.103$ & $1.089$ & $1.078$ & $1.07$ & $1.064$ & $1.058$ & $1.054$\\
	\hline
	$(\alpha = 2,c = 1)$ & $1.659$ & $1.485$ & $1.366$ & $1.277$ & $1.208$ & $1.151$ & $1.104$ & $1.064$ & $1.03$\\
	\hline
	\textsc{Unweighted VC} & $1.113$ & $1.062$ & $1.036$ & $1.02$ & $1.01$ & $1.005$ & $1.002$ & $1.0004$ & $1.00005$\\
	\hline
\end{tabular}

	\caption{The table shows the running times for \textsc{Weighted Vertex Cover} (second and third row) and \textsc{Unweighted Vertex Cover} (last row).
		An entry $d$ in column $\beta$ means that the respective algorithm outputs a $\beta$-approximation in time $\CO^*(d^n)$.}
	\label{tab:table_vc}
\end{table}

We provide running times for selected approximation ratios for both algorithms in Table \ref{tab:table_vc} and a graphical comparison in Figure \ref{fig:runtimes_vc}.
We also compare the running times to the approximate brute-force search and the best algorithms in the unweighted setting \cite{EsmerKMNS23}.
It can be observed that the second algoritm (using the FPT algorithm as an extension subroutine) is faster for $\beta \lesssim 1.82$.
Also, there is still a noticeable gap to the unweighted setting.
This can be mainly explained by the fact that the approximation algorithm for the unweighted setting \cite{EsmerKMNS23} relies on parameterized approximation algorithms for \textsc{Unweighted Vertex Cover} \cite{KulikS20} which are currently unavailable in the weighted setting.

\begin{table}
	\centering
	\begin{tabular}{l|lr|c|lr|c|}
		Problem & $c_1$ & & det. & $\alpha_2$ & & det.\\
		\hline
		\textsc{Vertex Cover} & $1.363$ & \cite{ShachnaiZ17} & \cmark & $2$ & \cite{Bar-YehudaE81} & \cmark \\
		\hline
		\textsc{FVS} & $3.618$ & \cite{AgrawalKLS16} & \cmark & $2$ & \cite{BafnaBF99} & \cmark \\
		\hline
		\textsc{Tournament FVS} & $2.0$ & & \cmark & $3$ & & \cmark \\
		\hline
		\textsc{Subset FVS} & - & & \cmark & $8$ & \cite{EvenNZ00} & \cmark \\
		\hline
		\textsc{$3$-Hitting Set} & $2.168$ & \cite{ShachnaiZ17} & \cmark & $3$ & \cite{Bar-YehudaE81} & \cmark \\
		\hline
		\textsc{$d$-Hitting Set} ($d \geq 4$) & $d - 0.832$ & \cite{FominGKLS10} & \cmark & $d$ & \cite{Bar-YehudaE81} & \cmark \\
		\hline
		\textsc{Cluster Graph Vertex Deletion} & $2.168$ & & \cmark & $3$ & & \cmark \\
		\hline
		\textsc{Cograph Vertex Deletion} & $3.168$ & & \cmark & $4$ & & \cmark \\
		\hline
		\textsc{Split Vertex Deletion} & $4.168$ & & \cmark & $5$ & & \cmark\\
		\hline
		\textsc{Partial Vertex Cover} & - & & \cmark & $2$ & \cite{BshoutyB98} & \cmark \\
		\hline
	\end{tabular}
	\caption{List of weighted deletion problems admitting an single-exponential parameterized algorithm running in time $O^*(c_1^k)$ and/or a polynomial-time $\alpha_2$-approximation algorithm.
		The problems \textsc{Tournament FVS}, \textsc{Cluster Graph Vertex Deletion}, \textsc{Cograph Vertex Deletion} and \textsc{Split Vertex Deletion} can be easily reduced to \textsc{$d$-Hitting Set} for appropriate values of $d$ by exploiting known characterizations in terms of forbidden induced subgraphs.
		Additionally, for \textsc{Tournament FVS} we can rely on iterative compression to obtain a $\OO^*(2^k)$ algorithm (see, e.g., \cite{CyganFKLMPPS15}).}
	\label{tab:overview_problems}
\end{table}

We stress that our results are not limited to \textsc{Weighted Vertex Cover}, but they are applicable to various vertex-deletion problems including \textsc{Weighted Feedback Vertex Set} and \textsc{Weighted $d$-Hitting Set} (see Figure \ref{fig:runtimes_fvs_3hs}).
Table \ref{tab:overview_problems} gives an overview on problems for which we obtain exponential approximation algorithms by simulating the extension oracle by an FPT algorithm (parameterized by solution size) running in time $c_1^k \cdot n^{\CO(1)}$ or a polynomial-time $\alpha_2$-approximation algorithm.
For all the problems listed in Table \ref{tab:overview_problems}, we obtain the first exponential $\beta$-approximation algorithms for all $\beta \in (1,\alpha_2)$.
Observe that these algorithms always outperform the approximate brute-force search by Lemma \ref{lem:better_than_brute}.
We provide data on the running times of these algorithms in Appendix \ref{sec:running_times}.

\begin{figure}
	\centering
	\begin{subfigure}{.5\textwidth}
		\centering
		\caption{\textsc{Feedback Vertex Set}}
		\begin{tikzpicture}[scale = 0.9]
	\begin{axis}[xmin = 1, xmax = 2, ymin = 0.9, ymax = 2.1, xlabel = {approximation ratio}]

	\addplot[teal, thick] coordinates {
		(1.0, 2.0)
		(1.01, 1.9454431345)
		(1.02, 1.9062508454)
		(1.03, 1.8731549013)
		(1.04, 1.8440491304)
		(1.05, 1.8178991111)
		(1.06, 1.794081097)
		(1.07, 1.7721758604)
		(1.08, 1.7518817491)
		(1.09, 1.7329712548)
		(1.1, 1.7152667656)
		(1.11, 1.698625911)
		(1.12, 1.6829321593)
		(1.13, 1.6680884977)
		(1.14, 1.6540130203)
		(1.15, 1.6406357531)
		(1.16, 1.6278963084)
		(1.17, 1.615742114)
		(1.18, 1.6041270506)
		(1.19, 1.5930103866)
		(1.2, 1.5823559323)
		(1.21, 1.5721313608)
		(1.22, 1.5623076557)
		(1.23, 1.552858658)
		(1.24, 1.5437606905)
		(1.25, 1.534992244)
		(1.26, 1.5265337131)
		(1.27, 1.5183671728)
		(1.28, 1.510476187)
		(1.29, 1.5028456449)
		(1.3, 1.4954616201)
		(1.31, 1.4883112476)
		(1.32, 1.4813826173)
		(1.33, 1.4746646809)
		(1.34, 1.4681471696)
		(1.35, 1.4618205222)
		(1.36, 1.4556758212)
		(1.37, 1.4497047356)
		(1.38, 1.443899471)
		(1.39, 1.4382527242)
		(1.4, 1.4327576428)
		(1.41, 1.427407789)
		(1.42, 1.4221971069)
		(1.43, 1.4171198929)
		(1.44, 1.4121707695)
		(1.45, 1.4073446605)
		(1.46, 1.4026367697)
		(1.47, 1.3980425604)
		(1.48, 1.3935577374)
		(1.49, 1.3891782307)
		(1.5, 1.3849001795)
		(1.51, 1.380719919)
		(1.52, 1.3766339674)
		(1.53, 1.3726390137)
		(1.54, 1.3687319076)
		(1.55, 1.3649096489)
		(1.56, 1.3611693784)
		(1.57, 1.3575083696)
		(1.58, 1.3539240206)
		(1.59, 1.3504138468)
		(1.6, 1.3469754742)
		(1.61, 1.343606633)
		(1.62, 1.3403051519)
		(1.63, 1.3370689522)
		(1.64, 1.3338960432)
		(1.65, 1.3307845174)
		(1.66, 1.3277325456)
		(1.67, 1.3247383734)
		(1.68, 1.3218003168)
		(1.69, 1.3189167589)
		(1.7, 1.3160861463)
		(1.71, 1.3133069861)
		(1.72, 1.3105778425)
		(1.73, 1.3078973347)
		(1.74, 1.3052641335)
		(1.75, 1.3026769593)
		(1.76, 1.3001345795)
		(1.77, 1.2976358065)
		(1.78, 1.2951794954)
		(1.79, 1.2927645423)
		(1.8, 1.2903898821)
		(1.81, 1.2880544871)
		(1.82, 1.2857573652)
		(1.83, 1.2834975581)
		(1.84, 1.2812741403)
		(1.85, 1.2790862173)
		(1.86, 1.2769329244)
		(1.87, 1.2748134255)
		(1.88, 1.2727269118)
		(1.89, 1.2706726008)
		(1.9, 1.2686497351)
		(1.91, 1.2666575813)
		(1.92, 1.2646954293)
		(1.93, 1.2627625911)
		(1.94, 1.2608584001)
		(1.95, 1.2589822102)
		(1.96, 1.2571333949)
		(1.97, 1.2553113469)
		(1.98, 1.2535154767)
		(1.99, 1.2517452127)
		(2.0, 1.25)
	};

	\addplot[ red , thick] coordinates {
		(1.0, 1.7236042012)
		(1.01, 1.6752823032)
		(1.02, 1.6419477534)
		(1.03, 1.6143949715)
		(1.04, 1.5905556226)
		(1.05, 1.569423424)
		(1.06, 1.5503971586)
		(1.07, 1.5330763699)
		(1.08, 1.5171754461)
		(1.09, 1.5024808262)
		(1.1, 1.488827183)
		(1.11, 1.4760830734)
		(1.12, 1.4641417633)
		(1.13, 1.452915081)
		(1.14, 1.4423291422)
		(1.15, 1.4323212842)
		(1.16, 1.4228378102)
		(1.17, 1.4138322926)
		(1.18, 1.405264273)
		(1.19, 1.3970982487)
		(1.2, 1.3893028732)
		(1.21, 1.3818503149)
		(1.22, 1.3747157396)
		(1.23, 1.3678768857)
		(1.24, 1.3613137153)
		(1.25, 1.355008123)
		(1.26, 1.3489436916)
		(1.27, 1.3431054873)
		(1.28, 1.3374798837)
		(1.29, 1.3320544132)
		(1.3, 1.3268176384)
		(1.31, 1.3217590414)
		(1.32, 1.3168689275)
		(1.33, 1.3121383418)
		(1.34, 1.3075589959)
		(1.35, 1.3031232033)
		(1.36, 1.2988238232)
		(1.37, 1.2946542099)
		(1.38, 1.2906081691)
		(1.39, 1.2866799173)
		(1.4, 1.2828640477)
		(1.41, 1.2791554977)
		(1.42, 1.275549521)
		(1.43, 1.2720416623)
		(1.44, 1.2686277339)
		(1.45, 1.2653037953)
		(1.46, 1.2620661346)
		(1.47, 1.2589112511)
		(1.48, 1.2558358398)
		(1.49, 1.252836778)
		(1.5, 1.2499111117)
		(1.51, 1.2470560442)
		(1.52, 1.2442689253)
		(1.53, 1.2415472418)
		(1.54, 1.2388886077)
		(1.55, 1.2362907569)
		(1.56, 1.2337515345)
		(1.57, 1.2312688908)
		(1.58, 1.2288408739)
		(1.59, 1.2264656244)
		(1.6, 1.2241413691)
		(1.61, 1.2218664168)
		(1.62, 1.2196391525)
		(1.63, 1.2174580339)
		(1.64, 1.2153215864)
		(1.65, 1.2132284001)
		(1.66, 1.2111771256)
		(1.67, 1.2091664708)
		(1.68, 1.2071951982)
		(1.69, 1.2052621214)
		(1.7, 1.2033661029)
		(1.71, 1.2015060511)
		(1.72, 1.1996809184)
		(1.73, 1.1978896984)
		(1.74, 1.1961314244)
		(1.75, 1.1944051671)
		(1.76, 1.1927100328)
		(1.77, 1.1910451617)
		(1.78, 1.1894097262)
		(1.79, 1.1878029295)
		(1.8, 1.186224004)
		(1.81, 1.1846722102)
		(1.82, 1.1831468348)
		(1.83, 1.1816471904)
		(1.84, 1.1801726137)
		(1.85, 1.1787224643)
		(1.86, 1.1772961242)
		(1.87, 1.1758929967)
		(1.88, 1.1745125051)
		(1.89, 1.1731540922)
		(1.9, 1.1718172194)
		(1.91, 1.1705013657)
		(1.92, 1.1692060275)
		(1.93, 1.167930717)
		(1.94, 1.1666749626)
		(1.95, 1.1654383075)
		(1.96, 1.1642203092)
		(1.97, 1.1630205391)
		(1.98, 1.1618385821)
		(1.99, 1.1606740357)
		(2.0, 1.1595265097)
	};

	\addplot[ blue , thick] coordinates {
		(1.0, 2.0)
		(1.01, 1.9387047513)
		(1.02, 1.8930929239)
		(1.03, 1.853849694)
		(1.04, 1.8188398378)
		(1.05, 1.7870069014)
		(1.06, 1.7577092525)
		(1.07, 1.7305126)
		(1.08, 1.705102307)
		(1.09, 1.6812394878)
		(1.1, 1.6587364406)
		(1.11, 1.6374417619)
		(1.12, 1.6172307723)
		(1.13, 1.5979990628)
		(1.14, 1.5796579792)
		(1.15, 1.5621313614)
		(1.16, 1.545353129)
		(1.17, 1.5292654531)
		(1.18, 1.5138173453)
		(1.19, 1.498963551)
		(1.2, 1.484663669)
		(1.21, 1.4708814415)
		(1.22, 1.457584176)
		(1.23, 1.4447422687)
		(1.24, 1.4323288084)
		(1.25, 1.4203192453)
		(1.26, 1.4086911107)
		(1.27, 1.3974237788)
		(1.28, 1.3864982635)
		(1.29, 1.3758970425)
		(1.3, 1.3656039065)
		(1.31, 1.3556038269)
		(1.32, 1.3458828408)
		(1.33, 1.3364279501)
		(1.34, 1.3272270326)
		(1.35, 1.3182687633)
		(1.36, 1.3095425448)
		(1.37, 1.3010384449)
		(1.38, 1.2927471418)
		(1.39, 1.2846598738)
		(1.4, 1.2767683951)
		(1.41, 1.2690649358)
		(1.42, 1.2615421652)
		(1.43, 1.2541931594)
		(1.44, 1.2470113718)
		(1.45, 1.2399906054)
		(1.46, 1.233124989)
		(1.47, 1.2264089541)
		(1.48, 1.2198372148)
		(1.49, 1.2134047491)
		(1.5, 1.2071067812)
		(1.51, 1.2009387662)
		(1.52, 1.1948963754)
		(1.53, 1.1889754825)
		(1.54, 1.1831721518)
		(1.55, 1.1774826264)
		(1.56, 1.1719033178)
		(1.57, 1.1664307958)
		(1.58, 1.1610617798)
		(1.59, 1.1557931299)
		(1.6, 1.1506218396)
		(1.61, 1.1455450278)
		(1.62, 1.1405599327)
		(1.63, 1.1356639047)
		(1.64, 1.1308544012)
		(1.65, 1.1261289803)
		(1.66, 1.1214852963)
		(1.67, 1.1169210943)
		(1.68, 1.112434206)
		(1.69, 1.1080225451)
		(1.7, 1.1036841036)
		(1.71, 1.0994169478)
		(1.72, 1.0952192149)
		(1.73, 1.0910891093)
		(1.74, 1.0870249001)
		(1.75, 1.0830249175)
		(1.76, 1.0790875504)
		(1.77, 1.0752112435)
		(1.78, 1.0713944949)
		(1.79, 1.067635854)
		(1.8, 1.0639339188)
		(1.81, 1.060287334)
		(1.82, 1.0566947891)
		(1.83, 1.0531550165)
		(1.84, 1.0496667894)
		(1.85, 1.0462289206)
		(1.86, 1.0428402603)
		(1.87, 1.0394996953)
		(1.88, 1.0362061466)
		(1.89, 1.0329585688)
		(1.9, 1.0297559486)
		(1.91, 1.0265973033)
		(1.92, 1.0234816797)
		(1.93, 1.0204081532)
		(1.94, 1.0173758263)
		(1.95, 1.0143838281)
		(1.96, 1.0114313128)
		(1.97, 1.008517459)
		(1.98, 1.0056414688)
		(1.99, 1.002802567)
		(2.0, 1.0)
	};

	\addlegendentry[no markers, teal]{brute}
	\addlegendentry[no markers, red]{$\alpha = 1.0$, $c = 3.618$}
	\addlegendentry[no markers, blue]{$\alpha = 2.0$, $c = 1.0$}

	\end{axis}
\end{tikzpicture}
		\label{fig:fvs_results}
	\end{subfigure}%
	\begin{subfigure}{.5\textwidth}
		\centering
		\caption{\textsc{$3$-Hitting Set}}
		\begin{tikzpicture}[scale = 0.9]
	\begin{axis}[xmin = 1, xmax = 3, ymin = 0.9, ymax = 2.1, xlabel = {approximation ratio}]

	\addplot[teal, thick] coordinates {
		(1.0, 2.0)
		(1.02, 1.9062508454)
		(1.04, 1.8440491304)
		(1.06, 1.794081097)
		(1.08, 1.7518817491)
		(1.1, 1.7152667656)
		(1.12, 1.6829321593)
		(1.14, 1.6540130203)
		(1.16, 1.6278963084)
		(1.18, 1.6041270506)
		(1.2, 1.5823559323)
		(1.22, 1.5623076557)
		(1.24, 1.5437606905)
		(1.26, 1.5265337131)
		(1.28, 1.510476187)
		(1.3, 1.4954616201)
		(1.32, 1.4813826173)
		(1.34, 1.4681471696)
		(1.36, 1.4556758212)
		(1.38, 1.443899471)
		(1.4, 1.4327576428)
		(1.42, 1.4221971069)
		(1.44, 1.4121707695)
		(1.46, 1.4026367697)
		(1.48, 1.3935577374)
		(1.5, 1.3849001795)
		(1.52, 1.3766339674)
		(1.54, 1.3687319076)
		(1.56, 1.3611693784)
		(1.58, 1.3539240206)
		(1.6, 1.3469754742)
		(1.62, 1.3403051519)
		(1.64, 1.3338960432)
		(1.66, 1.3277325456)
		(1.68, 1.3218003168)
		(1.7, 1.3160861463)
		(1.72, 1.3105778425)
		(1.74, 1.3052641335)
		(1.76, 1.3001345795)
		(1.78, 1.2951794954)
		(1.8, 1.2903898821)
		(1.82, 1.2857573652)
		(1.84, 1.2812741403)
		(1.86, 1.2769329244)
		(1.88, 1.2727269118)
		(1.9, 1.2686497351)
		(1.92, 1.2646954293)
		(1.94, 1.2608584001)
		(1.96, 1.2571333949)
		(1.98, 1.2535154767)
		(2.0, 1.25)
		(2.02, 1.2465825894)
		(2.04, 1.24325912)
		(2.06, 1.2400256991)
		(2.08, 1.23687865)
		(2.1, 1.2338144969)
		(2.12, 1.2308299514)
		(2.14, 1.2279218994)
		(2.16, 1.2250873898)
		(2.18, 1.2223236237)
		(2.2, 1.2196279447)
		(2.22, 1.2169978296)
		(2.24, 1.2144308803)
		(2.26, 1.2119248158)
		(2.28, 1.2094774652)
		(2.3, 1.2070867609)
		(2.32, 1.2047507326)
		(2.34, 1.2024675014)
		(2.36, 1.2002352749)
		(2.38, 1.1980523416)
		(2.4, 1.1959170667)
		(2.42, 1.1938278879)
		(2.44, 1.1917833111)
		(2.46, 1.1897819069)
		(2.48, 1.1878223069)
		(2.5, 1.1859032006)
		(2.52, 1.1840233324)
		(2.54, 1.1821814985)
		(2.56, 1.1803765443)
		(2.58, 1.178607362)
		(2.6, 1.1768728882)
		(2.62, 1.1751721016)
		(2.64, 1.1735040209)
		(2.66, 1.171867703)
		(2.68, 1.170262241)
		(2.7, 1.1686867625)
		(2.72, 1.1671404279)
		(2.74, 1.1656224291)
		(2.76, 1.1641319876)
		(2.78, 1.1626683537)
		(2.8, 1.1612308047)
		(2.82, 1.1598186437)
		(2.84, 1.1584311989)
		(2.86, 1.157067822)
		(2.88, 1.1557278873)
		(2.9, 1.1544107911)
		(2.92, 1.1531159499)
		(2.94, 1.1518428004)
		(2.96, 1.1505907982)
		(2.98, 1.149359417)
		(3.0, 1.1481481481)
	};

	\addplot[ red , thick] coordinates {
		(1.0, 1.5387453875)
		(1.02, 1.4680176601)
		(1.04, 1.426140626)
		(1.06, 1.3943823623)
		(1.08, 1.3686503508)
		(1.1, 1.3470339938)
		(1.12, 1.3284407274)
		(1.14, 1.3121738284)
		(1.16, 1.2977568579)
		(1.18, 1.2848475816)
		(1.2, 1.2731909475)
		(1.22, 1.2625913358)
		(1.24, 1.2528951993)
		(1.26, 1.2439796939)
		(1.28, 1.2357449507)
		(1.3, 1.2281086608)
		(1.32, 1.2210021802)
		(1.34, 1.2143676657)
		(1.36, 1.2081559285)
		(1.38, 1.2023247984)
		(1.4, 1.1968378582)
		(1.42, 1.1916634517)
		(1.44, 1.1867738974)
		(1.46, 1.1821448572)
		(1.48, 1.177754827)
		(1.5, 1.1735847187)
		(1.52, 1.1696175184)
		(1.54, 1.1658380008)
		(1.56, 1.1622324923)
		(1.58, 1.1587886714)
		(1.6, 1.1554953994)
		(1.62, 1.1523425779)
		(1.64, 1.1493210257)
		(1.66, 1.1464223737)
		(1.68, 1.1436389749)
		(1.7, 1.1409638251)
		(1.72, 1.1383904956)
		(1.74, 1.1359130728)
		(1.76, 1.1335261069)
		(1.78, 1.1312245659)
		(1.8, 1.1290037951)
		(1.82, 1.1268594816)
		(1.84, 1.1247876228)
		(1.86, 1.1227844982)
		(1.88, 1.1208466445)
		(1.9, 1.1189708333)
		(1.92, 1.1171540506)
		(1.94, 1.1153934797)
		(1.96, 1.1136864843)
		(1.98, 1.1120305942)
		(2.0, 1.1104234923)
		(2.02, 1.1088630026)
		(2.04, 1.1073470792)
		(2.06, 1.1058737968)
		(2.08, 1.1044413417)
		(2.1, 1.1030480037)
		(2.12, 1.1016921686)
		(2.14, 1.1003723116)
		(2.16, 1.0990869908)
		(2.18, 1.0978348419)
		(2.2, 1.0966145727)
		(2.22, 1.0954249586)
		(2.24, 1.0942648377)
		(2.26, 1.0931331074)
		(2.28, 1.09202872)
		(2.3, 1.0909506796)
		(2.32, 1.0898980391)
		(2.34, 1.0888698965)
		(2.36, 1.0878653931)
		(2.38, 1.0868837101)
		(2.4, 1.0859240669)
		(2.42, 1.0849857184)
		(2.44, 1.0840679533)
		(2.46, 1.083170092)
		(2.48, 1.0822914849)
		(2.5, 1.0814315107)
		(2.52, 1.080589575)
		(2.54, 1.0797651086)
		(2.56, 1.0789575663)
		(2.58, 1.0781664259)
		(2.6, 1.0773911863)
		(2.62, 1.076631367)
		(2.64, 1.0758865072)
		(2.66, 1.0751561639)
		(2.68, 1.0744399122)
		(2.7, 1.0737373434)
		(2.72, 1.0730480647)
		(2.74, 1.0723716986)
		(2.76, 1.0717078815)
		(2.78, 1.0710562639)
		(2.8, 1.0704165089)
		(2.82, 1.0697882922)
		(2.84, 1.0691713014)
		(2.86, 1.0685652352)
		(2.88, 1.0679698033)
		(2.9, 1.0673847254)
		(2.92, 1.0668097314)
		(2.94, 1.0662445603)
		(2.96, 1.0656889603)
		(2.98, 1.0651426881)
		(3.0, 1.0646055087)
	};

	\addplot[ blue , thick] coordinates {
		(1.0, 2.0)
		(1.02, 1.9058941662)
		(1.04, 1.842764221)
		(1.06, 1.7914558434)
		(1.08, 1.7476145144)
		(1.1, 1.7091350108)
		(1.12, 1.6747714482)
		(1.14, 1.6437024677)
		(1.16, 1.6153481991)
		(1.18, 1.5892792572)
		(1.2, 1.5651662684)
		(1.22, 1.5427496004)
		(1.24, 1.521820106)
		(1.26, 1.5022062939)
		(1.28, 1.483765458)
		(1.3, 1.4663773513)
		(1.32, 1.4499395578)
		(1.34, 1.4343640312)
		(1.36, 1.4195744567)
		(1.38, 1.405504207)
		(1.4, 1.3920947355)
		(1.42, 1.3792942966)
		(1.44, 1.3670569155)
		(1.46, 1.3553415491)
		(1.48, 1.3441113975)
		(1.5, 1.3333333333)
		(1.52, 1.3229774253)
		(1.54, 1.3130165381)
		(1.56, 1.3034259933)
		(1.58, 1.2941832811)
		(1.6, 1.2852678138)
		(1.62, 1.276660713)
		(1.64, 1.2683446265)
		(1.66, 1.2603035687)
		(1.68, 1.2525227816)
		(1.7, 1.2449886134)
		(1.72, 1.2376884114)
		(1.74, 1.2306104278)
		(1.76, 1.2237437366)
		(1.78, 1.2170781592)
		(1.8, 1.2106041992)
		(1.82, 1.2043129832)
		(1.84, 1.1981962085)
		(1.86, 1.1922460957)
		(1.88, 1.186455347)
		(1.9, 1.180817107)
		(1.92, 1.1753249293)
		(1.94, 1.1699727446)
		(1.96, 1.1647548325)
		(1.98, 1.1596657961)
		(2.0, 1.1547005384)
		(2.02, 1.1498542408)
		(2.04, 1.145122344)
		(2.06, 1.1405005299)
		(2.08, 1.1359847055)
		(2.1, 1.1315709875)
		(2.12, 1.1272556893)
		(2.14, 1.1230353073)
		(2.16, 1.1189065102)
		(2.18, 1.1148661277)
		(2.2, 1.1109111406)
		(2.22, 1.1070386718)
		(2.24, 1.1032459776)
		(2.26, 1.0995304399)
		(2.28, 1.0958895588)
		(2.3, 1.0923209458)
		(2.32, 1.0888223176)
		(2.34, 1.08539149)
		(2.36, 1.0820263727)
		(2.38, 1.0787249639)
		(2.4, 1.0754853454)
		(2.42, 1.0723056786)
		(2.44, 1.0691842)
		(2.46, 1.0661192173)
		(2.48, 1.063109106)
		(2.5, 1.0601523054)
		(2.52, 1.0572473159)
		(2.54, 1.054392696)
		(2.56, 1.0515870589)
		(2.58, 1.0488290705)
		(2.6, 1.0461174463)
		(2.62, 1.0434509495)
		(2.64, 1.0408283884)
		(2.66, 1.0382486146)
		(2.68, 1.0357105208)
		(2.7, 1.0332130391)
		(2.72, 1.0307551389)
		(2.74, 1.0283358256)
		(2.76, 1.0259541389)
		(2.78, 1.0236091512)
		(2.8, 1.0212999661)
		(2.82, 1.0190257175)
		(2.84, 1.0167855678)
		(2.86, 1.014578707)
		(2.88, 1.0124043513)
		(2.9, 1.0102617425)
		(2.92, 1.0081501464)
		(2.94, 1.006068852)
		(2.96, 1.0040171708)
		(2.98, 1.0019944357)
		(3.0, 1.0)
	};

	\addlegendentry[no markers, teal]{brute}
	\addlegendentry[no markers, red]{$\alpha = 1.0$, $c = 2.168$}
	\addlegendentry[no markers, blue]{$\alpha = 3.0$, $c = 1.0$}

	\end{axis}
\end{tikzpicture}
		\label{fig:3hs_results}
	\end{subfigure}%
	\caption{The figure shows running times for \textsc{Feedback Vertex Set} and \textsc{$3$-Hitting Set}.
		A dot at $(\beta,d)$ means that the respective algorithm outputs an $\beta$-approximation in time $O^*(d^n)$.}
	\label{fig:runtimes_fvs_3hs}
\end{figure}
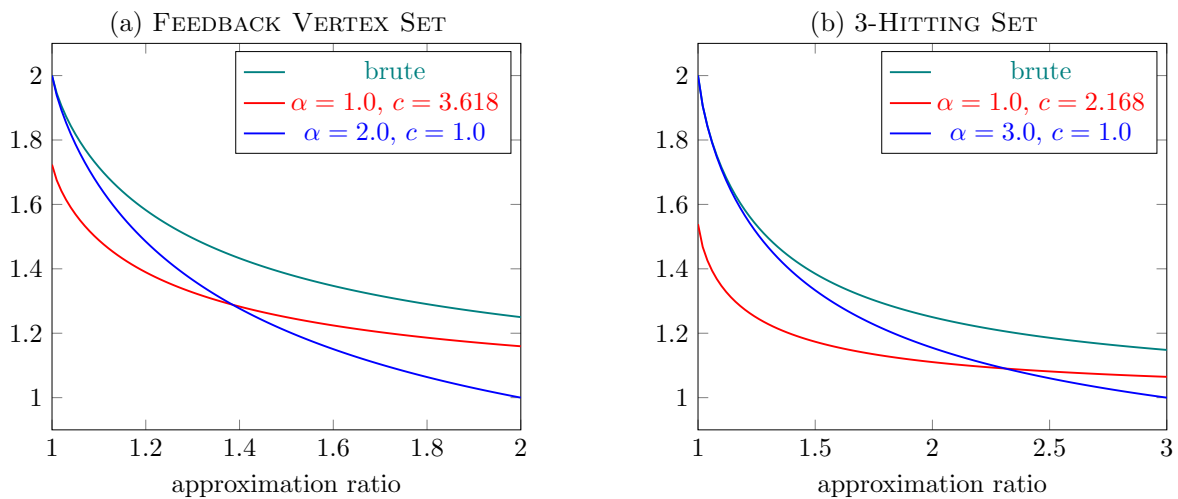

Finally, let us point out that our methods are also applicable if there is a parameterized approximation algorithm for a weighted vertex deletion problem, i.e., an $\alpha$-approximation algorithm running time $c^k \cdot n^{\CO(1)}$.
In \cite{LokshtanovMRSZ21}, such algorithms have been obtained for \textsc{Weighted Directed Feedback Vertex Set}, \textsc{Weighted Directed Subset FVS}, \textsc{Weighted Directed Odd Cycle Transversal} and \textsc{Weighted Multicut}.
As a result, we also obtain exponential $\beta$-approximation algorithms for these problems that outperform the approximate brute-force search.

\section{Weighted Approximate Brute Force}
\label{sec:brute_force}
In this section we prove \Cref{thm:brute_force}.
The algorithm is based on the notion of {\em covering families}.
\begin{definition}[Covering Family]
	Let $U$ be a finite set,  $\w\colon U\rightarrow \NN$ be a weight function and $\alpha\geq 1$.
	We say $\CC\subseteq 2^U$ is an {\em $\alpha$-covering family} of $U$ and $\w$ if for every $S\subseteq U$ there exists $T\in \CC$ such that $S\subseteq T$ and $\w(T)\leq \alpha \cdot \w(S)$.
\end{definition}
An $\alpha$-covering family $\CC$ can be easily used to attain an $\alpha$-approximation algorithm in the membership model as follows. 
The algorithm constructs the covering family $\CC$ and uses the membership oracle to compute $\CC\cap \CF$ using $\abs{\CC}$ queries. 
The algorithm then returns the set $Q\in \CC\cap \CF$ of minimum weight. To show correctness, consider a set $S\in \CF$ such that $\w(S)= \opt(U,\w,\CF)$.
Since $\CC$ is an $\alpha$-covering family there is $T\in \CC$ such that $S\subseteq T$ and $\w(T)\leq \alpha \cdot \w(S)= \alpha \cdot \opt(U,\w,\CF)$.
Since $\CF$ is monotone it holds that $T\in \CF$, thus $T\in \CC\cap \CF$, and we conclude that the algorithm returns a set of weight at most $\alpha \cdot \opt(U,\w,\CF)$. The running time of the algorithm, up to polynomial factors, is the construction time of $\CC$ plus~$\abs{\CC}$.

By the above argument, the proof of \Cref{thm:brute_force} boils down to the construction of  $\alpha$-covering families. To this end, we show the next result.
\begin{lemma}
	\label{lem:covering_family} 
	There exists an algorithm which given a finite set $U$, a weight function $\w:U\rightarrow \NN$ and $\alpha > 1$, constructs an $\alpha$-covering family $\CC$ of $U$ and $\w$ such that $\abs{\CC} \leq \left( \brute(\alpha)\right)^{n+o(n)}$. Furthermore, the running time of the algorithm is $\left( \brute(\alpha)\right)^{n+o(n)}$.
\end{lemma}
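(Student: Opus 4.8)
The strategy is to reduce the weighted problem to several instances of the unweighted covering-family construction from \cite{EsmerKMNS22} (the one underlying \Cref{lem:EKMNS22}, part 1) by bucketing elements of roughly equal weight. Concretely, I would first handle the trivial ranges: if some element has weight $0$ it can always be safely added, so assume all weights are positive. Pick a small rounding parameter $\delta>0$ (to be sent to $0$ slowly, e.g. $\delta = 1/\log n$) and partition $U$ into \emph{weight classes} $U_0, U_1, \dots, U_r$ where $U_j$ contains the elements $u$ with $(1+\delta)^{j} \le \w(u) < (1+\delta)^{j+1}$ (after possibly separating out a class of ``very light'' or ``very heavy'' elements so that $r \le \OO(\log(W_{\max})/\delta)$, which is polynomial in the input size divided by $\delta$). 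Within a single class all weights agree up to a factor $(1+\delta)$, so weight is essentially proportional to cardinality there.

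The main construction: within each class $U_j$ I would invoke the unweighted construction of \cite{EsmerKMNS22} with approximation parameter $\alpha' \coloneqq \alpha/(1+\delta)$ (for $\delta$ small enough this still exceeds $1$, which is where we use $\alpha > 1$), obtaining a family $\CC_j \subseteq 2^{U_j}$ of size at most $(\brute(\alpha'))^{|U_j| + o(|U_j|)}$ such that every $S_j \subseteq U_j$ is contained in some $T_j \in \CC_j$ with $|T_j| \le \alpha' |S_j|$. Then set $\CC \coloneqq \{\, T_0 \cup T_1 \cup \dots \cup T_r \;:\; T_j \in \CC_j \,\}$, i.e.\ the ``product'' of the per-class families. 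Given any $S \subseteq U$, write $S_j = S \cap U_j$, pick the covering $T_j \supseteq S_j$ in $\CC_j$, and note $\w(T_j) \le (1+\delta)^{j+1} |T_j| \le (1+\delta)^{j+1}\alpha' |S_j| = (1+\delta)^j \alpha |S_j| \le \alpha \w(S_j)$, where the last inequality uses that every element of $S_j$ has weight at least $(1+\delta)^j$. Summing over $j$ gives $\w(T) \le \alpha \w(S)$ with $S \subseteq T \coloneqq \bigcup_j T_j \in \CC$, so $\CC$ is an $\alpha$-covering family. For correctness we only need $\CC$ to cover; we do not need the $T_j$ to be disjoint from other classes, and monotonicity of $\CF$ is exploited only later in the membership-model reduction, not here.

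The size and running-time bound: $|\CC| = \prod_{j=0}^{r} |\CC_j| \le \prod_j (\brute(\alpha'))^{|U_j| + o(|U_j|)} = (\brute(\alpha'))^{n + o(n)}$ using $\sum_j |U_j| = n$ and absorbing the $r+1 = n^{\OO(1)}$ many $o(|U_j|)$ error terms into a single $o(n)$ (here one should double-check the $o$-terms from \cite{EsmerKMNS22} are uniform enough; they are of the form $\OO(|U_j|/\sqrt{\log|U_j|})$ or similar, and a polynomial number of them still sum to $o(n)$, also handling classes with $|U_j|$ too small to be meaningful by just taking $\CC_j = 2^{U_j}$, whose total contribution $2^{\OO(\log n)}$ per tiny class is negligible). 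Finally, $\brute$ is continuous, so as $\delta \to 0$ we have $\brute(\alpha') = \brute(\alpha/(1+\delta)) \to \brute(\alpha)$; choosing $\delta = \delta(n) \to 0$ slowly enough turns the $\brute(\alpha')^{n+o(n)}$ bound into $\brute(\alpha)^{n+o(n)}$. The running time is dominated by enumerating $\CC$ plus the per-class constructions, all within $(\brute(\alpha))^{n+o(n)}$.

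The main obstacle I anticipate is bookkeeping the two competing limits cleanly: $\delta$ must go to $0$ (to kill the $(1+\delta)$ loss in the approximation factor and to make $\brute(\alpha')\to\brute(\alpha)$) while simultaneously $r \approx \log(W_{\max})/\delta$ must stay small enough that the $r$-fold product of $o(|U_j|)$ terms is genuinely $o(n)$ — this forces $\delta$ to decay slower than any fixed power of $1/n$, e.g.\ $\delta = 1/\log\log n$, and one must also ensure $\alpha/(1+\delta) > 1$ throughout, which is exactly where the hypothesis $\alpha>1$ (as opposed to $\alpha \ge 1$) is essential. A secondary technical point is reducing the potentially huge weight range to $r = n^{\OO(1)}$ classes: elements with weight below $\opt/n$ relative to any reasonable scale can be merged, but since the construction is oblivious to $\CF$ one instead observes that only the \emph{ratios} of weights within the covering guarantee matter, so one may first replace $\w$ by $\min(\w, n \cdot w_{\min})$-type truncation or simply note classes beyond index $\OO(\log(\sum_u \w(u)))$ are fine since the input size already bounds $\log W_{\max}$ polynomially. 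I would organize the write-up as: (i) reduce to positive integer weights and to $n^{\OO(1)}$ weight classes; (ii) state the black-box unweighted construction; (iii) define the product family and prove the covering property; (iv) do the size/time accounting with the $\delta \to 0$ limit.
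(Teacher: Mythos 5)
The high-level plan — bucket by multiplicative weight classes, run the unweighted construction of \cite{EsmerKMNS22} per class, and send a rounding parameter $\delta$ to $0$ slowly — is the right starting point and matches the paper's. But the central step, taking $\CC$ to be the \emph{full} product $\prod_{j} \CC_j$ over \emph{all} weight classes, fails the size bound, and this is not a matter of bookkeeping that can be patched by tuning $\delta$. The guarantee from \cite{EsmerKMNS22} is $\abs{\CC_j} \le \bigl(\brute(\alpha')\bigr)^{n_j}\cdot n_j^{\OO(1)}$, so the full product costs $\bigl(\brute(\alpha')\bigr)^{n} \cdot \prod_j n_j^{\OO(1)}$. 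There can be up to $n$ nonempty classes, and the polynomial correction factors compound: for example if every nonempty class has two elements, $\prod_j n_j^{c} = 2^{cn/2}$, which is $2^{\Theta(n)}$ and dominates $\bigl(\brute(\alpha)\bigr)^{n}$ for $\alpha$ large. So $\abs{\CC}$ is \emph{not} $\bigl(\brute(\alpha)\bigr)^{n+o(n)}$. Your sentence trying to absorb ``$r+1 = n^{\OO(1)}$ many $o(\,\cdot\,)$ error terms into a single $o(n)$'' is exactly where this breaks: the per-class loss is in the base of the exponent, not just an additive $o(n_j)$ in the exponent, and even an additive loss like $n_j/\sqrt{\log n_j}$ is concave, so $\sum_j$ of it is maximized by splitting into many tiny classes and then is $\Theta(n)$.

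The paper avoids this with a windowing idea you are missing. Fix a window width $d = \lceil (2/\delta)\log(2n/\delta)\rceil$. For each ``top index'' $k$, it takes the product of the families $\CC_i$ only over the $d+1$ indices $i$ in the window $[k-d,k]$, and throws \emph{all} lighter elements $W_k = \bigcup_{i<k-d} U_i$ into every set of that product wholesale. The covering family is the union over $k$ of these windowed products. Correctness is where weight enters: if $k$ is the largest index with $S\cap U_k \ne \emptyset$ then $\w(W_k) < n\gamma^{k-d} \le \tfrac{\delta}{2}\gamma^k \le \tfrac{\delta}{2}\w(S)$, so including $W_k$ for free only costs a $\delta/2$ fraction. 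The size bound now has only $d+1 = \OO\bigl(\tfrac{1}{\delta}\log\tfrac{n}{\delta}\bigr)$ multiplicative $n^{\OO(1)}$ factors per window and $\abs{I}\le n$ windows, giving $\bigl(\brute(\beta)\bigr)^n \cdot n^{\OO\left(\frac{1}{\delta}\log\frac{n}{\delta}\right)}$, which \emph{can} be made $\bigl(\brute(\alpha)\bigr)^{n+o(n)}$ by choosing $\beta(n) = \alpha - 1/\log n$ and $\delta(n) = \alpha/\beta(n)-1$ and invoking a continuity argument (the paper's \Cref{lem:eps_subexponential}). Your per-class approximation accounting ($\w(T_j)\le(1+\delta)^{j+1}\alpha'\abs{S_j}\le\alpha\w(S_j)$) is correct, but it is subsumed by the windowed construction once you restrict $j$ to the window and pay $\delta/2\cdot\w(S)$ for the tail.
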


That is, \Cref{thm:brute_force} is an immediate consequence of \Cref{lem:covering_family}.
The construction of the covering family in \Cref{lem:covering_family} is based on a rounding of the weight function and a  reduction to covering families in the unweighted case. The construction of such families was implicitly given in \cite{EsmerKMNS22}.

\begin{lemma}[\cite{EsmerKMNS22}]
	\label{lem:covering_implicit}
	There exists an algorithm which given a finite set $U$ and $\alpha > 1$ returns an $\alpha$-covering family $\CC$ of $U$ and the uniform weight function $\w:U\rightarrow \{1\} $ such that $\abs{\CC}\leq  \left(\brute(\alpha)\right)^{n} \cdot n^{\OO(1)}$. Furthermore, the algorithm runs in time $\left(\brute(\alpha)\right)^{n} \cdot n^{\OO(1)}$.
\end{lemma}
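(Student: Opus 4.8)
The plan is to piggyback on \Cref{lem:EKMNS22}(1): it already supplies a deterministic $\alpha$-approximation algorithm $\mathcal{A}$ for $\USM$ in the membership model running in time $\left(\brute(\alpha)\right)^{n}\cdot n^{\OO(1)}$, and I would extract the covering family directly from $\mathcal{A}$. The algorithm of \cite{EsmerKMNS22} is non-adaptive: on input $U$ it probes the oracle on a fixed family $\CC\subseteq 2^U$ of at most $\left(\brute(\alpha)\right)^{n}\cdot n^{\OO(1)}$ subsets that depends only on $U$ and $\alpha$ (this is exactly the ``covering design'' it builds), and, without loss of generality, $\mathcal{A}$ queries the set it eventually outputs. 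To see that this $\CC$ is an $\alpha$-covering family of $U$ and the uniform weight $\w$, fix $S\subseteq U$ and feed $\mathcal{A}$ the (polynomial-time) oracle for the principal filter $\CF_S\coloneqq\{X\subseteq U : S\subseteq X\}$, which is monotone with $\opt(U,\w,\CF_S)=\abs{S}$. Then $\mathcal{A}$ returns some $T\in\CC$ with $T\in\CF_S$, i.e. $S\subseteq T$, and $\abs{T}\le\alpha\cdot\abs{S}=\alpha\cdot\w(S)$. Hence every $S$ is covered, and $\CC$ has the claimed size; constructing it is just running $\mathcal{A}$ once and recording its queries, so the running time is inherited.

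For a concrete, self-contained construction — and to see where formula \eqref{eq:brute_defn} comes from — I would instead set $\CC=\bigcup_{s=0}^{n}\CC_s$, where $\CC_s$ is a covering design: a family of subsets of $U$ of size $t_s\coloneqq\min\left(\lfloor\alpha s\rfloor,\,n\right)$ meeting the requirement that every $s$-element subset of $U$ is contained in some member of $\CC_s$. This is an $\alpha$-covering family for the uniform weight, since an $s$-set $S$ lies in some $T\in\CC_s$ with $\abs{T}=t_s\le\alpha s$ (and $T=U\supseteq S$ in the boundary case $t_s=n$). For the size of $\CC_s$ I would use the greedy set-cover bound: a uniformly random $t$-subset of $U$ contains a fixed $s$-set with probability $\binom{t}{s}/\binom{n}{s}$, so the fractional cover number equals $\binom{n}{s}/\binom{t}{s}$ and the greedy algorithm produces a cover of size at most $\binom{n}{s}/\binom{t_s}{s}\cdot\bigl(1+\ln\binom{n}{s}\bigr)=\binom{n}{s}/\binom{t_s}{s}\cdot n^{\OO(1)}$.

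The heart of the argument is the estimate $\max_{0\le s\le n/\alpha}\binom{n}{s}/\binom{\lfloor\alpha s\rfloor}{s}\le\left(\brute(\alpha)\right)^{n}\cdot n^{\OO(1)}$. Writing $s=\gamma n$ and using the two-sided estimate $\binom{a}{b}=e^{a\cdot\entropy(b/a)}\cdot a^{\pm\OO(1)}$, the exponent is $n\bigl(\entropy(\gamma)-\alpha\gamma\cdot\entropy(1/\alpha)\bigr)+\OO(\log n)$ (the floor $\lfloor\alpha s\rfloor$ perturbs $t_s\entropy(s/t_s)$ only by $\OO(1)$ except when $s=\OO(1)$, where $\binom{n}{s}$ is itself polynomial). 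Setting $c\coloneqq\alpha\cdot\entropy(1/\alpha)\ge 0$, the concave map $\gamma\mapsto\entropy(\gamma)-c\gamma$ attains its maximum at $\gamma^{*}=(1+e^{c})^{-1}$, where its value is $\ln(1+e^{c})-c=\ln(1+e^{-c})=\ln\brute(\alpha)$; one also checks $\alpha\gamma^{*}<1$, so the cap $t_s=n$ is inactive near the optimum and contributes only a trivial ratio $1$ elsewhere. Summing over the $n+1$ choices of $s$ costs another factor $n$, giving $\abs{\CC}\le\left(\brute(\alpha)\right)^{n}\cdot n^{\OO(1)}$.

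The step I expect to be the real obstacle is producing $\CC$ \emph{deterministically} in time $\abs{\CC}\cdot n^{\OO(1)}$ rather than merely proving its existence: a naive greedy would scan all $\binom{n}{t_s}$ candidate supersets per step, which can be as large as $2^{n-o(n)}$ and is hopelessly slow against the target $\left(\brute(\alpha)\right)^{n}$ (which for large $\alpha$ is close to $1$). Derandomizing the random construction via conditional expectations runs into the same wall, since maintaining the set of still-uncovered $s$-sets already costs a factor $\binom{n}{s}$ too much. This efficiency issue is precisely what is resolved in \cite{EsmerKMNS22} (equivalently, \Cref{lem:EKMNS22}(1)), which is why I would present the proof via the first route and invoke that result as a black box; the entropy computation of the previous paragraph is exactly what makes the resulting bound equal $\brute(\alpha)$.
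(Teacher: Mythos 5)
Your proposal is essentially the paper's treatment: the paper gives no standalone proof of \Cref{lem:covering_implicit} but imports it from \cite{EsmerKMNS22}, where the $\alpha$-approximate brute force is precisely a non-adaptive enumeration of a covering design, so the query family is the covering family — exactly your first route (and your principal-filter/adversary argument plus the entropy calculation $\max_\gamma\bigl(\entropy(\gamma)-\alpha\gamma\,\entropy(1/\alpha)\bigr)=\ln\brute(\alpha)$ correctly reconstructs what that citation contains). Your second, self-contained route is fine as an existence argument, and you rightly flag that the efficient deterministic construction is the part that must be taken from \cite{EsmerKMNS22}; note only that the ``WLOG the algorithm queries its output'' step is really a one-line adversary argument (the output must contain a queried yes-set, which already witnesses the covering property) rather than a genuine loss of generality.
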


The next auxiliary lemma is a weaker version of \Cref{lem:covering_family} and we will later show how to use it to prove \Cref{lem:covering_family}. Recall that $n$ denotes the size of the universe $U$.

\begin{lemma}
	\label{lem:covering_family_epsilon} 
	There exists an algorithm $\mathscr{A}$ which given a finite set $U$, a weight function $\w:U\rightarrow \NN$, $\beta > 1$ and $0 < \delta < 1$, constructs a $(1 + \delta) \cdot \beta$-covering family $\CC$ of $U$ and $\w$ such that $\abs{\CC} \leq \left(\brute(\beta)\right)^{n} \cdot n^{\OO\left( \frac{1}{\delta}  \log(\frac{n}{\delta}) \right) }$.
	Furthermore, the running time of the algorithm is $\left(\brute(\beta)\right)^{n} \cdot n^{\OO\left( \frac{1}{\delta}  \log(\frac{n}{\delta}) \right) }$.
\end{lemma}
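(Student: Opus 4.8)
The plan is to reduce the weighted problem to many independent unweighted instances, one per ``weight class'', and then to take a product (Cartesian combination) of the covering families produced by Lemma~\ref{lem:covering_implicit}. Given $\w\colon U\to\NN$, $\beta>1$ and $0<\delta<1$, I first handle the trivial element-removal: elements $u$ with $\w(u)=0$ can be placed in every set for free (they cost nothing and monotonicity lets us include them), so assume $\w(u)\ge 1$ for all $u$. The key difficulty is that weights can be arbitrarily large, so there could be up to $n$ distinct ``scales''; I will group elements geometrically. Let $U_i \coloneqq \{u\in U : (1+\delta)^{i-1}\le \w(u) < (1+\delta)^{i}\}$ for $i\ge 1$; only the nonempty $U_i$ matter, and their number is at most $n$. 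Within each $U_i$ all weights agree up to a factor $(1+\delta)$, so rounding every element of $U_i$ to the common value $w_i\coloneqq (1+\delta)^{i-1}$ distorts weights multiplicatively by at most $(1+\delta)$.

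Next I would run the algorithm of Lemma~\ref{lem:covering_implicit} on each $U_i$ (as an unweighted ground set of size $n_i\coloneqq|U_i|$) with parameter $\beta$, obtaining an $\beta$-covering family $\CC_i\subseteq 2^{U_i}$ with $|\CC_i|\le (\brute(\beta))^{n_i}\cdot n_i^{\OO(1)}$ constructed in comparable time. Define $\CC \coloneqq \{\, T_1\cup T_2\cup\cdots : T_i\in\CC_i \,\}$, the family of all unions obtained by picking one member from each $\CC_i$. Since $\brute$ is a constant and $\sum_i n_i = n$, we get $|\CC| \le \prod_i (\brute(\beta))^{n_i} n_i^{\OO(1)} = (\brute(\beta))^{n}\cdot \prod_i n_i^{\OO(1)}$, and since there are at most $n$ factors each at most $n^{\OO(1)}$, the extra factor is $n^{\OO(n)}$ in the worst case --- which is too large. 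To fix this, I must bound the \emph{number of nonempty weight classes} that can actually contribute: this is where the slack $\delta$ enters. The point is that for the covering guarantee we only need, for a fixed target $S$, that the classes $U_i$ intersecting $S$ are handled well; but since $\CC$ is built obliviously of $S$, I instead truncate the scale range. Concretely, I keep only the classes $U_i$ whose index lies in a window of width $\OO(\tfrac{1}{\delta}\log(\tfrac n\delta))$ around the ``current'' scale, and merge all classes of much smaller weight into a single bucket treated trivially (their total weight is dominated since there are at most $n$ of them, each a $(1+\delta)^{-\Omega((1/\delta)\log(n/\delta))}$ fraction of the scale, i.e.\ at most an $\OO(\delta/n)\cdot(\text{scale})$ contribution in aggregate, hence negligible against $\w(S)$); symmetrically, classes of much larger weight than any element of $S$ simply do not meet $S$. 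This caps the number of ``active'' classes at $\OO(\tfrac1\delta\log\tfrac n\delta)$, so the product of polynomial factors becomes $n^{\OO(\frac1\delta\log(n/\delta))}$, matching the claimed bound.

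For correctness, fix $S\subseteq U$. Write $S_i\coloneqq S\cap U_i$. Let $i^\ast$ be the largest index with $S_{i^\ast}\ne\emptyset$; then $\w(S)\ge (1+\delta)^{i^\ast-1}$. For each active class $U_i$, pick $T_i\in\CC_i$ with $S_i\subseteq T_i$ and (unweighted cardinality bound, translated via the common rounded weight) $\w(T_i)\le \beta\cdot(1+\delta)\cdot\w(S_i)$ — here the $\beta$-covering property of $\CC_i$ says $|T_i|\le\beta|S_i|$ actually, wait, the covering family gives $\w(T)\le\alpha\w(S)$ for the \emph{uniform} weight, i.e.\ $|T_i|\le\beta|S_i|$, and then $\w(T_i)\le w_i\cdot|T_i|\le w_i\beta|S_i|\le (1+\delta)\beta\,\w(S_i)$ since $\w$ restricted to $U_i$ is between $w_i$ and $(1+\delta)w_i$. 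For inactive small-weight classes, include all of $S_i$ and possibly more, but their total extra weight is at most $\OO(\delta)\cdot\w(S)$ by the truncation argument. Large-weight inactive classes contribute nothing to $S$. Summing, the chosen $T\coloneqq\bigcup_i T_i\in\CC$ satisfies $S\subseteq T$ and $\w(T)\le (1+\delta)\beta\sum_i\w(S_i) + \OO(\delta)\w(S) \le (1+\OO(\delta))\beta\,\w(S)$; rescaling $\delta$ by a constant gives the stated $(1+\delta)\beta$ bound. The running time is the sum of the construction times of the $\CC_i$ plus the time to enumerate $\CC$, which is $|\CC|\cdot n^{\OO(1)}$, matching the size bound.

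The main obstacle is exactly the control of the number of distinct weight scales: a naive geometric bucketing gives up to $n$ buckets and hence an $n^{\Theta(n)}$ blow-up, destroying the bound. The resolution — truncating to a poly-logarithmic-in-$n$, inverse-polynomial-in-$\delta$ window of scales and absorbing the tails into the $\delta$ slack — is the crux, and getting the tail estimates quantitatively right (so that the merged small classes genuinely cost only $\OO(\delta)\w(S)$ and the window width is $\OO(\frac1\delta\log\frac n\delta)$) is the step that needs care.
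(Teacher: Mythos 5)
Your construction is essentially the paper's: geometric weight classes with ratio $1+\Theta(\delta)$, an unweighted $\beta$-covering family from \Cref{lem:covering_implicit} in each class, and for each candidate top scale $k$ a window of $\OO\bigl(\tfrac{1}{\delta}\log\tfrac{n}{\delta}\bigr)$ classes whose families are combined by Cartesian product while all lighter classes are included wholesale (their total weight being at most $\tfrac{\delta}{2}\w(S)$), finally taking the union over the at most $n$ choices of $k$. The correctness and size estimates you give match the paper's argument, so the proposal is correct and follows the same route.
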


\begin{proof}
	The algorithm $\mathscr{A}$ works as follows:
	\begin{itemize}
		\item Define $\gamma \coloneqq 1 + \frac{\delta}{2} > 1$.
		For $i \geq 0$ let
		\begin{equation}\label{eq:U_i_defn_covering}
			U_i \coloneqq \{u \in U \mid \gamma^i \leq \w(u) < \gamma^{i+1}\}
		\end{equation}
		and $n_i \coloneqq |U_i|$.
		Let $I \coloneqq \{i \in \ZZ_{\geq 0} \mid U_i \neq \emptyset\}$ denote the set of indices $i \geq 0$ for which $U_i$ is non-empty.
		Note that $|I| \leq n$.

		\item  For each $i \in I$ construct a $\beta$-covering family $\CC_i$ of the universe $U_i$ and the uniform weight function, using \Cref{lem:covering_implicit}.

		\item Define $d \coloneqq \lceil (2/\delta) \cdot \log(2n/\delta)\rceil$ and for each $k \in I$, let $I_k \coloneqq \{i \in I \mid k-d \leq i \leq k\}$ denote the indices in $I$ between $k-d$ and $k$.

		\item For every $k \in I$, let $r_k \coloneqq \abs{I_k}$ and define
		\begin{align*}
			W_k &\coloneqq \bigcup_{i \in \;I \colon 1 \leq i < k - d } U_i \qquad \text{and} \\
			\mathcal{Q}_k &\coloneqq \left\{W_k \cup E_1 \cup \ldots \cup E_{r_{k}} \;\middle| \; \left( E_1, \ldots, E_{r_k} \right) \in \prod_{i \in I_k} \CC_i \right\}.
		\end{align*}

		\item Return the set $\CC \coloneqq \bigcup_{k \in I} \mathcal{Q}_k$.
	\end{itemize}

	\begin{claim}\label{claim:covering}
		The algorithm $\mathscr{A}$ returns a $(1 + \delta) \cdot \beta$-covering family of $U$ and $\w$.
	\end{claim}

	\begin{claimproof}
		Let us pick a set $S \subseteq U$.
		We show that there exists  $T \in \CC$ such that $S \subseteq T$ and $\w(T) \leq (1 + \delta) \cdot \beta \cdot \w(S)$.

		By the definition of $\gamma$ and $d$, it holds that
		\begin{equation}\label{eq:gamma_d_lb_covering}
			\gamma^d \geq \left(1 + \frac{\delta}{2}\right)^{(2/\delta) \cdot \log(2n/\delta)} \geq 2^{\log(2n/\delta)} = \frac{2n}{\delta}
		\end{equation}
		using that $(1 + \frac{1}{x})^x \geq 2$ for all $x \geq 1$. Let $k \in I$ be the largest index such that $S \cap U_k \neq \emptyset$. It holds that
		\begin{equation}\label{eq:weight_W_k_S_covering}
			\w(W_k) < n \cdot \gamma^{k - d} \leq \frac{\delta}{2} \cdot \gamma^{k} \leq \frac{\delta}{2} \cdot \w(S)
		\end{equation}
		where the first inequality follows from the fact that $\abs{W_k} \leq n$ and each $u \in W_k$ belongs to a set $U_i$ where $i \leq k-d - 1$ and therefore $\w(u) < \gamma^{k - d - 1 + 1} = \gamma^{k - d}$ by \eqref{eq:U_i_defn_covering}. The second inequality follows from \eqref{eq:gamma_d_lb_covering} and finally the last inequality holds because by definition of $k$, there exists $u \in S \cap U_k$ such that $\w(u) \geq \gamma^{k}$ by \eqref{eq:U_i_defn_covering}.

		For every $i \in I_k$ define $S_i \coloneqq S \cap U_i$. Since $\CC_i$ is a $\beta$-covering family of $U_i$ and the \emph{uniform weight function}, for each $i \in I_k$ there exists $T_i \in \CC_i$ such that $S_i \subseteq T_i$ and $\abs{T_i} \leq \beta \cdot \abs{S_i}$. Therefore for all $i \in I_k$ it holds that
		\begin{equation}\label{eq:weight_T_i_ub}
			\w(T_i) \leq \gamma^{i + 1} \cdot \abs{T_i} \leq \gamma^{i + 1} \cdot \beta \cdot \abs{S_i} \leq \gamma \cdot \beta \cdot \w(S_i)
		\end{equation}
		where the first inequality follows from the fact that $T_i \subseteq U_i$ and \eqref{eq:U_i_defn_covering}, the second inequality follows from the definition of $T_i$ and finally the last one again follows from the fact that $S_i \subseteq U_i$ and~\eqref{eq:U_i_defn_covering}.

		Let $\overline{T} \coloneqq \bigcup_{i \in I_k} T_i$ and define
		\begin{equation*}
			T \coloneqq \biggl(W_k \cup \overline{T}\biggr) \in \mathcal{Q}_k \subseteq \CC.
		\end{equation*}
		Therefore we have
		\begin{align*}
			S = S \cap U = S \cap \left( \bigcup_{i \in I} U_i \right) &= \biggl(S \cap \Bigl( \cup_{i \in I : i < k - d} U_i\Bigr) \biggr) \cup \biggl(S \cap \Bigl( \cup_{i \in I_k} U_i \Bigr) \biggr)\\
										&= \left( S \cap W_k \right) \cup \left( \bigcup_{i \in I_k}  \left( S\cap U_i\right) \right)\\
										&\subseteq W_k \cup \left( \bigcup_{i \in I_k} T_i\right)\\
										&= T.
		\end{align*}
		Finally, it also holds that
		\begin{align*}
			\w(T) &= \w(W_k) + \sum_{i \in I_k} \w(T_i) \\
			&\leq \frac{\delta}{2} \cdot \w(S) + \sum_{i \in I_k} \gamma \cdot \beta \cdot \w(S_i) &&\text{by }\eqref{eq:weight_W_k_S_covering} \text{ and } \eqref{eq:weight_T_i_ub}  \\
			&\leq \frac{\delta}{2} \cdot \beta \cdot \w(S) + \gamma \cdot \beta \cdot \w(S)\\
			&\leq (1 + \delta) \cdot \beta \cdot \w(S).
		\end{align*}
		This shows us that $\CC$ is a $(1 + \delta)\cdot\beta$-covering family of $U$ and $\w$.
	\end{claimproof}

	\begin{claim}\label{claim:size_CC}
		\begin{equation*}
			\abs{\CC} \leq \left(\brute(\beta)\right)^{n} \cdot n^{\OO\left( \frac{1}{\delta} \cdot \log(\frac{n}{\delta}) \right) }.
		\end{equation*}
	\end{claim}

	\begin{claimproof}
		By \Cref{lem:covering_implicit}, for each $i \in I$ it holds that
		\begin{equation}\label{eq:C_i_size}
			\abs{\CC_i} \leq \left(\brute(\beta)\right)^{n_i} \cdot n_i^{c}
		\end{equation}
		for some $c > 0$.
		Thus, for every $k \in I$,
		\begin{align*}
			\abs{\mathcal{Q}_k} \leq \abs{\prod_{i \in I_k} \CC_i} = \prod_{i \in I_k} \abs{\CC_i} &\leq \prod_{i \in I_k} \left(\brute(\beta)\right)^{n_i}  \cdot n_i^{c}\\
				&= \left(\brute(\beta)\right)^{\sum_{i \in I_k} n_i} \cdot n_i^{c \cdot \abs{I_k}}\\
				&\leq \left(\brute(\beta)\right)^{n} \cdot n^{c \cdot (d + 1)}\\
				&= \left(\brute(\beta)\right)^{n} \cdot n^{\OO\left( \frac{1}{\delta} \cdot \log(\frac{n}{\delta}) \right) }.
		\end{align*}
		Finally, we have that
		\begin{equation*}
			\abs{\CC} = \abs{\bigcup_{k \in I} \mathcal{Q}_k} \leq \sum_{k \in I} \abs{\mathcal{Q}_k} = \left(\brute(\beta)\right)^{n} \cdot n^{\OO\left( \frac{1}{\delta} \cdot \log(\frac{n}{\delta}) \right) }
		\end{equation*}
		since $\abs{I} \leq n$.
	\end{claimproof}

	\begin{claim}\label{claim:cover_rt}
		The running time of $\mathscr{A}$ is $\left(\brute(\beta)\right)^{n} \cdot n^{\OO\left( \frac{1}{\delta} \cdot \log(\frac{n}{\delta}) \right) }$.
	\end{claim}

	\begin{claimproof}
		The construction of $\{\CC_i\}_{i \in I}$ takes
		\begin{align}\label{eq:membership_runtime_1}
			\sum_{i \in I} \left(\brute(\beta)\right)^{n_i} \cdot n_i^{\OO(1)} \leq \left(\brute(\beta)\right)^{n} \cdot n^{\OO(1)}
		\end{align}
		which follows from \Cref{lem:covering_implicit}.
		Finally, the construction of $\CC$ takes time proportional to the size of $\CC$ where we have
		\begin{align*}
			\abs{\CC} = \left(\brute(\beta)\right)^{n} \cdot n^{\OO\left( \frac{1}{\delta} \cdot \log(\frac{n}{\delta}) \right) }
		\end{align*}
		by \Cref{claim:size_CC}.
		All in all, the running time of $\mathscr{A}$ is upper bounded by
		\begin{equation*}
			\left(\brute(\beta)\right)^{n} \cdot n^{\OO\left( \frac{1}{\delta} \cdot \log(\frac{n}{\delta}) \right) }.\qedhere
		\end{equation*}
	\end{claimproof}

	The lemma follows from \Cref{claim:covering,claim:size_CC,claim:cover_rt}.
\end{proof}

To complete the proof of \Cref{lem:covering_family} we use \Cref{lem:covering_family_epsilon} together with the following technical lemma.  

\begin{restatable}{lemma}{epssubexponential}
	\label{lem:eps_subexponential}
	Let $f \colon I \to \mathbb{R}$ be a continuous function on an open interval $I\subseteq \mathbb{R}$ and let $\alpha > 1$ such that $\alpha \in I$. 
	Define $\beta(n) \coloneqq \alpha - \frac{1}{\log(n)}$ and $\delta(n) \coloneqq \frac{\alpha}{\beta(n)} - 1$ for all $n \in \mathbb{N}$.
	Then it holds that
	\begin{equation*}
		f\bigl(\beta(n)\bigr)^{n} \cdot n^{\OO\left( \frac{1}{\delta(n)} \cdot \log\left(\frac{n}{\delta(n)}\right) \right) } = f(\alpha)^{n + o(n)}.
	\end{equation*}
\end{restatable}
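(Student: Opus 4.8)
The plan is to take logarithms and reduce the claim to two elementary asymptotic estimates: one saying that the factor $n^{\OO(\frac{1}{\delta(n)}\log(n/\delta(n)))}$ contributes only $o(n)$ to the exponent of $f(\alpha)$, and one saying that replacing $f(\beta(n))$ by $f(\alpha)$ costs another $o(n)$. Write $\log$ for the natural logarithm (the base is immaterial). Since $I$ is open and $\alpha\in I$, and $\beta(n)=\alpha-\frac{1}{\log n}\to\alpha$, we have $\beta(n)\in I$ for all sufficiently large $n$, so $f(\beta(n))$ is defined; moreover $f(\alpha)>0$ (otherwise $f(\alpha)^{n+o(n)}$ is not meaningful) and we may assume $f(\alpha)\neq 1$, as is the case in every application of this lemma. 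Under these assumptions the desired identity is equivalent to
\begin{equation*}
 n\cdot\log f\bigl(\beta(n)\bigr)+\OO\!\left(\tfrac{1}{\delta(n)}\log\tfrac{n}{\delta(n)}\right)\cdot\log n=n\cdot\log f(\alpha)+o(n),
\end{equation*}
so it suffices to prove (i) $n\bigl(\log f(\beta(n))-\log f(\alpha)\bigr)=o(n)$ and (ii) $\tfrac{1}{\delta(n)}\cdot\log\tfrac{n}{\delta(n)}\cdot\log n=o(n)$.

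For (i) I would simply invoke continuity: $f$ is continuous at $\alpha$ and $f(\alpha)>0$, so $f(\beta(n))\to f(\alpha)$ and hence $\log f(\beta(n))-\log f(\alpha)=o(1)$, which gives $o(n)$ after multiplying by $n$. For (ii) the key step is to write $\delta(n)$ explicitly: from $\delta(n)=\frac{\alpha}{\beta(n)}-1=\frac{\alpha-\beta(n)}{\beta(n)}=\frac{1/\log n}{\alpha-1/\log n}$ one reads off $\frac{1}{\delta(n)}=\alpha\log n-1=\Theta(\log n)$ (so in particular $0<\delta(n)<1$ for large $n$). Therefore $\log\frac{n}{\delta(n)}=\log n+\log\frac{1}{\delta(n)}\leq\log n+\log(\alpha\log n)=\log n+\OO(\log\log n)=\OO(\log n)$, whence $\frac{1}{\delta(n)}\log\frac{n}{\delta(n)}=\OO((\log n)^2)$ and the quantity in (ii) is $\OO((\log n)^3)=o(n)$.

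Combining (i) and (ii) establishes the displayed equation, and dividing by the nonzero constant $\log f(\alpha)$ recovers the statement $f(\beta(n))^n\cdot n^{\OO(\cdots)}=f(\alpha)^{n+o(n)}$. I do not anticipate any genuine obstacle: once the explicit formula for $\delta(n)$ is in hand, everything is routine. The only points needing a little care are the boundary bookkeeping — checking that $\beta(n)$ eventually lies in $I$ and that $\delta(n)\in(0,1)$, both of which follow from $\frac{1}{\log n}\to 0$ — and the (harmless) assumption $f(\alpha)\neq 1$, which is what lets us pass between the multiplicative statement and its logarithmic form.
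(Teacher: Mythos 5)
Your proposal is correct and follows essentially the same route as the paper's proof: continuity of $f$ at $\alpha$ shows the factor $\bigl(f(\beta(n))/f(\alpha)\bigr)^n$ contributes only $f(\alpha)^{o(n)}$, and the explicit bound $\frac{1}{\delta(n)} = \Theta(\log n)$ turns the polynomial factor into $2^{\OO(\log^3 n)} = f(\alpha)^{o(n)}$, exactly as in the paper (which works multiplicatively rather than with logarithms, a cosmetic difference). Your explicit remark that one needs $f(\alpha)>0$ and $f(\alpha)\neq 1$ is a point the paper leaves implicit but which holds in all its applications.
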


The proof of \Cref{lem:eps_subexponential} is given in \Cref{sec:eps_subexponential}.

\begin{proof}[Proof of \Cref{lem:covering_family}]
	We claim that the algorithm $\mathscr{A}$ from \Cref{lem:covering_family_epsilon} with $\beta \coloneqq \alpha - \frac{1}{\log(n)}$ and $\delta \coloneqq \frac{\alpha}{\beta} - 1$ satisfies the properties listed in \Cref{lem:covering_family}.
	Note that $\beta$ and $\delta$ are functions of $n$, however we write $\beta$ and $\delta$ instead of $\beta(n)$ and $\delta(n)$ for the sake of readability.
	
	Observe that we have $(1 + \delta) \cdot \beta = \frac{\alpha}{\beta} \cdot \beta = \alpha$.
	Hence, by \Cref{lem:covering_family_epsilon}, the set returned by $\mathscr{A}$ is an $\alpha = (1 + \delta) \cdot \beta$-covering family $\CC$ of $U$ and $\w$ such that $\abs{\CC} \leq \left(\brute(\beta)\right)^{n} \cdot n^{\OO\left( \frac{1}{\delta}  \log(\frac{n}{\delta}) \right) }$.
	The running time of the algorithm is also bounded by $\left(\brute(\beta)\right)^{n} \cdot n^{\OO\left( \frac{1}{\delta} \cdot \log(\frac{n}{\delta}) \right) }$.

	By \eqref{eq:brute_defn}, $\brute(x)$ is a continuous function of $x$ because the entropy function is continuous. Therefore by \Cref{lem:eps_subexponential} it holds that
	\begin{equation*}
		\left(\brute(\beta)\right)^{n} \cdot n^{\OO\left( \frac{1}{\delta}  \log(\frac{n}{\delta}) \right) } = \left(\brute(\alpha)\right)^{n + o(n)}
	\end{equation*}
	which proves the lemma.
\end{proof}

\section{Weighted Monotone Local Search}
\label{sec:weighted_monotone_local_search}
In this section we prove \Cref{thm:weighted_amls}.
The proof presented here follows the outline of the proof of \Cref{thm:brute_force} in \Cref{sec:brute_force}, using the concept of \emph{$(\alpha,\beta)$-extension family}, an adaptation of the term $\alpha$-covering family presented in \Cref{sec:brute_force} to the setting of extension oracles.
While the items in a covering family are queries to a membership oracle, and hence are subsets of~$U$, the items in an extension family represent queries to the extension oracle, and thus are pairs $(T,\ell)$ of a subset $T$ of $U$ and a non-negative integer $\ell$.
A reduction to a construction from \cite{EsmerKMNS23} is used to build the extension family, and the extension family itself can be trivially used to attain a $\beta$-approximation algorithm for $\WSM$.

\begin{definition}[Extension Family]
	Let $U$ be a finite set and $\w:U\rightarrow \NN$ be a weight function.
	Furthermore, let $\alpha,\beta\geq 1$.
	We say $\CE\subseteq 2^U\times \NN$ is an \emph{$(\alpha,\beta)$-extension family} of $U$ and~$\w$ if for every $S\subseteq U$ there exists $(T,\ell) \in \CE$ which satisfies the following:
	\begin{align}\label{eq:defn_extension}
		\begin{split}
			\abs{S\setminus T} &\leq \ell,\\
			\w(T) + \alpha \cdot \w(S\setminus T) &\leq \beta \cdot \w(S).
		\end{split}
	\end{align}
\end{definition}

Let $c\geq 1$. The \emph{$c$-cost} of an $(\alpha,\beta)$-extension family $\CE$ of $U$ and $\w$ is defined as $\cost_c(\CE) \coloneqq \sum_{(T,\ell)\in \CE} c^{\ell}$.
The proof of \Cref{thm:weighted_amls} relies on the following lemma.

\begin{lemma}
	\label{lem:extension_construction}
	Let $\alpha,c\geq 1$, $\beta>1$ and $\eps > 0$.
	Then there is an algorithm which given a finite set $U$ and a weight function $\w:U\rightarrow \NN$ returns an $(\alpha,\beta)$-extension family $\CE$ of $U$ and~$\w$ such that $\cost_c(\CE) \leq \OO\left(\Bigl( \amlsbound(\alpha,c,\beta) + \eps\Bigr)^{n}\right)$.
	Furthermore, the running time of the algorithm is $\OO\left(\Bigl( \amlsbound(\alpha,c,\beta) + \eps\Bigr)^{n}\right)$.
\end{lemma}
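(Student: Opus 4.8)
The plan is to follow the blueprint of \Cref{sec:brute_force}: first prove a ``$\delta$-slack'' analogue of \Cref{lem:extension_construction} — playing the role \Cref{lem:covering_family_epsilon} plays for \Cref{lem:covering_family} — and then remove the slack. The analogue of \Cref{lem:covering_implicit} is the non-adaptive deterministic approximate monotone local search of \cite{EsmerKMNS23}, read as a statement about extension families: given a finite set $U$ and $\alpha,c\geq 1$, $\beta>1$, one can construct in time $\left(\amlsbound(\alpha,c,\beta)\right)^{n}\cdot n^{\OO(1)}$ an $(\alpha,\beta)$-extension family of $U$ and the \emph{uniform} weight function whose $c$-cost is at most $\left(\amlsbound(\alpha,c,\beta)\right)^{n}\cdot n^{\OO(1)}$. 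This is precisely the (weight-oblivious) collection of queries the algorithm of \cite{EsmerKMNS23} issues to its $\alpha$-extension oracle, and it is the only way \cite{EsmerKMNS23} enters the argument — as a black box.

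For the slack version, fix $\beta>1$ and $0<\delta<1$. As in \Cref{lem:covering_family_epsilon}, set $\gamma\coloneqq 1+\tfrac{\delta}{2}$, partition $U$ into the weight classes $U_i\coloneqq\{u\in U\mid\gamma^i\leq\w(u)<\gamma^{i+1}\}$ with $n_i\coloneqq\abs{U_i}$, let $I$ be the set of indices $i\geq 0$ with $U_i\neq\emptyset$, and put $d\coloneqq\lceil(2/\delta)\log(2n/\delta)\rceil$, so $\gamma^d\geq 2n/\delta$. For each $i\in I$ build an $(\alpha,\beta)$-extension family $\CE_i$ of $U_i$ and the uniform weight function via the black box above. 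For $k\in I$ let $I_k\coloneqq\{i\in I\mid k-d\leq i\leq k\}$, $r_k\coloneqq\abs{I_k}$, $W_k\coloneqq\bigcup_{i\in I,\,i<k-d}U_i$, and set
\[
	\mathcal{Q}_k\coloneqq\Bigl\{\bigl(W_k\cup E_1\cup\dots\cup E_{r_k},\ \ell_1+\dots+\ell_{r_k}\bigr)\ \Big|\ \bigl((E_1,\ell_1),\dots,(E_{r_k},\ell_{r_k})\bigr)\in\prod_{i\in I_k}\CE_i\Bigr\},
\]
and return $\CE\coloneqq\bigcup_{k\in I}\mathcal{Q}_k$. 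To see this is an $(\alpha,(1+\delta)\beta)$-extension family, fix $S\subseteq U$, let $k$ be the largest index with $S\cap U_k\neq\emptyset$, write $S_i\coloneqq S\cap U_i$, and for each $i\in I_k$ pick $(T_i,\ell_i)\in\CE_i$ witnessing $S_i$ (so $\abs{S_i\setminus T_i}\leq\ell_i$ and $\abs{T_i}+\alpha\abs{S_i\setminus T_i}\leq\beta\abs{S_i}$). With $T\coloneqq W_k\cup\bigcup_{i\in I_k}T_i$ and $\ell\coloneqq\sum_{i\in I_k}\ell_i$ one has $(T,\ell)\in\mathcal{Q}_k\subseteq\CE$; since $S_i=\emptyset$ for $i>k$ and $S_i\subseteq U_i\subseteq W_k\subseteq T$ for $i<k-d$, the set $S\setminus T$ is the disjoint union of the $S_i\setminus T_i$ over $i\in I_k$, hence $\abs{S\setminus T}=\sum_{i\in I_k}\abs{S_i\setminus T_i}\leq\ell$ — note that the ``cheap tail'' $W_k$ lands in $T$ without charging the budget. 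For the weight bound, each $u\in U_i$ satisfies $\gamma^i\leq\w(u)<\gamma^{i+1}$, so for $i\in I_k$,
\[
	\w(T_i)+\alpha\,\w(S_i\setminus T_i)\ \leq\ \gamma^{i+1}\bigl(\abs{T_i}+\alpha\abs{S_i\setminus T_i}\bigr)\ \leq\ \gamma^{i+1}\beta\abs{S_i}\ \leq\ \gamma\beta\,\w(S_i),
\]
and $\w(W_k)<n\,\gamma^{k-d}\leq\tfrac{\delta}{2}\gamma^{k}\leq\tfrac{\delta}{2}\w(S)$ (using $\gamma^d\geq 2n/\delta$ and $\w(S)\geq\gamma^{k}$ since $S\cap U_k\neq\emptyset$); summing and using $\beta\geq 1$, $\gamma=1+\tfrac{\delta}{2}$ gives $\w(T)+\alpha\,\w(S\setminus T)\leq\bigl(\tfrac{\delta}{2}+\gamma\bigr)\beta\,\w(S)=(1+\delta)\beta\,\w(S)$.

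The cost analysis is exactly as in \Cref{lem:covering_family_epsilon}: $\cost_c(\mathcal{Q}_k)=\prod_{i\in I_k}\cost_c(\CE_i)\leq\left(\amlsbound(\alpha,c,\beta)\right)^{\sum_{i\in I_k}n_i}\cdot n^{\OO(d+1)}\leq\left(\amlsbound(\alpha,c,\beta)\right)^{n}\cdot n^{\OO\left(\frac{1}{\delta}\log\frac{n}{\delta}\right)}$, and summing over the at most $n$ values of $k$ keeps the same bound; the running time (building the $\CE_i$, then listing $\CE$) is bounded the same way. To conclude \Cref{lem:extension_construction}, let $\alpha,c\geq 1$, $\beta>1$, $\eps>0$ be given and use that $x\mapsto\amlsbound(\alpha,c,x)$ is continuous (it is the optimal value of the continuous program recalled in \Cref{sec:amls_def}) to pick a \emph{constant} $\delta\in(0,1)$ with $\beta/(1+\delta)>1$ and $\amlsbound\!\left(\alpha,c,\tfrac{\beta}{1+\delta}\right)\leq\amlsbound(\alpha,c,\beta)+\tfrac{\eps}{2}$. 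Running the slack construction with $\beta'\coloneqq\beta/(1+\delta)$ and this $\delta$ yields, since $(1+\delta)\beta'=\beta$, an $(\alpha,\beta)$-extension family $\CE$ with $\cost_c(\CE)\leq\left(\amlsbound(\alpha,c,\beta)+\tfrac{\eps}{2}\right)^{n}\cdot n^{\OO(\log n)}$, the hidden constant depending only on $\delta$; since $n^{\OO(\log n)}=2^{o(n)}$ and $\amlsbound(\alpha,c,\beta)+\tfrac{\eps}{2}\geq 1$, this is $\OO\!\left(\left(\amlsbound(\alpha,c,\beta)+\eps\right)^{n}\right)$ (finitely many small $n$ absorbed into the $\OO$-constant), and likewise for the running time. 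One could instead take $\delta=\delta(n)\to 0$ and invoke \Cref{lem:eps_subexponential} as in \Cref{lem:covering_family}, obtaining the sharper bound $\left(\amlsbound(\alpha,c,\beta)\right)^{n+o(n)}$, but the fixed-$\delta$ route already delivers the stated lemma.

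The main obstacle I anticipate is bookkeeping rather than depth: one must keep both clauses of \eqref{eq:defn_extension} in sync under the product construction — in particular, that the extension budget $\ell$ does not have to pay for the cheap tail $W_k$, and that the geometric rounding $\gamma^i\leq\w(u)<\gamma^{i+1}$ inflates the weight guarantee only by the per-class factor $\gamma$ — and one must be entitled to treat $\amlsbound(\alpha,c,\cdot)$ as continuous. The genuinely hard content, namely that the base $\amlsbound(\alpha,c,\beta)$ is attainable for the uniform weight function in the first place, is entirely inherited from \cite{EsmerKMNS23}.
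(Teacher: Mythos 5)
Your proposal reproduces the paper's approach: geometric weight classes $U_i$ at scale $\gamma=1+\delta/2$, a product construction $\mathcal{Q}_k$ over the $O(\log n)$ classes $I_k$ near the heaviest class met by $S$, a ``free'' cheap tail $W_k$ absorbed into $T$, and the fixed-$\delta$ slack removed via continuity of $\amlsbound(\alpha,c,\cdot)$. The verification that the construction yields an $(\alpha,(1+\delta)\beta)$-extension family is correct and matches \Cref{claim:extension_correctness}, and the slack-removal step is equivalent to the paper's (you pick $\delta$ then set $\beta'=\beta/(1+\delta)$; the paper picks $\betaext'$ then sets $\delta=\beta/\betaext'-1$ — same thing).

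There is one genuine gap, in the cost analysis. You quote the black box from \cite{EsmerKMNS23} as producing, for the uniform weight function, an extension family of $c$-cost $\left(\amlsbound(\alpha,c,\beta)\right)^{n}\cdot n^{\OO(1)}$, and then bound $\cost_c(\mathcal{Q}_k)\leq\left(\amlsbound\right)^{\sum_{i\in I_k}n_i}\cdot n^{\OO(d+1)}$. But the paper's \Cref{lem:unweighted_extension} only gives $\cost_c(\CE_i)\leq\bigl(\amlsbound(\alpha,c,\betaext)\bigr)^{n_i+o(n_i)}$, a weaker form than the polynomial-overhead version you implicitly assume (and which the paper deliberately \emph{does} state for the membership case in \Cref{lem:covering_implicit}, but not here). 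With only $n_i+o(n_i)$ available, your product gives exponent $\sum_{i\in I_k}\bigl(n_i+o(n_i)\bigr)$, and the term $\sum_{i\in I_k}o(n_i)$ is not automatically $o(n)$ — a priori each summand could be close to $n_i$. It is $o(n)$ only because $|I_k|\leq d+1=O(\log n)=o(n)$, and the paper needs the separate argument of \Cref{lem:o_notation} to conclude this. You should either justify the stronger $n^{\OO(1)}$-overhead form of the base lemma, or state the weaker form and invoke \Cref{lem:o_notation} (or reprove it) to handle $\sum_{i\in I_k}o(n_i)$. A secondary, smaller point: you assert continuity of $\amlsbound(\alpha,c,\cdot)$ ``since it is the optimal value of a continuous program''; optimal values of parametric $\max$-$\min$ programs are not automatically continuous, and the paper devotes \Cref{lem:amls_continuity} (via semi-continuity arguments from \cite{Still18}) to establishing this. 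Citing that lemma, or at least flagging that continuity needs proof, would close that loose end.
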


Before heading to the proof of \Cref{lem:extension_construction}, we show how the lemma can be used to prove \Cref{thm:weighted_amls}.

\begin{proof}[Proof of \Cref{thm:weighted_amls}]
	Let $\alpha, c \geq 1$, $\beta > 1$ and $\eps > 0$. Consider the following algorithm $\mathscr{A}$:

	\begin{itemize}
		\item Given the  input $(U, \w, \mathcal{F})$, use \Cref{lem:extension_construction} with $\alpha, \beta, c$ and $\eps$ to construct an $(\alpha, \beta)$-extension family $\CE$ of $U$ and $\w$. 
		\item Let $\oracle$ be the $\alpha$-extension oracle. For each $(T_i, \ell_i) \in \CE$, use $\oracle$ to compute $X_i \coloneqq \oracle(T_i, \ell_i)$.
		\item Define $\mathcal{T} \coloneqq \{T_i \cup X_i \mid (T_i, \ell_i) \in \CE\}$ and return a set in $\mathcal{T}$ with the minimum weight, i.e. a set $T \in \mathcal{T}$ such that $\w(T) = \min \{\w(Y) \mid Y \in \mathcal{T}\}$. 
	\end{itemize}

	The algorithm $\mathscr{A}$ first creates an $(\alpha, \beta)$-extension family $\CE$.
	Then it goes over all elements $(T_i, \ell_i) \in \CE$ and queries the oracle with $(T_i, \ell_i)$.
	The running time of this algorithm in the $(\alpha,c)$-extension model is therefore equal to the running time of the algorithm from \Cref{lem:extension_construction} plus $c^{\ell_i}$ for every query $(T_i, \ell_i)$, i.e., the cost of the extension family $\CE$.
	By \Cref{lem:extension_construction} this value is at most
	\begin{align*}
		\left( \amlsbound(\alpha,c,\beta)  + \eps\right)^{n} + \cost_c(\CE) &\leq \OO\Bigl(\left( \amlsbound(\alpha,c,\beta) + \eps\right)^{n}\Bigr) + \OO\Bigl(\left( \amlsbound(\alpha,c,\beta) + \eps\right)^{n}\Bigr)\\
		&= \OO\Bigl(\left( \amlsbound(\alpha,c,\beta) + \eps\right)^{n}\Bigr)
	\end{align*}

	\begin{claim}
		The algorithm $\mathscr{A}$ is a deterministic $\beta$-approximation for $\WSM$ in the $(\alpha,c)$-extension model.
	\end{claim}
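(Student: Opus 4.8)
The plan is to check the two requirements hidden in the phrase ``$\beta$-approximation'': that $\mathscr{A}$ always returns a set in $\CF$, and that the weight of this set is at most $\beta\cdot\opt(U,\w,\CF)$. Determinism itself needs no argument, since constructing $\CE$ via \Cref{lem:extension_construction}, issuing the fixed list of queries $(T_i,\ell_i)$ to $\oracle$, and taking a minimum-weight element of $\mathcal{T}$ are all deterministic operations.

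Membership is immediate from the first property of \Cref{def:extension}: for every $(T_i,\ell_i)\in\CE$ we have $T_i\cup X_i=T_i\cup\oracle(T_i,\ell_i)\in\CF$, hence $\mathcal{T}\subseteq\CF$ and in particular the returned set lies in $\CF$. For the weight bound, I would fix an optimal solution $S^\star\in\CF$, so that $\w(S^\star)=\opt(U,\w,\CF)$, and feed it to the extension family: by the defining property of an $(\alpha,\beta)$-extension family there is a pair $(T,\ell)\in\CE$ with $\abs{S^\star\setminus T}\le\ell$ and $\w(T)+\alpha\cdot\w(S^\star\setminus T)\le\beta\cdot\w(S^\star)$. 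The crucial — and essentially only nontrivial — step is to observe that $S^\star\setminus T$ is a legitimate extension of $T$: since $(S^\star\setminus T)\cup T=S^\star\cup T\supseteq S^\star$, monotonicity of $\CF$ gives $(S^\star\setminus T)\cup T\in\CF$, and of course $\abs{S^\star\setminus T}\le\ell$. Therefore $S^\star\setminus T$ is feasible in the minimum appearing in the second property of \Cref{def:extension}, which yields $\w(\oracle(T,\ell))\le\alpha\cdot\w(S^\star\setminus T)$.

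Combining these facts, the set $T\cup\oracle(T,\ell)$ is an element of $\mathcal{T}$ of weight at most $\w(T)+\w(\oracle(T,\ell))\le\w(T)+\alpha\cdot\w(S^\star\setminus T)\le\beta\cdot\w(S^\star)=\beta\cdot\opt(U,\w,\CF)$; since $\mathscr{A}$ outputs a minimum-weight member of $\mathcal{T}$, its output has weight at most $\beta\cdot\opt(U,\w,\CF)$. I do not expect a genuine obstacle here: \Cref{def:extension} and the definition of the extension family have been set up precisely so that the already-selected part $T$ is charged its exact weight while the part $S^\star\setminus T$ still to be covered absorbs the oracle's factor $\alpha$, and the proof amounts to the bookkeeping that aligns these two definitions, together with the single use of monotonicity noted above.
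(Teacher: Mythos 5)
Your proposal is correct and follows essentially the same route as the paper's proof: fix an optimum $S^\star$, invoke the defining property of the $(\alpha,\beta)$-extension family to obtain a pair $(T,\ell)\in\CE$, note via monotonicity that $S^\star\setminus T$ is a feasible extension of $T$ of size at most $\ell$, bound $\w(\oracle(T,\ell))\le\alpha\cdot\w(S^\star\setminus T)$, and then combine with $\w(T)+\alpha\cdot\w(S^\star\setminus T)\le\beta\cdot\w(S^\star)$. The only cosmetic difference is that you state the monotonicity step slightly more explicitly (rewriting $(S^\star\setminus T)\cup T$ as $S^\star\cup T\supseteq S^\star$), whereas the paper leaves it implicit.
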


	\begin{claimproof}
		Let $R \in \mathcal{T}$ be the set returned by the algorithm. Note that by definition of $\mathcal{T}$, $R = T_j \cup X_j$ for some $(T_j, \ell_j) \in \CE$ and $X_j = \oracle(T_j, \ell_j)$.
		Also observe that $X_j = \oracle(T_j, \ell_j)$ satisfies  $R = X_j \cup T_j \in \mathcal{F}$ (\Cref{def:extension}).
		
		Let $S \in \mathcal{F}$ be a set with minimum weight in $\mathcal{F}$, i.e. $\w(S) = \min \{\w(Y) \mid Y \in \mathcal{F}\} = \opt(U,\w,\CF)$.
		Since $\CE$ is an $(\alpha, \beta)$-extension family, there exists $(T_i, \ell_i) \in \CE$ such that $S, T_i$ and $\ell_i$ satisfy \eqref{eq:defn_extension} for $T = T_i$ and $\ell = \ell_i$.
		Moreover, observe that 
		\begin{equation*}
			S \setminus T_i \in \{X \subseteq U \mid \abs{X} \leq \ell_i, X \cup T_i \in \mathcal{F}\} 
		\end{equation*}
		because $S \subseteq \left( S \setminus T_i \right) \cup T_i \in \mathcal{F}$. Hence by the definiton of $\oracle$ and $(T_i, \ell_i)$ it follows that
		\begin{align*}
			\w(X_i) = \w\left(\oracle (T_i,\ell_i)\right) &\leq \alpha \cdot \min \left\{ \w(X)~|~X\subseteq U,~\abs{X}\leq \ell_i,~X\cup T_i\in \CF \right\}\\
			&\leq \alpha \cdot \w\left( S \setminus T_i \right)\\
			       &\leq \beta \cdot \w(S) - \w(T_i).
		\end{align*}
		Finally, we have
		\begin{align*}
			\w(R) &= \min \{\w(Y) \mid Y \in \mathcal{T}\}\\
			     &\leq \w(T_i \cup X_i) \\
			     &\leq \w(T_i) + \w(X_i) \\
			     &\leq \beta \cdot \w(S)\\
			     &= \beta \cdot \opt(U,\w,\CF)
		\end{align*}
		which proves the claim.
	\end{claimproof}

	This shows that the algorithm $\mathscr{A}$ is a deterministic $\beta$-approximation for $\WSM$ in the $(\alpha,c)$-extension model with the running time given in \Cref{thm:weighted_amls}.
\end{proof}

The proof of \Cref{lem:extension_construction} relies on the following construction for extension families in the unweighted case which was implicitly given in \cite{EsmerKMNS23}.

\begin{lemma}[\cite{EsmerKMNS23}]\label{lem:extension_implicit}
	\label{lem:unweighted_extension}
	Let $\alpha,c\geq 1$ and $\beta>1$. Then there is a deterministic algorithm which given a finite set $U$, returns  an $(\alpha,\beta)$-extension family $\CE$ of $U$ and the uniform weight function $\w:U\rightarrow \{1\}$ such that $\cost_c(\CE) \leq \Bigl( \amlsbound(\alpha,c,\beta)\Bigr)^{n + o(n)}$. Furthermore, the running time of the algorithm is $\Bigl( \amlsbound(\alpha,c,\beta)\Bigr)^{n + o(n)}$.
\end{lemma}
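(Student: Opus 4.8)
The plan is to read off $\CE$ directly from the deterministic approximate monotone local search algorithm of \cite{EsmerKMNS23}. Recall that on a $\USM$ instance $(U,\w,\CF)$ with uniform weights, accessed through an $\alpha$-extension oracle $\oracle$, that algorithm proceeds in two phases: first, using only $n=\abs{U}$, it generates a finite list $\mathcal L\subseteq 2^U\times\NN$ of oracle queries; then it issues every query $(T,\ell)\in\mathcal L$ and returns a lightest set among $\{T\cup\oracle(T,\ell)\mid(T,\ell)\in\mathcal L\}$. The key point -- already highlighted in the introduction -- is that $\mathcal L$ is produced \emph{before} any query is answered and does not depend on $\CF$; it is a function of $U$ alone. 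I would therefore simply set $\CE\coloneqq\mathcal L$. With this definition, generating $\CE$ takes exactly as long as the first phase of the algorithm, and, since a query $(T,\ell)$ is charged $c^{\ell}$ in the $(\alpha,c)$-extension model, $\cost_c(\CE)=\sum_{(T,\ell)\in\CE}c^{\ell}$ equals the oracle-time part of the algorithm's running time. As \cite{EsmerKMNS23} proves this running time to be $\bigl(\amlsbound(\alpha,c,\beta)\bigr)^{n+o(n)}$, both the cost and construction-time bounds in the statement follow for free; the substance is the extension-family property.

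To verify that property I would recall just enough of the internals of the \cite{EsmerKMNS23} construction (which follows the monotone-local-search scheme of \cite{FominGLS19}). For each candidate optimum size $k\in\{0,\dots,n\}$, the algorithm fixes a single ``guessed overlap'' $i(k)\in\{0,\dots,k\}$ -- the value prescribed by the continuous optimization problem defining $\amlsbound(\alpha,c,\beta)$ -- together with a family $\mathcal G_{n,k}$ of $i(k)$-element subsets of $U$ that is \emph{$k$-covering}: every $k$-element subset of $U$ contains at least one member of $\mathcal G_{n,k}$, and $\abs{\mathcal G_{n,k}}\le n^{\OO(1)}\cdot\binom{n}{i(k)}\big/\binom{k}{i(k)}$. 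It then queries $\bigl(X,\,k-i(k)\bigr)$ for every $X\in\mathcal G_{n,k}$, so that
\begin{equation*}
	\CE\ \supseteq\ \bigl\{\,\bigl(X,\,k-i(k)\bigr)\ \big|\ 0\le k\le n,\ X\in\mathcal G_{n,k}\,\bigr\}.
\end{equation*}
The only property of $i(k)$ I need is the \emph{feasibility inequality} $i(k)+\alpha\cdot\bigl(k-i(k)\bigr)\le\beta k$; this is exactly the inequality \cite{EsmerKMNS23} uses to argue that, when the guessed size $k$ equals $\opt(U,\w,\CF)$, the returned set is a $\beta$-approximation (because for $X$ contained in an optimum the set $\oracle(X,k-i(k))$ has size at most $\alpha(k-i(k))$), so it may be quoted.

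Now fix an arbitrary $S\subseteq U$ and set $k\coloneqq\abs{S}$. Since $\mathcal G_{n,k}$ is $k$-covering there is $X\in\mathcal G_{n,k}$ with $X\subseteq S$; put $T\coloneqq X$ and $\ell\coloneqq k-i(k)$, so $(T,\ell)\in\CE$. Then $\abs{S\setminus T}=\abs{S}-\abs{X}=k-i(k)=\ell$, which yields the first requirement $\abs{S\setminus T}\le\ell$ in \eqref{eq:defn_extension}. For the second, since all weights are $1$,
\begin{equation*}
	\w(T)+\alpha\cdot\w(S\setminus T)=i(k)+\alpha\cdot\bigl(k-i(k)\bigr)\le\beta k=\beta\cdot\w(S),
\end{equation*}
by the feasibility inequality. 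Hence $\CE$ is an $(\alpha,\beta)$-extension family of $U$ and the uniform weight function, and together with the bounds from the first paragraph this proves the lemma.

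The main obstacle is not a mathematical one but a matter of faithfully transcribing \cite{EsmerKMNS23}: one must (i) confirm that the deterministic approximate monotone local search issues its oracle queries non-adaptively, i.e.\ the list $\mathcal L$ is determined by $n$ before the oracle is consulted and is independent of $\CF$; (ii) match the running-time accounting, verifying that $\sum_{(T,\ell)\in\CE}c^{\ell}$ is precisely the oracle-time contribution in their analysis, so that the bound $\bigl(\amlsbound(\alpha,c,\beta)\bigr)^{n+o(n)}$ transfers verbatim -- here the sub-exponential slack absorbs the Stirling approximation of the $\binom{n}{i(k)}$ factors and the summation over the at most $n+1$ values of $k$; and (iii) locate, inside their correctness proof, the feasibility inequality $i(k)+\alpha(k-i(k))\le\beta k$. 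None of these steps needs a new idea; the lemma is essentially a restatement of the \cite{EsmerKMNS23} algorithm in the vocabulary of extension families.
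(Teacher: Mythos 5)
The paper never proves this lemma: it is imported as a statement ``implicitly given in \cite{EsmerKMNS23}'', and your first paragraph is exactly the intended justification for that citation --- the deterministic approximate monotone local search generates its query list non-adaptively from $U$ alone, its correctness analysis shows that for every $S$ some queried pair $(T,\ell)$ satisfies $\abs{S\setminus T}\leq\ell$ and $\abs{T}+\alpha\cdot\abs{S\setminus T}\leq\beta\cdot\abs{S}$, and the charged oracle time is precisely $\cost_c(\CE)$. At that level your route coincides with the paper's.

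The problem is with the internals you ``recall'' in the second paragraph: they describe the exact monotone local search of \cite{FominGLS19}, not the approximate version of \cite{EsmerKMNS23}. You take the queried sets to be $i(k)$-element sets that are \emph{contained in} the optimum (a covering design in which every $k$-set contains a family member), with budget $k-i(k)$ and feasibility inequality $i(k)+\alpha(k-i(k))\leq\beta k$. In the construction behind $\amlsbound$, the queried sets have size roughly $\tau n$ where $\tau$ is a free parameter ranging up to $\beta\kappa$, so $T$ may be larger than $S$ and typically contains elements outside $S$; the derandomized object guarantees that some member $T$ has small $\abs{S\setminus T}$, not that some member lies inside $S$, and the relevant inequality is $\abs{T}+\alpha\ell\leq\beta\abs{S}$ with $\abs{T}$ not bounded by $\abs{S}$. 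This is not a cosmetic difference: forcing $T\subseteq S$ corresponds to pinning $\tau$ to one particular value in the inner minimization defining $\amlsbound(\alpha,c,\beta)=\exp\bigl(\max_{\kappa}\min_{\tau}g_{\alpha,\beta,c}(\kappa,\tau)\bigr)$, and the resulting family has cost $\exp\bigl(n\cdot\max_{\kappa}g_{\alpha,\beta,c}(\kappa,\tau_0(\kappa))\bigr)$, which in general strictly exceeds $\amlsbound(\alpha,c,\beta)$ --- the whole improvement of approximate monotone local search over the scheme of \cite{FominGLS19} comes from querying sets that are \emph{not} subsets of the optimum. So if your $\CE$ were the family you describe, the cost bound in the lemma would fail. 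The fix is to keep your argument but quote the correct abstract properties of the actual algorithm: non-adaptivity of the query list, the per-$S$ guarantee $\abs{S\setminus T}\leq\ell$ together with $\abs{T}+\alpha\ell\leq\beta\abs{S}$ extracted from their correctness proof for general (not necessarily contained) $T$, and their bound of $\bigl(\amlsbound(\alpha,c,\beta)\bigr)^{n+o(n)}$ on $\sum_{(T,\ell)}c^{\ell}$ plus generation time.
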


In a manner analogous to \Cref{sec:brute_force}, we begin by introducing a related lemma, \Cref{lem:extension_family_epsilon}, which presents a slightly weaker form of \Cref{lem:extension_construction}. Within \Cref{lem:extension_family_epsilon}, $\betaext$ takes the place of the previously mentioned $\beta$ from \Cref{lem:extension_construction}. Subsequently, in the proof of \Cref{lem:extension_construction}, we set the value of $\zeta$ as a function of $\beta$.

\begin{lemma}\label{lem:extension_family_epsilon}
	Let $\alpha,c\geq 1$, $\betaext>1$ and $ 0 < \delta < 1$. Then there is an algorithm which given a finite set $U$ and a weight function $\w:U\rightarrow \NN$ returns  an $(\alpha,(1 + \delta) \cdot \betaext)$-extension family $\CE$ of $U$ and $\w$ such that $\cost_c(\CE) \leq \Bigl( \amlsbound(\alpha,c, \betaext)\Bigr)^{n + o(n)}$. Furthermore, the running time of the algorithm is $\Bigl( \amlsbound(\alpha,c,\betaext)\Bigr)^{n+o(n)}$.
\end{lemma}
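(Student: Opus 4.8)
The plan is to transcribe the proof of \Cref{lem:covering_family_epsilon} almost verbatim, replacing the unweighted covering-family construction of \Cref{lem:covering_implicit} by the unweighted extension-family construction of \Cref{lem:extension_implicit} and, in addition, threading the length components $\ell$ through the combination step. Concretely, I would set $\gamma \coloneqq 1 + \delta/2$, partition $U$ into weight classes $U_i \coloneqq \{u \in U \mid \gamma^i \leq \w(u) < \gamma^{i+1}\}$ with $n_i \coloneqq \abs{U_i}$ and $I \coloneqq \{i \mid U_i \neq \emptyset\}$, and for each $i \in I$ apply \Cref{lem:extension_implicit} to $U_i$ with the uniform weight function to get an $(\alpha,\betaext)$-extension family $\CE_i$ of $U_i$ with $\cost_c(\CE_i) \leq \bigl(\amlsbound(\alpha,c,\betaext)\bigr)^{n_i+o(n_i)}$. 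Then, exactly as in \Cref{lem:covering_family_epsilon}, I would put $d \coloneqq \lceil(2/\delta)\log(2n/\delta)\rceil$, $I_k \coloneqq \{i \in I \mid k-d \leq i \leq k\}$, $W_k \coloneqq \bigcup_{i \in I,\, i < k-d} U_i$, and return
\begin{equation*}
	\CE \coloneqq \bigcup_{k \in I} \left\{ \Bigl( W_k \cup \bigcup\nolimits_{i \in I_k} E_i,\ \sum\nolimits_{i \in I_k}\ell_i \Bigr) \ \middle|\ \bigl((E_i,\ell_i)\bigr)_{i \in I_k} \in \prod\nolimits_{i \in I_k}\CE_i \right\}.
\end{equation*}

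For correctness I would fix $S \subseteq U$ (the case $S=\emptyset$ being immediate), let $k \in I$ be the largest index with $S \cap U_k \neq \emptyset$, so that $S \subseteq W_k \cup \bigcup_{i\in I_k}U_i$, and for each $i \in I_k$ use the extension-family property of $\CE_i$ on $S_i \coloneqq S \cap U_i$ to choose $(T_i,\ell_i) \in \CE_i$ with $\abs{S_i\setminus T_i} \leq \ell_i$ and $\abs{T_i} + \alpha\abs{S_i\setminus T_i} \leq \betaext\abs{S_i}$. The witnessing pair is $(T,\ell) \coloneqq \bigl(W_k \cup \bigcup_{i\in I_k}T_i,\ \sum_{i\in I_k}\ell_i\bigr) \in \CE$. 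Pairwise disjointness of the $U_i$ together with $S \cap W_k \subseteq W_k \subseteq T$ give $S\setminus T = \bigcup_{i\in I_k}(S_i\setminus T_i)$ as a disjoint union, hence $\abs{S\setminus T} \leq \ell$; and since $\gamma^i \leq \w(u) < \gamma^{i+1}$ for all $u \in U_i$, for each $i \in I_k$ one gets $\w(T_i) + \alpha\,\w(S_i\setminus T_i) \leq \gamma^{i+1}\bigl(\abs{T_i}+\alpha\abs{S_i\setminus T_i}\bigr) \leq \gamma^{i+1}\betaext\abs{S_i} \leq \gamma\betaext\,\w(S_i)$. Summing over $I_k$ and using $\w(W_k) \leq \frac{\delta}{2}\w(S)$ (derived exactly as \eqref{eq:weight_W_k_S_covering} via \eqref{eq:gamma_d_lb_covering}) together with $\betaext \geq 1$ then yields $\w(T)+\alpha\,\w(S\setminus T) \leq \frac{\delta}{2}\w(S) + \gamma\betaext\,\w(S) \leq (1+\delta)\betaext\,\w(S)$, so $(T,\ell)$ satisfies \eqref{eq:defn_extension} with $\beta = (1+\delta)\betaext$ and $\CE$ is an $(\alpha,(1+\delta)\betaext)$-extension family.

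For the cost bound, a tuple $\bigl((E_i,\ell_i)\bigr)_{i\in I_k}$ produces a pair whose $c$-cost is $c^{\sum_i\ell_i} = \prod_{i\in I_k}c^{\ell_i}$, so summing over tuples (which only over-counts, as the $k$-th block is a set) bounds the cost of the $k$-th block by $\prod_{i\in I_k}\cost_c(\CE_i) \leq \bigl(\amlsbound(\alpha,c,\betaext)\bigr)^{\sum_{i\in I_k}(n_i+o(n_i))}$; with $\sum_{i\in I_k}n_i \leq n$ and $\sum_{i\in I_k}o(n_i) = o(n)$ (see below) this is $\bigl(\amlsbound(\alpha,c,\betaext)\bigr)^{n+o(n)}$, and summing over the at most $\abs{I}\le n$ blocks keeps the bound. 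The running time is dominated by building the $\CE_i$ (total $\sum_{i\in I}\bigl(\amlsbound(\alpha,c,\betaext)\bigr)^{n_i+o(n_i)}$, bounded the same way) and by listing $\CE$ (time polynomial in $\abs{\CE} \leq \cost_c(\CE)$, as $c \geq 1$), hence it too is $\bigl(\amlsbound(\alpha,c,\betaext)\bigr)^{n+o(n)}$.

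The one point genuinely new relative to \Cref{lem:covering_family_epsilon} is the estimate $\sum_{i\in I_k}o(n_i) = o(n)$, and this is where I expect the (minor) subtlety to lie: \Cref{lem:covering_implicit} incurs only a polynomial overhead $n_i^{\OO(1)}$ per weight class, whereas \Cref{lem:extension_implicit} incurs a $2^{o(n_i)}$ overhead, so one must check that multiplying these over the up to $d+1 = \Theta(\log n)$ classes of a window $I_k$ does not exceed $2^{o(n)}$. This follows from an elementary argument: for each fixed $\eps > 0$, all but finitely many of the class sizes $n_i$ occurring in a window satisfy $o(n_i) \leq \eps n_i$, so those contribute at most $\eps n$, while the finitely many exceptional classes contribute at most $\OO(\log n) = o(n)$ in total; since $\eps$ was arbitrary, the sum is $o(n)$. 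Everything else is a mechanical rewriting of the covering-family argument, with the cardinality $\abs{\,\cdot\,}$ replaced by the pair $(\,\cdot\,,\ell)$ and the guarantee $\w(T) \leq \alpha\,\w(S)$ replaced by $\w(T)+\alpha\,\w(S\setminus T) \leq \beta\,\w(S)$.
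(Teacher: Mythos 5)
Your proposal is correct and follows essentially the same route as the paper: the same weight-class partition, the same window $I_k$ with prefix $W_k$, the same product construction combining the unweighted extension families from \Cref{lem:extension_implicit} by taking unions of the sets and sums of the budgets $\ell_i$, and the same correctness and cost estimates. The only cosmetic difference is that you prove the step $\sum_{i \in I_k} o(n_i) = o(n)$ inline, whereas the paper isolates it as the technical \Cref{lem:o_notation}; your $\eps$-splitting argument is exactly the proof given there.
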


The proof of \Cref{lem:extension_family_epsilon} contains repeated arguments from the proof of \Cref{lem:covering_family_epsilon}. To enhance the overall readability and reduce the notational burden, we keep the common arguments in both proofs. Finally, we need the following technical lemma whose proof is given in \Cref{sec:o_notation}.

\begin{restatable}{lemma}{onotation}
	\label{lem:o_notation}
	Let $g,d\colon \NN \to \NN$ be two functions such that $g \in n + o(n)$ and $d \in o(n)$.
	
	We define $f\colon \NN \to \NN$ via
	\[f(n) = \max_{n = n_1 + \dots + n_{d(n)}} \sum_{i=1}^{d(n)} g(n_i).\]
	Then $f \in n + o(n)$.
\end{restatable}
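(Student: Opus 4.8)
The plan is to write $g(m) = m + h(m)$ with $h(m) \coloneqq g(m) - m$, so that for every decomposition $n = n_1 + \dots + n_{d(n)}$ one has $\sum_{i=1}^{d(n)} g(n_i) = n + \sum_{i=1}^{d(n)} h(n_i)$. Consequently $f(n) - n$ is exactly the maximum of $\sum_i h(n_i)$ over all admissible decompositions, and the entire statement reduces to the single claim that $\bigl|\sum_i h(n_i)\bigr| = o(n)$ \emph{uniformly} over all such decompositions. Granting this, the upper bound $f(n) \le n + o(n)$ is immediate, and the matching lower bound $f(n) \ge n - o(n)$ follows by applying the very same estimate to one fixed admissible decomposition (which exists once $d(n) \le n$, i.e.\ for all sufficiently large $n$ since $d \in o(n)$).

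To establish the uniform bound I would fix $\eps > 0$ and split the parts of a decomposition into ``large'' and ``small'' ones. Since $h(m) = o(m)$, there is a threshold $M = M(\eps)$ such that $|h(m)| \le \eps m$ whenever $m \ge M$; and since $g$ --- hence $h$ --- is integer-valued, $C = C(\eps) \coloneqq \max\{\, |h(m)| : m < M \,\}$ is a finite constant (a maximum over finitely many arguments). Then for any decomposition $n = n_1 + \dots + n_{d(n)}$,
\[
\Bigl|\sum_{i=1}^{d(n)} h(n_i)\Bigr| \;\le\; \sum_{i:\, n_i \ge M} |h(n_i)| \;+\; \sum_{i:\, n_i < M} |h(n_i)| \;\le\; \eps \sum_{i:\, n_i \ge M} n_i \;+\; C\, d(n) \;\le\; \eps n + C\, d(n).
\]
Since $C$ is a constant and $d \in o(n)$, there is $N_0 = N_0(\eps)$ with $C\, d(n) \le \eps n$ for all $n \ge N_0$, so $\bigl|\sum_i h(n_i)\bigr| \le 2\eps n$ for every $n \ge N_0$ and every decomposition. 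As $\eps > 0$ was arbitrary, this yields $|f(n) - n| = o(n)$, i.e.\ $f \in n + o(n)$.

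The only genuine obstacle is the uniformity over decompositions: the number of parts $d(n)$ itself grows with $n$, so one cannot simply invoke ``$h(m) = o(m)$'' part by part. The large/small split is precisely what circumvents this, and the hypothesis $d \in o(n)$ is used exactly to control the cumulative contribution $C\, d(n)$ of the (possibly many, but individually bounded) small parts. Apart from that, the only thing to keep track of is that $h$ may take negative values, so every estimate must be carried out with absolute values; no further difficulty arises.
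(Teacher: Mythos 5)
Your proof is correct and follows essentially the same strategy as the paper: fix $\eps > 0$, pick a threshold below which the $o(\cdot)$ term is dominated by a constant $C$, split each decomposition into large and small parts, bound the large parts by $\eps n$ and the small parts by $C\,d(n)$, and finally use $d \in o(n)$ to absorb $C\,d(n)$. The only cosmetic difference is that you work with $h = g - \mathrm{id}$ and carry absolute values, which lets you also spell out the lower bound $f(n) \ge n - o(n)$; the paper's proof establishes only the upper bound $f(n) \le (1+\eps)n$ (which is the direction actually used), so your write-up is, if anything, slightly more complete.
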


\begin{proof}[Proof of \Cref{lem:extension_family_epsilon}]
We claim that \Cref{algo:extension} satisfies the conditions stated in the lemma.
\begin{algorithm}
	\begin{algorithmic}[1]
		\Configuration $\alpha \geq 1$, $c \geq 1$, $\betaext > 1$ and $0 < \delta < 1$
		\Input A universe $U$, weight function $\w \colon U \to \mathbb{N}$
		\State Define $\gamma \coloneqq 1 + \frac{\delta}{2} > 1$.
	For $i \geq 0$ let
	\begin{equation}\label{eq:U_i_defn_extension}
		U_i \coloneqq \{u \in U \mid \gamma^i \leq \w(u) < \gamma^{i+1}\}
	\end{equation}
	and $n_i \coloneqq |U_i|$.  Let $I \coloneqq \{i \in \ZZ_{\geq 0} \mid U_i \neq \emptyset\}$ denote the set of indices $i \geq 0$ for which $U_i$ is non-empty. 
	Note that $|I| \leq n$.

	\State For each $i \in I$ construct an $\Bigl( \alpha, \betaext \Bigr)$-extension family $\CE_i$ of the universe $U_i$ and the uniform weight function, using \Cref{lem:extension_implicit}, with respect to $\alpha$, $c$ and $\betaext$.

		\State Define $d \coloneqq \lceil (2/\delta) \cdot \log(2n/\delta)\rceil$ and for each $k \in I$, let $I_k \coloneqq \{i \in I \mid k-d \leq i \leq k\}$ denote the indices in $I$ between $k-d$ and $k$.
		\State \label{step_algo_defn} For every $k \in I$, let $r_k \coloneqq \abs{I_k}$ and define 
	\begin{align*}
		W_k &\coloneqq \bigcup_{i \in I \colon 1 \leq i < k - d} U_i \qquad \text{and} \\
		\mathcal{Q}_k &\coloneqq \left\{ \Bigl( W_k \cup E_1 \cup \ldots \cup E_{r_{k}}, \ell_1 + \ldots + \ell_{r_k} \Bigr)  \;\middle| \; \Bigl( \left( E_1, \ell_1 \right), \ldots, \left( E_{r_k}, \ell_{r_k} \right)  \Bigr) \in \prod_{i \in I_k} \CE_i \right\}.
	\end{align*}

	\State Return the set $\CE \coloneqq \bigcup_{k \in I} \mathcal{Q}_k$. 	
	\end{algorithmic}
	\caption{Extension Family for Arbitrary Weight Functions}\label{algo:extension}
\end{algorithm}

\begin{claim}\label{claim:extension_correctness}
	The set $\CE$ returned by \Cref{algo:extension} is an $\bigl( \alpha, (1 + \delta) \cdot \betaext \bigr)$-extension family of $U$ and $\w$.
\end{claim}

\begin{claimproof}
	Let  $S \subseteq U$ be a set. We will show that there exists $(T,\ell) \in \CE$ such that $S,T$ and $\ell$ satisfy \eqref{eq:defn_extension}. 
	By the definition of $\gamma$ and $d$, it holds that 
	\begin{equation}\label{eq:gamma_d_lb_extension}
		\gamma^d \geq \left(1 + \frac{\delta}{2}\right)^{(2/\delta) \cdot \log(2n/\delta)} \geq 2^{\log(2n/\delta)} = \frac{2n}{\delta}
	\end{equation}
	using that $(1 + \frac{1}{x})^x \geq 2$ for all $x \geq 1$. Let $k \in I$ be the largest index such that $S \cap U_k \neq \emptyset$. It holds that
	\begin{equation}\label{eq:weight_W_k_S_extension}
		\w(W_k) < n \cdot \gamma^{k - d} \leq \frac{\delta}{2} \cdot \gamma^{k} \leq \frac{\delta}{2} \cdot \w(S)
	\end{equation}
	where the first inequality follows from the fact that $\abs{W_k} \leq n$ and each $u \in W_k$ belongs to a set $U_i$ where $i \leq k-d - 1$ and therefore $\w(u) < \gamma^{k - d - 1 + 1} = \gamma^{k - d}$ by \eqref{eq:U_i_defn_extension}. The second inequality follows from \eqref{eq:gamma_d_lb_extension} and finally the last inequality holds because by the definition of $k$, there exists $u \in S \cap U_k$ such that $\w(S) \geq \w(u) \geq \gamma^{k}$ by \eqref{eq:U_i_defn_extension}.

	For $i \in I_k$ let $S_i \coloneqq S \cap U_i$. Since $\CE_i$ is an $(\alpha, \betaext)$-extension family of $U_i$ and the \emph{uniform weight function}, for each $i \in I_k$ there exists $(T_i, \ell_i) \in \CE_i$ such that
	\begin{align}\label{eq:extension_family_uniform}
		\begin{split}
			\abs{S_i \setminus T_i} &\leq \ell_i\\
			\abs{T_i} + \alpha \cdot \abs{S_i \setminus T_i} &\leq \betaext \cdot \abs{S_i}. 
		\end{split}
	\end{align}

	Let $\overline{T} \coloneqq \bigcup_{i \in I_k} T_i$ and define
	\begin{align*}
		\begin{split}
			T &\coloneqq W_k \cup \overline{T}\\
			\ell &\coloneqq \sum_{i \in I_k} \ell_i.
		\end{split}
	\end{align*}
	By definition of $\mathcal{Q}_k$ in \Cref{algo:extension} it holds that $(T,\ell) \in \mathcal{Q}_k \subseteq \CE$. Observe that
	\begin{align}\label{eq:ST_I_k}
		\begin{split}
		S \setminus T &= \left( S \setminus T \right) \cap U\\
			      &= \left( S \setminus T \right) \cap \left( W_k \cup \biggl( \;\bigcup_{i \in I_k} U_i \biggr)  \right) \\
			      &= \Bigl( (S\setminus T) \cap W_k \Bigr) \cup \left( (S \setminus T) \cap \biggl( \;\bigcup_{i \in I_k} U_i \biggr)  \right)\\
			      &=  (S \setminus T) \cap \biggl( \;\bigcup_{i \in I_k} U_i \biggr)\\
			      &= \bigcup_{i \in I_k} \left(U_i \cap (S \setminus T)\right)\\
			      &= \bigcup_{i \in I_k} S_i \setminus T_i			
		\end{split}
	\end{align}
	where the second equality follows from $S \subseteq \bigcup_{1 \leq i \leq k} U_i = W_k \cup \left( \bigcup_{k-d \leq i \leq k} U_i \right)$ and the fourth inequality holds because $W_k \subseteq T$ which further implies $(S \setminus T) \cap W_k = \emptyset$. Therefore it holds that
	\begin{align*}
		\abs{S \setminus T} = \abs{\bigcup_{i \in I_k} S_i \setminus T_i} \leq \sum_{i \in I_k} \abs{S_i \setminus T_i} \leq \sum_{i \in I_k} \ell_i = \ell.
	\end{align*}
	where the second inequality follows from \eqref{eq:extension_family_uniform}.

	For all $i \in I_k$ we also have that 
	\begin{align}\label{eq:weight_ineq_T_i}
		\begin{split}
		\w(T_i) + \alpha \cdot \w(S_i \setminus T_i) &\leq \abs{T_i}\cdot \gamma^{i+1} + \alpha \cdot \gamma^{i+1} \cdot \abs{S_i \setminus T_i}\\
							   &\leq \gamma^{i+1} \cdot \left( \abs{T_i} + \alpha \cdot \abs{S_i \setminus T_i} \right)\\
							   &\leq \gamma^{i+1} \cdot \betaext \cdot \abs{S_i}\\
							   &\leq \betaext \cdot \gamma \cdot \w(S_i)			
		\end{split}
	\end{align}
	where the third inequality follows from \eqref{eq:extension_family_uniform}. Therefore we have that 
	\begin{align*}
		\w(T) + \alpha\cdot \w(S\setminus T) &= \w(W_k) + \w\left( \bigcup_{i \in I_k} T_i \right) + \alpha \cdot \w\left( S \setminus T \right) \\
						   &\leq \w(W_k) + \sum_{i \in I_k} \Biggl( \w(T_i) + \alpha \cdot \w(S_i \setminus T_i) \Biggr) &&\text{ by } \eqref{eq:ST_I_k}\\
						   &< \frac{\delta}{2} \cdot \w(S)  + \sum_{i \in I_k} \Biggl( \w(T_i) + \alpha \cdot \w(S_i \setminus T_i) \Biggr)&& \text{ by } \eqref{eq:weight_W_k_S_extension} \\
						   & \leq \frac{\delta}{2} \cdot \w(S) +  \sum_{i \in I_k}  \betaext \cdot \gamma \cdot \w(S_i) && \text{ by } \eqref{eq:weight_ineq_T_i} \\
		& < \betaext \cdot \frac{\delta}{2} \cdot \w(S) +  \sum_{i \in I_k}  \betaext \cdot \gamma \cdot \w(S_i) &&\text{ since $\betaext > 1$} \\
		& \leq \betaext \cdot \frac{\delta}{2} \cdot \w(S) + \betaext \cdot \gamma \cdot \w(S)\\
		&\leq \left( 1 + \delta \right) \cdot \betaext \cdot \w(S)
	\end{align*}
which proves the claim.
\end{claimproof}
	
\begin{claim}\label{claim:extension_cost}
	\begin{equation*}
		\cost_c(\CE) \leq \Bigl( \amlsbound(\alpha,c,\betaext)\Bigr)^{n+o(n)}.
	\end{equation*}
\end{claim}

\begin{claimproof}
	By \Cref{lem:extension_implicit}, for each $i \in I$ it holds that
	\begin{equation}\label{eq:extension_implicit_ineq}
		\cost_c(\CE_i) \leq \Bigl( \amlsbound(\alpha,c,\betaext)\Bigr)^{n_i + o(n_i)}.
	\end{equation}
	
	For $i \in \mathbb{Z}_{\geq 0} \setminus I$, we define $\CE_i = \{(\emptyset, 0)\}$. By doing so, we can make the assumption, without loss of generality, that $\CE_i$ is nonempty for all $i \in \mathbb{Z}_{\geq 0}$. Note that this assumption does not have any effect on the value of the $\cost_c$ and it simplifies the following analysis. Also let $\mathbb{E}$ denote the Cartesian product of $\CE_{k-d}$ up to $\CE_k$, i.e. $\mathbb{E} \coloneqq \prod_{i \in \{k - d, \ldots, k\} } \CE_i$. 

	With this assumption, for all $k \in I$ we have that
	\begin{align*}
		\cost_c(\mathcal{Q}_k) &= \sum_{\bigl( (E_{k - d}, \ell_{k - d}), \ldots, (E_{k - 1}, \ell_{k - 1}), (E_{k}, \ell_{k}) \bigr) \in \mathbb{E}} c^{\ell_{k - d} + \ldots + \ell_{k - 1} + \ell_{k}}\\
				       &= \sum_{\bigl( (E_{k - d}, \ell_{k - d}), \ldots, (E_{k - 1}, \ell_{k - 1}), (E_{k}, \ell_{k}) \bigr) \in \mathbb{E}} c^{\ell_{k - d}} \cdot \ldots \cdot c^{\ell_{k - 1}} \cdot c^{\ell_k}\\
		 &= \prod_{j = k -d}^{k} \;\sum_{(E,\ell) \in \CE_j}c^{\ell}\\
		 &= \prod_{j = k - d}^{k} \;\cost_c(\CE_j)\\
		 &= \prod_{j \in I_k}\; \cost_c(\CE_j)
	\end{align*}
	
	By  \eqref{eq:extension_implicit_ineq} we obtain
	\begin{align*}
		\cost_c(\mathcal{Q}_k) &= \prod_{i \in I_k} \cost_c\left( \CE_i \right)\\
				       &\leq \prod_{i \in I_k} \Bigl( \amlsbound(\alpha,c,\betaext)\Bigr)^{n_i + o(n_i)}\\
				       &= \Bigl( \amlsbound(\alpha,c,\betaext)\Bigr)^{n + o(n)}\\
	\end{align*}
	where the last step follows from \Cref{lem:o_notation}. 

	Finally, it holds that
	\begin{align*}
		\cost_c(\CE) = \cost_c\left( \bigcup_{k \in I} \mathcal{Q}_k \right) \leq \sum_{k \in I} \;\cost_c\left( \mathcal{Q}_k \right) = \Bigl( \amlsbound(\alpha,c,\betaext)\Bigr)^{n + o(n)}
	\end{align*}
	since $\abs{I} \leq n$.
\end{claimproof}

\begin{claim}\label{claim:extension_running_time}
	The running time of \Cref{algo:extension} is $\Bigl( \amlsbound(\alpha,c,\betaext)\Bigr)^{n+o(n)}$.
\end{claim}

\begin{claimproof}
	 The construction of $\{\CE_i\}_{i \in I}$ takes
	\begin{equation*}
		\sum_{i \in I} \Bigl( \amlsbound(\alpha,c,\betaext)\Bigr)^{n_i + o(n_i)}  \leq \Bigl( \amlsbound(\alpha,c,\betaext)\Bigr)^{n+o(n)}
	\end{equation*}
	by \Cref{lem:extension_implicit}.

	Finally, the construction of each $\mathcal{Q}_k$ takes time proportional to its size, which is upper bounded by $\cost_c(\mathcal{Q}_k)$. Therefore, the construction of $\CE$ takes at most $\cost_c(\CE)$ time, where we have
	\begin{equation*}
		\cost_c(\CE) \leq \Bigl( \amlsbound(\alpha,c,\betaext)\Bigr)^{n+o(n)} 
	\end{equation*}
	by \Cref{claim:extension_cost}.
	All in all, the running time of \Cref{algo:extension} is upper bounded by
	\begin{equation*}
		\Bigl(\amlsbound(\alpha,c,\betaext)\Bigr)^{n+o(n)}.\qedhere
	\end{equation*}
\end{claimproof}
The lemma follows from \Cref{claim:extension_correctness,claim:extension_cost,claim:extension_running_time}.
\end{proof}

Lastly, we use the following property of the function $\amls$.

\begin{restatable}{lemma}{amlscontinuity}
	\label{lem:amls_continuity}
	For every fixed $\alpha > 1$ and $c > 1$, $\amls(\alpha, c, x)$ is a continuous function of $x$ on the interval $(1, \infty)$.
\end{restatable}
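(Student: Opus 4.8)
The plan is to read off continuity directly from the explicit description of $\amls$ recalled in \Cref{sec:amls_def}, where $\amls(\alpha,c,\beta)$ appears as the optimal value of a continuous optimization problem. Schematically this problem has the form
\[
  \amls(\alpha,c,\beta)\;=\;\max_{\kappa \in K}\ \min_{p \in P(\beta,\kappa)}\ \phi(\alpha,c,\beta,\kappa,p),
\]
and the features that matter are: the outer domain $K$ is a fixed compact set (independent of $\beta$); the objective $\phi$ is jointly continuous on the relevant region, being assembled from the maps $x\mapsto c^{x}$, the entropy function $\entropy$, and finitely many products and quotients whose denominators stay bounded away from $0$; and the inner feasible set $P(\beta,\kappa)$ is a closed subinterval of a fixed bounded interval whose two endpoints depend continuously on $(\beta,\kappa)$ and which is nonempty for $\beta\in(1,\infty)$. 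If the formal definition is organized slightly differently — e.g.\ as a single optimization over a $\beta$-dependent compact set — these same three features still hold and the argument below goes through verbatim.

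I would prove continuity on an arbitrary compact subinterval $J\coloneqq[\beta_0,\beta_1]\subset(1,\infty)$; since $J$ is arbitrary, this yields continuity on all of $(1,\infty)$. First, on the compact parameter region corresponding to $\beta\in J$ the function $\phi$ is uniformly continuous and hence has a modulus of continuity. Next, I would check that the correspondence $(\beta,\kappa)\mapsto P(\beta,\kappa)$ is continuous and compact-valued: upper hemicontinuity is immediate from continuity of the endpoints, and lower hemicontinuity holds because $P(\beta,\kappa)$ is an interval with continuously moving endpoints, so every point of $P(\beta',\kappa')$ is a limit of points of $P(\beta,\kappa)$ as $(\beta,\kappa)\to(\beta',\kappa')$. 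By Berge's Maximum Theorem, the inner value $v(\beta,\kappa)\coloneqq\min_{p\in P(\beta,\kappa)}\phi(\alpha,c,\beta,\kappa,p)$ is then continuous on $J\times K$. Finally, $\amls(\alpha,c,\beta)=\max_{\kappa\in K}v(\beta,\kappa)$ is continuous in $\beta$: for $\beta,\beta'\in J$ one has $\bigl|\max_{\kappa}v(\beta,\kappa)-\max_{\kappa}v(\beta',\kappa)\bigr|\le\max_{\kappa}\bigl|v(\beta,\kappa)-v(\beta',\kappa)\bigr|$, and the right-hand side tends to $0$ as $\beta'\to\beta$ by uniform continuity of $v$ on the compact set $J\times K$.

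The main obstacle is the lower hemicontinuity of the inner feasible-set correspondence $P(\cdot,\cdot)$: one must rule out that the feasible set of the internal optimization jumps — becomes empty, or loses a boundary piece — as the parameters vary across $(1,\infty)$. For the concrete problem defining $\amls$ this reduces to a Slater-type check on the inequality constraints cutting out $P$: whenever a constraint is tight it can be relaxed by an arbitrarily small perturbation of the internal variable without violating the others; equivalently, the moving endpoints of $P(\beta,\kappa)$ never cross for $\beta>1$. If the outer maximization in \Cref{sec:amls_def} is itself over a $\beta$-dependent set, the identical remark applies to it. Once these constraint-qualification points are verified against the definition, continuity of $\amls(\alpha,c,\cdot)$ follows from the standard stability theory for parametric optimization; note that we only claim continuity on the open interval $(1,\infty)$, so no statement about the limit $\beta\to 1^{+}$ is needed.
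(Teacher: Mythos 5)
Your proposal and the paper's proof live in the same corner of parametric optimization, but you package the argument differently. The paper invokes two separate lemmas from~\cite{Still18}: a uniform-boundedness condition for lower semicontinuity of the value function, and a Slater-type (interior-approachability) condition for upper semicontinuity. It applies this machinery once to the inner $\min$ over $\tau$ and again to the outer $\max$ over $\kappa$, handling the boundary case $\kappa=0$ by a separate sandwich argument because the Still lemmas require the parameter domain to be open. You instead invoke Berge's Maximum Theorem as a single unified tool, which makes the argument slightly cleaner structurally (no need for a separate $\kappa=0$ case, since Berge applies on the compact rectangle $J\times K$), at the cost of having to verify full lower hemicontinuity of the feasible correspondence rather than a pointwise Slater condition. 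You correctly identify this hemicontinuity check as the crux, and the fact that $P(\beta,\kappa)=[M_\alpha(\beta,\kappa),\,\beta\kappa]$ has continuously moving endpoints with $M_\alpha(\beta,\kappa)\le\beta\kappa$ does make it go through (including the degenerate case $\kappa=0$, where the interval collapses to a point but hemicontinuity still holds).

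One thing you gloss over that the paper handles explicitly is that the raw objective $g_{\alpha,\beta,c}(\kappa,\tau)$ is only defined (and only continuous) on the feasible strip; the paper extends it to all of $\mathbb{R}$ via the clamping maps $h_1,h_2$ before invoking the Still lemmas, which require the objective to be continuous on an open neighborhood. If you apply Berge on $J\times K$ with the correspondence mapping into a compact set you can sidestep the extension, but you should at least verify that the piecewise definitions of $\delta_{\alpha,\beta}$ and $\gamma_{\alpha,\beta}$ (which have special cases at $\tau=1$ and $\tau=0$) are continuous on the closure of the feasible region, since the denominators $1-\tau$ and $\tau$ do not stay bounded away from $0$ there as your sketch asserts. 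This is a real but fixable gap: continuity at those boundary points has to be checked by hand (taking limits), exactly as it is implicitly assumed in the paper's claim that $g_{\alpha,\beta,c}$ is ``a composition of continuous functions.''
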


The proof of \Cref{lem:amls_continuity} is given \Cref{sec:eps_subexponential}.

\begin{proof}[Proof of \Cref{lem:extension_construction}.]
	Let $\alpha, c \geq 1$, $\beta > 1$ and $\eps > 0$. Since $\amlsbound(\alpha, c, \betaext)$ is continuous in $\betaext$ by \Cref{lem:amls_continuity}, there exists a $\betaext' \in\left( 1,  \beta\right)$ such that $\amlsbound(\alpha, c, \betaext') < \amlsbound(\alpha, c, \beta) + \frac{\eps}{2}$. To prove the lemma, we use \Cref{algo:extension} with $\betaext \coloneqq \betaext'$ and $\delta \coloneqq \beta / \betaext - 1$. 

	Note that we have $(1 + \delta) \cdot \betaext = \left( \beta / \betaext \right)  \cdot \betaext = \beta$. Hence, by \Cref{lem:extension_family_epsilon}, the set $\CE$ returned by \Cref{algo:extension} is an $\bigl( \alpha, \beta \bigr)$-extension family of $U$ and $\w$ such that 
	\begin{equation*}
		\cost_c(\CE) \leq \Bigl(\amlsbound(\alpha,c,\betaext')\Bigr)^{n+o(n)} \leq \left( \amls(\alpha, c, \beta) + \frac{\eps}{2} \right)^{n + o(n)} = \OO\Bigl( \left( \amls(\alpha, c, \beta) + \eps \right)^{n} \Bigr)
	\end{equation*}

	Finally, the running time of the algorithm is also bounded by
	\[\Bigl(\amlsbound(\alpha,c,\betaext')\Bigr)^{n+o(n)} = \OO\Bigl( \left( \amls(\alpha, c, \beta) + \eps \right)^{n} \Bigr)\]
	which proves the lemma.
\end{proof}

\section{Discussion}
\label{sec:discussion}
In this paper, we study weighted monotone subset minimization problems, where given a universe $U$ with $n$ elements and a weight function $\w\colon U \to \NN$,
the goal is to find a subset $S \subseteq U$ which satisfies a certain fixed property and has a minimum weight.
For such problems, we show that the Approximate Monotone Local Search framework of Esmer et al.~\cite{EsmerKMNS23} can be extended to the weighted setting. 
In particular, given a parameterized $\alpha$-approximate extension algorithm, that runs in time $c^k \cdot n^{\OO(1)}$ and outputs a solution whose weight is at most $\beta \cdot \w(\OPT)$ where $\OPT$ is a solution of size at most $k$ and minimum weight, one can design an exponential $\beta$-approximation algorithm that runs faster than the proposed (natural) brute-force algorithm.

Note that for most of our applications, the parameterized approximation algorithms that are available in the literature~\cite{ShachnaiZ17,AgrawalKLS16, FominGKLS10, LokshtanovMRSZ21} provide  bi-objective guarantees which are stronger than the requirements from the $\alpha$-approximate extension algorithm.
In particular, these algorithms run in time $c^k \cdot n^{\OO(1)}$ and output a solution of size at most $\gamma \cdot k$ and weight at most $\beta \cdot W$, if a solution of size at most $k$ and weight at most $W$ exists.
That is, they (approximately) optimize the size and weight of the output solution \emph{simultaneously}.

This leads to the following natural question. Consider more restrictive weighted monotone  subset minimization problems where given a universe $U$ on $n$ vertices, a weight function $\w$ on the elements of the universe, the goal is to find a subset of the universe of size at most~$k$ that minimizes the weight and satisfies a certain fixed property. 
What is the analogue of brute-force in this setting?
Can bi-objective parameterized approximation algorithms for weighted monotone subset minimization problems be used to design faster (than brute force) exponential approximation algorithms in this restrictive setting?
What happens if we extend this setting to a bi-criteria optimization setting?

\bibliographystyle{plainurl}
\bibliography{refs}

\newpage
\appendix 

\section{Basics on O-Notation}
\label{sec:o_notation}
In this section, we give a proof of \Cref{lem:o_notation}.

\onotation*

\begin{proof}
	Let $\varepsilon > 0$.
	Then there is some $N_g \in \NN$ such that
	\[g(n) \leq \left(1 + \frac{\varepsilon}{2}\right) \cdot n\]
	for every $n \geq N_g$.
	Let $C_g \coloneqq \max_{n < N_g} g(n)$ and define $\Gamma(n) = \{1 \leq i \leq d(n)\} $.
	Then
	\begin{align*}
		f(n) &= \max_{n = n_1 + \dots + n_{d(n)}} \sum_{i=1}^{d(n)} g(n_i)\\
		&= \max_{n = n_1 + \dots + n_{d(n)}} \sum_{i \in \Gamma(n)\colon n_i \geq N_g} g(n_i) + \sum_{i \in \Gamma(n)\colon n_i < N_g} g(n_i)\\
		&\leq \max_{n = n_1 + \dots + n_{d(n)}} \sum_{i \in \Gamma(n)\colon n_i \geq N_g} \left(1 + \frac{\varepsilon}{2}\right) \cdot n_i + \sum_{i \in \Gamma(n) \colon n_i < N_g} C_g\\
		&\leq \left(1 + \frac{\varepsilon}{2}\right) \cdot n + d(n) \cdot C_g
	\end{align*}
	for every $n \in \NN$.
	Since $d \in o(n)$ we conclude that there is some $N_d$ such that
	\[d(n) \leq \frac{\varepsilon}{2 \cdot C_g} \cdot n\]
	for all $n \geq N_d$.
	So
	\begin{align*}
		f(n) &\leq \left(1 + \frac{\varepsilon}{2}\right) \cdot n + d(n) \cdot C_g\\
		&\leq \left(1 + \frac{\varepsilon}{2}\right) \cdot n + \frac{\varepsilon}{2 \cdot C_g} \cdot n \cdot C_g\\
		&=    \left(1 + \varepsilon\right) \cdot n
	\end{align*}
	for every $n \geq N_d$.
\end{proof}

\section{The Definition of $\amlsbound$}
\label{sec:amls_def}
The definition of $\amlsbound$ relies on several auxiliary functions.
For every $\alpha,\beta,c\geq 1$ define
\begin{alignat*}{2}
	&\delta_{\alpha,\beta}(\kappa, \tau) &&=~ \begin{cases} \frac{\frac{\beta}{\alpha} \kappa -\frac{\tau}{\alpha}}{1-\tau} = \frac{\frac{\beta}{\alpha} \kappa -\frac{1}{\alpha}}{1-\tau}+\frac{1}{\alpha}& \tau \neq 1\\
		\frac{1}{\alpha} & \tau = 1
	\end{cases}\\
	&\gamma_{\alpha,\beta}(\kappa, \tau) &&=~ \begin{cases} \left(1-\frac{\beta}{\alpha}\right)\frac{\kappa}{\tau} +\frac{1}{\alpha}.
		& \tau \neq 0\\
		\frac{1}{\alpha}  & \tau=0
	\end{cases}\\
	&g_{\alpha,\beta,c}(\kappa,\tau) &&= ~\frac{\beta \kappa - \tau}{\alpha} \ln c  -
	\tau\cdot \entropy\left(\gamma_{\alpha,\beta}(\kappa,\tau)\right) -(1-\tau)\cdot \entropy\left(\delta_{\alpha,\beta} (\kappa,\tau)\right) + \entropy\left(\kappa\right) \\
	&M_{\alpha, \beta}(\kappa) &&= ~\begin{cases}
		\frac{\beta - \alpha}{1-\alpha \cdot \kappa} \cdot \kappa &\text{if } \alpha < \beta\\
		0 &\text{if } \alpha = \beta\\
		\frac{\alpha - \beta}{\alpha - 1}\cdot  \kappa &\text{if } \alpha > \beta
	\end{cases}
\end{alignat*}
We follow the standard notation in which $0 \ln 0 = 0$ and $\entropy(0) = \entropy(1)=0$.
We define $\amls(\alpha,c,\beta)$ by
\begin{equation*}
	\amlsbound(\alpha,c,\beta) \coloneqq \exp\left( \max_{~0\leq \kappa \leq \frac{1}{\beta} ~} ~\min_{~ M_{\alpha,\beta}(\kappa)  \leq  \tau \leq \beta \kappa~}~ g_{\alpha,\beta,c}(\kappa,\tau)\right).
\end{equation*}
Though $\amlsbound$ is defined using a $\max$-$\min$ optimization problem, it is shown in \cite{EsmerKMNS23} that $g_{\alpha,\beta,c}(\kappa,\tau)$ is \emph{convex} in $\tau$ for every fixed $0\leq \kappa\leq \frac{1}{\beta}$.
Thus,
\[g_{\alpha,\beta,c}^*(\kappa) = \min_{~M_{\alpha,\beta}(\kappa)  \leq  \tau \leq \beta \kappa~}~ g_{\alpha,\beta,c}(\kappa,\tau)\]
can be efficiently computed to arbitrary precision.
It is also shown in \cite{EsmerKMNS23} that $g_{\alpha,\beta,c}^*(\kappa) $ is \emph{concave}.
So its maximum can be efficiently computed to arbitrary precision as well.
Together, the convexity and concavity properties allow for an efficient computation of $\amlsbound$.

\section{Continuity properties}
\label{sec:eps_subexponential}
\epssubexponential*

\begin{proof}
	We have that
	\begin{align*}
		f\Bigl(\beta(n)\Bigr)^{n} &= f\left( \alpha \right) ^{n} \cdot \left( \frac{f\Bigl(\beta(n)\Bigr)}{f\left( \alpha \right) } \right)^{n} \\
			     &= f\left( \alpha \right) ^{n} \cdot 2^{E(n)}
	\end{align*}
	where $E(n) \coloneqq n \cdot \log\left(  \frac{f(\beta(n))}{f(\alpha) }\right)$.

	\begin{claim}
		\begin{equation*}
			E(n) = o(n).
		\end{equation*}
	\end{claim}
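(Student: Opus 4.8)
The plan is to obtain the claim as an immediate consequence of the continuity of $f$ at $\alpha$. First I would record that $\beta(n) = \alpha - \frac{1}{\log(n)} \to \alpha$ as $n \to \infty$, since $\frac{1}{\log(n)} \to 0$. As $I$ is open and $\alpha \in I$, there is an $N_0$ with $\beta(n) \in I$ for all $n \ge N_0$, so $f(\beta(n))$ is well defined for large~$n$; moreover $f(\alpha) > 0$ (in the application $f = \brute$, where in fact $f(\alpha) > 1$ for every $\alpha \ge 1$), so the ratio $\frac{f(\beta(n))}{f(\alpha)}$ and the quantity $E(n) = n\log\!\bigl(\tfrac{f(\beta(n))}{f(\alpha)}\bigr)$ are well defined for large~$n$.

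Then, using continuity of $f$ at $\alpha$ together with $\beta(n)\to\alpha$, I would conclude $f(\beta(n)) \to f(\alpha)$, hence $\frac{f(\beta(n))}{f(\alpha)} \to 1$ and therefore $\log\!\bigl(\tfrac{f(\beta(n))}{f(\alpha)}\bigr) \to \log(1) = 0$. Since by definition $\frac{E(n)}{n} = \log\!\bigl(\tfrac{f(\beta(n))}{f(\alpha)}\bigr)$, this says exactly $\frac{E(n)}{n}\to 0$, i.e.\ $E(n) = o(n)$, which proves the claim.

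There is essentially no obstacle here; the only thing requiring a little care is the bookkeeping already mentioned (ensuring $\beta(n)$ lies in the domain $I$, and that $f(\alpha)$ is a positive constant so the logarithm and the expression $f(\alpha)^{n+o(n)}$ make sense). To finish the proof of \Cref{lem:eps_subexponential} after the claim, I would observe that $\delta(n) = \frac{\alpha}{\beta(n)} - 1 = \frac{1}{\alpha\log(n)-1}$, so $\frac{1}{\delta(n)} = \alpha\log(n) - 1 = \OO(\log n)$ and $\log\!\bigl(\tfrac{n}{\delta(n)}\bigr) = \OO(\log n)$, hence $\frac{1}{\delta(n)}\log\!\bigl(\tfrac{n}{\delta(n)}\bigr) = \OO(\log^2 n)$ and $n^{\OO\left(\frac{1}{\delta(n)}\log(n/\delta(n))\right)} = 2^{\OO(\log^3 n)} = 2^{o(n)} = f(\alpha)^{o(n)}$, the last equality using $f(\alpha) > 1$. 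Combining this with $f(\beta(n))^n = f(\alpha)^n\cdot 2^{E(n)} = f(\alpha)^n \cdot f(\alpha)^{o(n)} = f(\alpha)^{n+o(n)}$ (valid since $E(n) = o(n)$ and $f(\alpha)>1$, so $2^{E(n)} = f(\alpha)^{o(n)}$) yields $f(\beta(n))^n\cdot n^{\OO\left(\frac{1}{\delta(n)}\log(n/\delta(n))\right)} = f(\alpha)^{n+o(n)}$, as desired.
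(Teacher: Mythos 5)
Your argument is correct and is essentially identical to the paper's: both pass to the limit of $\tfrac{E(n)}{n} = \log\bigl(f(\beta(n))/f(\alpha)\bigr)$, using $\beta(n)\to\alpha$ together with continuity of $f$ and of $\log$ to conclude the limit is $0$. The only addition is your (harmless and correct) bookkeeping that $\beta(n)\in I$ for large $n$ and that $f(\alpha)>0$, which the paper leaves implicit.
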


	\begin{claimproof}
		To prove the claim, let us calculate
		\begin{equation*}
			\lim_{n \to \infty} \frac{E(n)}{n} = \lim_{n \to \infty} \log\left( \frac{f\Bigl(\beta(n)\Bigr)}{f\left( \alpha \right) } \right) = \log\left( \frac{\lim_{n \to \infty} f\Bigl(\beta(n)\Bigr)}{f(\alpha)} \right)  = \log\left( \frac{f(\alpha)}{f(\alpha)} \right) = 0
		\end{equation*}
		where the second and second to last step follows from the continuity of the functions $\log$ and $f$, respectively.
	\end{claimproof}

	Observe that $\frac{1}{\delta(n)} = \frac{\beta(n)}{\alpha - \beta(n)} < \frac{\alpha}{\alpha - \beta(n)} = \frac{\alpha}{1 / \log(n)} = \alpha \cdot \log(n)$ and therefore
	\begin{align*}
		n^{\frac{1}{\delta(n)} \cdot \log(\frac{n}{\delta(n)})} &< n^{\alpha \cdot \log(n) \cdot \log\left( \alpha \cdot n \cdot \log(n) \right) }\\
							      &= 2^{\alpha \cdot \log(n) \cdot \log\left( \alpha \cdot n \cdot \log(n) \right) \cdot \log(n)}\\
							      &\leq 2^{c \cdot \log^{3}\left( n \cdot \log(n) \right) }\\
							      &= f(\alpha)^{o(n)}
	\end{align*}
	where the $c$ in the second to last step is a positive constant.
	As before, the last step holds because $f(\alpha)$ is a constant depending on $\alpha$.
	All in all, we obtain that
	\begin{equation*}
		f(\beta)^{n} \cdot n^{\OO\left( \frac{1}{\delta(n)} \cdot \log(\frac{n}{\delta(n)}) \right) } = f(\alpha)^{n} \cdot 2^{E(n)} \cdot f(\alpha)^{o(n)} = f(\alpha)^{n + o(n)}.\qedhere
	\end{equation*}
\end{proof}

In the following, we state some definitions and claims given in \cite{Still18} which we use in the proof of \Cref{lem:amls_continuity}.

Let $T \subseteq \mathbb{R}^{p}$ be an open set.
For each $t \in T$, let $P(t)$ be the following optimization problem
\begin{equation*}
	P(t) \colon \min_{x \in \mathbb{R}^{n}} f(x,t) \text{ s.t. } x \in F(t) \coloneqq \{ x \in \mathbb{R}^{n} \mid g_j(x,t) \leq 0, j \in \{1, \ldots, m\} \},
\end{equation*}
for continuous functions $f$ and $g_j$, for some $m > 0$.
Let $v(t) = \min_{x \in F(t)} f(x,t)$ denote the global minimum of $P(t)$ and let $S(t)$ be the set of global minimizers of $P(t)$, i.e., $S(t) = \{x \in F(t) \mid f(x,t) = v(t)\}$.

In \cite{Still18}, two sufficients conditions are given to show that $v(t)$ is upper semi-continuous and lower semi-continuous, respectively.

\begin{lemma}[\cite{Still18}]\label{lem:v_lsc}
	Let $t \in T$.
	If there exists a $\delta_{t} > 0$ and a compact set $C_t$ such that
	\begin{equation*}
		\bigcup_{\norm{t' - t} < \delta_t} F(t') \subseteq C_{t},
	\end{equation*}
	then $v$ is lower semi-continuous at $t$.
\end{lemma}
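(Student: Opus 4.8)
The plan is to prove \Cref{lem:v_lsc} directly from the sequential characterisation of lower semi-continuity: it suffices to show that $\liminf_{t' \to t} v(t') \geq v(t)$, and the whole point of the compactness hypothesis is to let us extract a convergent subsequence of (near-)optimal points lying in the fixed compact set $C_t$. First I would dispose of the trivial case: if $\liminf_{t' \to t} v(t') = +\infty$ there is nothing to prove, since $v$ takes values in $\mathbb{R} \cup \{+\infty\}$ and hence $v(t) \leq +\infty$. So from now on assume the $\liminf$ is finite and fix a sequence $t_k \to t$ with $v(t_k) \to \liminf_{t' \to t} v(t')$; discarding finitely many terms we may assume $\norm{t_k - t} < \delta_t$ for all $k$, and since the $\liminf$ is finite we may also discard the (at most cofinitely many irrelevant) indices with $F(t_k) = \emptyset$, so $F(t_k) \neq \emptyset$ for every $k$.

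Next I would argue that the minimum defining $v(t_k)$ is actually attained. For each $k$ the set $F(t_k) = \{x : g_j(x,t_k) \leq 0,\ j = 1, \dots, m\}$ is closed, because every $g_j(\cdot, t_k)$ is continuous; by the hypothesis of the lemma it is contained in the compact set $C_t$, hence $F(t_k)$ is compact, and the continuous function $f(\cdot, t_k)$ attains its infimum there at some point $x_k \in F(t_k) \subseteq C_t$ with $f(x_k, t_k) = v(t_k)$. Now invoke compactness of $C_t$: the sequence $(x_k)_k$ has a subsequence $x_{k_i} \to x^\ast \in C_t$. Since $(x_{k_i}, t_{k_i}) \to (x^\ast, t)$ and each $g_j$ is jointly continuous, $g_j(x^\ast, t) = \lim_i g_j(x_{k_i}, t_{k_i}) \leq 0$ for all $j$, so $x^\ast \in F(t)$; in particular $F(t) \neq \emptyset$ and $v(t) \leq f(x^\ast, t)$. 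Continuity of $f$ then gives
\[
\liminf_{t' \to t} v(t') \;=\; \lim_i v(t_{k_i}) \;=\; \lim_i f(x_{k_i}, t_{k_i}) \;=\; f(x^\ast, t) \;\geq\; v(t),
\]
which is exactly lower semi-continuity of $v$ at $t$.

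I do not expect a real obstacle here; the only place that needs a little care is the bookkeeping when $F(t)$ could be empty. The argument above in fact shows that whenever $F(t_k) \neq \emptyset$ along some sequence $t_k \to t$ with $\norm{t_k - t} < \delta_t$, the limit point $x^\ast$ is feasible for $P(t)$, so $F(t) = \emptyset$ forces $F(t') = \emptyset$ for all $t'$ in that neighbourhood, i.e. $v \equiv +\infty$ near $t$ and lower semi-continuity is immediate; this is precisely why framing the nontrivial case via ``$\liminf$ finite'' is clean. Beyond that, the proof uses nothing about $f$ and the $g_j$ except continuity, and is the standard ``closed feasible sets trapped in a fixed compact set'' compactness argument.
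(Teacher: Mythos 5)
Your proof is correct. Note that the paper itself does not prove this lemma at all: it is imported verbatim from \cite{Still18} and used as a black box in the proof of \Cref{lem:amls_continuity}, so there is no in-paper argument to compare against. Your argument is the standard one for this statement (and essentially the one in Still's notes): split on whether $\liminf_{t' \to t} v(t')$ is finite, use closedness of $F(t_k)$ inside the fixed compact set $C_t$ to get minimizers $x_k$ with $f(x_k,t_k)=v(t_k)$, extract a convergent subsequence $x_{k_i}\to x^\ast$, and use joint continuity of the $g_j$ and of $f$ to conclude $x^\ast \in F(t)$ and $\liminf v(t_{k_i}) = f(x^\ast,t) \ge v(t)$. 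The only blemish is in your closing side remark: $F(t)=\emptyset$ only forces $F(t')=\emptyset$ for $t'$ \emph{sufficiently close} to $t$, not necessarily on the whole $\delta_t$-ball; this is harmless, since that remark is not needed — your case split on the finiteness of the $\liminf$ already covers the situation completely.
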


\begin{lemma}[\cite{Still18}]\label{lem:v_usc}
	Let $t \in T$. If there exists $x \in S(t)$ such that there is a sequence $\left( x_{\ell} \right)_{\ell \in \mathbb{N}} \to x$ which satisfies
	\begin{equation*}
		g_j\left( x_\ell, t \right) < 0  \quad \forall j \in \{1, \ldots, m\}, \;\forall \ell \in \mathbb{N},
	\end{equation*}
	then $v$ is upper semi-continuous at $t$.
\end{lemma}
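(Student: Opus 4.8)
The plan is to verify upper semi-continuity of $v$ at $t$ directly from the $\eps$--$\delta$ definition: for a fixed $\eps>0$ I will produce a neighbourhood of $t$ on which $v$ stays below $v(t)+\eps$. The guiding idea is that, while the minimizer $x\in S(t)$ itself may sit on the boundary of the feasible set $F(t)$ and hence need not be feasible for perturbed parameters, a single \emph{strictly} feasible point close to $x$ does survive small perturbations, and by continuity its objective value is close to $v(t)$. The hypothesis supplies exactly such a point.

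The first step is to fix the perturbing point. Since $x\in S(t)$ and $x_\ell\to x$, continuity of $f$ gives $f(x_\ell,t)\to f(x,t)=v(t)$, so some fixed index $\ell_0$ satisfies $f(x_{\ell_0},t)<v(t)+\frac{\eps}{2}$. Set $\bar x\coloneqq x_{\ell_0}$; by assumption $g_j(\bar x,t)<0$ strictly for every $j\in\{1,\dots,m\}$.

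The second step propagates these strict inequalities to parameters near $t$. For each $j$, continuity of $g_j$ at $(\bar x,t)$ together with $g_j(\bar x,t)<0$ yields a radius within which $g_j(\bar x,\cdot)$ stays negative; taking the minimum of these finitely many radii (and shrinking further so that the resulting ball lies inside the open set $T$) gives $\delta_1>0$ with $\bar x\in F(t')$ whenever $\norm{t'-t}<\delta_1$. Likewise continuity of $f$ at $(\bar x,t)$ gives $\delta_2>0$ with $f(\bar x,t')<f(\bar x,t)+\frac{\eps}{2}<v(t)+\eps$ whenever $\norm{t'-t}<\delta_2$. With $\delta\coloneqq\min(\delta_1,\delta_2)$, every $t'$ with $\norm{t'-t}<\delta$ has $\bar x\in F(t')$, hence $v(t')\le f(\bar x,t')<v(t)+\eps$ (this bound holds regardless of whether the infimum defining $v(t')$ is attained or even finite). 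Since $\eps>0$ was arbitrary, $\limsup_{t'\to t}v(t')\le v(t)$, which is upper semi-continuity of $v$ at $t$.

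There is no genuine obstacle in this argument — it is an elementary $\eps$--$\delta$ chase over finitely many continuous constraints — so the only point worth emphasising is why the hypothesis is phrased the way it is: the \emph{strictness} $g_j(x_\ell,t)<0$ is precisely what guarantees the perturbed point remains feasible, which is the reason one must assume a sequence converging to a minimizer through the interior of the feasible region rather than merely the existence of a minimizer.
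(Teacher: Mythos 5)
Your argument is correct: choosing a strictly feasible $\bar x=x_{\ell_0}$ with $f(\bar x,t)<v(t)+\eps/2$, propagating the finitely many strict inequalities $g_j(\bar x,t)<0$ and the bound on $f$ by continuity to a neighbourhood of $t$ (inside the open set $T$), and concluding $v(t')\le f(\bar x,t')<v(t)+\eps$ is exactly the standard proof of this fact. The paper itself gives no proof of this lemma—it is imported from \cite{Still18}—and your $\eps$--$\delta$ argument is the same one found there, so there is nothing further to reconcile.
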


Now we are ready to prove \Cref{lem:amls_continuity}.

\amlscontinuity*

\begin{proof}
	Define $T_1 = \{(\beta, \kappa) \mid \beta > 1, 0 < \kappa < 2\}$.
	Moreover, also define the functions $M_{\alpha}(\beta, \kappa) \coloneqq M_{\alpha, \beta}(\kappa)$ and
	\begin{align*}
		h_1(\beta, \kappa) &= \begin{cases}
			\frac{1}{\beta} & \text{ if } \kappa > \frac{1}{\beta}\\
			\kappa &\text{ otherwise} 
		\end{cases}\\
			h_2(\beta, \kappa, \tau) &= \begin{cases}
				M_{\alpha}(\beta, \kappa) & \text{ if } \tau < M_\alpha(\beta, \kappa)\\
				\beta \cdot \kappa & \text{ if } \tau > \beta \cdot \kappa\\
				\tau  &\text{ otherwise}. 
			\end{cases}
	\end{align*}

	For each $(\beta, \kappa) \in T_1$ write
	\begin{equation*}
		P_1(\beta, \kappa) \colon \min_{\tau \in \mathbb{R}} \tilde{g}(\beta, \kappa, \tau) \; \text{ s.t. } \tau \in F_1(\beta, \kappa) \coloneqq \{\tau \in \mathbb{R} \mid \tau - \beta \cdot \kappa \leq 0 , -t + M_{\alpha}(\beta, \kappa) \leq 0\}
	\end{equation*}
	where
	$\tilde{g}\left(\beta, \kappa, \tau\right) = g_{\alpha, \beta, c} \Bigl(h_1\left( \beta, \kappa \right) , h_2\left( \beta, \kappa, \tau \right) \Bigr) $.
	Recall that $M_{\alpha, \beta}(\kappa)$ and $g_{\alpha, \beta, c}(\kappa,\tau)$ are both defined in \Cref{sec:amls_def}.

Note that $\tilde{g}$ is well defined for all $(\beta, \kappa) \in T_1$ and $\tau \in \mathbb{R}$.
Moreover, since the entropy function is continuous and $g_{\alpha, \beta, c}(\kappa,\tau)$ consists of elementary arithmetic operations and composition of continuous functions, $g_{\alpha, \beta, c}(\kappa,\tau)$ is a continuous function of $\beta > 1$, $0 < \kappa < \frac{1}{\beta}$ and $\tau \in [M_\alpha(\beta, \kappa), \beta \cdot \kappa]$.
Finally, since $h_1$ and $h_2$ are continuous functions, their composition with $g$, i.e., the function $\tilde{g}$, is also continuous.
Finally, the functions $\tau - \beta \cdot \kappa$ and $-\tau + M_{\alpha}(\beta, \kappa)$ are also continuous since they consist of elementary arithmetic operations.

Let $v_1$ be the function that maps $(\beta, \kappa) \in T_1$ to the global minimum of $P_1(\beta, \kappa)$.

\begin{claim}\label{claim:v_1_lsc}
	$v_1$ is lower semi-continuous at $(\beta,\kappa)$ for all $(\beta, \kappa) \in T_1$.
\end{claim}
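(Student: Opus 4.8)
The plan is to verify the sufficient condition for lower semi‑continuity provided by \Cref{lem:v_lsc}. Fix $(\beta_{0},\kappa_{0}) \in T_{1}$. The feasible region of $P_{1}(\beta,\kappa)$ is the closed interval
\[F_{1}(\beta,\kappa) = \bigl\{\tau \in \mathbb{R} \;\bigm|\; M_{\alpha}(\beta,\kappa) \le \tau \le \beta\kappa\bigr\},\]
so the entire task reduces to producing a radius $\delta_{0} > 0$ together with a \emph{single} bounded interval that contains $F_{1}(\beta,\kappa)$ for every $(\beta,\kappa)$ at distance less than $\delta_{0}$ from $(\beta_{0},\kappa_{0})$; its closure is then the compact set demanded by \Cref{lem:v_lsc}, and the claim follows. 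Since $T_{1}$ is open, the first step is to pick $\delta_{0}$ small enough that the closed ball $B_{0}$ of radius $\delta_{0}$ around $(\beta_{0},\kappa_{0})$ lies inside $T_{1}$.

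For the upper endpoint this is immediate: the map $(\beta,\kappa) \mapsto \beta\kappa$ is continuous, hence bounded on the compact ball $B_{0}$ by some constant $B_{0}^{+}$, which caps every $F_{1}(\beta,\kappa)$ from above. For the lower endpoint I would inspect the three cases in the definition of $M_{\alpha,\beta}(\kappa)$. On the portion of $T_{1}$ with $\alpha \ge \beta$ one reads off $M_{\alpha}(\beta,\kappa) \ge 0$ directly (both the $\alpha=\beta$ and $\alpha>\beta$ branches are non‑negative), and on the portion with $\alpha < \beta$ and $\kappa$ bounded away from $1/\alpha$ the expression $\tfrac{(\beta-\alpha)\kappa}{1-\alpha\kappa}$ is continuous, hence locally bounded. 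Shrinking $\delta_{0}$ if necessary so that $B_{0}$ stays within one such regime, $M_{\alpha}$ is bounded below on $B_{0}$ by some $B_{0}^{-}$, and therefore $\bigcup_{(\beta,\kappa)\in B_{0}} F_{1}(\beta,\kappa) \subseteq [B_{0}^{-},B_{0}^{+}]$, as required.

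The delicate point — and the step I expect to be the real obstacle — is the behaviour of $M_{\alpha,\beta}(\kappa)$ along the seam $\kappa = 1/\alpha$ in the branch $\alpha < \beta$, where the formula is singular and the lower endpoint of $F_{1}$ can degenerate. Here one has to argue that such points are handled correctly: either they fall outside the set on which $v_{1}$ is actually being considered, or the feasible interval near them is still uniformly controlled because the constraint $\tau \le \beta\kappa$ with $\beta\kappa > 1$ together with the empty/degenerate regime keeps the relevant nonempty feasible sets inside a fixed bounded set. (Empty feasible sets cause no difficulty, since the empty set is contained in any candidate compact set.) Once this local, uniform lower bound on the feasible region is established, \Cref{lem:v_lsc} applies verbatim and yields lower semi‑continuity of $v_{1}$ at $(\beta_{0},\kappa_{0})$; since $(\beta_{0},\kappa_{0})$ was arbitrary in $T_{1}$, the claim is proved. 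This is the companion lower‑bound half of the continuity argument for $\amlsbound(\alpha,c,\cdot)$; the matching upper‑semi‑continuity statement will later be extracted from \Cref{lem:v_usc} by approximating an optimal $\tau^{\ast}$ from the interior of $F_{1}(\beta_{0},\kappa_{0})$.
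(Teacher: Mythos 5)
Your local-boundedness strategy works away from the seam, but the step you yourself flag as the expected obstacle is a genuine gap: both of the alternatives you sketch fail, and on $T_1$ as written the argument cannot be completed. Take $(\beta_0,\kappa_0)\in T_1$ with $\alpha<\beta_0$ and $\kappa_0=1/\alpha$; this lies in $T_1$ since $\alpha>1$ gives $1/\alpha\in(0,2)$. For $\kappa'$ slightly above $1/\alpha$ one has $1-\alpha\kappa'<0$, hence $M_{\alpha}(\beta_0,\kappa')=\tfrac{(\beta_0-\alpha)\kappa'}{1-\alpha\kappa'}<0$ and $M_{\alpha}(\beta_0,\kappa')\to-\infty$ as $\kappa'$ decreases to $1/\alpha$, while $\beta_0\kappa'\to\beta_0/\alpha>1$. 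So $F_1(\beta_0,\kappa')=[M_{\alpha}(\beta_0,\kappa'),\beta_0\kappa']$ is a nonempty interval whose lower endpoint runs off to $-\infty$; the union of $F_1$ over any ball around $(\beta_0,\kappa_0)$ is unbounded below and cannot be contained in any compact set $C_t$, so \Cref{lem:v_lsc} does not apply there. Such points are not excluded from $T_1$, and the nearby feasible sets are neither empty nor degenerate (note moreover that $M_{\alpha,\beta_0}$ is not even defined at $\kappa=1/\alpha$), so neither of your fallback arguments holds, and you cannot shrink $\delta_0$ to avoid a seam that passes through the point itself.

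For comparison, the paper does not localize at all: its proof asserts the single global inclusion $F_1(\beta',\kappa')\subseteq[0,1]$ for all $(\beta',\kappa')\in T_1$ and then takes $C_t=[0,1]$ with an arbitrary $\delta$. This is much shorter than your case split, but the asserted inclusion fails at exactly the seam you identified: for $\alpha<\beta'$ and $\kappa'>1/\alpha$, $F_1(\beta',\kappa')$ contains negative numbers. The inclusion does hold whenever $\beta'\kappa'\le 1$ (then $M_{\alpha}\ge 0$ in every branch and the upper endpoint is at most $1$), which is all that $P_2$ ever probes. So the clean fix, for either your argument or the paper's, is to shrink $T_1$ so that it avoids $\{\kappa=1/\alpha,\ \beta>\alpha\}$ (for instance, restrict to points with $\beta\kappa$ bounded by $1$); once that is done the paper's one-line global inclusion and your local bounds on $M_\alpha$ and $\beta\kappa$ both go through.
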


\begin{claimproof}
	Let $(\beta,\kappa) \in T_1$. Observe that for all $(\beta',\kappa') \in T_1$, it holds that $F_1(\beta',\kappa') \subseteq [0,1]$. Let~$\delta$ be an arbitrarily small number. We have that
	\begin{equation*}
		\bigcup_{\norm{(\beta, \kappa) - (\beta', \kappa')} < \delta} F_1(\beta', \kappa') \subseteq [0,1],
	\end{equation*}
	therefore by \Cref{lem:v_lsc}, $v_1$ is lower semi-continuous at $(\beta, \kappa)$.  
\end{claimproof}

\begin{claim}\label{claim:v_1_usc}
	$v_1$ is upper semi-continuous at $(\beta,\kappa)$ for all $(\beta, \kappa) \in T_1$.	
\end{claim}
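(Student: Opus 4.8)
The plan is to derive upper semi-continuity of $v_1$ at every point of $T_1$ directly from the abstract criterion \Cref{lem:v_usc}, applied to the parametric family $P_1(\beta,\kappa)$. In the language of that lemma, the parameter is $t=(\beta,\kappa)\in T_1$, the decision variable is $\tau$, the objective is $\tilde g$, and the two inequality constraints are $g_1(\beta,\kappa,\tau)=\tau-\beta\kappa$ and $g_2(\beta,\kappa,\tau)=-\tau+M_\alpha(\beta,\kappa)$; all three are continuous functions (this was already observed when $\tilde g$ was defined, and $g_1,g_2$ are built from $\beta\kappa$ and $M_\alpha$). Hence the feasible set is the compact interval $F_1(\beta,\kappa)=[M_\alpha(\beta,\kappa),\beta\kappa]$, and for a fixed $(\beta_0,\kappa_0)\in T_1$ the minimum of $P_1(\beta_0,\kappa_0)$ is attained at some $\tau^\star$, so that $S(\beta_0,\kappa_0)\ne\emptyset$.

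The core step is then to exhibit, for this $\tau^\star$, a sequence $(\tau_\ell)_{\ell\in\NN}\to\tau^\star$ that is \emph{strictly} feasible, i.e.\ $g_1(\beta_0,\kappa_0,\tau_\ell)<0$ and $g_2(\beta_0,\kappa_0,\tau_\ell)<0$ for every $\ell$; \Cref{lem:v_usc} then yields upper semi-continuity of $v_1$ at $(\beta_0,\kappa_0)$ at once. Whenever the feasible interval is non-degenerate, i.e.\ $M_\alpha(\beta_0,\kappa_0)<\beta_0\kappa_0$, this is immediate: its interior $(M_\alpha(\beta_0,\kappa_0),\beta_0\kappa_0)$ is a nonempty open interval whose closure contains $\tau^\star$, so I would simply take any sequence inside this interval converging to $\tau^\star$. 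To see that $M_\alpha(\beta_0,\kappa_0)<\beta_0\kappa_0$ I would unfold the piecewise definition of $M_{\alpha,\beta}$ from \Cref{sec:amls_def}: in each of the cases $\alpha<\beta_0$, $\alpha=\beta_0$ and $\alpha>\beta_0$ the required inequality simplifies --- using $\kappa_0>0$ --- to $\beta_0>1$, which holds for all $(\beta_0,\kappa_0)\in T_1$.

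The only configuration where this argument does not apply verbatim, and what I expect to be the main obstacle, is the degenerate boundary case in which $F_1(\beta_0,\kappa_0)$ collapses to a single point $\tau^\star=M_\alpha(\beta_0,\kappa_0)=\beta_0\kappa_0$; the case analysis above shows this can occur only at $\alpha<\beta_0$ with $\kappa_0$ at the boundary value $\frac{1}{\beta_0}$. Here there is no strictly feasible point, so instead I would argue directly: $v_1(\beta_0,\kappa_0)=\tilde g(\beta_0,\kappa_0,\tau^\star)$, and because both endpoint functions $M_\alpha(\cdot)$ and $(\beta,\kappa)\mapsto\beta\kappa$ are continuous, the feasible intervals $F_1(\beta',\kappa')$ of nearby problems shrink to $\{\tau^\star\}$ as $(\beta',\kappa')\to(\beta_0,\kappa_0)$; choosing any $\tau'\in F_1(\beta',\kappa')$ and invoking joint continuity of $\tilde g$ gives $\limsup_{(\beta',\kappa')\to(\beta_0,\kappa_0)}v_1(\beta',\kappa')\le\lim\tilde g(\beta',\kappa',\tau')=\tilde g(\beta_0,\kappa_0,\tau^\star)=v_1(\beta_0,\kappa_0)$, which is precisely upper semi-continuity. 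Putting the two cases together gives the claim for all of $T_1$; combined with \Cref{claim:v_1_lsc}, this makes $v_1$ continuous on $T_1$, as required for the continuity of $\amls(\alpha,c,\cdot)$.
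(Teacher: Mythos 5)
Your proof follows the same strategy as the paper's --- applying \Cref{lem:v_usc} to a strictly feasible sequence approaching a minimizer --- but you are more careful: you observe that this scheme can only work when $F_1(\beta_0,\kappa_0)=[M_\alpha(\beta_0,\kappa_0),\beta_0\kappa_0]$ is a non-degenerate interval, and you give a separate direct argument when it collapses to a single point. The paper's own proof is in fact incomplete here. It claims that if the minimizer is $\tau=M_\alpha(\beta,\kappa)$ then ``the same argument works for $x_\ell\coloneqq\tau+\frac{1}{\ell}$'', but when $M_\alpha(\beta,\kappa)=\beta\kappa$ --- which occurs exactly for $\alpha<\beta$ and $\kappa=1/\beta$, a point that lies in $T_1$ and is relevant because $P_2$ optimizes over $\kappa\in[0,1/\beta]$ --- those $x_\ell$ violate the constraint $\tau\le\beta\kappa$, and there is \emph{no} sequence of strictly feasible points at all, so \Cref{lem:v_usc} simply cannot be applied. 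Your fallback --- pick any $\tau'\in F_1(\beta',\kappa')$, note that both endpoint functions are continuous near $(\beta_0,1/\beta_0)$ (here $\alpha\kappa_0<1$ since $\alpha<\beta_0$), hence $\tau'\to\tau^\star$, and bound $v_1(\beta',\kappa')\le\tilde g(\beta',\kappa',\tau')\to\tilde g(\beta_0,\kappa_0,\tau^\star)=v_1(\beta_0,\kappa_0)$ by joint continuity of $\tilde g$ --- fills that gap correctly.

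One small internal inconsistency: you first assert that unfolding the piecewise definition of $M_{\alpha,\beta}$ shows that $M_\alpha(\beta_0,\kappa_0)<\beta_0\kappa_0$ ``simplifies to $\beta_0>1$'' in all three cases. That is right for $\alpha=\beta_0$ and $\alpha>\beta_0$, but in the case $\alpha<\beta_0$ (with $\alpha\kappa_0<1$) the inequality actually reduces to $\kappa_0<1/\beta_0$, not to $\beta_0>1$. Your very next sentence, which locates the degenerate configuration at $\kappa_0=1/\beta_0$, is the correct statement and effectively supersedes that remark, so the overall conclusion stands.
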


\begin{claimproof}
	Let $(\beta,\kappa) \in T_1$. Observe that any global minimizer $\tau$ of $P_1(\beta, \kappa)$ lies in the interval $[M_{\alpha}(\beta, \kappa), \beta \cdot \kappa]$.
	Suppose $\tau \neq M_{\alpha}(\beta, \kappa)$, in that case the sequence $x_\ell \coloneqq \tau - \frac{1}{\ell}$ converges to $\tau$ and satisfies $M_\alpha(\beta, \kappa) < x_\ell < \beta \cdot \kappa$ for large enough $\ell$.
	If $\tau = M_{\alpha}(\beta, \kappa)$, then the same argument works for $x_\ell \coloneqq \tau + \frac{1}{\ell}$.
	Therefore by \Cref{lem:v_usc} it holds that $v_1$ is upper semi-continuous at $(\beta,\kappa)$.
\end{claimproof}

\Cref{claim:v_1_usc,claim:v_1_lsc} together imply that $v_1$ is continuous on $T_1$. 

\begin{claim}
	$v_1$ is also continuous at $(\beta, 0)$ for $\beta > 1$.
\end{claim}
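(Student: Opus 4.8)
The plan is to prove continuity at $(\beta,0)$ by a direct squeezing argument, since the sufficient condition of \Cref{lem:v_usc} that was used in \Cref{claim:v_1_usc} breaks down here: at $\kappa=0$ the feasible set of $P_1(\beta,\kappa)$ collapses to a single point and has empty interior, so there is no sequence of strictly feasible points converging to it. (Lower semi-continuity still follows as in \Cref{claim:v_1_lsc} via \Cref{lem:v_lsc}, since every $F_1(\beta',\kappa')$ lies in the compact set $[0,1]$; but in fact the squeeze below already yields both one-sided bounds at once.)

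First I would pin down $v_1(\beta,0)$. Since $\alpha>1$, the denominators $1-\alpha\kappa$ and $\alpha-1$ occurring in the definition of $M_{\alpha,\beta}$ do not vanish for $\kappa$ near $0$, so $(\beta',\kappa')\mapsto M_\alpha(\beta',\kappa')$ is continuous near $(\beta,0)$, and $M_\alpha(\beta,0)=0$ in each of the three cases of the definition. Hence $F_1(\beta,0)=[M_\alpha(\beta,0),\beta\cdot 0]=\{0\}$ and $v_1(\beta,0)=\tilde{g}(\beta,0,0)$. Next I would record that $\tilde{g}$ is continuous at $(\beta,0,0)$: here $h_1(\beta,0)=0$ and $h_2(\beta,0,0)=0$, and letting $(\kappa,\tau)\to(0,0)$ in the defining formula of $g_{\alpha,\beta,c}$, using that $\entropy$ is continuous with $\entropy(0)=0$ and is bounded by $\ln 2$, each term of $g_{\alpha,\beta,c}(\kappa,\tau)$ tends to $0=g_{\alpha,\beta,c}(0,0)$; composing with the continuous maps $h_1,h_2$ gives continuity of $\tilde{g}$ at $(\beta,0,0)$.

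The core step is the squeeze. Fix $\eps>0$ and choose $\delta>0$ such that $\|(\beta',\kappa',\tau')-(\beta,0,0)\|<\delta$ implies $|\tilde{g}(\beta',\kappa',\tau')-\tilde{g}(\beta,0,0)|<\eps$. Because $(\beta',\kappa')\mapsto M_\alpha(\beta',\kappa')$ and $(\beta',\kappa')\mapsto\beta'\kappa'$ are continuous and vanish at $(\beta,0)$, and because $\kappa'<1/\beta'$ near $(\beta,0)$ (so that $F_1(\beta',\kappa')=[M_\alpha(\beta',\kappa'),\beta'\kappa']$ is nonempty there), I can pick a neighbourhood $N$ of $(\beta,0)$ on which $\|(\beta',\kappa')-(\beta,0)\|<\delta/2$ and $0\le M_\alpha(\beta',\kappa')\le\beta'\kappa'<\delta/2$. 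Then for every such $(\beta',\kappa')$, any feasible $\tau'\in F_1(\beta',\kappa')$ satisfies $|\tau'|<\delta/2$, hence $\|(\beta',\kappa',\tau')-(\beta,0,0)\|<\delta$; applying this to a global minimizer of $P_1(\beta',\kappa')$ gives $|v_1(\beta',\kappa')-v_1(\beta,0)|<\eps$. This is exactly continuity of $v_1$ at $(\beta,0)$.

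The main obstacle is the degeneracy of the feasible region at $\kappa=0$, which rules out reusing \Cref{lem:v_usc} and also forces a separate check that $F_1(\beta',\kappa')$ stays nonempty on a whole neighbourhood of $(\beta,0)$ — i.e.\ that $M_\alpha(\beta',\kappa')\le\beta'\kappa'$ there, which holds precisely because $\beta>1$ keeps $\kappa'$ below $1/\beta'$. Verifying that $g_{\alpha,\beta,c}$ extends continuously through the case-switch points $(\kappa,\tau)=(0,0)$ of $\gamma_{\alpha,\beta}$ and $\delta_{\alpha,\beta}$ is straightforward but slightly fussy.
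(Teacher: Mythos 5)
Your proposal is correct and follows essentially the same route as the paper: the paper also observes that any global minimizer $\tau^*$ of $P_1(\beta',\kappa')$ is squeezed between $M_\alpha(\beta',\kappa')$ and $\beta'\kappa'$, both of which tend to $0$, and then invokes continuity of $\tilde g$ at $(\beta,0,0)$ to conclude. You phrase the argument in $\eps$-$\delta$ form and the paper phrases it via sequences and the sandwich theorem, but the underlying idea (the feasible set collapses to $\{0\}$, so the optimizer is forced to $0$) is identical; your extra remark on why \Cref{lem:v_usc} is inapplicable and your explicit check of continuity of $\tilde g$ through the case-switch at $(0,0)$ are just slightly more detailed bookkeeping than the paper provides.
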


\begin{claimproof}
	Let $\left(\beta_\ell, \kappa_\ell \right)_{\ell \in \mathbb{N}}$ be a sequence
	such that $ \beta_\ell > 1$, $\kappa_\ell \geq 0$ and $\left( \beta_\ell, \kappa_\ell \right) \to \left( \beta, 0 \right)$.
	For each $\ell \in \mathbb{N}$, let $\tau^{*}_{\ell}$ be the value such that
	$v_1(\beta_\ell, \kappa_\ell) = \tilde{g}(\beta_{\ell}, \kappa_\ell,
	\tau^{*}_{\ell})$. Note that since $M_\alpha(\beta_\ell, \kappa_\ell) \leq
	\tau^{*}_{\ell} \leq  \beta_\ell \cdot \kappa_\ell$ and
	\begin{equation*}
		\lim_{\ell \to \infty} M_\alpha(\beta_\ell, \kappa_\ell) = 0 = \lim_{\ell \to \infty} \beta_\ell \cdot \kappa_\ell,
	\end{equation*}
	it holds that $\lim_{\ell \to \infty} \tau^{*}_{\ell} = 0$ by the sandwich theorem.
	Therefore, by using the continuity of the function $\tilde{g}$, we find that
	\begin{equation*}
		\lim_{\ell \to \infty} v_1(\beta_\ell, \kappa_\ell) = \lim_{\ell \to \infty} \tilde{g}(\beta_\ell, \kappa_\ell, \tau^{*}_{\ell} ) = \tilde{g}(\beta, 0, 0) = v_1(\beta, 0),
	\end{equation*}
	where the last equality holds because $F_1(\beta, 0) = \{0\}$.
\end{claimproof}
In particular, we get that $v_1$ is continuous on the set $\{(\beta, \kappa) \mid \beta > 1, 0 \leq \kappa \leq \frac{1}{\beta}\}$.

In a similar manner, let us now define $T_2 \coloneqq \{\beta \in \mathbb{R} \mid \beta > 1\}$ and for each $\beta \in T_2$ write
\begin{equation*}
	P_2(\beta) \colon \max_{\kappa \in \mathbb{R}}\; v_1(\beta, \kappa) \text{ s.t. } \kappa \in F_2(\beta) \coloneqq \left\{\kappa \in \mathbb{R} \mid -\kappa \leq 0,\; \kappa - \frac{1}{\beta} \leq 0\right\}. 
\end{equation*}

Recall that the function that is maximized, that is $v_1$, is already shown to be continuous.
Moreover, it can be verified that the constraint functions are also continuous.

Let $v_2$ be the function that maps $\beta \in T_2$ to the global maximum of $P_2(\beta)$.
Note that for $\alpha > 1$ and $c > 1$, by the definition of $\amls$ it holds that $v_2(\beta) = \amls(\alpha, c, \beta)$.

\begin{claim}\label{claim:v_2_lsc}
	$v_2$ is lower semi-continuous at $\beta$ for all $\beta \in T_2$.
\end{claim}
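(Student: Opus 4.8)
The plan is to reduce the statement to the upper semi-continuity criterion of \Cref{lem:v_usc}, which is phrased for minimization problems, so first I would pass to the equivalent minimization problem
\[\widehat{P}_2(\beta)\colon \quad \min_{\kappa\in\mathbb{R}}\; \bigl(-v_1(\beta,\kappa)\bigr)\quad\text{s.t.}\quad \kappa\in F_2(\beta).\]
Its optimal value is $-v_2(\beta)$, and $v_2$ is lower semi-continuous at $\beta$ if and only if the value function of $\widehat{P}_2$ is upper semi-continuous at $\beta$. To match the setting of \cite{Still18} literally, I would use, in place of $-v_1(\beta,\kappa)$, the function obtained by first clamping $\kappa$ into $[0,\tfrac1\beta]$ and then applying $-v_1$; this is continuous on all of $T_2\times\mathbb{R}$ because $v_1$ was already shown to be continuous on $\{(\beta,\kappa)\mid\beta>1,\ 0\le\kappa\le\tfrac1\beta\}$, and it coincides with $-v_1$ on the feasible region. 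The constraint functions $-\kappa$ and $\kappa-\tfrac1\beta$ are continuous, so the framework applies.

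Next I would check the hypotheses of \Cref{lem:v_usc} at an arbitrary $\beta\in T_2$. Since $\beta>1$, the feasible set $F_2(\beta)=[0,\tfrac1\beta]$ is compact and $-v_1(\beta,\cdot)$ is continuous on it, so a global minimizer $\kappa^\ast$ exists, i.e.\ $S(\beta)\neq\emptyset$. It remains to exhibit a sequence $(\kappa_\ell)_{\ell\in\mathbb{N}}$ converging to $\kappa^\ast$ with every $\kappa_\ell$ \emph{strictly} feasible, that is $0<\kappa_\ell<\tfrac1\beta$. If $\kappa^\ast<\tfrac1\beta$ one can take $\kappa_\ell\coloneqq\kappa^\ast+\tfrac{1}{2\ell}\bigl(\tfrac1\beta-\kappa^\ast\bigr)$, and if $\kappa^\ast=\tfrac1\beta$ one can take $\kappa_\ell\coloneqq\tfrac1\beta-\tfrac{1}{4\ell\beta}$; in both cases $\kappa_\ell\in(0,\tfrac1\beta)$ for every $\ell\ge1$ and $\kappa_\ell\to\kappa^\ast$. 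Hence \Cref{lem:v_usc} yields that the value function of $\widehat{P}_2$ is upper semi-continuous at $\beta$, which is exactly lower semi-continuity of $v_2$ at $\beta$.

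The only place that needs a moment's thought is the case where the optimal $\kappa^\ast$ lies on the boundary of $F_2(\beta)$, namely $\kappa^\ast=0$ or $\kappa^\ast=\tfrac1\beta$: there the perturbation must be oriented so that the approximating points fall into the open interval $(0,\tfrac1\beta)$ rather than leave $F_2(\beta)$, which is precisely the case distinction above. All the other ingredients — attainment of the maximum, continuity of $v_1$, and continuity of the constraint functions — have already been established in this section, so no further work is required.
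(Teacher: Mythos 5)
Your proof is correct, but it takes a different route from the paper's. The paper disposes of this claim in one line: it observes that $F_2(\beta') \subseteq [0,1]$ for all $\beta' \in T_2$, so the uniform-compactness condition of \Cref{lem:v_lsc} holds, and it concludes lower semi-continuity directly from that lemma applied to $P_2$. You instead pass to the negated minimization problem, verify attainment of the optimum, and invoke \Cref{lem:v_usc} with an explicitly constructed strictly feasible sequence converging to a maximizer (with the correct case distinction at the endpoints $\kappa^*=0$ and $\kappa^*=\tfrac1\beta$), plus a clamping step to make the objective continuous on all of $T_2\times\mathbb{R}$. Your route is longer but is arguably the more careful one: \Cref{lem:v_lsc,lem:v_usc} are stated for \emph{minimization} problems, and for the maximization problem $P_2$ the compactness condition of \Cref{lem:v_lsc}, read literally through the negation you perform, yields \emph{upper} semi-continuity of $v_2$, while the strict-feasibility condition of \Cref{lem:v_usc} is what yields \emph{lower} semi-continuity — i.e., the paper's proofs of \Cref{claim:v_2_lsc} and \Cref{claim:v_2_usc} effectively have the two criteria attached to the opposite conclusions (together they still give continuity of $v_2$, since both hypotheses are verified, so nothing is lost in the paper's final statement). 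In effect, your argument for this claim mirrors the paper's proof of \Cref{claim:v_2_usc}, with the sign bookkeeping done explicitly; what the paper's one-liner buys is brevity, and what yours buys is a literal match with the minimization framework of \cite{Still18} and an explicit treatment of the domain-of-continuity issue for $v_1$.
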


\begin{claimproof}
	Let $\beta \in T_2$. Observe that for all $\beta' \in T_2$ it holds that $F_2(\beta') \subseteq [0,1]$. Let $\delta$ be an arbitrarily small number. We have that
	\begin{equation*}
		\bigcup_{\norm{\beta - \beta'} < \delta} F_2(\beta') \subseteq [0,1],
	\end{equation*}
	therefore by \Cref{lem:v_lsc}, $v_2$ is lower semi-continuous at $\beta$.
\end{claimproof}

\begin{claim}\label{claim:v_2_usc}
	$v_2$ is upper semi-continuous at $\beta$ for all $\beta \in T_2$.
\end{claim}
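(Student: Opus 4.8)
The plan is to mirror, for the maximization problem $P_2$, the strategy already used for $v_1$ in \Cref{claim:v_1_lsc,claim:v_1_usc}. Two facts do all the work: (i) $v_1$ is continuous on the region $R \coloneqq \{(\beta,\kappa) \mid \beta > 1,\ 0 \le \kappa \le \frac{1}{\beta}\}$, which was just established; and (ii) every feasible set $F_2(\beta) = [0,\frac{1}{\beta}]$ is contained in the fixed compact set $[0,1]$, independently of $\beta$. Upper semi-continuity of the value of a maximization problem over a uniformly bounded feasible correspondence with continuous objective is precisely the maximization counterpart of \Cref{lem:v_lsc}: rewriting $P_2(\beta)$ as $\min_{\kappa \in F_2(\beta)} \bigl(-v_1(\beta,\kappa)\bigr)$ (extending $-v_1$ continuously outside $R$ if one wants to stay literally inside the framework, just as $P_1$ was handled via the clamping functions $h_1,h_2$), its objective is continuous by~(i) and $\bigcup_{\abs{\beta'-\beta}<\delta} F_2(\beta') \subseteq [0,1]$ for every $\delta>0$ by~(ii), so \Cref{lem:v_lsc} yields that $-v_2$ is lower semi-continuous, i.e.\ $v_2$ is upper semi-continuous at $\beta$. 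For transparency I would also spell out the short self-contained version.

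Concretely, I would fix $\beta \in T_2$ and an arbitrary sequence $(\beta_\ell)_{\ell \in \NN}$ in $T_2$ with $\beta_\ell \to \beta$, the goal being $\limsup_{\ell} v_2(\beta_\ell) \le v_2(\beta)$. First pass to a subsequence (not renamed) along which $v_2(\beta_\ell) \to L \coloneqq \limsup_\ell v_2(\beta_\ell)$. For each $\ell$, since $F_2(\beta_\ell) = [0,\frac{1}{\beta_\ell}]$ is compact and $v_1(\beta_\ell,\cdot)$ is continuous on it, pick a maximizer $\kappa_\ell \in F_2(\beta_\ell)$ with $v_2(\beta_\ell) = v_1(\beta_\ell,\kappa_\ell)$. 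As $\kappa_\ell \in [0,1]$, after passing to a further subsequence we may assume $\kappa_\ell \to \kappa^\ast$ for some $\kappa^\ast \in [0,1]$. If $\kappa^\ast \in F_2(\beta)$, then $(\beta_\ell,\kappa_\ell) \to (\beta,\kappa^\ast)$ within $R$, so continuity of $v_1$ on $R$ gives $v_2(\beta_\ell) = v_1(\beta_\ell,\kappa_\ell) \to v_1(\beta,\kappa^\ast)$, and therefore $L = v_1(\beta,\kappa^\ast) \le \max_{\kappa \in F_2(\beta)} v_1(\beta,\kappa) = v_2(\beta)$, as desired.

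The only step that is not purely mechanical — and the point I would flag as the main (if mild) obstacle — is checking that the limit point $\kappa^\ast$ is feasible for the limiting problem, i.e.\ $\kappa^\ast \le \frac{1}{\beta}$; unlike the constraint $\kappa \ge 0$, the upper bound moves with the parameter. This is resolved by noting that $\kappa_\ell \le \frac{1}{\beta_\ell}$ for every $\ell$ and that $\frac{1}{\beta_\ell} \to \frac{1}{\beta}$ (continuity of $x \mapsto \frac{1}{x}$ on $(1,\infty)$), so $\kappa^\ast \le \frac{1}{\beta}$; together with $\kappa^\ast \ge 0$ this gives $\kappa^\ast \in [0,\frac{1}{\beta}] = F_2(\beta)$, closing the gap above. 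Finally, combining \Cref{claim:v_2_lsc} with this claim shows that $v_2$ is continuous on $(1,\infty)$, and since $v_2(\beta) = \amlsbound(\alpha,c,\beta)$ for all $\alpha,c > 1$, \Cref{lem:amls_continuity} follows.
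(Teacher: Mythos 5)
Your proof is correct, but it takes a genuinely different route from the paper's. The paper's proof of this claim invokes \Cref{lem:v_usc}: it observes that any maximizer $\kappa^\ast$ of $P_2(\beta)$ can be approached by a sequence of strictly feasible points $x_\ell$ (inside $(0,\frac{1}{\beta})$), i.e.\ it verifies the Slater-type condition at an optimizer. You instead argue via \Cref{lem:v_lsc} applied to the negated objective $-v_1$ (equivalently, by the direct compactness argument you spell out: extract a subsequence of maximizers $\kappa_\ell$ converging to some $\kappa^\ast$, check feasibility of the limit, and use continuity of $v_1$). It is worth noting that your assignment of lemma to claim is in fact the canonical one: for a maximization over a feasible correspondence contained in a fixed compactum, the standard conclusion is \emph{upper} semi-continuity of the value function (your route), while the Slater-type condition at an optimizer yields \emph{lower} semi-continuity; writing $P_2$ as $\min_\kappa(-v_1)$ and applying \Cref{lem:v_lsc,lem:v_usc} to $-v_2$ makes the signs come out exactly as you describe. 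The paper's two proofs of \Cref{claim:v_2_lsc,claim:v_2_usc} appear to have swapped which lemma goes with which claim, but since both semi-continuity directions are ultimately established, the continuity conclusion in \Cref{lem:amls_continuity} is unaffected. Your self-contained subsequence argument is a sound and arguably more transparent way to settle this particular claim, and your flagging of the only non-trivial step --- that the moving upper constraint $\kappa \le \frac{1}{\beta_\ell}$ passes to the limit because $\beta \mapsto \frac{1}{\beta}$ is continuous --- is exactly the right place to pay attention.
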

\begin{claimproof}
	Let $(\beta,\kappa) \in T_1$. Observe that any global minimizer $\kappa$ of $P_2(\beta)$ lies in the interval $[0, \frac{1}{\beta}]$. Suppose $\kappa > 0$, in that case the sequence $x_\ell \coloneqq \kappa - \frac{1}{\ell}$ converges to $\kappa$ and satisfies $0 < x_\ell < \kappa \leq \frac{1}{\beta}$ for large enough $\ell$. If $\kappa = 0$, then the same argument works for $x_\ell \coloneqq \kappa + \frac{1}{\ell}$. Therefore by \Cref{lem:v_usc} it holds that $v_2$ is upper semi-continuous at $\kappa$. 
\end{claimproof}

\Cref{claim:v_2_lsc,claim:v_2_usc} together imply that $v_2$ is continuous on $T_2$, which proves the lemma.
\end{proof}

\newpage

\section{Problem Definitions}
\label{sec:problem_def}
In this section, we give the problem definitions of all the problems discussed in the paper.
For simplicity, we define the problems in their unweighted version.
In the weighted version, the vertices are equipped with weights and we are looking for a solution $S$ of minimum weight.

\medskip

\defproblem{{\sc Vertex Cover ({\sc VC})}}{An undirected graph $G$.}{Find a minimum set $S$ of vertices of $G$ such that $G-S$ has no edges.}

\defproblem{{\sc Partial Vertex Cover ({\sc PVC})}}{An undirected graph $G$ and an integer $t \geq 0$.}{Find a minimum set $S$ of vertices of $G$ such that $G-S$ has at most $|E(G)| - t$ many edges.}

\defproblem{{\sc $d$-Hitting Set ({\sc $d$-HS})}}{A universe $U$ and set family $\mathcal{F} \subseteq \binom{U}{\leq d}$.}{Find a minimum set $S \subseteq U$ such that for each $F \in \mathcal{F}$, $S \cap F \neq \emptyset$.}

\defproblem{{\sc Feedback Vertex Set ({\sc FVS})}}{An undirected graph $G$.}{Find a minimum set $S$ of vertices of $G$ such that $G-S$ is an acyclic graph.}

\defproblem{{\sc Subset Feedback Vertex Set ({\sc Subset FVS})}}{An undirected graph $G$ and a set $T \subseteq V(G)$.}{Find a minimum set $S$ of vertices of $G$ such that $G-S$ has no cycle that contains at least one vertex of $T$.}

\defproblem{{\sc Tournament Feedback Vertex Set ({\sc TFVS})}}{A tournament graph $G$.}{Find a minimum set $S$ of vertices of $G$ such that $G-S$ is an acyclic tournament.}

\defproblem{{\sc Directed Feedback Vertex Set ({\sc DFVS})}}{A directed graph $G$.}{Find a minimum set $S$ of vertices of $G$ such that $G-S$ is a directed acyclic graph.}

\defproblem{{\sc Directed Subset Feedback Vertex Set ({\sc Subset DFVS})}}{A directed graph $G$ and a set $T \subseteq V(G)$.}{Find a minimum set $S$ of vertices of $G$ such that $G-S$ has no directed cycle that contains at least one vertex of $T$.}

\defproblem{{\sc Directed Odd Cycle Transversal ({\sc DOCT})}}{A directed graph $G$.}{Find a minimum set $S$ of vertices of $G$ such that $G-S$ has no directed cycle of odd length.}

\defproblem{{\sc Multicut}}{An undirected graph $G$ and a set $\mathcal{P} \subseteq V(G) \times V(G)$.}{Find a minimum set $S$ of vertices of $G$ such that $G-S$ has no path from $u$ to $v$ for any $(u,v) \in \mathcal{P}$}

\medskip

For the next problems, we require some additional definitions.
A graph $G$ is \emph{cluster graph} if every connected component of $G$ is a complete graph.
A \emph{cograph} is a graph $G$ which does not contain $P_4$ (a path on $4$ vertices) is an induced subgraph.
Finally, a graph $G$ is a \emph{split graph} if the vertex set can be partitioned into two sets $V(G) = I \uplus C$ such that $I$ is an independent set and $C$ is a clique in $G$.

\medskip

\defproblem{{\sc Cluster Graph Vertex Deletion}}{An undirected graph $G$.}{Find a minimum set $S$ of vertices of $G$ such that $G-S$ is a cluster graph.}

\defproblem{{\sc Cograph Vertex Deletion}}{An undirected graph $G$.}{Find a minimum set $S$ of vertices of $G$ such that $G-S$ is a cograph.}

\defproblem{{\sc Split Vertex Deletion}}{An undirected graph $G$.}{Find a minimum set $S$ of vertices of $G$ such that $G-S$ is a split graph.}

\newpage

\section{Running Times}
\label{sec:running_times}
We provide extensive data sets on the running times for the obtained exponential approximation algorothms for the problems listed in Section \ref{sec:applications}.
Table \ref{table:runtimes-appendix-1} contains the running times for selected approximation ratios,
and graphical visualizations can be found in Figures \ref{fig:runtimes_1} and \ref{fig:runtimes_2}.

\begin{table}[H]
 \small
 \centering
 {\sc Vertex Cover}
 \medskip

 \begin{tabular}{c|c|c|c|c|c|c|c|c|c|}
	 & $1.1$ & $1.2$ & $1.3$ & $1.4$ & $1.5$ & $1.6$ & $1.7$ & $1.8$ & $1.9$\\
	\hline
	$\brute$ & $1.716$ & $1.583$ & $1.496$ & $1.433$ & $1.385$ & $1.347$ & $1.317$ & $1.291$ & $1.269$\\
	\hline
	$(\alpha = 1, c = 1.363)$ & $1.158$ & $1.123$ & $1.103$ & $1.089$ & $1.078$ & $1.07$ & $1.064$ & $1.058$ & $1.054$\\
	\hline
	$(\alpha = 2,c = 1)$ & $1.659$ & $1.485$ & $1.366$ & $1.277$ & $1.208$ & $1.151$ & $1.104$ & $1.064$ & $1.03$\\
	\hline
\end{tabular}

 \medskip
 {\sc Feedback Vertex Set}
 \medskip

 \begin{tabular}{c|c|c|c|c|c|c|c|c|c|}
	 & $1.1$ & $1.2$ & $1.3$ & $1.4$ & $1.5$ & $1.6$ & $1.7$ & $1.8$ & $1.9$\\
	\hline
	$\brute$ & $1.716$ & $1.583$ & $1.496$ & $1.433$ & $1.385$ & $1.347$ & $1.317$ & $1.291$ & $1.269$\\
	\hline
	$(\alpha = 1, c = 3.618)$ & $1.489$ & $1.39$ & $1.327$ & $1.283$ & $1.25$ & $1.225$ & $1.204$ & $1.187$ & $1.172$\\
	\hline
	$(\alpha = 2,c = 1)$ & $1.659$ & $1.485$ & $1.366$ & $1.277$ & $1.208$ & $1.151$ & $1.104$ & $1.064$ & $1.03$\\
	\hline
\end{tabular}

 \medskip
 {\sc Tournament Feedback Vertex Set}
 \medskip

 \begin{tabular}{c|c|c|c|c|c|c|c|c|c|}
	 & $1.2$ & $1.4$ & $1.6$ & $1.8$ & $2.0$ & $2.2$ & $2.4$ & $2.6$ & $2.8$\\
	\hline
	$\brute$ & $1.583$ & $1.433$ & $1.347$ & $1.291$ & $1.25$ & $1.22$ & $1.196$ & $1.177$ & $1.162$\\
	\hline
	$(\alpha = 1, c = 2)$ & $1.251$ & $1.181$ & $1.143$ & $1.119$ & $1.102$ & $1.089$ & $1.079$ & $1.071$ & $1.065$\\
	\hline
	$(\alpha = 3,c = 1)$ & $1.566$ & $1.393$ & $1.286$ & $1.211$ & $1.155$ & $1.111$ & $1.076$ & $1.047$ & $1.022$\\
	\hline
\end{tabular}

 \medskip
 {\sc $3$-Hitting Set}
 \medskip

 \begin{tabular}{c|c|c|c|c|c|c|c|c|c|}
	 & $1.2$ & $1.4$ & $1.6$ & $1.8$ & $2.0$ & $2.2$ & $2.4$ & $2.6$ & $2.8$\\
	\hline
	$\brute$ & $1.583$ & $1.433$ & $1.347$ & $1.291$ & $1.25$ & $1.22$ & $1.196$ & $1.177$ & $1.162$\\
	\hline
	$(\alpha = 1, c = 2.168)$ & $1.274$ & $1.197$ & $1.156$ & $1.13$ & $1.111$ & $1.097$ & $1.086$ & $1.078$ & $1.071$\\
	\hline
	$(\alpha = 3,c = 1)$ & $1.566$ & $1.393$ & $1.286$ & $1.211$ & $1.155$ & $1.111$ & $1.076$ & $1.047$ & $1.022$\\
	\hline
\end{tabular}

 \medskip
 {\sc $4$-Hitting Set}
 \medskip

 \begin{tabular}{c|c|c|c|c|c|c|c|c|c|}
	 & $1.3$ & $1.6$ & $1.9$ & $2.2$ & $2.5$ & $2.8$ & $3.1$ & $3.4$ & $3.7$\\
	\hline
	$\brute$ & $1.496$ & $1.347$ & $1.269$ & $1.22$ & $1.186$ & $1.162$ & $1.143$ & $1.128$ & $1.116$\\
	\hline
	$(\alpha = 1, c = 3.168)$ & $1.305$ & $1.209$ & $1.16$ & $1.13$ & $1.11$ & $1.095$ & $1.084$ & $1.075$ & $1.068$\\
	\hline
	$(\alpha = 4,c = 1)$ & $1.489$ & $1.325$ & $1.231$ & $1.168$ & $1.122$ & $1.087$ & $1.059$ & $1.036$ & $1.017$\\
	\hline
\end{tabular}

 \medskip
 {\sc $5$-Hitting Set}
 \medskip

 \begin{tabular}{c|c|c|c|c|c|c|c|c|c|}
	 & $1.4$ & $1.8$ & $2.2$ & $2.6$ & $3.0$ & $3.4$ & $3.8$ & $4.2$ & $4.6$\\
	\hline
	$\brute$ & $1.433$ & $1.291$ & $1.22$ & $1.177$ & $1.149$ & $1.128$ & $1.112$ & $1.1$ & $1.09$\\
	\hline
	$(\alpha = 1, c = 4.168)$ & $1.302$ & $1.199$ & $1.149$ & $1.12$ & $1.1$ & $1.086$ & $1.076$ & $1.067$ & $1.061$\\
	\hline
	$(\alpha = 5,c = 1)$ & $1.43$ & $1.276$ & $1.193$ & $1.139$ & $1.101$ & $1.072$ & $1.048$ & $1.03$ & $1.014$\\
	\hline
\end{tabular}

 \medskip
 {\sc Subset FVS}
 \medskip

 \begin{tabular}{c|c|c|c|c|c|c|c|c|c|}
	 & $1.7$ & $2.4$ & $3.1$ & $3.8$ & $4.5$ & $5.2$ & $5.9$ & $6.6$ & $7.3$\\
	\hline
	$\brute$ & $1.317$ & $1.196$ & $1.143$ & $1.112$ & $1.093$ & $1.079$ & $1.069$ & $1.061$ & $1.055$\\
	\hline
	$(\alpha = 8, c= 1)$ & $1.316$ & $1.19$ & $1.129$ & $1.092$ & $1.066$ & $1.046$ & $1.031$ & $1.024$ & $1.009$\\
\end{tabular}

 \medskip
 {\sc Partial Vertex Cover}
 \medskip

 \begin{tabular}{c|c|c|c|c|c|c|c|c|c|}
	 & $1.1$ & $1.2$ & $1.3$ & $1.4$ & $1.5$ & $1.6$ & $1.7$ & $1.8$ & $1.9$\\
	\hline
	$\brute$ & $1.716$ & $1.583$ & $1.496$ & $1.433$ & $1.385$ & $1.347$ & $1.317$ & $1.291$ & $1.269$\\
	\hline
	$(\alpha = 2,c = 1)$ & $1.659$ & $1.485$ & $1.366$ & $1.277$ & $1.208$ & $1.151$ & $1.104$ & $1.07$ & $1.03$\\
	\hline
\end{tabular}

 \caption{An entry $d$ in column $\beta$ means that the respective algorithm outputs a $\beta$-approximation in time $\CO^*(d^n)$.}
 \label{table:runtimes-appendix-1}
\end{table}

\begin{figure}[H]
 \centering
 \begin{subfigure}{.5\textwidth}
  \centering
  \caption{{\sc Vertex Cover}}
  \begin{tikzpicture}[scale = 0.9]
	\begin{axis}[xmin = 1, xmax = 2, ymin = 0.9, ymax = 2.1, xlabel = {approximation ratio}]

	\addplot[teal, thick] coordinates {
		(1.0, 2.0)
		(1.01, 1.9454431345)
		(1.02, 1.9062508454)
		(1.03, 1.8731549013)
		(1.04, 1.8440491304)
		(1.05, 1.8178991111)
		(1.06, 1.794081097)
		(1.07, 1.7721758604)
		(1.08, 1.7518817491)
		(1.09, 1.7329712548)
		(1.1, 1.7152667656)
		(1.11, 1.698625911)
		(1.12, 1.6829321593)
		(1.13, 1.6680884977)
		(1.14, 1.6540130203)
		(1.15, 1.6406357531)
		(1.16, 1.6278963084)
		(1.17, 1.615742114)
		(1.18, 1.6041270506)
		(1.19, 1.5930103866)
		(1.2, 1.5823559323)
		(1.21, 1.5721313608)
		(1.22, 1.5623076557)
		(1.23, 1.552858658)
		(1.24, 1.5437606905)
		(1.25, 1.534992244)
		(1.26, 1.5265337131)
		(1.27, 1.5183671728)
		(1.28, 1.510476187)
		(1.29, 1.5028456449)
		(1.3, 1.4954616201)
		(1.31, 1.4883112476)
		(1.32, 1.4813826173)
		(1.33, 1.4746646809)
		(1.34, 1.4681471696)
		(1.35, 1.4618205222)
		(1.36, 1.4556758212)
		(1.37, 1.4497047356)
		(1.38, 1.443899471)
		(1.39, 1.4382527242)
		(1.4, 1.4327576428)
		(1.41, 1.427407789)
		(1.42, 1.4221971069)
		(1.43, 1.4171198929)
		(1.44, 1.4121707695)
		(1.45, 1.4073446605)
		(1.46, 1.4026367697)
		(1.47, 1.3980425604)
		(1.48, 1.3935577374)
		(1.49, 1.3891782307)
		(1.5, 1.3849001795)
		(1.51, 1.380719919)
		(1.52, 1.3766339674)
		(1.53, 1.3726390137)
		(1.54, 1.3687319076)
		(1.55, 1.3649096489)
		(1.56, 1.3611693784)
		(1.57, 1.3575083696)
		(1.58, 1.3539240206)
		(1.59, 1.3504138468)
		(1.6, 1.3469754742)
		(1.61, 1.343606633)
		(1.62, 1.3403051519)
		(1.63, 1.3370689522)
		(1.64, 1.3338960432)
		(1.65, 1.3307845174)
		(1.66, 1.3277325456)
		(1.67, 1.3247383734)
		(1.68, 1.3218003168)
		(1.69, 1.3189167589)
		(1.7, 1.3160861463)
		(1.71, 1.3133069861)
		(1.72, 1.3105778425)
		(1.73, 1.3078973347)
		(1.74, 1.3052641335)
		(1.75, 1.3026769593)
		(1.76, 1.3001345795)
		(1.77, 1.2976358065)
		(1.78, 1.2951794954)
		(1.79, 1.2927645423)
		(1.8, 1.2903898821)
		(1.81, 1.2880544871)
		(1.82, 1.2857573652)
		(1.83, 1.2834975581)
		(1.84, 1.2812741403)
		(1.85, 1.2790862173)
		(1.86, 1.2769329244)
		(1.87, 1.2748134255)
		(1.88, 1.2727269118)
		(1.89, 1.2706726008)
		(1.9, 1.2686497351)
		(1.91, 1.2666575813)
		(1.92, 1.2646954293)
		(1.93, 1.2627625911)
		(1.94, 1.2608584001)
		(1.95, 1.2589822102)
		(1.96, 1.2571333949)
		(1.97, 1.2553113469)
		(1.98, 1.2535154767)
		(1.99, 1.2517452127)
		(2.0, 1.25)
	};

	\addplot[ red , thick] coordinates {
		(1.0, 1.2663242847)
		(1.01, 1.2367469348)
		(1.02, 1.2200190205)
		(1.03, 1.2073852094)
		(1.04, 1.1970937089)
		(1.05, 1.1883673416)
		(1.06, 1.1807772732)
		(1.07, 1.1740569098)
		(1.08, 1.1680272319)
		(1.09, 1.1625612928)
		(1.1, 1.1575653247)
		(1.11, 1.1529678327)
		(1.12, 1.1487128983)
		(1.13, 1.1447558616)
		(1.14, 1.1410604257)
		(1.15, 1.1375966491)
		(1.16, 1.1343395157)
		(1.17, 1.131267891)
		(1.18, 1.128363745)
		(1.19, 1.1256115629)
		(1.2, 1.1229978911)
		(1.21, 1.1205109827)
		(1.22, 1.1181405165)
		(1.23, 1.1158773732)
		(1.24, 1.1137134534)
		(1.25, 1.1116415299)
		(1.26, 1.1096551254)
		(1.27, 1.1077484119)
		(1.28, 1.1059161259)
		(1.29, 1.1041534977)
		(1.3, 1.1024561909)
		(1.31, 1.1008202517)
		(1.32, 1.0992420649)
		(1.33, 1.0977183163)
		(1.34, 1.0962459602)
		(1.35, 1.0948221912)
		(1.36, 1.0934444197)
		(1.37, 1.0921102501)
		(1.38, 1.0908174625)
		(1.39, 1.0895639955)
		(1.4, 1.0883479319)
		(1.41, 1.0871674852)
		(1.42, 1.0860209888)
		(1.43, 1.0849068849)
		(1.44, 1.0838237156)
		(1.45, 1.0827701148)
		(1.46, 1.0817448004)
		(1.47, 1.0807465679)
		(1.48, 1.079774284)
		(1.49, 1.0788268817)
		(1.5, 1.0779033547)
		(1.51, 1.0770027536)
		(1.52, 1.076124181)
		(1.53, 1.0752667885)
		(1.54, 1.074429773)
		(1.55, 1.0736123734)
		(1.56, 1.0728138682)
		(1.57, 1.0720335723)
		(1.58, 1.0712708352)
		(1.59, 1.0705250383)
		(1.6, 1.0697955934)
		(1.61, 1.06908194)
		(1.62, 1.0683835446)
		(1.63, 1.0676998982)
		(1.64, 1.0670305153)
		(1.65, 1.0663749323)
		(1.66, 1.0657327066)
		(1.67, 1.0651034148)
		(1.68, 1.064486652)
		(1.69, 1.063882031)
		(1.7, 1.0632891806)
		(1.71, 1.0627077456)
		(1.72, 1.0621373851)
		(1.73, 1.0615777724)
		(1.74, 1.061028594)
		(1.75, 1.0604895489)
		(1.76, 1.0599603478)
		(1.77, 1.059440713)
		(1.78, 1.0589303774)
		(1.79, 1.0584290839)
		(1.8, 1.0579365855)
		(1.81, 1.057452644)
		(1.82, 1.0569770304)
		(1.83, 1.0565095238)
		(1.84, 1.0560499114)
		(1.85, 1.0555979879)
		(1.86, 1.0551535556)
		(1.87, 1.0547164233)
		(1.88, 1.0542864067)
		(1.89, 1.0538633278)
		(1.9, 1.0534470148)
		(1.91, 1.0530373014)
		(1.92, 1.052634027)
		(1.93, 1.0522370364)
		(1.94, 1.0518461795)
		(1.95, 1.0514613108)
		(1.96, 1.0510822898)
		(1.97, 1.0507089804)
		(1.98, 1.0503412507)
		(1.99, 1.049978973)
		(2.0, 1.0496220236)
	};

	\addplot[blue , thick] coordinates {
		(1.0, 2.0)
		(1.01, 1.9387047513)
		(1.02, 1.8930929239)
		(1.03, 1.853849694)
		(1.04, 1.8188398378)
		(1.05, 1.7870069014)
		(1.06, 1.7577092525)
		(1.07, 1.7305126)
		(1.08, 1.705102307)
		(1.09, 1.6812394878)
		(1.1, 1.6587364406)
		(1.11, 1.6374417619)
		(1.12, 1.6172307723)
		(1.13, 1.5979990628)
		(1.14, 1.5796579792)
		(1.15, 1.5621313614)
		(1.16, 1.545353129)
		(1.17, 1.5292654531)
		(1.18, 1.5138173453)
		(1.19, 1.498963551)
		(1.2, 1.484663669)
		(1.21, 1.4708814415)
		(1.22, 1.457584176)
		(1.23, 1.4447422687)
		(1.24, 1.4323288084)
		(1.25, 1.4203192453)
		(1.26, 1.4086911107)
		(1.27, 1.3974237788)
		(1.28, 1.3864982635)
		(1.29, 1.3758970425)
		(1.3, 1.3656039065)
		(1.31, 1.3556038269)
		(1.32, 1.3458828408)
		(1.33, 1.3364279501)
		(1.34, 1.3272270326)
		(1.35, 1.3182687633)
		(1.36, 1.3095425448)
		(1.37, 1.3010384449)
		(1.38, 1.2927471418)
		(1.39, 1.2846598738)
		(1.4, 1.2767683951)
		(1.41, 1.2690649358)
		(1.42, 1.2615421652)
		(1.43, 1.2541931594)
		(1.44, 1.2470113718)
		(1.45, 1.2399906054)
		(1.46, 1.233124989)
		(1.47, 1.2264089541)
		(1.48, 1.2198372148)
		(1.49, 1.2134047491)
		(1.5, 1.2071067812)
		(1.51, 1.2009387662)
		(1.52, 1.1948963754)
		(1.53, 1.1889754825)
		(1.54, 1.1831721518)
		(1.55, 1.1774826264)
		(1.56, 1.1719033178)
		(1.57, 1.1664307958)
		(1.58, 1.1610617798)
		(1.59, 1.1557931299)
		(1.6, 1.1506218396)
		(1.61, 1.1455450278)
		(1.62, 1.1405599327)
		(1.63, 1.1356639047)
		(1.64, 1.1308544012)
		(1.65, 1.1261289803)
		(1.66, 1.1214852963)
		(1.67, 1.1169210943)
		(1.68, 1.112434206)
		(1.69, 1.1080225451)
		(1.7, 1.1036841036)
		(1.71, 1.0994169478)
		(1.72, 1.0952192149)
		(1.73, 1.0910891093)
		(1.74, 1.0870249001)
		(1.75, 1.0830249175)
		(1.76, 1.0790875504)
		(1.77, 1.0752112435)
		(1.78, 1.0713944949)
		(1.79, 1.067635854)
		(1.8, 1.0639339188)
		(1.81, 1.060287334)
		(1.82, 1.0566947891)
		(1.83, 1.0531550165)
		(1.84, 1.0496667894)
		(1.85, 1.0462289206)
		(1.86, 1.0428402603)
		(1.87, 1.0394996953)
		(1.88, 1.0362061466)
		(1.89, 1.0329585688)
		(1.9, 1.0297559486)
		(1.91, 1.0265973033)
		(1.92, 1.0234816797)
		(1.93, 1.0204081532)
		(1.94, 1.0173758263)
		(1.95, 1.0143838281)
		(1.96, 1.0114313128)
		(1.97, 1.008517459)
		(1.98, 1.0056414688)
		(1.99, 1.002802567)
		(2.0, 1.0)
	};

	\addlegendentry[no markers, teal]{brute}
	\addlegendentry[no markers, red]{$\alpha = 1.0$, $c = 1.363$}
	\addlegendentry[no markers, blue]{$\alpha = 2.0$, $c = 1.0$}

	\end{axis}
\end{tikzpicture}
 \end{subfigure}%
 \begin{subfigure}{.5\textwidth}
  \centering
  \caption{{\sc Feedback Vertex Set}}
  \begin{tikzpicture}[scale = 0.9]
	\begin{axis}[xmin = 1, xmax = 2, ymin = 0.9, ymax = 2.1, xlabel = {approximation ratio}]

	\addplot[teal, thick] coordinates {
		(1.0, 2.0)
		(1.01, 1.9454431345)
		(1.02, 1.9062508454)
		(1.03, 1.8731549013)
		(1.04, 1.8440491304)
		(1.05, 1.8178991111)
		(1.06, 1.794081097)
		(1.07, 1.7721758604)
		(1.08, 1.7518817491)
		(1.09, 1.7329712548)
		(1.1, 1.7152667656)
		(1.11, 1.698625911)
		(1.12, 1.6829321593)
		(1.13, 1.6680884977)
		(1.14, 1.6540130203)
		(1.15, 1.6406357531)
		(1.16, 1.6278963084)
		(1.17, 1.615742114)
		(1.18, 1.6041270506)
		(1.19, 1.5930103866)
		(1.2, 1.5823559323)
		(1.21, 1.5721313608)
		(1.22, 1.5623076557)
		(1.23, 1.552858658)
		(1.24, 1.5437606905)
		(1.25, 1.534992244)
		(1.26, 1.5265337131)
		(1.27, 1.5183671728)
		(1.28, 1.510476187)
		(1.29, 1.5028456449)
		(1.3, 1.4954616201)
		(1.31, 1.4883112476)
		(1.32, 1.4813826173)
		(1.33, 1.4746646809)
		(1.34, 1.4681471696)
		(1.35, 1.4618205222)
		(1.36, 1.4556758212)
		(1.37, 1.4497047356)
		(1.38, 1.443899471)
		(1.39, 1.4382527242)
		(1.4, 1.4327576428)
		(1.41, 1.427407789)
		(1.42, 1.4221971069)
		(1.43, 1.4171198929)
		(1.44, 1.4121707695)
		(1.45, 1.4073446605)
		(1.46, 1.4026367697)
		(1.47, 1.3980425604)
		(1.48, 1.3935577374)
		(1.49, 1.3891782307)
		(1.5, 1.3849001795)
		(1.51, 1.380719919)
		(1.52, 1.3766339674)
		(1.53, 1.3726390137)
		(1.54, 1.3687319076)
		(1.55, 1.3649096489)
		(1.56, 1.3611693784)
		(1.57, 1.3575083696)
		(1.58, 1.3539240206)
		(1.59, 1.3504138468)
		(1.6, 1.3469754742)
		(1.61, 1.343606633)
		(1.62, 1.3403051519)
		(1.63, 1.3370689522)
		(1.64, 1.3338960432)
		(1.65, 1.3307845174)
		(1.66, 1.3277325456)
		(1.67, 1.3247383734)
		(1.68, 1.3218003168)
		(1.69, 1.3189167589)
		(1.7, 1.3160861463)
		(1.71, 1.3133069861)
		(1.72, 1.3105778425)
		(1.73, 1.3078973347)
		(1.74, 1.3052641335)
		(1.75, 1.3026769593)
		(1.76, 1.3001345795)
		(1.77, 1.2976358065)
		(1.78, 1.2951794954)
		(1.79, 1.2927645423)
		(1.8, 1.2903898821)
		(1.81, 1.2880544871)
		(1.82, 1.2857573652)
		(1.83, 1.2834975581)
		(1.84, 1.2812741403)
		(1.85, 1.2790862173)
		(1.86, 1.2769329244)
		(1.87, 1.2748134255)
		(1.88, 1.2727269118)
		(1.89, 1.2706726008)
		(1.9, 1.2686497351)
		(1.91, 1.2666575813)
		(1.92, 1.2646954293)
		(1.93, 1.2627625911)
		(1.94, 1.2608584001)
		(1.95, 1.2589822102)
		(1.96, 1.2571333949)
		(1.97, 1.2553113469)
		(1.98, 1.2535154767)
		(1.99, 1.2517452127)
		(2.0, 1.25)
	};

	\addplot[ red , thick] coordinates {
		(1.0, 1.7236042012)
		(1.01, 1.6752823032)
		(1.02, 1.6419477534)
		(1.03, 1.6143949715)
		(1.04, 1.5905556226)
		(1.05, 1.569423424)
		(1.06, 1.5503971586)
		(1.07, 1.5330763699)
		(1.08, 1.5171754461)
		(1.09, 1.5024808262)
		(1.1, 1.488827183)
		(1.11, 1.4760830734)
		(1.12, 1.4641417633)
		(1.13, 1.452915081)
		(1.14, 1.4423291422)
		(1.15, 1.4323212842)
		(1.16, 1.4228378102)
		(1.17, 1.4138322926)
		(1.18, 1.405264273)
		(1.19, 1.3970982487)
		(1.2, 1.3893028732)
		(1.21, 1.3818503149)
		(1.22, 1.3747157396)
		(1.23, 1.3678768857)
		(1.24, 1.3613137153)
		(1.25, 1.355008123)
		(1.26, 1.3489436916)
		(1.27, 1.3431054873)
		(1.28, 1.3374798837)
		(1.29, 1.3320544132)
		(1.3, 1.3268176384)
		(1.31, 1.3217590414)
		(1.32, 1.3168689275)
		(1.33, 1.3121383418)
		(1.34, 1.3075589959)
		(1.35, 1.3031232033)
		(1.36, 1.2988238232)
		(1.37, 1.2946542099)
		(1.38, 1.2906081691)
		(1.39, 1.2866799173)
		(1.4, 1.2828640477)
		(1.41, 1.2791554977)
		(1.42, 1.275549521)
		(1.43, 1.2720416623)
		(1.44, 1.2686277339)
		(1.45, 1.2653037953)
		(1.46, 1.2620661346)
		(1.47, 1.2589112511)
		(1.48, 1.2558358398)
		(1.49, 1.252836778)
		(1.5, 1.2499111117)
		(1.51, 1.2470560442)
		(1.52, 1.2442689253)
		(1.53, 1.2415472418)
		(1.54, 1.2388886077)
		(1.55, 1.2362907569)
		(1.56, 1.2337515345)
		(1.57, 1.2312688908)
		(1.58, 1.2288408739)
		(1.59, 1.2264656244)
		(1.6, 1.2241413691)
		(1.61, 1.2218664168)
		(1.62, 1.2196391525)
		(1.63, 1.2174580339)
		(1.64, 1.2153215864)
		(1.65, 1.2132284001)
		(1.66, 1.2111771256)
		(1.67, 1.2091664708)
		(1.68, 1.2071951982)
		(1.69, 1.2052621214)
		(1.7, 1.2033661029)
		(1.71, 1.2015060511)
		(1.72, 1.1996809184)
		(1.73, 1.1978896984)
		(1.74, 1.1961314244)
		(1.75, 1.1944051671)
		(1.76, 1.1927100328)
		(1.77, 1.1910451617)
		(1.78, 1.1894097262)
		(1.79, 1.1878029295)
		(1.8, 1.186224004)
		(1.81, 1.1846722102)
		(1.82, 1.1831468348)
		(1.83, 1.1816471904)
		(1.84, 1.1801726137)
		(1.85, 1.1787224643)
		(1.86, 1.1772961242)
		(1.87, 1.1758929967)
		(1.88, 1.1745125051)
		(1.89, 1.1731540922)
		(1.9, 1.1718172194)
		(1.91, 1.1705013657)
		(1.92, 1.1692060275)
		(1.93, 1.167930717)
		(1.94, 1.1666749626)
		(1.95, 1.1654383075)
		(1.96, 1.1642203092)
		(1.97, 1.1630205391)
		(1.98, 1.1618385821)
		(1.99, 1.1606740357)
		(2.0, 1.1595265097)
	};

	\addplot[ blue , thick] coordinates {
		(1.0, 2.0)
		(1.01, 1.9387047513)
		(1.02, 1.8930929239)
		(1.03, 1.853849694)
		(1.04, 1.8188398378)
		(1.05, 1.7870069014)
		(1.06, 1.7577092525)
		(1.07, 1.7305126)
		(1.08, 1.705102307)
		(1.09, 1.6812394878)
		(1.1, 1.6587364406)
		(1.11, 1.6374417619)
		(1.12, 1.6172307723)
		(1.13, 1.5979990628)
		(1.14, 1.5796579792)
		(1.15, 1.5621313614)
		(1.16, 1.545353129)
		(1.17, 1.5292654531)
		(1.18, 1.5138173453)
		(1.19, 1.498963551)
		(1.2, 1.484663669)
		(1.21, 1.4708814415)
		(1.22, 1.457584176)
		(1.23, 1.4447422687)
		(1.24, 1.4323288084)
		(1.25, 1.4203192453)
		(1.26, 1.4086911107)
		(1.27, 1.3974237788)
		(1.28, 1.3864982635)
		(1.29, 1.3758970425)
		(1.3, 1.3656039065)
		(1.31, 1.3556038269)
		(1.32, 1.3458828408)
		(1.33, 1.3364279501)
		(1.34, 1.3272270326)
		(1.35, 1.3182687633)
		(1.36, 1.3095425448)
		(1.37, 1.3010384449)
		(1.38, 1.2927471418)
		(1.39, 1.2846598738)
		(1.4, 1.2767683951)
		(1.41, 1.2690649358)
		(1.42, 1.2615421652)
		(1.43, 1.2541931594)
		(1.44, 1.2470113718)
		(1.45, 1.2399906054)
		(1.46, 1.233124989)
		(1.47, 1.2264089541)
		(1.48, 1.2198372148)
		(1.49, 1.2134047491)
		(1.5, 1.2071067812)
		(1.51, 1.2009387662)
		(1.52, 1.1948963754)
		(1.53, 1.1889754825)
		(1.54, 1.1831721518)
		(1.55, 1.1774826264)
		(1.56, 1.1719033178)
		(1.57, 1.1664307958)
		(1.58, 1.1610617798)
		(1.59, 1.1557931299)
		(1.6, 1.1506218396)
		(1.61, 1.1455450278)
		(1.62, 1.1405599327)
		(1.63, 1.1356639047)
		(1.64, 1.1308544012)
		(1.65, 1.1261289803)
		(1.66, 1.1214852963)
		(1.67, 1.1169210943)
		(1.68, 1.112434206)
		(1.69, 1.1080225451)
		(1.7, 1.1036841036)
		(1.71, 1.0994169478)
		(1.72, 1.0952192149)
		(1.73, 1.0910891093)
		(1.74, 1.0870249001)
		(1.75, 1.0830249175)
		(1.76, 1.0790875504)
		(1.77, 1.0752112435)
		(1.78, 1.0713944949)
		(1.79, 1.067635854)
		(1.8, 1.0639339188)
		(1.81, 1.060287334)
		(1.82, 1.0566947891)
		(1.83, 1.0531550165)
		(1.84, 1.0496667894)
		(1.85, 1.0462289206)
		(1.86, 1.0428402603)
		(1.87, 1.0394996953)
		(1.88, 1.0362061466)
		(1.89, 1.0329585688)
		(1.9, 1.0297559486)
		(1.91, 1.0265973033)
		(1.92, 1.0234816797)
		(1.93, 1.0204081532)
		(1.94, 1.0173758263)
		(1.95, 1.0143838281)
		(1.96, 1.0114313128)
		(1.97, 1.008517459)
		(1.98, 1.0056414688)
		(1.99, 1.002802567)
		(2.0, 1.0)
	};

	\addlegendentry[no markers, teal]{brute}
	\addlegendentry[no markers, red]{$\alpha = 1.0$, $c = 3.618$}
	\addlegendentry[no markers, blue]{$\alpha = 2.0$, $c = 1.0$}

	\end{axis}
\end{tikzpicture}
 \end{subfigure}%

 \begin{subfigure}{.5\textwidth}
  \centering
  \caption{{\sc Tournament Feedback Vertex Set}}
  \begin{tikzpicture}[scale = 0.9]
	\begin{axis}[xmin = 1, xmax = 3, ymin = 0.9, ymax = 2.1, xlabel = {approximation ratio}]

	\addplot[teal, thick] coordinates {
		(1.0, 2.0)
		(1.02, 1.9062508454)
		(1.04, 1.8440491304)
		(1.06, 1.794081097)
		(1.08, 1.7518817491)
		(1.1, 1.7152667656)
		(1.12, 1.6829321593)
		(1.14, 1.6540130203)
		(1.16, 1.6278963084)
		(1.18, 1.6041270506)
		(1.2, 1.5823559323)
		(1.22, 1.5623076557)
		(1.24, 1.5437606905)
		(1.26, 1.5265337131)
		(1.28, 1.510476187)
		(1.3, 1.4954616201)
		(1.32, 1.4813826173)
		(1.34, 1.4681471696)
		(1.36, 1.4556758212)
		(1.38, 1.443899471)
		(1.4, 1.4327576428)
		(1.42, 1.4221971069)
		(1.44, 1.4121707695)
		(1.46, 1.4026367697)
		(1.48, 1.3935577374)
		(1.5, 1.3849001795)
		(1.52, 1.3766339674)
		(1.54, 1.3687319076)
		(1.56, 1.3611693784)
		(1.58, 1.3539240206)
		(1.6, 1.3469754742)
		(1.62, 1.3403051519)
		(1.64, 1.3338960432)
		(1.66, 1.3277325456)
		(1.68, 1.3218003168)
		(1.7, 1.3160861463)
		(1.72, 1.3105778425)
		(1.74, 1.3052641335)
		(1.76, 1.3001345795)
		(1.78, 1.2951794954)
		(1.8, 1.2903898821)
		(1.82, 1.2857573652)
		(1.84, 1.2812741403)
		(1.86, 1.2769329244)
		(1.88, 1.2727269118)
		(1.9, 1.2686497351)
		(1.92, 1.2646954293)
		(1.94, 1.2608584001)
		(1.96, 1.2571333949)
		(1.98, 1.2535154767)
		(2.0, 1.25)
		(2.02, 1.2465825894)
		(2.04, 1.24325912)
		(2.06, 1.2400256991)
		(2.08, 1.23687865)
		(2.1, 1.2338144969)
		(2.12, 1.2308299514)
		(2.14, 1.2279218994)
		(2.16, 1.2250873898)
		(2.18, 1.2223236237)
		(2.2, 1.2196279447)
		(2.22, 1.2169978296)
		(2.24, 1.2144308803)
		(2.26, 1.2119248158)
		(2.28, 1.2094774652)
		(2.3, 1.2070867609)
		(2.32, 1.2047507326)
		(2.34, 1.2024675014)
		(2.36, 1.2002352749)
		(2.38, 1.1980523416)
		(2.4, 1.1959170667)
		(2.42, 1.1938278879)
		(2.44, 1.1917833111)
		(2.46, 1.1897819069)
		(2.48, 1.1878223069)
		(2.5, 1.1859032006)
		(2.52, 1.1840233324)
		(2.54, 1.1821814985)
		(2.56, 1.1803765443)
		(2.58, 1.178607362)
		(2.6, 1.1768728882)
		(2.62, 1.1751721016)
		(2.64, 1.1735040209)
		(2.66, 1.171867703)
		(2.68, 1.170262241)
		(2.7, 1.1686867625)
		(2.72, 1.1671404279)
		(2.74, 1.1656224291)
		(2.76, 1.1641319876)
		(2.78, 1.1626683537)
		(2.8, 1.1612308047)
		(2.82, 1.1598186437)
		(2.84, 1.1584311989)
		(2.86, 1.157067822)
		(2.88, 1.1557278873)
		(2.9, 1.1544107911)
		(2.92, 1.1531159499)
		(2.94, 1.1518428004)
		(2.96, 1.1505907982)
		(2.98, 1.149359417)
		(3.0, 1.1481481481)
	};

	\addplot[ red , thick] coordinates {
		(1.0, 1.5)
		(1.02, 1.4320002296)
		(1.04, 1.3924270233)
		(1.06, 1.3626484542)
		(1.08, 1.338641818)
		(1.1, 1.3185468406)
		(1.12, 1.3013082254)
		(1.14, 1.28625752)
		(1.16, 1.2729401877)
		(1.18, 1.2610312763)
		(1.2, 1.2502895893)
		(1.22, 1.2405307721)
		(1.24, 1.2316105494)
		(1.26, 1.2234137922)
		(1.28, 1.2158471143)
		(1.3, 1.2088337011)
		(1.32, 1.2023096021)
		(1.34, 1.196221012)
		(1.36, 1.1905222388)
		(1.38, 1.1851741577)
		(1.4, 1.1801430189)
		(1.42, 1.1753995138)
		(1.44, 1.1709180366)
		(1.46, 1.1666760926)
		(1.48, 1.1626538207)
		(1.5, 1.1588336037)
		(1.52, 1.1551997476)
		(1.54, 1.1517382168)
		(1.56, 1.1484364122)
		(1.58, 1.145282986)
		(1.6, 1.1422676845)
		(1.62, 1.139381215)
		(1.64, 1.1366151321)
		(1.66, 1.1339617402)
		(1.68, 1.1314140096)
		(1.7, 1.128965504)
		(1.72, 1.1266103169)
		(1.74, 1.124343017)
		(1.76, 1.1221585999)
		(1.78, 1.1200524458)
		(1.8, 1.1180202825)
		(1.82, 1.116058152)
		(1.84, 1.114162382)
		(1.86, 1.1123295593)
		(1.88, 1.1105565075)
		(1.9, 1.1088402657)
		(1.92, 1.1071780705)
		(1.94, 1.1055673394)
		(1.96, 1.1040056556)
		(1.98, 1.1024907552)
		(2.0, 1.1010205144)
		(2.02, 1.0995929392)
		(2.04, 1.0982061549)
		(2.06, 1.0968583977)
		(2.08, 1.0955480058)
		(2.1, 1.0942734127)
		(2.12, 1.0930331398)
		(2.14, 1.0918257904)
		(2.16, 1.0906500442)
		(2.18, 1.0895046516)
		(2.2, 1.0883884294)
		(2.22, 1.0873002563)
		(2.24, 1.0862390684)
		(2.26, 1.0852038562)
		(2.28, 1.0841936606)
		(2.3, 1.0832075699)
		(2.32, 1.082244717)
		(2.34, 1.0813042763)
		(2.36, 1.0803854618)
		(2.38, 1.079487524)
		(2.4, 1.0786097485)
		(2.42, 1.0777514534)
		(2.44, 1.0769119876)
		(2.46, 1.0760907293)
		(2.48, 1.0752870839)
		(2.5, 1.074500483)
		(2.52, 1.0737303825)
		(2.54, 1.0729762616)
		(2.56, 1.0722376216)
		(2.58, 1.0715139844)
		(2.6, 1.0708048918)
		(2.62, 1.070109904)
		(2.64, 1.0694285992)
		(2.66, 1.0687605726)
		(2.68, 1.0681054349)
		(2.7, 1.0674628126)
		(2.72, 1.0668323462)
		(2.74, 1.0662136902)
		(2.76, 1.0656065122)
		(2.78, 1.0650104923)
		(2.8, 1.0644253224)
		(2.82, 1.0638507058)
		(2.84, 1.0632863567)
		(2.86, 1.0627319996)
		(2.88, 1.0621873687)
		(2.9, 1.0616522079)
		(2.92, 1.0611262699)
		(2.94, 1.0606093161)
		(2.96, 1.060101116)
		(2.98, 1.0596014473)
		(3.0, 1.0591100949)
	};

	\addplot[ blue , thick] coordinates {
		(1.0, 2.0)
		(1.02, 1.9058941662)
		(1.04, 1.842764221)
		(1.06, 1.7914558434)
		(1.08, 1.7476145144)
		(1.1, 1.7091350108)
		(1.12, 1.6747714482)
		(1.14, 1.6437024677)
		(1.16, 1.6153481991)
		(1.18, 1.5892792572)
		(1.2, 1.5651662684)
		(1.22, 1.5427496004)
		(1.24, 1.521820106)
		(1.26, 1.5022062939)
		(1.28, 1.483765458)
		(1.3, 1.4663773513)
		(1.32, 1.4499395578)
		(1.34, 1.4343640312)
		(1.36, 1.4195744567)
		(1.38, 1.405504207)
		(1.4, 1.3920947355)
		(1.42, 1.3792942966)
		(1.44, 1.3670569155)
		(1.46, 1.3553415491)
		(1.48, 1.3441113975)
		(1.5, 1.3333333333)
		(1.52, 1.3229774253)
		(1.54, 1.3130165381)
		(1.56, 1.3034259933)
		(1.58, 1.2941832811)
		(1.6, 1.2852678138)
		(1.62, 1.276660713)
		(1.64, 1.2683446265)
		(1.66, 1.2603035687)
		(1.68, 1.2525227816)
		(1.7, 1.2449886134)
		(1.72, 1.2376884114)
		(1.74, 1.2306104278)
		(1.76, 1.2237437366)
		(1.78, 1.2170781592)
		(1.8, 1.2106041992)
		(1.82, 1.2043129832)
		(1.84, 1.1981962085)
		(1.86, 1.1922460957)
		(1.88, 1.186455347)
		(1.9, 1.180817107)
		(1.92, 1.1753249293)
		(1.94, 1.1699727446)
		(1.96, 1.1647548325)
		(1.98, 1.1596657961)
		(2.0, 1.1547005384)
		(2.02, 1.1498542408)
		(2.04, 1.145122344)
		(2.06, 1.1405005299)
		(2.08, 1.1359847055)
		(2.1, 1.1315709875)
		(2.12, 1.1272556893)
		(2.14, 1.1230353073)
		(2.16, 1.1189065102)
		(2.18, 1.1148661277)
		(2.2, 1.1109111406)
		(2.22, 1.1070386718)
		(2.24, 1.1032459776)
		(2.26, 1.0995304399)
		(2.28, 1.0958895588)
		(2.3, 1.0923209458)
		(2.32, 1.0888223176)
		(2.34, 1.08539149)
		(2.36, 1.0820263727)
		(2.38, 1.0787249639)
		(2.4, 1.0754853454)
		(2.42, 1.0723056786)
		(2.44, 1.0691842)
		(2.46, 1.0661192173)
		(2.48, 1.063109106)
		(2.5, 1.0601523054)
		(2.52, 1.0572473159)
		(2.54, 1.054392696)
		(2.56, 1.0515870589)
		(2.58, 1.0488290705)
		(2.6, 1.0461174463)
		(2.62, 1.0434509495)
		(2.64, 1.0408283884)
		(2.66, 1.0382486146)
		(2.68, 1.0357105208)
		(2.7, 1.0332130391)
		(2.72, 1.0307551389)
		(2.74, 1.0283358256)
		(2.76, 1.0259541389)
		(2.78, 1.0236091512)
		(2.8, 1.0212999661)
		(2.82, 1.0190257175)
		(2.84, 1.0167855678)
		(2.86, 1.014578707)
		(2.88, 1.0124043513)
		(2.9, 1.0102617425)
		(2.92, 1.0081501464)
		(2.94, 1.006068852)
		(2.96, 1.0040171708)
		(2.98, 1.0019944357)
		(3.0, 1.0)
	};

	\addlegendentry[no markers, teal]{brute}
	\addlegendentry[no markers, red]{$\alpha = 1.0$, $c = 2.0$}
	\addlegendentry[no markers, blue]{$\alpha = 3.0$, $c = 1.0$}

	\end{axis}
\end{tikzpicture}
 \end{subfigure}%
 \begin{subfigure}{.5\textwidth}
  \centering
  \caption{{\sc $3$-Hittng Set}}
  \begin{tikzpicture}[scale = 0.9]
	\begin{axis}[xmin = 1, xmax = 3, ymin = 0.9, ymax = 2.1, xlabel = {approximation ratio}]

	\addplot[teal, thick] coordinates {
		(1.0, 2.0)
		(1.02, 1.9062508454)
		(1.04, 1.8440491304)
		(1.06, 1.794081097)
		(1.08, 1.7518817491)
		(1.1, 1.7152667656)
		(1.12, 1.6829321593)
		(1.14, 1.6540130203)
		(1.16, 1.6278963084)
		(1.18, 1.6041270506)
		(1.2, 1.5823559323)
		(1.22, 1.5623076557)
		(1.24, 1.5437606905)
		(1.26, 1.5265337131)
		(1.28, 1.510476187)
		(1.3, 1.4954616201)
		(1.32, 1.4813826173)
		(1.34, 1.4681471696)
		(1.36, 1.4556758212)
		(1.38, 1.443899471)
		(1.4, 1.4327576428)
		(1.42, 1.4221971069)
		(1.44, 1.4121707695)
		(1.46, 1.4026367697)
		(1.48, 1.3935577374)
		(1.5, 1.3849001795)
		(1.52, 1.3766339674)
		(1.54, 1.3687319076)
		(1.56, 1.3611693784)
		(1.58, 1.3539240206)
		(1.6, 1.3469754742)
		(1.62, 1.3403051519)
		(1.64, 1.3338960432)
		(1.66, 1.3277325456)
		(1.68, 1.3218003168)
		(1.7, 1.3160861463)
		(1.72, 1.3105778425)
		(1.74, 1.3052641335)
		(1.76, 1.3001345795)
		(1.78, 1.2951794954)
		(1.8, 1.2903898821)
		(1.82, 1.2857573652)
		(1.84, 1.2812741403)
		(1.86, 1.2769329244)
		(1.88, 1.2727269118)
		(1.9, 1.2686497351)
		(1.92, 1.2646954293)
		(1.94, 1.2608584001)
		(1.96, 1.2571333949)
		(1.98, 1.2535154767)
		(2.0, 1.25)
		(2.02, 1.2465825894)
		(2.04, 1.24325912)
		(2.06, 1.2400256991)
		(2.08, 1.23687865)
		(2.1, 1.2338144969)
		(2.12, 1.2308299514)
		(2.14, 1.2279218994)
		(2.16, 1.2250873898)
		(2.18, 1.2223236237)
		(2.2, 1.2196279447)
		(2.22, 1.2169978296)
		(2.24, 1.2144308803)
		(2.26, 1.2119248158)
		(2.28, 1.2094774652)
		(2.3, 1.2070867609)
		(2.32, 1.2047507326)
		(2.34, 1.2024675014)
		(2.36, 1.2002352749)
		(2.38, 1.1980523416)
		(2.4, 1.1959170667)
		(2.42, 1.1938278879)
		(2.44, 1.1917833111)
		(2.46, 1.1897819069)
		(2.48, 1.1878223069)
		(2.5, 1.1859032006)
		(2.52, 1.1840233324)
		(2.54, 1.1821814985)
		(2.56, 1.1803765443)
		(2.58, 1.178607362)
		(2.6, 1.1768728882)
		(2.62, 1.1751721016)
		(2.64, 1.1735040209)
		(2.66, 1.171867703)
		(2.68, 1.170262241)
		(2.7, 1.1686867625)
		(2.72, 1.1671404279)
		(2.74, 1.1656224291)
		(2.76, 1.1641319876)
		(2.78, 1.1626683537)
		(2.8, 1.1612308047)
		(2.82, 1.1598186437)
		(2.84, 1.1584311989)
		(2.86, 1.157067822)
		(2.88, 1.1557278873)
		(2.9, 1.1544107911)
		(2.92, 1.1531159499)
		(2.94, 1.1518428004)
		(2.96, 1.1505907982)
		(2.98, 1.149359417)
		(3.0, 1.1481481481)
	};

	\addplot[ red , thick] coordinates {
		(1.0, 1.5387453875)
		(1.02, 1.4680176601)
		(1.04, 1.426140626)
		(1.06, 1.3943823623)
		(1.08, 1.3686503508)
		(1.1, 1.3470339938)
		(1.12, 1.3284407274)
		(1.14, 1.3121738284)
		(1.16, 1.2977568579)
		(1.18, 1.2848475816)
		(1.2, 1.2731909475)
		(1.22, 1.2625913358)
		(1.24, 1.2528951993)
		(1.26, 1.2439796939)
		(1.28, 1.2357449507)
		(1.3, 1.2281086608)
		(1.32, 1.2210021802)
		(1.34, 1.2143676657)
		(1.36, 1.2081559285)
		(1.38, 1.2023247984)
		(1.4, 1.1968378582)
		(1.42, 1.1916634517)
		(1.44, 1.1867738974)
		(1.46, 1.1821448572)
		(1.48, 1.177754827)
		(1.5, 1.1735847187)
		(1.52, 1.1696175184)
		(1.54, 1.1658380008)
		(1.56, 1.1622324923)
		(1.58, 1.1587886714)
		(1.6, 1.1554953994)
		(1.62, 1.1523425779)
		(1.64, 1.1493210257)
		(1.66, 1.1464223737)
		(1.68, 1.1436389749)
		(1.7, 1.1409638251)
		(1.72, 1.1383904956)
		(1.74, 1.1359130728)
		(1.76, 1.1335261069)
		(1.78, 1.1312245659)
		(1.8, 1.1290037951)
		(1.82, 1.1268594816)
		(1.84, 1.1247876228)
		(1.86, 1.1227844982)
		(1.88, 1.1208466445)
		(1.9, 1.1189708333)
		(1.92, 1.1171540506)
		(1.94, 1.1153934797)
		(1.96, 1.1136864843)
		(1.98, 1.1120305942)
		(2.0, 1.1104234923)
		(2.02, 1.1088630026)
		(2.04, 1.1073470792)
		(2.06, 1.1058737968)
		(2.08, 1.1044413417)
		(2.1, 1.1030480037)
		(2.12, 1.1016921686)
		(2.14, 1.1003723116)
		(2.16, 1.0990869908)
		(2.18, 1.0978348419)
		(2.2, 1.0966145727)
		(2.22, 1.0954249586)
		(2.24, 1.0942648377)
		(2.26, 1.0931331074)
		(2.28, 1.09202872)
		(2.3, 1.0909506796)
		(2.32, 1.0898980391)
		(2.34, 1.0888698965)
		(2.36, 1.0878653931)
		(2.38, 1.0868837101)
		(2.4, 1.0859240669)
		(2.42, 1.0849857184)
		(2.44, 1.0840679533)
		(2.46, 1.083170092)
		(2.48, 1.0822914849)
		(2.5, 1.0814315107)
		(2.52, 1.080589575)
		(2.54, 1.0797651086)
		(2.56, 1.0789575663)
		(2.58, 1.0781664259)
		(2.6, 1.0773911863)
		(2.62, 1.076631367)
		(2.64, 1.0758865072)
		(2.66, 1.0751561639)
		(2.68, 1.0744399122)
		(2.7, 1.0737373434)
		(2.72, 1.0730480647)
		(2.74, 1.0723716986)
		(2.76, 1.0717078815)
		(2.78, 1.0710562639)
		(2.8, 1.0704165089)
		(2.82, 1.0697882922)
		(2.84, 1.0691713014)
		(2.86, 1.0685652352)
		(2.88, 1.0679698033)
		(2.9, 1.0673847254)
		(2.92, 1.0668097314)
		(2.94, 1.0662445603)
		(2.96, 1.0656889603)
		(2.98, 1.0651426881)
		(3.0, 1.0646055087)
	};

	\addplot[ blue , thick] coordinates {
		(1.0, 2.0)
		(1.02, 1.9058941662)
		(1.04, 1.842764221)
		(1.06, 1.7914558434)
		(1.08, 1.7476145144)
		(1.1, 1.7091350108)
		(1.12, 1.6747714482)
		(1.14, 1.6437024677)
		(1.16, 1.6153481991)
		(1.18, 1.5892792572)
		(1.2, 1.5651662684)
		(1.22, 1.5427496004)
		(1.24, 1.521820106)
		(1.26, 1.5022062939)
		(1.28, 1.483765458)
		(1.3, 1.4663773513)
		(1.32, 1.4499395578)
		(1.34, 1.4343640312)
		(1.36, 1.4195744567)
		(1.38, 1.405504207)
		(1.4, 1.3920947355)
		(1.42, 1.3792942966)
		(1.44, 1.3670569155)
		(1.46, 1.3553415491)
		(1.48, 1.3441113975)
		(1.5, 1.3333333333)
		(1.52, 1.3229774253)
		(1.54, 1.3130165381)
		(1.56, 1.3034259933)
		(1.58, 1.2941832811)
		(1.6, 1.2852678138)
		(1.62, 1.276660713)
		(1.64, 1.2683446265)
		(1.66, 1.2603035687)
		(1.68, 1.2525227816)
		(1.7, 1.2449886134)
		(1.72, 1.2376884114)
		(1.74, 1.2306104278)
		(1.76, 1.2237437366)
		(1.78, 1.2170781592)
		(1.8, 1.2106041992)
		(1.82, 1.2043129832)
		(1.84, 1.1981962085)
		(1.86, 1.1922460957)
		(1.88, 1.186455347)
		(1.9, 1.180817107)
		(1.92, 1.1753249293)
		(1.94, 1.1699727446)
		(1.96, 1.1647548325)
		(1.98, 1.1596657961)
		(2.0, 1.1547005384)
		(2.02, 1.1498542408)
		(2.04, 1.145122344)
		(2.06, 1.1405005299)
		(2.08, 1.1359847055)
		(2.1, 1.1315709875)
		(2.12, 1.1272556893)
		(2.14, 1.1230353073)
		(2.16, 1.1189065102)
		(2.18, 1.1148661277)
		(2.2, 1.1109111406)
		(2.22, 1.1070386718)
		(2.24, 1.1032459776)
		(2.26, 1.0995304399)
		(2.28, 1.0958895588)
		(2.3, 1.0923209458)
		(2.32, 1.0888223176)
		(2.34, 1.08539149)
		(2.36, 1.0820263727)
		(2.38, 1.0787249639)
		(2.4, 1.0754853454)
		(2.42, 1.0723056786)
		(2.44, 1.0691842)
		(2.46, 1.0661192173)
		(2.48, 1.063109106)
		(2.5, 1.0601523054)
		(2.52, 1.0572473159)
		(2.54, 1.054392696)
		(2.56, 1.0515870589)
		(2.58, 1.0488290705)
		(2.6, 1.0461174463)
		(2.62, 1.0434509495)
		(2.64, 1.0408283884)
		(2.66, 1.0382486146)
		(2.68, 1.0357105208)
		(2.7, 1.0332130391)
		(2.72, 1.0307551389)
		(2.74, 1.0283358256)
		(2.76, 1.0259541389)
		(2.78, 1.0236091512)
		(2.8, 1.0212999661)
		(2.82, 1.0190257175)
		(2.84, 1.0167855678)
		(2.86, 1.014578707)
		(2.88, 1.0124043513)
		(2.9, 1.0102617425)
		(2.92, 1.0081501464)
		(2.94, 1.006068852)
		(2.96, 1.0040171708)
		(2.98, 1.0019944357)
		(3.0, 1.0)
	};

	\addlegendentry[no markers, teal]{brute}
	\addlegendentry[no markers, red]{$\alpha = 1.0$, $c = 2.168$}
	\addlegendentry[no markers, blue]{$\alpha = 3.0$, $c = 1.0$}

	\end{axis}
\end{tikzpicture}
 \end{subfigure}%

 \begin{subfigure}{.5\textwidth}
  \centering
  \caption{{\sc $4$-Hitting Set}}
  \begin{tikzpicture}[scale = 0.9]
	\begin{axis}[xmin = 1, xmax = 4, ymin = 0.9, ymax = 2.1, xlabel = {approximation ratio}]

	\addplot[teal, thick] coordinates {
		(1.0, 2.0)
		(1.025, 1.8891121359)
		(1.05, 1.8178991111)
		(1.075, 1.7618430249)
		(1.1, 1.7152667656)
		(1.125, 1.6754094984)
		(1.15, 1.6406357531)
		(1.175, 1.6098698521)
		(1.2, 1.5823559323)
		(1.225, 1.5575378525)
		(1.25, 1.534992244)
		(1.275, 1.5143882093)
		(1.3, 1.4954616201)
		(1.325, 1.4779979715)
		(1.35, 1.4618205222)
		(1.375, 1.4467818431)
		(1.4, 1.4327576428)
		(1.425, 1.4196421598)
		(1.45, 1.4073446605)
		(1.475, 1.3957867351)
		(1.5, 1.3849001795)
		(1.525, 1.3746253163)
		(1.55, 1.3649096489)
		(1.575, 1.3557067705)
		(1.6, 1.3469754742)
		(1.625, 1.3386790183)
		(1.65, 1.3307845174)
		(1.675, 1.3232624332)
		(1.7, 1.3160861463)
		(1.725, 1.3092315934)
		(1.75, 1.3026769593)
		(1.775, 1.2964024131)
		(1.8, 1.2903898821)
		(1.825, 1.2846228559)
		(1.85, 1.2790862173)
		(1.875, 1.273766095)
		(1.9, 1.2686497351)
		(1.925, 1.2637253881)
		(1.95, 1.2589822102)
		(1.975, 1.2544101757)
		(2.0, 1.25)
		(2.025, 1.245743071)
		(2.05, 1.2416313881)
		(2.075, 1.2376575082)
		(2.1, 1.2338144969)
		(2.125, 1.2300958852)
		(2.15, 1.2264956302)
		(2.175, 1.2230080807)
		(2.2, 1.2196279447)
		(2.225, 1.2163502616)
		(2.25, 1.2131703756)
		(2.275, 1.2100839132)
		(2.3, 1.2070867609)
		(2.325, 1.2041750465)
		(2.35, 1.2013451214)
		(2.375, 1.1985935443)
		(2.4, 1.1959170667)
		(2.425, 1.1933126192)
		(2.45, 1.1907772993)
		(2.475, 1.1883083603)
		(2.5, 1.1859032006)
		(2.525, 1.1835593541)
		(2.55, 1.1812744817)
		(2.575, 1.1790463628)
		(2.6, 1.1768728882)
		(2.625, 1.1747520527)
		(2.65, 1.1726819491)
		(2.675, 1.170660762)
		(2.7, 1.1686867625)
		(2.725, 1.1667583026)
		(2.75, 1.1648738112)
		(2.775, 1.163031789)
		(2.8, 1.1612308047)
		(2.825, 1.1594694911)
		(2.85, 1.1577465416)
		(2.875, 1.1560607067)
		(2.9, 1.1544107911)
		(2.925, 1.1527956505)
		(2.95, 1.1512141892)
		(2.975, 1.1496653572)
		(3.0, 1.1481481481)
		(3.025, 1.1466615969)
		(3.05, 1.1452047773)
		(3.075, 1.1437768006)
		(3.1, 1.1423768132)
		(3.125, 1.1410039951)
		(3.15, 1.1396575582)
		(3.175, 1.1383367447)
		(3.2, 1.1370408259)
		(3.225, 1.1357691004)
		(3.25, 1.1345208933)
		(3.275, 1.1332955545)
		(3.3, 1.1320924579)
		(3.325, 1.130911)
		(3.35, 1.1297505994)
		(3.375, 1.128610695)
		(3.4, 1.1274907459)
		(3.425, 1.1263902301)
		(3.45, 1.1253086435)
		(3.475, 1.1242454997)
		(3.5, 1.1232003287)
		(3.525, 1.1221726764)
		(3.55, 1.1211621041)
		(3.575, 1.1201681876)
		(3.6, 1.1191905166)
		(3.625, 1.1182286943)
		(3.65, 1.1172823368)
		(3.675, 1.1163510726)
		(3.7, 1.115434542)
		(3.725, 1.1145323966)
		(3.75, 1.113644299)
		(3.775, 1.1127699224)
		(3.8, 1.1119089499)
		(3.825, 1.1110610746)
		(3.85, 1.1102259987)
		(3.875, 1.1094034335)
		(3.9, 1.1085930989)
		(3.925, 1.1077947231)
		(3.95, 1.1070080425)
		(3.975, 1.1062328009)
		(4.0, 1.10546875)
	};

	\addplot[red , thick] coordinates {
		(1.0, 1.6843434343434343)
		(1.025, 1.5909150915)
		(1.05, 1.5349235534)
		(1.075, 1.4923922962)
		(1.1, 1.4579649251)
		(1.125, 1.4291082309)
		(1.15, 1.404359197)
		(1.175, 1.3827776982)
		(1.2, 1.3637170727)
		(1.225, 1.3467107592)
		(1.25, 1.3314099234)
		(1.275, 1.3175464064)
		(1.3, 1.3049094079)
		(1.325, 1.293330134)
		(1.35, 1.2826713146)
		(1.375, 1.2728198269)
		(1.4, 1.2636813716)
		(1.425, 1.2551765467)
		(1.45, 1.2472378967)
		(1.475, 1.2398076602)
		(1.5, 1.2328360231)
		(1.525, 1.2262797484)
		(1.55, 1.2201010878)
		(1.575, 1.2142669091)
		(1.6, 1.2087479887)
		(1.625, 1.2035184342)
		(1.65, 1.198555209)
		(1.675, 1.1938377378)
		(1.7, 1.1893475776)
		(1.725, 1.1850681411)
		(1.75, 1.180984463)
		(1.775, 1.1770830017)
		(1.8, 1.1733514696)
		(1.825, 1.1697786884)
		(1.85, 1.1663544633)
		(1.875, 1.1630694753)
		(1.9, 1.159915187)
		(1.925, 1.1568837604)
		(1.95, 1.1539679859)
		(1.975, 1.1511612187)
		(2.0, 1.1484573239)
		(2.025, 1.1458506278)
		(2.05, 1.143335874)
		(2.075, 1.1409081858)
		(2.1, 1.1385630313)
		(2.125, 1.1362961934)
		(2.15, 1.1341037424)
		(2.175, 1.1319820114)
		(2.2, 1.1299275746)
		(2.225, 1.1279372273)
		(2.25, 1.1260079684)
		(2.275, 1.1241369837)
		(2.3, 1.1223216321)
		(2.325, 1.1205594318)
		(2.35, 1.1188480484)
		(2.375, 1.1171852842)
		(2.4, 1.115569068)
		(2.425, 1.113997446)
		(2.45, 1.1124685737)
		(2.475, 1.1109807082)
		(2.5, 1.109532201)
		(2.525, 1.1081214921)
		(2.55, 1.1067471035)
		(2.575, 1.1054076343)
		(2.6, 1.1041017554)
		(2.625, 1.1028282051)
		(2.65, 1.1015857847)
		(2.675, 1.1003733546)
		(2.7, 1.0991898307)
		(2.725, 1.0980341811)
		(2.75, 1.096905423)
		(2.775, 1.0958026194)
		(2.8, 1.0947248769)
		(2.825, 1.0936713431)
		(2.85, 1.0926412039)
		(2.875, 1.0916336818)
		(2.9, 1.0906480337)
		(2.925, 1.0896835487)
		(2.95, 1.088739547)
		(2.975, 1.0878153774)
		(3.0, 1.0869104167)
		(3.025, 1.0860240673)
		(3.05, 1.0851557566)
		(3.075, 1.0843049353)
		(3.1, 1.0834710764)
		(3.125, 1.082653674)
		(3.15, 1.0818522422)
		(3.175, 1.0810663142)
		(3.2, 1.0802954414)
		(3.225, 1.0795391923)
		(3.25, 1.0787971521)
		(3.275, 1.0780689213)
		(3.3, 1.0773541156)
		(3.325, 1.0766523649)
		(3.35, 1.0759633124)
		(3.375, 1.0752866147)
		(3.4, 1.0746219402)
		(3.425, 1.0739689696)
		(3.45, 1.0733273947)
		(3.475, 1.072696918)
		(3.5, 1.0720772524)
		(3.525, 1.0714681207)
		(3.55, 1.0708692551)
		(3.575, 1.0702803969)
		(3.6, 1.0697012959)
		(3.625, 1.0691317105)
		(3.65, 1.0685714068)
		(3.675, 1.0680201587)
		(3.7, 1.0674777472)
		(3.725, 1.0669439606)
		(3.75, 1.0664185939)
		(3.775, 1.0659014483)
		(3.8, 1.0653923317)
		(3.825, 1.0648910574)
		(3.85, 1.0643974451)
		(3.875, 1.0639113194)
		(3.9, 1.0634325107)
		(3.925, 1.0629608542)
		(3.95, 1.0624961903)
		(3.975, 1.0620383639)
		(4.0, 1.0615872246)
	};

	\addplot[blue , thick] coordinates {
		(1.0, 2.0)
		(1.025, 1.8890983316)
		(1.05, 1.8178014188)
		(1.075, 1.7615508842)
		(1.1, 1.7146518149)
		(1.125, 1.6743399655)
		(1.15, 1.6389849419)
		(1.175, 1.6075206948)
		(1.2, 1.5792030575)
		(1.225, 1.5534881571)
		(1.25, 1.5299646094)
		(1.275, 1.5083127616)
		(1.3, 1.4882787783)
		(1.325, 1.4696574264)
		(1.35, 1.4522802276)
		(1.375, 1.4360070632)
		(1.4, 1.4207200765)
		(1.425, 1.4063191494)
		(1.45, 1.3927184847)
		(1.475, 1.3798439817)
		(1.5, 1.3676311926)
		(1.525, 1.3560237094)
		(1.55, 1.344971878)
		(1.575, 1.3344317601)
		(1.6, 1.3243642892)
		(1.625, 1.3147345774)
		(1.65, 1.305511342)
		(1.675, 1.2966664273)
		(1.7, 1.2881744037)
		(1.725, 1.2800122292)
		(1.75, 1.2721589617)
		(1.775, 1.2645955137)
		(1.8, 1.257304442)
		(1.825, 1.2502697655)
		(1.85, 1.2434768091)
		(1.875, 1.236912066)
		(1.9, 1.2305630792)
		(1.925, 1.2244183369)
		(1.95, 1.2184671804)
		(1.975, 1.2126997237)
		(2.0, 1.2071067812)
		(2.025, 1.2016798048)
		(2.05, 1.1964108268)
		(2.075, 1.1912924101)
		(2.1, 1.1863176024)
		(2.125, 1.1814798961)
		(2.15, 1.1767731921)
		(2.175, 1.1721917669)
		(2.2, 1.1677302427)
		(2.225, 1.1633835614)
		(2.25, 1.1591469599)
		(2.275, 1.1550159482)
		(2.3, 1.1509862893)
		(2.325, 1.1470539815)
		(2.35, 1.1432152411)
		(2.375, 1.1394664875)
		(2.4, 1.1358043294)
		(2.425, 1.1322255519)
		(2.45, 1.1287271049)
		(2.475, 1.1253060923)
		(2.5, 1.1219597619)
		(2.525, 1.1186854965)
		(2.55, 1.1154808055)
		(2.575, 1.112343317)
		(2.6, 1.1092707704)
		(2.625, 1.1062610101)
		(2.65, 1.103311979)
		(2.675, 1.100421713)
		(2.7, 1.0975883355)
		(2.725, 1.0948100523)
		(2.75, 1.0920851475)
		(2.775, 1.0894119784)
		(2.8, 1.0867889723)
		(2.825, 1.084214622)
		(2.85, 1.0816874829)
		(2.875, 1.0792061692)
		(2.9, 1.0767693515)
		(2.925, 1.074375753)
		(2.95, 1.0720241477)
		(2.975, 1.0697133572)
		(3.0, 1.0674422489)
		(3.025, 1.065209733)
		(3.05, 1.0630147614)
		(3.075, 1.0608563247)
		(3.1, 1.0587334512)
		(3.125, 1.0566452044)
		(3.15, 1.0545906819)
		(3.175, 1.0525690138)
		(3.2, 1.0505793608)
		(3.225, 1.0486209131)
		(3.25, 1.0466928892)
		(3.275, 1.0447945344)
		(3.3, 1.0429251198)
		(3.325, 1.0410839412)
		(3.35, 1.0392703177)
		(3.375, 1.0374835911)
		(3.4, 1.035723125)
		(3.425, 1.0339883033)
		(3.45, 1.0322785302)
		(3.475, 1.0305932286)
		(3.5, 1.0289318397)
		(3.525, 1.0272938225)
		(3.55, 1.0256786525)
		(3.575, 1.0240858215)
		(3.6, 1.0225148369)
		(3.625, 1.0209652209)
		(3.65, 1.0194365101)
		(3.675, 1.0179282549)
		(3.7, 1.0164400191)
		(3.725, 1.014971379)
		(3.75, 1.0135219236)
		(3.775, 1.0120912533)
		(3.8, 1.0106789804)
		(3.825, 1.0092847277)
		(3.85, 1.0079081291)
		(3.875, 1.0065488283)
		(3.9, 1.0052064791)
		(3.925, 1.0038807449)
		(3.95, 1.0025712979)
		(3.975, 1.0012778197)
		(4.0, 1.0)
	};

	\addlegendentry[no markers, teal]{brute}
	\addlegendentry[no markers, red]{$\alpha = 1.0$, $c = 3.168$}
	\addlegendentry[no markers, blue]{$\alpha = 4.0$, $c = 1.0$}

	\end{axis}
\end{tikzpicture}
 \end{subfigure}%
 \begin{subfigure}{.5\textwidth}
  \centering
  \caption{{\sc $5$-Hitting Set}}
  \begin{tikzpicture}[scale = 0.9]
	\begin{axis}[xmin = 1, xmax = 5, ymin = 0.9, ymax = 2.1, xlabel = {approximation ratio}]

	\addplot[teal, thick] coordinates {
		(1.0, 2.0)
		(1.04, 1.8440491304)
		(1.08, 1.7518817491)
		(1.12, 1.6829321593)
		(1.16, 1.6278963084)
		(1.2, 1.5823559323)
		(1.24, 1.5437606905)
		(1.28, 1.510476187)
		(1.32, 1.4813826173)
		(1.36, 1.4556758212)
		(1.4, 1.4327576428)
		(1.44, 1.4121707695)
		(1.48, 1.3935577374)
		(1.52, 1.3766339674)
		(1.56, 1.3611693784)
		(1.6, 1.3469754742)
		(1.64, 1.3338960432)
		(1.68, 1.3218003168)
		(1.72, 1.3105778425)
		(1.76, 1.3001345795)
		(1.8, 1.2903898821)
		(1.84, 1.2812741403)
		(1.88, 1.2727269118)
		(1.92, 1.2646954293)
		(1.96, 1.2571333949)
		(2.0, 1.25)
		(2.04, 1.24325912)
		(2.08, 1.23687865)
		(2.12, 1.2308299514)
		(2.16, 1.2250873898)
		(2.2, 1.2196279447)
		(2.24, 1.2144308803)
		(2.28, 1.2094774652)
		(2.32, 1.2047507326)
		(2.36, 1.2002352749)
		(2.4, 1.1959170667)
		(2.44, 1.1917833111)
		(2.48, 1.1878223069)
		(2.52, 1.1840233324)
		(2.56, 1.1803765443)
		(2.6, 1.1768728882)
		(2.64, 1.1735040209)
		(2.68, 1.170262241)
		(2.72, 1.1671404279)
		(2.76, 1.1641319876)
		(2.8, 1.1612308047)
		(2.84, 1.1584311989)
		(2.88, 1.1557278873)
		(2.92, 1.1531159499)
		(2.96, 1.1505907982)
		(3.0, 1.1481481481)
		(3.04, 1.145783995)
		(3.08, 1.1434945907)
		(3.12, 1.1412764235)
		(3.16, 1.1391261994)
		(3.2, 1.1370408259)
		(3.24, 1.1350173961)
		(3.28, 1.1330531754)
		(3.32, 1.1311455887)
		(3.36, 1.1292922088)
		(3.4, 1.1274907459)
		(3.44, 1.1257390381)
		(3.48, 1.1240350424)
		(3.52, 1.1223768266)
		(3.56, 1.1207625621)
		(3.6, 1.1191905166)
		(3.64, 1.1176590481)
		(3.68, 1.1161665991)
		(3.72, 1.1147116912)
		(3.76, 1.1132929198)
		(3.8, 1.1119089499)
		(3.84, 1.1105585118)
		(3.88, 1.1092403969)
		(3.92, 1.1079534543)
		(3.96, 1.1066965872)
		(4.0, 1.10546875)
		(4.04, 1.1042689449)
		(4.08, 1.1030962198)
		(4.12, 1.1019496651)
		(4.16, 1.1008284116)
		(4.2, 1.0997316286)
		(4.24, 1.0986585211)
		(4.28, 1.0976083286)
		(4.32, 1.0965803226)
		(4.36, 1.0955738055)
		(4.4, 1.0945881086)
		(4.44, 1.0936225908)
		(4.48, 1.0926766371)
		(4.52, 1.0917496572)
		(4.56, 1.0908410845)
		(4.6, 1.0899503747)
		(4.64, 1.089077005)
		(4.68, 1.0882204727)
		(4.72, 1.0873802942)
		(4.76, 1.0865560047)
		(4.8, 1.0857471566)
		(4.84, 1.0849533191)
		(4.88, 1.0841740772)
		(4.92, 1.0834090312)
		(4.96, 1.082657796)
		(5.0, 1.08192)
	};

	\addplot[red , thick] coordinates {
		(1.0, 1.7600767754318618)
		(1.04, 1.6235631811)
		(1.08, 1.5473858372)
		(1.12, 1.4920228156)
		(1.16, 1.4487414411)
		(1.2, 1.4135031529)
		(1.24, 1.3840293092)
		(1.28, 1.3588885597)
		(1.32, 1.3371174434)
		(1.36, 1.3180349163)
		(1.4, 1.3011415166)
		(1.44, 1.2860602181)
		(1.48, 1.2724996865)
		(1.52, 1.2602303978)
		(1.56, 1.2490685398)
		(1.6, 1.2388648259)
		(1.64, 1.2294965206)
		(1.68, 1.2208616296)
		(1.72, 1.212874585)
		(1.76, 1.2054629873)
		(1.8, 1.1985651086)
		(1.84, 1.1921279545)
		(1.88, 1.1861057405)
		(1.92, 1.1804586825)
		(1.96, 1.1751520272)
		(2.0, 1.170155267)
		(2.04, 1.1654415005)
		(2.08, 1.1609869059)
		(2.12, 1.1567703062)
		(2.16, 1.1527728066)
		(2.2, 1.1489774909)
		(2.24, 1.1453691656)
		(2.28, 1.1419341432)
		(2.32, 1.138660058)
		(2.36, 1.135535708)
		(2.4, 1.1325509201)
		(2.44, 1.129696433)
		(2.48, 1.1269637962)
		(2.52, 1.1243452825)
		(2.56, 1.1218338112)
		(2.6, 1.1194228815)
		(2.64, 1.1171065135)
		(2.68, 1.1148791967)
		(2.72, 1.1127358446)
		(2.76, 1.1106717541)
		(2.8, 1.1086825699)
		(2.84, 1.1067642527)
		(2.88, 1.1049130507)
		(2.92, 1.1031254744)
		(2.96, 1.1013982738)
		(3.0, 1.0997284182)
		(3.04, 1.0981130779)
		(3.08, 1.0965496075)
		(3.12, 1.0950355311)
		(3.16, 1.093568529)
		(3.2, 1.0921464255)
		(3.24, 1.0907671777)
		(3.28, 1.0894288653)
		(3.32, 1.0881296819)
		(3.36, 1.0868679265)
		(3.4, 1.0856419955)
		(3.44, 1.0844503765)
		(3.48, 1.0832916413)
		(3.52, 1.0821644401)
		(3.56, 1.0810674966)
		(3.6, 1.0799996027)
		(3.64, 1.078959614)
		(3.68, 1.0779464457)
		(3.72, 1.076959069)
		(3.76, 1.0759965069)
		(3.8, 1.0750578316)
		(3.84, 1.0741421612)
		(3.88, 1.0732486566)
		(3.92, 1.0723765194)
		(3.96, 1.0715249892)
		(4.0, 1.0706933414)
		(4.04, 1.069880885)
		(4.08, 1.0690869608)
		(4.12, 1.0683109397)
		(4.16, 1.0675522208)
		(4.2, 1.0668102296)
		(4.24, 1.0660844173)
		(4.28, 1.0653742585)
		(4.32, 1.0646792508)
		(4.36, 1.0639989127)
		(4.4, 1.0633327833)
		(4.44, 1.0626804207)
		(4.48, 1.0620414011)
		(4.52, 1.0614153183)
		(4.56, 1.0608017821)
		(4.6, 1.0602004181)
		(4.64, 1.0596108665)
		(4.68, 1.0590327819)
		(4.72, 1.058465832)
		(4.76, 1.0579096972)
		(4.8, 1.0573640704)
		(4.84, 1.0568286556)
		(4.88, 1.0563031682)
		(4.92, 1.0557873339)
		(4.96, 1.0552808885)
		(5.0, 1.0547835774)
	};

	\addplot[blue , thick] coordinates {
		(1.0, 2.0)
		(1.04, 1.8440470982)
		(1.08, 1.7518556271)
		(1.12, 1.6828250339)
		(1.16, 1.6276201431)
		(1.2, 1.5818024139)
		(1.24, 1.5428124067)
		(1.28, 1.5090154788)
		(1.32, 1.4792975785)
		(1.36, 1.4528638316)
		(1.4, 1.4291270783)
		(1.44, 1.4076414882)
		(1.48, 1.3880608105)
		(1.52, 1.3701109813)
		(1.56, 1.3535715262)
		(1.6, 1.3382625715)
		(1.64, 1.3240355499)
		(1.68, 1.3107664107)
		(1.72, 1.2983505665)
		(1.76, 1.2866990723)
		(1.8, 1.2757356928)
		(1.84, 1.2653946226)
		(1.88, 1.255618693)
		(1.92, 1.2463579455)
		(1.96, 1.2375684868)
		(2.0, 1.2292115606)
		(2.04, 1.2212527892)
		(2.08, 1.2136615494)
		(2.12, 1.2064104547)
		(2.16, 1.1994749228)
		(2.2, 1.1928328131)
		(2.24, 1.1864641201)
		(2.28, 1.1803507126)
		(2.32, 1.1744761119)
		(2.36, 1.1688253012)
		(2.4, 1.1633845614)
		(2.44, 1.1581413292)
		(2.48, 1.1530840738)
		(2.52, 1.1482021894)
		(2.56, 1.1434859013)
		(2.6, 1.1389261834)
		(2.64, 1.1345146856)
		(2.68, 1.1302436698)
		(2.72, 1.1261059525)
		(2.76, 1.1220948553)
		(2.8, 1.1182041592)
		(2.84, 1.1144280648)
		(2.88, 1.1107611567)
		(2.92, 1.1071983706)
		(2.96, 1.1037349652)
		(3.0, 1.1003664956)
		(3.04, 1.0970887899)
		(3.08, 1.0938979281)
		(3.12, 1.0907902227)
		(3.16, 1.0877622013)
		(3.2, 1.0848105905)
		(3.24, 1.0819323016)
		(3.28, 1.0791244174)
		(3.32, 1.0763841802)
		(3.36, 1.0737089804)
		(3.4, 1.0710963468)
		(3.44, 1.0685439371)
		(3.48, 1.0660495294)
		(3.52, 1.0636110143)
		(3.56, 1.0612263879)
		(3.6, 1.0588937448)
		(3.64, 1.0566112721)
		(3.68, 1.0543772439)
		(3.72, 1.0521900155)
		(3.76, 1.0500480192)
		(3.8, 1.0479497592)
		(3.84, 1.0458938075)
		(3.88, 1.0438788005)
		(3.92, 1.0419034347)
		(3.96, 1.0399664634)
		(4.0, 1.038066694)
		(4.04, 1.0362029847)
		(4.08, 1.0343742417)
		(4.12, 1.0325794169)
		(4.16, 1.0308175053)
		(4.2, 1.0290875425)
		(4.24, 1.0273886034)
		(4.28, 1.0257197993)
		(4.32, 1.0240802764)
		(4.36, 1.0224692144)
		(4.4, 1.0208858244)
		(4.44, 1.0193293475)
		(4.48, 1.0177990534)
		(4.52, 1.0162942389)
		(4.56, 1.0148142269)
		(4.6, 1.0133583648)
		(4.64, 1.0119260239)
		(4.68, 1.0105165975)
		(4.72, 1.0091295008)
		(4.76, 1.0077641692)
		(4.8, 1.0064200579)
		(4.84, 1.0050966406)
		(4.88, 1.0037934092)
		(4.92, 1.0025098725)
		(4.96, 1.0012455558)
		(5.0, 1.0)
	};

	\addlegendentry[no markers, teal]{brute}
	\addlegendentry[no markers, red]{$\alpha = 1.0$, $c = 4.168$}
	\addlegendentry[no markers, blue]{$\alpha = 5.0$, $c = 1.0$}

	\end{axis}
\end{tikzpicture}
 \end{subfigure}%
 \caption{The figure shows running times for \textsc{Vertex Cover}, \textsc{Feedback Vertex Set}, \textsc{Tournament Feedback Vertex Set}, \textsc{$3$-Hitting Set}, \textsc{$4$-Hitting Set} and \textsc{$5$-Hitting Set}.
  A dot at $(\beta,d)$ means that the respective algorithm outputs an $\beta$-approximation in time $O^*(d^n)$.}
 \label{fig:runtimes_1}
\end{figure}
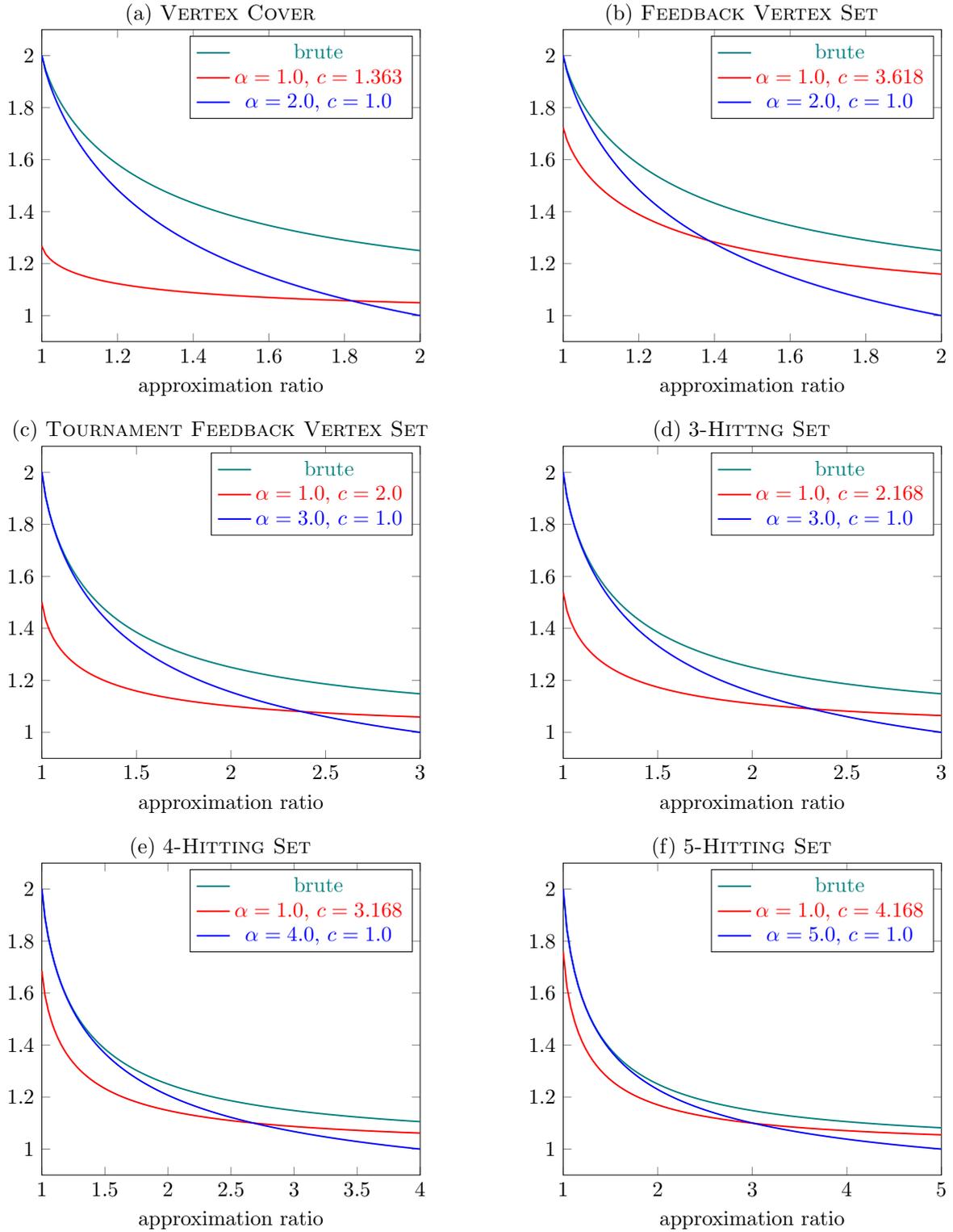

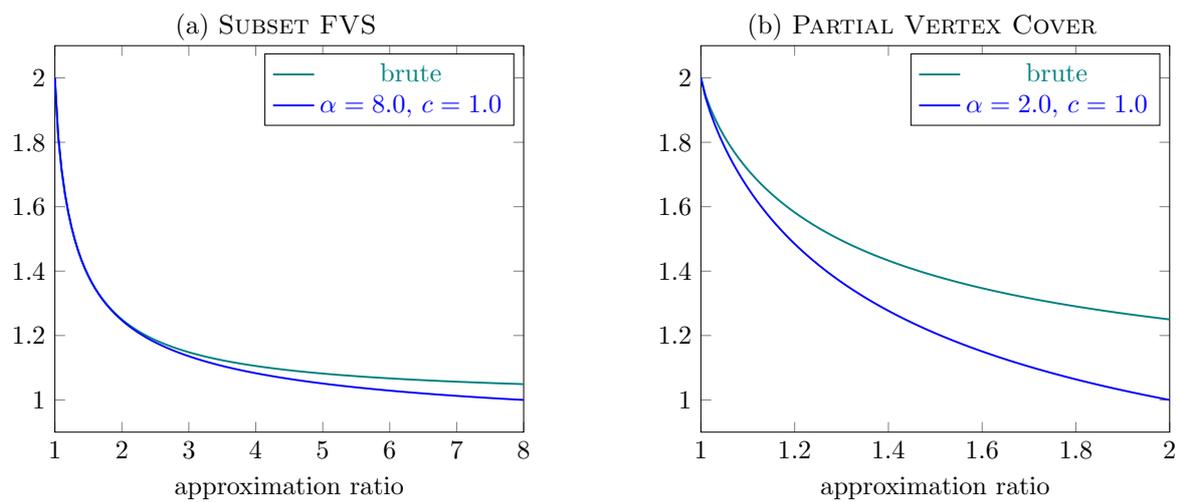
\begin{figure}[H]
 \centering
 \begin{subfigure}{.5\textwidth}
  \centering
  \caption{{\sc Subset FVS}}
  \begin{tikzpicture}[scale = 0.9]
	\begin{axis}[xmin = 1, xmax = 8, ymin = 0.9, ymax = 2.1, xlabel = {approximation ratio}]

	\addplot[teal, thick] coordinates {
		(1.0, 2.0)
		(1.05, 1.8178991111)
		(1.1, 1.7152667656)
		(1.15, 1.6406357531)
		(1.2, 1.5823559323)
		(1.25, 1.534992244)
		(1.3, 1.4954616201)
		(1.35, 1.4618205222)
		(1.4, 1.4327576428)
		(1.45, 1.4073446605)
		(1.5, 1.3849001795)
		(1.55, 1.3649096489)
		(1.6, 1.3469754742)
		(1.65, 1.3307845174)
		(1.7, 1.3160861463)
		(1.75, 1.3026769593)
		(1.8, 1.2903898821)
		(1.85, 1.2790862173)
		(1.9, 1.2686497351)
		(1.95, 1.2589822102)
		(2.0, 1.25)
		(2.05, 1.2416313881)
		(2.1, 1.2338144969)
		(2.15, 1.2264956302)
		(2.2, 1.2196279447)
		(2.25, 1.2131703756)
		(2.3, 1.2070867609)
		(2.35, 1.2013451214)
		(2.4, 1.1959170667)
		(2.45, 1.1907772993)
		(2.5, 1.1859032006)
		(2.55, 1.1812744817)
		(2.6, 1.1768728882)
		(2.65, 1.1726819491)
		(2.7, 1.1686867625)
		(2.75, 1.1648738112)
		(2.8, 1.1612308047)
		(2.85, 1.1577465416)
		(2.9, 1.1544107911)
		(2.95, 1.1512141892)
		(3.0, 1.1481481481)
		(3.05, 1.1452047773)
		(3.1, 1.1423768132)
		(3.15, 1.1396575582)
		(3.2, 1.1370408259)
		(3.25, 1.1345208933)
		(3.3, 1.1320924579)
		(3.35, 1.1297505994)
		(3.4, 1.1274907459)
		(3.45, 1.1253086435)
		(3.5, 1.1232003287)
		(3.55, 1.1211621041)
		(3.6, 1.1191905166)
		(3.65, 1.1172823368)
		(3.7, 1.115434542)
		(3.75, 1.113644299)
		(3.8, 1.1119089499)
		(3.85, 1.1102259987)
		(3.9, 1.1085930989)
		(3.95, 1.1070080425)
		(4.0, 1.10546875)
		(4.05, 1.1039732612)
		(4.1, 1.1025197264)
		(4.15, 1.1011063991)
		(4.2, 1.0997316286)
		(4.25, 1.0983938537)
		(4.3, 1.0970915966)
		(4.35, 1.0958234572)
		(4.4, 1.0945881086)
		(4.45, 1.0933842917)
		(4.5, 1.0922108113)
		(4.55, 1.0910665321)
		(4.6, 1.0899503747)
		(4.65, 1.0888613127)
		(4.7, 1.0877983687)
		(4.75, 1.0867606123)
		(4.8, 1.0857471566)
		(4.85, 1.0847571561)
		(4.9, 1.0837898041)
		(4.95, 1.0828443306)
		(5.0, 1.08192)
		(5.05, 1.0810161096)
		(5.1, 1.0801319875)
		(5.15, 1.0792669909)
		(5.2, 1.0784205047)
		(5.25, 1.0775919399)
		(5.3, 1.0767807323)
		(5.35, 1.0759863413)
		(5.4, 1.0752082483)
		(5.45, 1.074445956)
		(5.5, 1.0736989872)
		(5.55, 1.0729668837)
		(5.6, 1.0722492053)
		(5.65, 1.0715455293)
		(5.7, 1.0708554491)
		(5.75, 1.0701785739)
		(5.8, 1.0695145278)
		(5.85, 1.0688629488)
		(5.9, 1.0682234886)
		(5.95, 1.0675958119)
		(6.0, 1.0669795953)
		(6.05, 1.0663745276)
		(6.1, 1.0657803083)
		(6.15, 1.065196648)
		(6.2, 1.0646232673)
		(6.25, 1.0640598968)
		(6.3, 1.0635062761)
		(6.35, 1.0629621541)
		(6.4, 1.0624272882)
		(6.45, 1.0619014437)
		(6.5, 1.0613843942)
		(6.55, 1.0608759205)
		(6.6, 1.0603758109)
		(6.65, 1.0598838604)
		(6.7, 1.0593998706)
		(6.75, 1.0589236499)
		(6.8, 1.0584550123)
		(6.85, 1.057993778)
		(6.9, 1.0575397728)
		(6.95, 1.0570928278)
		(7.0, 1.0566527795)
		(7.05, 1.0562194693)
		(7.1, 1.0557927435)
		(7.15, 1.055372453)
		(7.2, 1.0549584531)
		(7.25, 1.0545506036)
		(7.3, 1.0541487683)
		(7.35, 1.0537528151)
		(7.4, 1.0533626155)
		(7.45, 1.0529780449)
		(7.5, 1.0525989825)
		(7.55, 1.0522253104)
		(7.6, 1.0518569146)
		(7.65, 1.0514936839)
		(7.7, 1.0511355103)
		(7.75, 1.0507822889)
		(7.8, 1.0504339177)
		(7.85, 1.0500902973)
		(7.9, 1.0497513311)
		(7.95, 1.0494169252)
		(8.0, 1.049086988)
	};

	\addplot[blue , thick] coordinates {
		(1.0, 2.0)
		(1.05, 1.8178991106)
		(1.1, 1.7152667223)
		(1.15, 1.6406352483)
		(1.2, 1.5823533032)
		(1.25, 1.5349833836)
		(1.3, 1.4954388509)
		(1.35, 1.4617718511)
		(1.4, 1.4326665206)
		(1.45, 1.4071902191)
		(1.5, 1.3846578328)
		(1.55, 1.364551915)
		(1.6, 1.3464728839)
		(1.65, 1.3301065047)
		(1.7, 1.3152018512)
		(1.75, 1.3015559042)
		(1.8, 1.2890025109)
		(1.85, 1.2774042993)
		(1.9, 1.2666466492)
		(1.95, 1.2566331271)
		(2.0, 1.2472819815)
		(2.05, 1.2385234214)
		(2.1, 1.2302974808)
		(2.15, 1.2225523252)
		(2.2, 1.2152428985)
		(2.25, 1.2083298329)
		(2.3, 1.2017785626)
		(2.35, 1.1955585999)
		(2.4, 1.1896429397)
		(2.45, 1.184007565)
		(2.5, 1.1786310363)
		(2.55, 1.1734941468)
		(2.6, 1.1685796325)
		(2.65, 1.1638719265)
		(2.7, 1.1593569507)
		(2.75, 1.155021937)
		(2.8, 1.1508552747)
		(2.85, 1.146846379)
		(2.9, 1.1429855767)
		(2.95, 1.139264008)
		(3.0, 1.1356735398)
		(3.05, 1.1322066914)
		(3.1, 1.1288565679)
		(3.15, 1.1256168026)
		(3.2, 1.1224815063)
		(3.25, 1.1194452214)
		(3.3, 1.1165028825)
		(3.35, 1.1136497806)
		(3.4, 1.1108815314)
		(3.45, 1.1081940471)
		(3.5, 1.1055835106)
		(3.55, 1.1030463537)
		(3.6, 1.1005792355)
		(3.65, 1.0981790248)
		(3.7, 1.0958427833)
		(3.75, 1.0935677503)
		(3.8, 1.0913513293)
		(3.85, 1.0891910755)
		(3.9, 1.0870846848)
		(3.95, 1.0850299831)
		(4.0, 1.0830249175)
		(4.05, 1.0810675474)
		(4.1, 1.0791560367)
		(4.15, 1.0772886469)
		(4.2, 1.0754637304)
		(4.25, 1.0736797241)
		(4.3, 1.0719351447)
		(4.35, 1.0702285827)
		(4.4, 1.0685586983)
		(4.45, 1.0669242166)
		(4.5, 1.0653239242)
		(4.55, 1.0637566646)
		(4.6, 1.0622213356)
		(4.65, 1.0607168857)
		(4.7, 1.0592423109)
		(4.75, 1.0577966526)
		(4.8, 1.0563789943)
		(4.85, 1.0549884598)
		(4.9, 1.0536242107)
		(4.95, 1.0522854442)
		(5.0, 1.0509713915)
		(5.05, 1.0496813158)
		(5.1, 1.0484145104)
		(5.15, 1.0471702976)
		(5.2, 1.0459480267)
		(5.25, 1.0447470731)
		(5.3, 1.0435668365)
		(5.35, 1.0424067401)
		(5.4, 1.0412662292)
		(5.45, 1.0401447701)
		(5.5, 1.0390418493)
		(5.55, 1.0379569723)
		(5.6, 1.0368896628)
		(5.65, 1.035839462)
		(5.7, 1.0348059274)
		(5.75, 1.0337886325)
		(5.8, 1.0327871657)
		(5.85, 1.0318011299)
		(5.9, 1.0308301417)
		(5.95, 1.0298738309)
		(6.0, 1.0289318397)
		(6.05, 1.0280038227)
		(6.1, 1.0270894455)
		(6.15, 1.0261883851)
		(6.2, 1.025300329)
		(6.25, 1.0244249747)
		(6.3, 1.0235620295)
		(6.35, 1.0227112101)
		(6.4, 1.0218722418)
		(6.45, 1.0210448588)
		(6.5, 1.0202288035)
		(6.55, 1.019423826)
		(6.6, 1.0186296842)
		(6.65, 1.0178461432)
		(6.7, 1.0170729752)
		(6.75, 1.016309959)
		(6.8, 1.0155568802)
		(6.85, 1.0148135304)
		(6.9, 1.0140797072)
		(6.95, 1.0133552143)
		(7.0, 1.0126398606)
		(7.05, 1.0119334606)
		(7.1, 1.0112358341)
		(7.15, 1.0105468058)
		(7.2, 1.0098662051)
		(7.25, 1.0091938662)
		(7.3, 1.0085296278)
		(7.35, 1.007873333)
		(7.4, 1.007224829)
		(7.45, 1.006583967)
		(7.5, 1.0059506023)
		(7.55, 1.0053245939)
		(7.6, 1.0047058043)
		(7.65, 1.0040940997)
		(7.7, 1.0034893499)
		(7.75, 1.0028914276)
		(7.8, 1.0023002089)
		(7.85, 1.0017155732)
		(7.9, 1.0011374026)
		(7.95, 1.0005655822)
		(8.0, 1.0)
	};

	\addlegendentry[no markers, teal]{brute}
	\addlegendentry[no markers, blue]{$\alpha = 8.0$, $c = 1.0$}

	\end{axis}
\end{tikzpicture}
 \end{subfigure}%
 \begin{subfigure}{.5\textwidth}
  \centering
  \caption{{\sc Partial Vertex Cover}}
  \begin{tikzpicture}[scale = 0.9]
	\begin{axis}[xmin = 1, xmax = 2, ymin = 0.9, ymax = 2.1, xlabel = {approximation ratio}]

	\addplot[teal, thick] coordinates {
		(1.0, 2.0)
		(1.01, 1.9454431345)
		(1.02, 1.9062508454)
		(1.03, 1.8731549013)
		(1.04, 1.8440491304)
		(1.05, 1.8178991111)
		(1.06, 1.794081097)
		(1.07, 1.7721758604)
		(1.08, 1.7518817491)
		(1.09, 1.7329712548)
		(1.1, 1.7152667656)
		(1.11, 1.698625911)
		(1.12, 1.6829321593)
		(1.13, 1.6680884977)
		(1.14, 1.6540130203)
		(1.15, 1.6406357531)
		(1.16, 1.6278963084)
		(1.17, 1.615742114)
		(1.18, 1.6041270506)
		(1.19, 1.5930103866)
		(1.2, 1.5823559323)
		(1.21, 1.5721313608)
		(1.22, 1.5623076557)
		(1.23, 1.552858658)
		(1.24, 1.5437606905)
		(1.25, 1.534992244)
		(1.26, 1.5265337131)
		(1.27, 1.5183671728)
		(1.28, 1.510476187)
		(1.29, 1.5028456449)
		(1.3, 1.4954616201)
		(1.31, 1.4883112476)
		(1.32, 1.4813826173)
		(1.33, 1.4746646809)
		(1.34, 1.4681471696)
		(1.35, 1.4618205222)
		(1.36, 1.4556758212)
		(1.37, 1.4497047356)
		(1.38, 1.443899471)
		(1.39, 1.4382527242)
		(1.4, 1.4327576428)
		(1.41, 1.427407789)
		(1.42, 1.4221971069)
		(1.43, 1.4171198929)
		(1.44, 1.4121707695)
		(1.45, 1.4073446605)
		(1.46, 1.4026367697)
		(1.47, 1.3980425604)
		(1.48, 1.3935577374)
		(1.49, 1.3891782307)
		(1.5, 1.3849001795)
		(1.51, 1.380719919)
		(1.52, 1.3766339674)
		(1.53, 1.3726390137)
		(1.54, 1.3687319076)
		(1.55, 1.3649096489)
		(1.56, 1.3611693784)
		(1.57, 1.3575083696)
		(1.58, 1.3539240206)
		(1.59, 1.3504138468)
		(1.6, 1.3469754742)
		(1.61, 1.343606633)
		(1.62, 1.3403051519)
		(1.63, 1.3370689522)
		(1.64, 1.3338960432)
		(1.65, 1.3307845174)
		(1.66, 1.3277325456)
		(1.67, 1.3247383734)
		(1.68, 1.3218003168)
		(1.69, 1.3189167589)
		(1.7, 1.3160861463)
		(1.71, 1.3133069861)
		(1.72, 1.3105778425)
		(1.73, 1.3078973347)
		(1.74, 1.3052641335)
		(1.75, 1.3026769593)
		(1.76, 1.3001345795)
		(1.77, 1.2976358065)
		(1.78, 1.2951794954)
		(1.79, 1.2927645423)
		(1.8, 1.2903898821)
		(1.81, 1.2880544871)
		(1.82, 1.2857573652)
		(1.83, 1.2834975581)
		(1.84, 1.2812741403)
		(1.85, 1.2790862173)
		(1.86, 1.2769329244)
		(1.87, 1.2748134255)
		(1.88, 1.2727269118)
		(1.89, 1.2706726008)
		(1.9, 1.2686497351)
		(1.91, 1.2666575813)
		(1.92, 1.2646954293)
		(1.93, 1.2627625911)
		(1.94, 1.2608584001)
		(1.95, 1.2589822102)
		(1.96, 1.2571333949)
		(1.97, 1.2553113469)
		(1.98, 1.2535154767)
		(1.99, 1.2517452127)
		(2.0, 1.25)
	};

	\addplot[blue, thick] coordinates {
		(1.0, 2.0)
		(1.01, 1.9387047513)
		(1.02, 1.8930929239)
		(1.03, 1.853849694)
		(1.04, 1.8188398378)
		(1.05, 1.7870069014)
		(1.06, 1.7577092525)
		(1.07, 1.7305126)
		(1.08, 1.705102307)
		(1.09, 1.6812394878)
		(1.1, 1.6587364406)
		(1.11, 1.6374417619)
		(1.12, 1.6172307723)
		(1.13, 1.5979990628)
		(1.14, 1.5796579792)
		(1.15, 1.5621313614)
		(1.16, 1.545353129)
		(1.17, 1.5292654531)
		(1.18, 1.5138173453)
		(1.19, 1.498963551)
		(1.2, 1.484663669)
		(1.21, 1.4708814415)
		(1.22, 1.457584176)
		(1.23, 1.4447422687)
		(1.24, 1.4323288084)
		(1.25, 1.4203192453)
		(1.26, 1.4086911107)
		(1.27, 1.3974237788)
		(1.28, 1.3864982635)
		(1.29, 1.3758970425)
		(1.3, 1.3656039065)
		(1.31, 1.3556038269)
		(1.32, 1.3458828408)
		(1.33, 1.3364279501)
		(1.34, 1.3272270326)
		(1.35, 1.3182687633)
		(1.36, 1.3095425448)
		(1.37, 1.3010384449)
		(1.38, 1.2927471418)
		(1.39, 1.2846598738)
		(1.4, 1.2767683951)
		(1.41, 1.2690649358)
		(1.42, 1.2615421652)
		(1.43, 1.2541931594)
		(1.44, 1.2470113718)
		(1.45, 1.2399906054)
		(1.46, 1.233124989)
		(1.47, 1.2264089541)
		(1.48, 1.2198372148)
		(1.49, 1.2134047491)
		(1.5, 1.2071067812)
		(1.51, 1.2009387662)
		(1.52, 1.1948963754)
		(1.53, 1.1889754825)
		(1.54, 1.1831721518)
		(1.55, 1.1774826264)
		(1.56, 1.1719033178)
		(1.57, 1.1664307958)
		(1.58, 1.1610617798)
		(1.59, 1.1557931299)
		(1.6, 1.1506218396)
		(1.61, 1.1455450278)
		(1.62, 1.1405599327)
		(1.63, 1.1356639047)
		(1.64, 1.1308544012)
		(1.65, 1.1261289803)
		(1.66, 1.1214852963)
		(1.67, 1.1169210943)
		(1.68, 1.112434206)
		(1.69, 1.1080225451)
		(1.7, 1.1036841036)
		(1.71, 1.0994169478)
		(1.72, 1.0952192149)
		(1.73, 1.0910891093)
		(1.74, 1.0870249001)
		(1.75, 1.0830249175)
		(1.76, 1.0790875504)
		(1.77, 1.0752112435)
		(1.78, 1.0713944949)
		(1.79, 1.067635854)
		(1.8, 1.0639339188)
		(1.81, 1.060287334)
		(1.82, 1.0566947891)
		(1.83, 1.0531550165)
		(1.84, 1.0496667894)
		(1.85, 1.0462289206)
		(1.86, 1.0428402603)
		(1.87, 1.0394996953)
		(1.88, 1.0362061466)
		(1.89, 1.0329585688)
		(1.9, 1.0297559486)
		(1.91, 1.0265973033)
		(1.92, 1.0234816797)
		(1.93, 1.0204081532)
		(1.94, 1.0173758263)
		(1.95, 1.0143838281)
		(1.96, 1.0114313128)
		(1.97, 1.008517459)
		(1.98, 1.0056414688)
		(1.99, 1.002802567)
		(2.0, 1.0)
	};

	\addlegendentry[no markers, teal]{brute}
	\addlegendentry[no markers, blue]{$\alpha = 2.0$, $c = 1.0$}

	\end{axis}
\end{tikzpicture}
 \end{subfigure}%
  \caption{The figure shows running times for \textsc{Subset FVS} and \textsc{Partial Vertex Cover}.
  A dot at $(\beta,d)$ means that the respective algorithm outputs an $\beta$-approximation in time $O^*(d^n)$.}
 \label{fig:runtimes_2}
\end{figure}

\end{document}